\newif\ifhyper\IfFileExists{hyperref.sty}{\hypertrue}{\hyperfalse}
\ifhyper\usepackage{hyperref}\fi
\def\confversion{0}
\def\withcolors{1}
\newtheorem{theorem}{Theorem}
\newtheorem{lemma}[theorem]{Lemma}
\newtheorem{proposition}[theorem]{Proposition}
\newtheorem{corollary}[theorem]{Corollary}
\newtheorem{claim}[theorem]{Claim}
\newtheorem{fact}[theorem]{Fact}
\newtheorem{definition}[theorem]{Definition}
\newtheorem{remark}[theorem]{Remark}
\newtheorem{observation}[theorem]{Observation}
\newcommand{\eps}{\epsilon}
\newcommand{\etal}{{\em et al.\ }}
\newcommand{\braket}[1]{\langle #1 \rangle}
\newcommand{\Var}{\operatorname{Var}}
\newcommand{\Cov}{\operatorname{Cov}}
\newcommand{\sign}{\mathrm{sign}}
\newcommand{\ignore}[1]{}
\newcommand{\cref}[1]{Corollary~\ref{cor:#1}}
\newcommand{\bits}{\{-1,1\}}
\newcommand{\bn}{\bits^n}
\newcommand{\R}{{\mathbb{R}}}
\newcommand{\Z}{{\mathbb Z}}
\newcommand{\E}{\operatorname{{\bf E}}}
\newcommand{\littlesum}{\mathop{\textstyle \sum}}
\newcommand{\poly}{\mathrm{poly}}
\newcommand{\polylog}{\mathrm{polylog}}
\newcommand{\Inf}{\mathrm{Inf}}
\newcommand{\eqdef}{\stackrel{\textrm{def}}{=}}
\renewcommand{\Pr}{\operatorname{{\bf Pr}}}
\newcommand{\tr}{\mathrm{tr}}
\newcommand{\dtv}{d_{\mathrm{TV}}}
\newcommand{\dk}{d_{\mathrm K}}
\newcommand{\ci}{{\mathrm{ci}}}
\newcommand{\new}[1]{{\color{red} {#1}}}
\newcommand{\new}[1]{{ {#1}}}
\renewcommand{\subsection}{\@startsection{subsection}{2}{0pt}{-6pt}{-5pt}{\normalsize\bf}}
\renewcommand{\subsubsection}{\@startsection{subsubsection}{3}{0pt}{-12pt}{-5pt}{\normalsize\bf}}
\newcommand{\rnote}[1]{\footnote{{\bf [[Rocco: {#1}\bf ]] }}}
\newcommand{\inote}[1]{\footnote{{\bf [[Ilias: {#1}\bf ]] }}}
\newcommand{\anote}[1]{\footnote{{\bf [[Anindya: {#1}\bf ]] }}}
\newcommand{\depth}{{\mathrm{depth}}}
\newcommand{\T}{{{\cal T}}}
\date{}
\title{
Deterministic Approximate Counting for \\
Degree-$2$ Polynomial
Threshold Functions\\
}
\author{Anindya De\thanks{{\tt anindya@math.ias.edu}.  Research supported by Templeton Foundation Grant 21674 and  NSF CCF-1149843. }\\
Institute for Advanced Study\\
\and
Ilias Diakonikolas\thanks{{\tt ilias.d@ed.ac.uk}. Some of this work was done while the author was at UC Berkeley supported by a Simons Fellowship.} \\
University of Edinburgh\\
\and Rocco A.\ Servedio\thanks{{\tt rocco@cs.columbia.edu}. Supported by NSF grants CNS-0716245, CCF-0915929, and CCF-1115703.}\\
Columbia University\\
}
\begin{document}

\setcounter{page}{0}

\maketitle

\thispagestyle{empty}

\begin{abstract}
We give a {\em deterministic} algorithm for
approximately computing the fraction of Boolean assignments
that satisfy a degree-$2$ polynomial threshold function.
Given a degree-2 input polynomial $p(x_1,\dots,x_n)$
and a parameter $\eps > 0$, the algorithm approximates
\[
\Pr_{x \sim \{-1,1\}^n}[p(x) \geq 0]
\]
to within an additive $\pm \eps$ in time $\poly(n,2^{\poly(1/\eps)})$.
Note that it is NP-hard to determine whether the above probability
is nonzero, so any sort of multiplicative approximation is almost certainly
impossible even for efficient randomized algorithms.
This is the first deterministic algorithm for this counting problem
in which the running time is polynomial in $n$ for $\eps= o(1)$.
For ``regular'' polynomials $p$ (those in which no individual variable's
influence is large compared to the sum of all $n$
variable influences)
our algorithm runs in $\poly(n,1/\eps)$ time.
The algorithm also runs in $\poly(n,1/\eps)$ time to approximate
$\Pr_{x \sim N(0,1)^n}[p(x) \geq 0]$ to within an additive $\pm \eps$,
for any degree-2 polynomial $p$.
\ignore{
 and hence using invariance principles we are also able to achieve $\poly(n,1/\eps)$ time for the Boolean case when the
polynomial $p$ is sufficiently ``regular.''
}

As an application of our counting result, we give a deterministic FPT
multiplicative $(1 \pm \eps)$-approximation algorithm
to approximate the $k$-th absolute moment $\E_{x \sim \{-1,1\}^n}[|p(x)^k|]$
of a degree-2 polynomial.  The algorithm's running time is of
the form
$\poly(n) \cdot f(k,1/\eps)$.

\end{abstract}

\newpage

\newpage

\section{Introduction}

A \emph{degree-$d$ polynomial threshold function} (PTF) is
a Boolean function $f: \{-1,1\}^n \to \{-1,1\}$ defined by
$f(x) = \sign(p(x))$ where $p: \{-1,1\}^n \to \R$ is a degree-$d$
polynomial.  In the special case where $d=1$, degree-$d$ PTFs are often
referred to as \emph{linear threshold functions} (LTFs)
or \emph{halfspaces}.  While LTFs and low-degree
PTFs have been researched for decades (see e.g., \cite{
MyhillKautz:61, MTT:61, MinskyPapert:68, Muroga:71, GHR:92, Orponen:92,
Hastad:94,Podolskii:09} and many other works)
their study has recently received new impetus as they have
played important roles in complexity theory
\cite{Sherstov:08,Sherstov:09,DHK+:10,Kane:10,Kane12,Kane12GL,KRS12},
learning theory \cite{KKMS:08,SSS11,DOSW:11,DDFS:12stoc},
voting theory \cite{APL:07, DDS12icalp} and other areas.

An  important problem associated with LTFs and PTFs is that of deterministically estimating the fraction
of assignments that satisfy a given LTF or PTF over $\{-1,1\}^n$.
In particular, in this paper we are interested in deterministically estimating the fraction of satisfying assignments for PTFs of degree $d=2$.
This problem is motivated by the long line of work on \emph{deterministic approximate counting algorithms},
starting with the seminal work of Ajtai and Wigderson~\cite{AjtaiWigderson:85} who gave non-trivial  deterministic counting algorithms for constant-depth circuits.
 Since then much progress has been made on the design of deterministic counting algorithms  for other classes
 of Boolean functions like DNFs, low-degree $GF[2]$ polynomials and LTFs~\cite{LubyVelickovic:96, GopalanMR:13, Viola09, GKMSVV11}.
 Problems of this sort can be quite challenging;
 after close to three decades of effort, deterministic polynomial time counting algorithms
 are not yet known for a simple class like polynomial-size DNFs.

Looking beyond Boolean functions, there has been significant work on obtaining deterministic
 approximate counting algorithms for combinatorial problems using ideas and techniques from statistical physics. This includes work on counting matchings~\cite{BGKNT07}, independent sets~\cite{Weitz06}, proper colorings~\cite{LLY13} and other problems in statistical physics~\cite{BG08}. We note that there is interest in obtaining such deterministic algorithms despite the fact that in some of these cases an optimal randomized algorithm is already known (e.g., for counting matchings~\cite{JVS:01}) and the performance of the corresponding deterministic algorithm is significantly worse~\cite{BGKNT07}. For this paper, the most relevant prior work are the results of Gopalan \etal and Stefankovic \etal~\cite{GKMSVV11} who independently obtained  deterministic $\poly(n,1/\eps)$ time algorithms for counting the satisfying assignments of an LTF up to $(1\pm \eps)$ multiplicative error.  (As we discuss later, in contrast with LTFs it is NP-hard to count the satisfying assignments of a degree-$d$ PTF for any $d >1$ up to any multiplicative factor.  Thus, the right notion of approximation for degree-$2$ PTFs is additive error.)

There has recently been significant work in the literature on \emph{unconditional derandomization} of LTFs and PTFs. The starting point of these works are the results of Rabani and Shpilka~\cite{RS:08} and Diakonikolas et al \cite{DGJ+09} who gave explicit constructions of polynomial-sized hitting sets and pseudorandom generators for LTFs. Building on these works, Meka and Zuckerman~\cite{MZstoc10} and subsequently Kane~\cite{Kane11ccc, Kane11focs, Kane12subpoly} constructed polynomial-sized PRGs for degree-$d$ PTFs for $d >1$. These PRGs trivially imply deterministic polynomial-time counting algorithms for any fixed $d$ and fixed $\epsilon>0$. While there has been significant research on improving the dependence of the size of these PRGs on $\eps$, the best construction in the current state of the art is  due to Kane~\cite{
Kane12} who gave an explicit PRG for degree-$d$ polynomial
threshold functions over $\{-1,1\}^n$
of size $n^{O_d(1) \cdot \poly(1/\eps)}$.  (In a related but
different line of work \cite{Kane11ccc,Kane11focs,Kane12subpoly}
focusing on PRGs for degree-$d$ PTFs over the Gaussian $\mathcal{N}(0,1)^n$
distribution, the strongest result to date is that of
\cite{Kane12subpoly}
which for any constant degree $d$ gives an explicit PRG of size
$n^{f_d(1/\eps)}$ for degree-$d$ PTFs; here $f_d(1/\eps)$ is
a  slightly sub-polynomial function of $1/\eps$, even for $d=2$). As a consequence, the resulting deterministic counting algorithms have a running time which is at least $n^{O_d(1) \cdot \poly(1/\eps)}$ and thus the running time of these algorithms is not a \emph{fixed} polynomial in $n$. In particular, for any $\eps = o(1)$, the running time of these algorithms is super-polynomial in $n$.
\ignore{In fact, consider  the \emph{promise} problem where on input a degree-$d$ PTF,  we need to distinguish between the cases when there is no satisfying assignment versus the fraction of satisfying assignment is at least $\eps$. Even  for this seemingly easier problem, the best known algorithm prior to our work was that of \cite{Kane12}.  This is in contrast to situation for halfspaces where  \cite{GKMSVV11} gives a $\poly(n,1/\eps)$ deterministic time counting algorithm for any error $\eps>0$.   }

\subsection{ Our contributions.}

In this work we give the first \emph{fixed}
polynomial time deterministic algorithm for a PTF problem of this sort.
As our main result, for all $\eps > 0$
we give a fixed poly$(n)$-time algorithm
to deterministically $\pm \eps$-approximately count the number of
satisfying assignments to a degree-2 PTF:

\begin{theorem} \label{thm:count-boolean-ptf-informal}
[Deterministic approximate counting of degree-2 PTFs over
$\{-1,1\}^n$, informal statement]
There is a deterministic algorithm which, given
a degree-$2$ polynomial $q(x_1,\dots,x_n)$ and $\eps>0$ as input, runs in
time $\poly(n,2^{\tilde{O}(1/\eps^{9})})$
and outputs a value $v \in [0,1]$ such that
$\left|\Pr_{x \in \{-1,1\}^n}[q(x) \geq 0] - v \right| \leq \eps.$
\end{theorem}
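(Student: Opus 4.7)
The plan is to combine three ingredients: (i) a regularization procedure that reduces the general case to $\tau$-regular degree-$2$ polynomials via a bounded-depth case split on a set of ``head'' variables, (ii) an invariance principle for regular degree-$2$ multilinear polynomials to pass from $\bn$ to the Gaussian distribution $\Nd(0,1)^n$, and (iii) a deterministic algorithm for approximate counting of degree-$2$ polynomials under Gaussian inputs. The Kolmogorov form of the degree-$2$ invariance principle, which bounds CDF distance by $O(\tau^{1/9})$ in the max-influence parameter $\tau$, is what forces the $\tilde O(1/\eps^9)$ exponent in the final running time: setting $\tau = \tilde\Theta(\eps^9)$ will reduce the invariance error to $\eps$.

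\medskip
\noindent\textbf{Regularization.} Call $q$ $\tau$-regular if $\max_i \Inf_i(q)/\Var(q) \leq \tau$. I would build a restriction tree whose internal nodes branch on variables whose normalized influence (in the current restricted polynomial) is large, and whose leaves correspond to restrictions $\rho$ on a junta $J$ of head variables for which $q_\rho$ is $\tau$-regular. The structural claim needed is that for degree-$2$ polynomials this tree has depth at most $\tilde O(1/\tau)$, so it has at most $2^{\tilde O(1/\tau)} = 2^{\tilde O(1/\eps^9)}$ leaves. Each internal node is processed in $\poly(n)$ time by computing the Frobenius-norm decomposition of the coefficient matrix and the $\ell_2$-norm of the linear part.

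\medskip
\noindent\textbf{Invariance and Gaussian counting.} For each leaf restriction $\rho$, since $q_\rho$ is $\tau$-regular the degree-$2$ invariance principle (in Kolmogorov form) yields
\equn{\sup_{t \in \R}\ \left|\Pr_{x \in \bn}[q_\rho(x) \leq t] - \Pr_{y \sim \Nd(0,1)^{n}}[q_\rho(y) \leq t]\right| \ =\ O(\tau^{1/9}) \ \leq \ \eps,}
so it suffices to $\pm\eps$-approximate $\Pr_{y \sim \Nd(0,1)^n}[q_\rho(y) \geq 0]$ for each $\rho$ and output the average. To do this deterministically in $\poly(n,1/\eps)$ time, compute to sufficient precision an orthogonal diagonalization $A = O^T D O$ of the symmetric matrix $A$ associated with $q_\rho(y) = y^T A y + b^T y + c$; under $y \mapsto O y$ the polynomial becomes $\sum_i \lambda_i y_i^2 + \sum_i \beta_i y_i + c$, a sum of independent (shifted, scaled) $\chi_1^2$ and Gaussian random variables. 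Its CDF can be approximated to $\pm \eps$ by discretizing each summand's density on a fine grid and convolving them in sequence, then evaluating the resulting empirical CDF at $0$. This step is self-contained and also yields the $\poly(n,1/\eps)$ Gaussian algorithm promised in the abstract.

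\medskip
\noindent\textbf{Main obstacle.} The delicate part is the regularization step. Unlike for LTFs, where a clean critical-index argument on sorted coefficients suffices, restricting a variable in a degree-$2$ polynomial pushes contributions from the quadratic part into the linear part and can create \emph{new} high-influence variables at the leaves that were not visible at the root. So the argument cannot just branch until the Frobenius norm of the matrix is balanced; it must track a joint potential involving both the matrix and the linear part and show that each branching step depletes this potential by a quantifiable amount. Proving a uniform $\tilde O(1/\tau)$ depth bound, together with showing that all leaf polynomials are simultaneously $\tau$-regular (not merely regular on a random branch), is the central combinatorial challenge in reaching the stated $\poly(n, 2^{\tilde O(1/\eps^9)})$ running time.
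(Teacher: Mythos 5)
Your high-level architecture (regularity-based case split, then the degree-$2$ invariance principle with its $O(\tau^{1/9})$ Kolmogorov error, then deterministic Gaussian counting at the regular leaves) is the same as the paper's, but there are two genuine gaps. First, your regularization target is too strong and in fact unachievable: you ask for a tree of depth $\tilde O(1/\tau)$ \emph{all} of whose leaf polynomials are $\tau$-regular. If the influences of the quadratic form decay geometrically (infinite critical index), no restriction of $\tilde O(1/\tau)$ head variables makes the tail regular. The correct structural statement --- and the one the paper imports from the \cite{DSTW:10} regularity lemma, merely observing that its proof is already constructive --- is a trichotomy: each leaf is either $\tau$-regular, or its restricted PTF is $\tau$-close to a constant sign (these leaves are counted as $0$ or $2^{-d_\rho}$ directly), or it is one of a $\tau$-fraction of ``fail'' leaves. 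So the ``central combinatorial challenge'' you pose is not the right lemma to aim for, and the sign-constant leaves must enter your error accounting separately.

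Second, and more importantly, you assume away the step that is the paper's main technical contribution: ``compute to sufficient precision an orthogonal diagonalization'' of the matrix $A$. An exact spectral decomposition cannot be computed (eigenvalues and eigenvectors are irrational), and what you actually need is a \emph{deterministic} $\poly(n,b,1/\eps)$ procedure producing an (exactly or near-exactly) orthogonal $\tilde Q$ and diagonal $\tilde \Lambda$ with $\|A-\tilde Q\tilde\Lambda\tilde Q^T\|_F \le \delta$ for $\delta \le \poly(\eps/n)\sqrt{\Var(q_\rho)}$, together with an argument --- via Carbery--Wright anti-concentration and degree-$2$ tail bounds --- that such a perturbation moves $\Pr_{y\sim \Nd(0,1)^n}[q_\rho(y)\ge 0]$ by at most $\eps$. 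You supply neither; the authors explicitly state they could not find an off-the-shelf decomposition with the guarantees required and instead build one (a derandomized, modified powering method extracting one approximate eigenpair at a time, iterated $\tilde O(1/\eps^4)$ times with a critical-index analysis on the eigenvalues, using Chatterjee's CLT to terminate when the top eigenvalue is small, and propagating Kolmogorov-distance errors at every step). Once an approximate decoupling is in hand, your discretize-and-convolve step is fine --- it is essentially the paper's discretization plus dynamic programming, and the paper remarks this works on the full $n$-variable decoupled polynomial --- but as written your proposal leaves the deterministic decomposition, which is the heart of the matter, unproved.
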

\noindent Note that, as a consequence of this theorem, we get a $\poly(n)$ time deterministic algorithm to
count the fraction of satisfying assignments of a degree-$2$ PTF over $\{-1,1\}^n$ up to error $\eps = \tilde{O}(\log^{-1/9} n)$.

The \emph{influence} of variable $i$ on a polynomial $p: \{-1,1\}^n \to \R$,
denoted $\Inf_i(p)$,
is the sum of squares of all coefficients of $p$ that are on
monomials containing $x_i$; it is a measure of how much ``effect" the
variable $i$ has on the outcome of $p$.  Following previous
work \cite{DHK+:10} we say that a polynomial $p$ is
\emph{$\eps$-regular} if $\max_{i \in [n]}
\Inf_i(p) \leq \eps \cdot \sum_{j=1}^n \Inf_j(p).$
For sufficiently regular polynomials, our algorithm
runs in fully polynomial time $\poly(n,1/\eps)$:

\begin{theorem} \label{thm:count-regular-ptf-informal}
[Deterministic approximate counting of regular degree-2 PTFs over
$\{-1,1\}^n$, informal statement]
Given $\eps>0$ and an $O(\eps^{9})$-regular
degree-2 polynomial $q(x_1,\dots,x_n)$
our algorithm runs (deterministically) in
time $\poly(n,1/\eps)$
and outputs a value $v \in [0,1]$ such that
$
\left|\Pr_{x \in \{-1,1\}^n}[q(x) \geq 0] - v \right| \leq \eps.
$
\end{theorem}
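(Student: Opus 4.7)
The plan is to reduce the Boolean counting problem to its Gaussian analogue via the invariance principle, and then to invoke the paper's deterministic Gaussian counting subroutine (promised in the abstract as a $\poly(n,1/\eps)$-time algorithm for $\pm\eps$ approximating $\Pr_{G \sim \mathcal{N}(0,1)^n}[p(G) \geq 0]$ for an arbitrary degree-$2$ polynomial $p$). The regular hypothesis is used only in the first step; the fact that the exponent on $\eps$ in the regularity parameter is exactly $9$ is a direct echo of the exponent that will appear in the Kolmogorov bound below.

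First I would invoke a Kolmogorov-distance version of the Mossel--O'Donnell--Oleszkiewicz invariance principle for degree-$2$ polynomials, combined with the Carbery--Wright anti-concentration inequality for quadratic Gaussian forms. After normalizing so that $\mathrm{Var}[q]=1$ (which does not move the zero threshold), one smooths the indicator of $(-\infty,0]$ on a scale $\eta$; the invariance principle pays roughly $\tau^{1/2}/\eta^{3}$ while Carbery--Wright pays roughly $\eta^{1/2}$, and optimizing over $\eta$ yields
\[
\left| \Pr_{x \sim \{-1,1\}^n}[q(x) \geq 0] - \Pr_{G \sim \mathcal{N}(0,1)^n}[q(G) \geq 0] \right| \;\leq\; C \cdot \tau^{1/9},
\]
where $\tau = \max_i \Inf_i(q)/\sum_j \Inf_j(q)$ is the normalized max influence and $C$ is an absolute constant. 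With the hypothesis that $q$ is $O(\eps^9)$-regular (the implicit constant absorbing $C$), the right-hand side is at most $\eps/2$.

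Next I would run the paper's deterministic Gaussian counting algorithm on the same polynomial $q$ at error parameter $\eps/2$ to produce a value $v$ with $\left|\Pr_{G \sim \mathcal{N}(0,1)^n}[q(G) \geq 0] - v\right| \leq \eps/2$ in time $\poly(n,1/\eps)$. Combining with the invariance bound and applying the triangle inequality gives $|\Pr_{x \sim \{-1,1\}^n}[q(x) \geq 0] - v| \leq \eps$, as required, and the overall running time remains $\poly(n,1/\eps)$.

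The invariance reduction is comparatively routine; the main obstacle is the Gaussian counting subroutine itself, which I am treating as a black box here. A natural route is to diagonalize the quadratic part, rewriting $q(G) = \sum_i \lambda_i G_i^2 + \sum_i \mu_i G_i + c$ after an orthogonal change of variables that preserves $\mathcal{N}(0,1)^n$, and then deterministically estimating the CDF of this mixture of shifted chi-squared and Gaussian terms at $0$. The delicate points are (a) separating a short list of ``heavy'' eigenvalues, whose contribution must be handled explicitly (e.g.\ by enumeration or direct numerical integration over a bounded-dimensional projection), from the ``light'' eigenvalues, whose aggregate contribution is approximately Gaussian via a quantitative CLT for quadratic forms; (b) controlling the anti-concentration of the light-eigenvalue piece at $0$, so that the heavy piece can be replaced by a discretized proxy without distorting the Kolmogorov distance by more than $\eps$; and (c) avoiding the $n^{f(1/\eps)}$-type blow-ups that appear in the PRG-based approaches of Meka--Zuckerman and Kane. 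Designing this diagonalize-and-integrate subroutine with a running time that is a \emph{fixed} polynomial in $n$ is what I expect to be the real work of the paper.
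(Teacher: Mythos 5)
Your proposal matches the paper's own proof of this statement: the paper likewise handles the regular case by applying the invariance principle of Mossel--O'Donnell--Oleszkiewicz in its threshold form (error $O(d\,\tau^{1/(4d+1)})$, i.e.\ $O(\tau^{1/9})$ for $d=2$, which is precisely where the $O(\eps^9)$-regularity requirement comes from) and then running its deterministic Gaussian counting algorithm at error $\Theta(\eps)$, finishing with the triangle inequality; the Gaussian subroutine you treat as a black box is indeed the paper's main technical contribution and is built essentially along the lines you sketch (approximate diagonalization via a powering method, a critical-index split of the eigenvalues, Chatterjee's CLT for the light part, and discretization plus dynamic programming). One minor caveat: your heuristic balancing of $\tau^{1/2}/\eta^{3}$ against $\eta^{1/2}$ would yield $\tau^{1/14}$ rather than $\tau^{1/9}$, but since you ultimately invoke the standard MOO threshold bound this does not affect correctness.
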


\noindent
We note that the regular case has been a bottleneck in all known constructions of explicit PRGs
for PTFs; the seed-length of known generators for regular PTFs is no better
than for general PTFs.  Given Theorem \ref{thm:count-regular-ptf-informal}, the only
obstacle to improved running times for deterministic approximate counting algorithms
is improving the parameters of the ``regularity lemma'' which we use.
\footnote{Indeed, Kane \cite{Kane:13} has suggested that using the notions of regularity
and invariance from \cite{Kane12} may result in an improved, though still
$2^{\poly(1/\eps)}$, running time for our approach; we have not explored that in this work.}

\smallskip

\noindent
{\bf Discussion.}
Our counting results described above give deterministic
\emph{additive} approximations
to the desired probabilities.  While additive approximation
is not as strong as multiplicative
approximation, we recall that the problem of
determining whether $\Pr_{x \in \{-1,1\}^n}[q(x) \geq 0]$ is nonzero
is well-known to be NP-hard for degree-2 polynomials even if all nonconstant monomials in $q$ are
restricted to have coefficient 1 (this follows by a simple
reduction from Max-Cut, see the polynomial
$q_{G,\mathrm{CUT}}$ defined below).  Thus, no
efficient algorithm, even allowing randomness, can give any
multiplicative approximation to $\Pr_{x \sim \{-1,1\}^n}[q(x) \geq 0]$
unless $\mathbf{NP} \subseteq \mathbf{RP}$.  Given this, it is natural to consider
additive approximation.

Our results for degree-$2$ PTFs yield efficient deterministic
algorithms for a range of natural problems.  As a simple example,
consider the following problem: Given an undirected $n$-node
graph $G=([n],E)$ and a size parameter $k$, the goal is to estimate
the fraction of all $2^{n-1}$ cuts that contain at least $k$ edges.
(Recall that \emph{exactly} counting the number
of cuts of at least a given size is known to be \#P-hard
~\cite{Papadimitriou}.)
We remark that a simple sampling-based approach yields an efficient
\emph{randomized} $\pm \eps$-additive approximation algorithm for this problem.
Now note that the value of the polynomial $q_{G,\mathrm{CUT}}(x)
=(|E| - \sum_{\{i,j\} \in E} x_i x_j)/2$ on input $x \in \{-1,1\}^n$ equals
the number of edges in the cut corresponding to $x$ (where vertices
$i$ such that $x_i=1$ are on one side of the cut and vertices $i$ such that $x_i=-1$ are on the other side).  It is
easy to see that if $|E| \geq C^{9} n$ then $q_{G,\mathrm{CUT}}(x)$
must be $(1/C^{9})$-regular.
Theorem~\ref{thm:count-regular-ptf-informal} thus provides a \emph{deterministic} poly$(n,1/C)$-time algorithm that gives an
$\pm O(1/C)$-additive approximation to the fraction of all cuts that have
size at least $k$ in $n$-node graphs with at least $C^{9}n$ edges,
and Theorem~\ref{thm:count-boolean-ptf-informal} gives a deterministic
poly$(n,2^{\tilde{O}(1/\eps^{9})})$-time $\pm \eps$-approximation
algorithm for all $n$-node graphs.

As another example, consider the
polynomial $q_{G,\mathrm{INDUCED}}(x)
= \sum_{\{i,j\} \in E} {\frac {1 + x_i} 2} \cdot
{\frac {1 + x_j} 2}$.  In this case, we have that $q_{G,\mathrm{INDUCED}}(x)$ equals the number
of edges in the subgraph of $G$ that is induced by
vertex set $\{i: x_i=1\}$.
Similarly to the example of the previous paragraph,
Theorem~\ref{thm:count-regular-ptf-informal}
yields a deterministic poly$(n,1/C)$-time
algorithm that gives a $\pm O(1/C)$-additive approximation
to the fraction of all induced subgraphs
that have at least $k$ edges
in $n$-node graphs with at least $C^{9}n$ edges,
and Theorem~\ref{thm:count-boolean-ptf-informal} gives a deterministic
poly$(n,2^{\tilde{O}(1/\eps^{9})})$-time $\pm \eps$-additive approximation
algorithm for any graph.

\medskip

\noindent {\bf Estimating moments.}
Our results also imply deterministic fixed-parameter tractable
(FPT) algorithms for approximately computing moments of degree-$2$
polynomials.  Consider the following computational problem {\sc Absolute-Moment-of-Quadratic}:
given as input a degree-2 polynomial $q(x_1,\dots,x_n)$
and an integer parameter $k \geq 1$, output the value $\E_{x \in \{-1,1\}^n}[|q(x)|^k].$
It is clear that the \emph{raw} moment $\E[q(x)^k]$ can be computed
exactly in $n^{O(k)}$ time by expanding out the polynomial $q(x)^k$,
performing multilinear reduction, and outputting the constant term.
Since the $k$-th raw moment equals the $k$-th absolute moment when $k$ is
even, this gives an $n^{O(k)}$ time algorithm for
\noindent {\sc Absolute-Moment-of-Quadratic} for even $k$.  However,
for any fixed odd $k \geq 1$ the {\sc Absolute-Moment-of-Quadratic}
problem is \#P-hard
\ifnum\confversion=1
(see Appendix~B in the full version).
\else
(see Appendix~\ref{ap:moment-hard}).
\fi
Thus, it is natural to seek approximation algorithms
for this problem.

\ignore{
In contrast, we show in Appendix~\ref{ap:momenthard}
(building on recent work of Bl\"{a}ser and Curticapean
\cite{BC12}) that the {\sc Raw-Moment-of-Quadratic} and
{\sc Absolute-Moment-of-Quadratic} problems are both
\#$W[1]$-hard with respect to parameter $k$.
This implies that there is most likely
no $f(k) \cdot n^c$-time algorithm that solves these problems
for any computable function $f$ and fixed constant $c$
(otherwise there would be a $2^{o(n)}$-time algorithm for counting
the satisfying assignments of an $n$-variable 3-CNF formula;
see \cite{FlumGrohe04} for background on the counting
complexity class \#$W[1]$).  Thus it is natural to seek approximation
algorithms for these problems.
}

\ignore{
The hypercontractive inequality {\bf [[[xxxcitesxxx]]]} implies
that any degree-2 polynomial $q(x_1,\dots,x_n)$
with sum of squared coefficients 1 has $\E_{x \in \{-1,1\}^n}
[q(x)^k] \leq \new{k^{2k}}$, \new{(((how small can it be,
let's give a lower bound?)))} and thus it is meaningful to
seek additive $\pm \eps$-approximation algorithms for
{\sc Moment-of-Quadratic}.
}
Using the hyper-contractive inequality \cite{Bon70,Bec75}
it can be shown that the natural randomized algorithm
-- draw uniform points from $\{-1,1\}^n$ and use
them to empirically estimate $\E_{x \in \{-1,1\}^n}[|q(x)|^k]$ --
with high probability
gives a $(1 \pm \eps)$-accurate estimate of
the $k$-th absolute moment of $q$ in
$\poly(n,2^{k \log k}, 1/\eps)$ time.
Using Theorem~\ref{thm:count-boolean-ptf-informal} we are able to
derandomize this algorithm and obtain a
\emph{deterministic} FPT $(1 \pm \eps)$-multiplicative
approximation algorithm for {\sc Absolute-Moment-of-Quadratic}:

\begin{theorem} \label{thm:compute-kth-moment}
There is a {\em deterministic} algorithm which, given any degree-$2$
polynomial $q(x_1,\dots,x_n)$
with $b$-bit integer coefficients, any integer
$k \geq 1$, and any $\eps \in (0,1)$,
runs in
$\poly \left( n,b,2^{\tilde{O}((k \log k \log(1/\eps))^{9k}/\eps^9)}
\right)$ time
and outputs a value $
v \in \left[
(1-\eps) \E_{x \in \{-1,1\}^n}[|q(x)|^k],
(1+\eps) \E_{x \in \{-1,1\}^n}[|q(x)|^k]
\right]
$ that multiplicatively
$(1 \pm \eps)$-approximates the $k$-th absolute moment of $q$.
\end{theorem}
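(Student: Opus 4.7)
The plan is to use the layer-cake identity
$$\E_{x \sim \{-1,1\}^n}[|q(x)|^k] = \int_0^\infty k t^{k-1} \Pr_{x}[|q(x)|\geq t]\,dt,$$
reducing the $k$-th absolute moment to estimating $\Pr_x[|q(x)| \geq t]$ at a discrete collection of thresholds. Since the event $|q(x)| \geq t$ splits as the disjoint union of $q(x)-t\geq 0$ and $-q(x)-t\geq 0$, each such probability is additively approximable by two invocations of Theorem~\ref{thm:count-boolean-ptf-informal} on appropriately shifted degree-$2$ polynomials. First I would compute $\sigma^2 = \E_x[q(x)^2]$ exactly from the coefficients (handling $\sigma=0$ as the trivial constant-polynomial case), and work at the level of analysis with the normalization $\widetilde q = q/\sigma$ so that $\|\widetilde q\|_2 = 1$.

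Two standard facts about degree-$2$ polynomials on $\bn$ then anchor the error accounting. The Bonami--Beckner inequality gives $\|\widetilde q\|_k \leq k-1$ together with the hypercontractive tail $\Pr[|\widetilde q(x)| \geq t] \leq e^{-\Omega(t)}$, which lets me truncate the integral at $T = \Theta(k\log k + \log(1/\eps))$ while losing only an $\eps$-fraction of $\E[|\widetilde q|^k]$. Its converse, the norm equivalence $\|\widetilde q\|_1 \geq c_0 \|\widetilde q\|_2$ for an absolute constant $c_0>0$ (another consequence of hypercontractivity for degree-$2$ polynomials on the cube), yields via Jensen the indispensable lower bound $\E[|\widetilde q|^k] \geq c_0^k$; this is what converts additive probability estimates into a multiplicative moment estimate. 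Partitioning $[c_0 \eps^{1/k}, T]$ geometrically with ratio $1+\eps/k$ produces $N = O(k\log(T)/\eps)$ break-points $t_1<\cdots<t_N$; on each window the integrand $k t^{k-1}\Pr[|\widetilde q|\geq t]$ varies multiplicatively by $1+O(\eps)$, so the rectangle rule using the true values $\Pr[|\widetilde q|\geq t_i]$ introduces only $1+O(\eps)$ multiplicative slack in total.

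Finally I would call Theorem~\ref{thm:count-boolean-ptf-informal} twice at each $t_i$ to estimate $\Pr[q(x) \geq \sigma t_i]$ and $\Pr[q(x) \leq -\sigma t_i]$ with additive accuracy $\eps' = \Theta(\eps \cdot c_0^k /(N \cdot T^k))$, so that the cumulative additive error across the $N$ windows is at most $\eps \cdot c_0^k \leq \eps\cdot\E[|\widetilde q|^k]$, and output $\sigma^k$ times the resulting Riemann sum. Substituting $T = \Theta(k\log k+\log(1/\eps))$ yields $1/\eps' = (k\log k\log(1/\eps))^{O(k)}/\eps$, and plugging this into the $\poly(n, 2^{\tilde O(1/\eps'^9)})$ per-call runtime of Theorem~\ref{thm:count-boolean-ptf-informal} gives exactly the claimed bound; bit-complexity issues arising from $\sigma$ potentially being irrational are sidestepped by feeding the counting algorithm the original polynomial shifted by the rational approximation of $\sigma t_i$ (cleared of its denominator) rather than physically renormalizing $q$. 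The main obstacle, and the reason the reverse-hypercontractive lower bound is essential, is the multiplicative-vs-additive tension: without a floor on $\E[|\widetilde q|^k]$ the additive $\pm\eps'$ guarantee from Theorem~\ref{thm:count-boolean-ptf-informal} could never be amplified into a $(1\pm\eps)$-multiplicative answer, and calibrating $\eps'$ against this floor is what forces the doubly-exponential-in-$k$ factor in the runtime.
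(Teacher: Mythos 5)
Your proposal follows essentially the same strategy as the paper's proof: normalize $q$ by $\|q\|_2$, use a universal lower bound on the $k$-th absolute moment of a normalized degree-$2$ polynomial to convert additive error into multiplicative error, truncate the tail at $\Theta(k\log k\log(1/\eps))$ via the degree-$2$ concentration bound, and then discretize the value distribution and estimate each piece with the Boolean counting algorithm (Theorem~\ref{thm:count-boolean-ptf-informal}). The differences are cosmetic: the paper buckets by estimating interval masses $\Pr[q(x)\in[(j-1)\Delta,j\Delta]]$ on a uniform grid and sums $|j\Delta|^k$ times these, whereas you use the layer-cake identity with tail probabilities on a geometric grid; and the paper's moment floor is a universal constant (monotonicity of norms for $k\ge 2$, plus the $k=1$ case from \cite{AH11}), while your $\|\widetilde q\|_1\gtrsim\|\widetilde q\|_2$ plus Jensen gives the weaker floor $c_0^k$, which only costs a $\mathrm{const}^{k}$ factor and is harmless. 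One quantitative slip: with your choice $\eps'=\Theta(\eps c_0^k/(N T^k))$ and $N=O(k\log T/\eps)$, you get $1/\eps'=\tilde\Theta(T^k/\eps^2)$, not $T^k/\eps$, so the exponent in the runtime becomes $\tilde{O}((k\log k\log(1/\eps))^{9k}/\eps^{18})$ rather than the claimed $/\eps^9$. The $N$ factor is unnecessary: the additive errors at the thresholds enter the Riemann sum weighted by $k t_i^{k-1}(t_{i+1}-t_i)$, whose total is at most $T^k$, so $\eps'=\Theta(\eps c_0^k/T^k)$ already bounds the cumulative error by $\eps\,\E[|\widetilde q|^k]$ and recovers the claimed running time. (Also, your stated justification that the integrand varies multiplicatively by $1+O(\eps)$ on each window is not literally true, since $\Pr[|\widetilde q|\ge t]$ can drop sharply inside a window; the correct argument compares the left- and right-endpoint sums using monotonicity, whose gap is at most $O(\eps)\E[|\widetilde q|^k]$ -- the conclusion you need still holds.)
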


\subsection{Techniques.}

The major technical work in this paper goes into proving
Theorem~\ref{thm:count-regular-ptf-informal}.  Given
Theorem~\ref{thm:count-regular-ptf-informal}, we use a (deterministic)
algorithmic version of the ``regularity lemma for degree-$d$ PTFs''
from \cite{DSTW:10} to reduce the case of general degree-2 PTFs
to that of regular degree-2 PTFs.  (The regularity lemma that is
implicit in \cite{HKM:09} can also be used for this purpose.)

As is usual in this line of
work, we can use the invariance principle of Mossel \etal~\cite{MOO10} to
show that for an $O(\eps^9)$-regular degree-$2$ polynomial $p : \mathbb{R}^n \rightarrow
\mathbb{R}$, we have
$
\left|\Pr_{x \in \{-1,1\}^n} [p(x) \ge 0] -  \Pr_{x \in \mathcal{N}(0,1)^n}
[p(x) \ge 0] \right| \le \eps.
$
Thus, to prove Theorem~\ref{thm:count-regular-ptf-informal},
 we are left with the task of additively
estimating $\Pr_{x \in \mathcal{N}(0,1)^n} [p(x) \ge 0]$.

The first conceptual idea towards achieving the aforementioned task is this: Since Gaussian distributions are invariant under
rotations, computing the probability of interest $\Pr_{x \in \mathcal{N}(0,1)^n} [p(x) \ge 0]$ is equivalent to
computing $\Pr_{x \in \mathcal{N}(0,1)^n} [\tilde{p}(x) \ge 0]$ for a
``decoupled'' polynomial $\tilde{p}$. More precisely, there exists a
polynomial $\tilde{p} : \mathbb{R}^n \rightarrow \mathbb{R}$ of the
form $\tilde{p}(x) = \littlesum_{i=1}^n \lambda_i x_i^2 + \littlesum_{i=1}^n
\mu_i x_i +C$ such that the distributions of $p(x)$ and $\tilde{p}(x)$
(where $x \sim \mathcal{N}(0,1)^n$) are identical.  Indeed,
consider the symmetric matrix $A$ associated with the quadratic part
of $p(x)$ and let $Q^T \cdot A \cdot Q = \Lambda$ be the spectral
decomposition of $A$. It is easy to show that $\tilde{p}(x) =
p((Q \cdot x)_1, \ldots, (Q \cdot x)_n)$ is a decoupled polynomial with the same distribution as $p(x)$,  $x \sim \mathcal{N}(0,1)^n$.
The counting problem for $\tilde{p}$ should intuitively be significantly easier since there are no correlations between $\tilde{p}$'s monomials, and
hence it would be useful if $\tilde{p}$ could be efficiently exactly obtained from $p$.
Strictly speaking, this is not possible, as one cannot
obtain the exact spectral decomposition of a symmetric matrix $A$ (for example, $A$ can have irrational eigenvalues).
For the sake of this informal discussion, we assume that one can in fact obtain the exact decomposition and hence the polynomial $\tilde{p}(x)$.

Suppose we have obtained the decoupled polynomial $\tilde{p}(x)$. The second main idea in our approach is the following:
We show that one can efficiently construct a $t$-variable ``junta'' polynomial
$q : \mathbb{R}^t \rightarrow \mathbb{R}$, with $t = \poly(1/\eps)$,
such that the distribution of $q(x)$ is $O(\eps)$-close to the distribution of $\tilde{p}(x)$ in
Kolmogorov distance. (Recall that the Kolmogorov distance between two random variables is the maximum distance between their CDFs.)
To prove this, we use a powerful recent result of Chatterjee~\cite{Chatterjee:09}
\ifnum\confversion=0
(Theorem~\ref{thm:chat}),
\fi
\ifnum\confversion=1
(see Theorem~42 of the full version),
\fi
proved using Stein's method,
which provides a {\em central limit theorem} for functions of Gaussians.
Informally, this CLT says that for any function $F: \mathbb{R}^n \rightarrow \mathbb{R}$ satisfying some technical conditions,
if $g_1, \ldots, g_n$ are independent $\mathcal{N}(0,1)$ random variables,
then $F(g_1, \ldots, g_n)$ is close in total variation distance ($\ell_1$ distance between the pdfs)
to a Gaussian distribution with the ``right" mean and variance. {(We refrain from giving a more detailed description of the theorem here as the technical conditions  stem from considering generators of the Ornstein-Uhlenbeck process, thus rendering it somewhat unsuitable for an intuitive discussion.)} 
Using this result, we show that if $\max_i \lambda_i^2 \le \eps^2
\cdot \Var(\tilde{p})$ (i.e., if the maximum magnitude eigenvalue of the symmetric matrix $A$ corresponding to $p$ is ``small''), then the distribution of $\tilde{p}(x)$
(hence, also of $p(x)$) is $O(\eps)$-close to $\mathcal{N}(\mathbf{E}[\tilde{p}], \Var(\tilde{p}))$,
and hence the one-variable polynomial $q(x) = \sqrt{\Var(\tilde{p})} x_1 +
\mathbf{E}[\tilde{p}]$ is the desired junta.
In the other case, i.e., the case that
$\max_i \lambda_i^2 > \eps^2 \cdot \Var(\tilde{p})$, one must resort to
a more involved approach as described below.

If $\max_i \lambda_i^2 > \eps^2 \cdot \Var(\tilde{p})$,
we perform a ``critical index based'' case analysis
(in the style of Servedio, see \cite{Servedio:07cc}) appropriately tailored to the current setting.
We remark that such analyses have been used several times in the study of
linear and polynomial threshold functions (see e.g., \cite{DGJ+09, FGRW09, DHK+:10, DSTW:10}).
In all these previous works the critical index analysis was
performed on {\em influences} of variables in the original polynomial (or linear
form).  Two novel aspects of the analysis in the current work are that
(i) we must transform the polynomial from its original form into the
``decoupled'' version before carrying out the critical index analysis; and
(ii) in contrast to previous works, we perform the critical index analysis not on the influences of
variables, but rather on the {\em eigenvalues} of the quadratic part
of the decoupled polynomial,
i.e.,  on the values $(|\lambda_1|, \ldots, |\lambda_n|)$,
ignoring the linear part of the decoupled polynomial. The following paragraph explains
the situation in detail.

Suppose that the eigenvalues are ordered so that $|\lambda_1|
\ge \ldots \ge |\lambda_n|$. Consider the polynomials $\tilde{p}_{H,i} (x)=
C+ \littlesum_{j\le i} (\lambda_j x_j^2 + \mu_j x_j)$ (the ``head part'')
and  $\tilde{p}_{T,i}(x) = \littlesum_{j>i} (\lambda_j x_j^2 + \mu_j x_j)$
(the ``tail part'').
Define the $\tau$-critical index as the minimum $\ell \in [n]$
such that $|\lambda_{\ell}|/
\sqrt{\Var(\tilde{p}_{T,\ell-1})} \le \tau$. Let $K_0 = \Theta (\tau^{-2}\log (1/\tau))$.
If the $\tau$-critical index is more than $K_0$ then we show that the ``head part"  $\tilde{p}_{H,K_0} (x)$ (appropriately shifted)
captures the distribution of $\tilde{p}(x)$ up to a small error. In particular, the
distribution of $q(x) = \tilde{p}_{H,K_0} (x)+ \mathbf{E} [\tilde{p}_{T,K_0}(x)]$
is $O(\sqrt{\tau})$-close to that of $\tilde{p}$ in Kolmogorov distance.
On the other hand, if the critical index is $K \le K_0$, then it follows from Chatterjee's CLT that the polynomial
$q(x) = \tilde{p}_{H,K} (x)+ \sqrt{\Var(\tilde{p}_{T,K}(x))} x_{K+1} + \mathbf{E}[\tilde{p}_{T,K}(x)]$ is $O(\tau)$-close
to $\tilde{p}$ in total variation distance (hence, also in Kolmogorov distance). Note that in both cases, $q(x)$ has at most $K_0 +1$
variables and hence setting $\tau = \Theta (\eps^2)$, we obtain a polynomial $q(x)$
on $t  \le K_0+1 = \poly(1/\eps)$ variables whose distribution is Kolmogorov $O(\eps)$-close to
that of $\tilde{p}(x)$.

Thus, we have effectively reduced our initial task to the deterministic approximate computation of
$\Pr_{x \sim \mathcal{N}(0,1)^{K_0+1}} [q(x) \ge 0]$.
This task can potentially be achieved in a number of different ways
\ifnum\confversion=1
(see the discussion at the start of Section \ref{sec:second-stage}
of the full version);
\fi
\ifnum\confversion=0
(see the discussion at the start of Section \ref{sec:second-stage});
\fi
with the aim of giving
a self-contained and poly$(1/\eps)$-time algorithm, we take a
straightforward approach based on dynamic programming.
To do this, we start by discretizing the random variable $\mathcal{N}(0,1)$ to obtain a
distribution $D_{\mathcal{N}}$ (supported on $\poly(1/\eps)$ many points)
which is such that
$
\left| \Pr_{x \sim \mathcal{N}(0,1)^{K_0+1}} [q(x) \ge 0] -
\Pr_{x \sim D_{\mathcal{N}}^{K_0+1}} [q(x) \ge 0] \right| \le \eps.
$
Since $q(x)$ is a decoupled polynomial, computing $ \Pr_{x \sim
D_{\mathcal{N}}^{K_0+1}} [q(x) \ge 0]$ can be reduced to the counting
version of the knapsack problem where the weights are integers of magnitude $\poly(1/\eps)$,
 and therefore can be solved exactly in time $\poly(1/\eps)$ by standard dynamic programming.

\begin{remark}
We note that the dynamic programming approach we employ could be used to do
deterministic approximate counting for a decoupled $n$-variable Gaussian degree-2
polynomial $\tilde{p}(x)$ in $\poly(n,1/\eps)$ time even without the junta condition.  However,
the fact that $\tilde{p}$ is Kolmogorov-close to a junta polynomial
$q$ is a structural insight which has already proved useful in followup work.  Indeed,
achieving a junta structure is absolutely crucial for recent extensions of this
result \cite{DDS14junta,DS14degd} which generalize the deterministic approximate counting algorithm
presented here
(to juntas of degree-2 PTFs in \cite{DDS14junta} and to general degree-$d$ PTFs in
\cite{DS14degd}, respectively).
\end{remark}


\noindent \textbf{Singular Value Decomposition:}  The above informal
description glossed over the fact that given a matrix $A$, it is in
general not possible to exactly represent the SVD of $A$ using a
finite number of bits (let alone to exactly compute the SVD in
polynomial time). In our actual algorithm, we have to deal with
the fact that we can only achieve an ``approximate'' SVD.
We  define a notion of approximation that is sufficient for our purposes and
show that such an approximation can be computed efficiently.
Our basic algorithmic primitive is (a variant of the) well-known
``powering method'' (see~\cite{Vis:13} for a nice overview). Recall that the powering method efficiently computes an approximation to
the eigenvector corresponding to the highest magnitude eigenvalue.
In particular, the method has the following guarantee: given that the largest-magnitude
eigenvalue of $A$ has absolute value $|\lambda_{\max}(A)|$, the powering
method runs in time $\poly(n,1/\kappa)$ and returns a unit vector $w$ such that $\Vert A \cdot w \Vert_2 \ge |\lambda_{\max} (A)| \cdot (1-\kappa)$.

For our purposes, we require an additional criterion: namely,
that  the vector $A \cdot w$ is almost parallel to $w$. (This corresponds to the
notion of ``decoupling'' the polynomial discussed earlier.)
It can be shown that if one naively applies the powering method, then it is
not necessarily the case that the vector $w$ it returns will also satisfy
this requirement. To get around this, we modify the matrix $A$
before applying the powering method and show that the vector $w$ so
returned provably satisfies the required criterion, i.e.,~$A \cdot w$ is
almost parallel to $w$. An additional caveat is that the
standard ``textbook'' version of the method is a randomized
algorithm, and we of course need a deterministic algorithm.  This
can be handled by a straightforward derandomization,  resulting in only a linear time overhead.

\ifnum\confversion=0

\subsection{Organization.}

We record basic background facts from linear algebra,
probability, and analysis in Appendix~\ref{sec:background}, along
with our new extended notion of the ``critical index'' of a pair of
sequences.
Section~\ref{sec:deg2-gauss-count} establishes
our main technical result -- an
algorithm for deterministically approximately counting satisfying
assignments of a degree-2 PTF under the Gaussian $\mathcal{N}(0,1)^n$
distribution.
Section~\ref{sec:count-bool} extends this result to satisfying assignments
over $\{-1,1\}^n$.
Finally, in Section \ref{sec:moments}
we give the application to deterministic approximation of absolute moments.
\fi

\ifnum\confversion=1

{\bf Organization.}
Section~\ref{sec:deg2-gauss-count}
sketches
our main technical result -- an
algorithm for deterministically approximately counting satisfying
assignments of a degree-2 PTF under the Gaussian $\mathcal{N}(0,1)^n$
distribution.
Section~\ref{sec:count-bool} extends this result to satisfying assignments
over $\{-1,1\}^n$.
We give the application to deterministic approximation of absolute moments in Section~4
of the full version.
\fi

\section{Deterministic approximate counting for Gaussian distributions} \label{sec:deg2-gauss-count}

\subsection{Intuition.}

 Our goal is to compute,  up to an additive $\pm \eps$, the
probability $\Pr_{x \sim \mathcal{N}(0,1)^n}[p(x) \ge 0]$.  The algorithm has two
main stages.  In the first stage (Section~\ref{sec:first-stage}) we
transform the original $n$-variable degree-$2$ polynomial $p$ into
an essentially equivalent polynomial $q$ with a ``small'' number of
variables -- independent of $n$ --  and a
nice special form (a degree-$2$ polynomial with no ``cross terms").
The key algorithmic tool used in this transformation is the routine \textsf{APPROXIMATE-DECOMPOSE}  which is described
and analyzed in Section~\ref{sec:primitive}.
In particular, suppose that the original degree--$2$ polynomial is of the form $p(x) =
\littlesum_{i < j} a_{i,j} x_i x_j + \littlesum_{i} b_i x_i + C = x^T
A x +b^T x + C$.
The first stage constructs a degree-$2$ ``junta'' polynomial
$q(y_1,\dots,y_K):\R^K \to \R$ with no cross terms
(that is, every non-constant monomial in $q$ is either of the
form $y_i$ or $y_i^2$) where $K = \tilde{O}(1/\eps^4)$,
such that $| \Pr_{x \sim \mathcal{N}(0,1)^n}[p(x) \ge 0] -
\Pr_{y \sim \mathcal{N}(0,1)^K}[q(y) \ge 0] |  \le \eps.$
Theorem~\ref{thm:construct-junta-PTF}
summarizes what is accomplished in the first stage.
We view this stage as the main contribution of the paper.

In the second stage (Section~\ref{sec:second-stage})
we give an efficient deterministic algorithm to approximately count
the fraction of satisfying assignments for $q$.  Our algorithm exploits
both the fact that $q$ depends on only $\poly(1/\eps)$ variables
and the special form of $q$.
Theorem~\ref{thm:count-junta} summarizes what is accomplished in the second
stage.
Theorem~\ref{thm:count-gaussian-ptf} combines these two results and gives our
main result for deterministic approximate counting of Gaussian degree-2
PTFs.

\subsection*{The first stage:  Constructing a degree-2 junta PTF.}
To implement the first step we take advantage of the fact
that $x \sim \mathcal{N}(0,1)^n$ in order to ``decouple'' the variables.
Suppose we have computed the spectral decomposition of $A$ as $A = Q \Lambda Q^T$.
(We remark that our algorithm does not compute this decomposition explicitly; rather,
it iteratively approximates the eigenvector corresponding to the largest
magnitude eigenvalue
of $A$, as is described in detail in the pseudocode
of algorithm~{\tt Construct-Junta-PTF}.
For the sake of this intuitive explanation, we assume that we
construct the entire spectrum.)
Then, we can write $p$ as $$p(y) = y^T \Lambda y + \mu^T y + C = \littlesum_{i=1}^n \lambda_i y_i^2 +\littlesum_{i=1}^n \mu_i y_i
+ C,$$ where $y = Q^T x$ and $\mu = Q^T  b$. Since $Q$ is orthonormal, it follows that $y \sim \mathcal{N}(0,1)^n$ and that
the desired probability can be equivalently written
as $\Pr_{y \sim \mathcal{N}(0,1)^n}[p(y) \ge 0]$.
\ifnum\confversion=1
Before we proceed, we will need to define the precise notion of the critical index
that we use (which is a slightly augmented version of the definition used in the introduction).
\begin{definition} \label{def:ci}
Given a pair of sequences of non-negative numbers $\{c_i\}_{i=1}^n$
and $\{d_i\}_{i=1}^n$ where additionally the sequence $\{c_i\}_{i=1}^n$
is non-increasing, the \emph{$\tau$-critical index} of the pair
is defined to  be the smallest $0 \le i \le n-1$ such that
$
\frac{c_{i+1}}{\sum_{j>i}( c_j + d_j)} \le \tau.
$
In case there is no such number, we define the critical index to be $\infty$.
We call the sequence $\{c_i\}_{i=1}^n$  the ``main sequence" and  
$\{d_i\}_{i=1}^n$ the ``auxiliary sequence".

\end{definition}
\fi

At this point, let us arrange the variables in order, so that the sequence
$|\lambda_1|, \ldots, |\lambda_n|$ is non-increasing.
We now consider the $\eps$-critical index of the pair
of sequences $\{\lambda_i^2 \}_{i=1}^n$ and $\{\mu_i^2\}_{i=1}^n$
(here $\{ \mu_i^2 \}_{i=1}^n$ is the ``auxiliary sequence"\ifnum \confversion=0
see Definition~\ref{def:ci}\fi). The starting point of
our analysis is the following result.

\medskip

\noindent
\textbf{Informal theorem:}
 If the $\eps$-critical index  is zero, then the random variable $p(y)$, where $y \sim \mathcal{N}(0,1)^n$,
 is $O(\sqrt{\eps})$-close in total variation distance to $\mathcal{N}(\nu, \sigma^2)$ where
 $\nu = \mathbf{E}[p(y)]$ and $\sigma^2 = \Var[p(y)]$.

\medskip

As mentioned earlier, the proof of the above theorem uses a recent result of
Chatterjee~\cite{Chatterjee:09}
\ifnum\confversion=0
(Theorem~\ref{thm:chat})
\fi
which provides a central limit theorem for functions of Gaussians.
With this as starting point,
 we consider a case analysis depending on the value of the $\eps$-critical index of the
pair of sequences $\{\lambda_i^2 \}_{i=1}^n$ and $\{\mu_i^2\}_{i=1}^n$
($\{ \mu_i^2 \}_{i=1}^n$ is the auxiliary sequence). Let $K$ be the value of the the $\eps$-critical index of the pair.
If $K \le K_0 \eqdef \tilde{O}(1/\eps^2)$, then the
tail $p_{T, K}(y) =    \littlesum_{j>K} (\lambda_j y_j^2 + \mu_j y_j)$ can be replaced
by $\mathcal{N}(\nu_j, \sigma_j^2)$ where $\nu_j = \mathbf{E} [p_{T, K}(y)]$ and $\sigma^2 =
\Var[p_{T, K}(y)]$.
On the other hand, if $K \ge K_0$, then the distribution
of $q(y) = \littlesum_{j \le K_{0}} (\lambda_j y_j^2 + \mu_j y_j) +C$ differs
from the distribution of $p(y)$ by $O(\sqrt{\eps})$ in Kolmogorov distance. In either
case, we end up with a degree-$2$ polynomial on at most $K_0+1=\tilde{O}(1/\eps^2)$
variables whose distribution is $O(\sqrt{\eps})$ close to
the distribution of $p(y)$ in Kolmogorov distance.

The main difficulty in the real algorithm and analysis vis-a-vis the idealized version
described above is that computationally, it is not possible to compute
the exact spectral decomposition. Rather, what one can achieve is some
sort of an approximate decomposition (we are deliberately being vague here
about the exact nature of the approximation that can be achieved). Roughly speaking, at every stage of the algorithm
constructing $q$ several approximations are introduced and non-trivial technical work is required in bounding the corresponding error.
See Sections~\ref{sec:primitive}
and~\ref{sec:first-stage}
\ifnum \confversion=1
of the full version
\fi
for the detailed analysis.

\subsection*{The second stage:  Counting satisfying assignments of
degree-$2$ juntas over Gaussian variables.}  We are now left with the task
of (approximately) counting $\Pr[q(y) \ge 0]$.  To do this we
start by discretizing each normal random variable $y_i$ to a sufficiently fine
granularity -- it turns out that a grid of size $\poly(1/\eps)$ suffices. Let us denote by $\tilde{y}_i$ the discretized approximation to $y_i$.
We also round the coefficients of $q$ to a suitable $\poly(\eps)$ granularity and denote by ${q}'$ the rounded polynomial.
It can be shown that $q(y)$ and $q'(\tilde{y})$ are $\eps$-close in Kolmogorov distance.
Finally, this reduces computing $\Pr[q(y) \ge 0]$ to
computing $\Pr[q'(\tilde{y}) \ge 0]$. Since the terms in $q'$
are decoupled (i.e., there are no cross terms) and have small integer
coefficients, $q'(\tilde{y})$ can be expressed as a read-once branching program of size $\poly(1/\eps)$.
Using dynamic programming, one can efficiently compute the exact probability
$\Pr[q'(\tilde{y}) \ge 0]$
in time $\poly(1/\eps)$.  See Section~\ref{sec:second-stage}
\ifnum \confversion=1
of the full version
\fi
for the details.

We note that alternative algorithmic approaches could potentially
be used for this stage.  We chose our approach of discretizing and using dynamic programming because
we feel that it is intuitive and self-contained and because it easily gives a
$\poly(1/\eps)$-time algorithm for this stage.

\subsection{A useful algorithmic primitive.} \label{sec:primitive}
\ifnum\confversion=1
\noindent
In this section we state  the main
algorithmic primitive \textsf{APPROXIMATE-DECOMPOSE} used by our
procedure for constructing a degree-2 junta over Gaussian
variables.
\fi
\ifnum\confversion=0
\noindent
In this section we state and prove correctness of the main
algorithmic primitive \textsf{APPROXIMATE-DECOMPOSE} that our
procedure for constructing a degree-$2$ junta over Gaussian
variables will use.
\fi
This primitive partially ``decouples'' a given input degree-$2$ polynomial
by transforming the polynomial into an (essentially equivalent)
polynomial in which a new variable $y$ (intuitively corresponding
to the largest eigenvector of the input degree-2 polynomial's matrix)
essentially does not appear together with any other variables in any
monomials.

Theorem~\ref{thm:degree-decomp} gives a precise statement of
\textsf{APPROXIMATE-DECOMPOSE}'s performance.
\ifnum\confversion=0
The reader who is eager to see how \textsf{APPROXIMATE-DECOMPOSE}
is used may wish to proceed directly
from the statement of
Theorem~\ref{thm:degree-decomp} to Section~\ref{sec:first-stage}.

\fi
\ifnum\confversion=1
The procedure \textsf{APPROXIMATE-DECOMPOSE}
is presented and analyzed in detail in
Section~\ref{sec:primitive} of the full version.
\fi
We require the following definition to state Theorem~\ref{thm:degree-decomp}.
(Below a ``normalized linear form'' is an expression
$\sum_{i=1}^n w_i x_i$ with $\sum_{i=1}^n w_i^2 = 1.$)

\begin{definition}\label{def:residue}
Given a degree-2 polynomial $p : \mathbb{R}^n \rightarrow \mathbb{R}$
defined by $p(x)  = \littlesum_{1 \le i \le j \le n} a_{ij} x_i x_j
+ \sum_{1 \le i \le n} b_i x_i +C$ and a normalized linear form
 $L_1(x)$,
we define the
\emph{residue of $p$ with respect to $L_1(x)$},
$\mathop{Res}(p, L_1(x))$, to be the polynomial obtained by the
following process : For each $i \in [n]$,
express $x_i$ as $\alpha_{i1} L_1(x) + R_i(x)$
where $R_i(x)$ is orthogonal to the linear form $L_1(x)$. Now, consider
the polynomial $q(y_1,x) = p(\alpha_{11} y_1 + R_1(x), \ldots,
\alpha_{n1}y_1 +R_n(x))$. $\mathop{Res}(p, L_1(x))$ is defined as the
homogenous multilinear degree-2
part of $q(y_1,x)$ which has the variable $y_1$
present in all its terms.
\end{definition}

\begin{theorem}\label{thm:degree-decomp}
Let $p : \mathbb{R}^n \rightarrow \mathbb{R}$ be a degree-$2$ polynomial
(with constant term $0$) whose entries are $b$-bit integers and
let $\epsilon, \eta>0$.  There exists a deterministic algorithm
\textsf{APPROXIMATE-DECOMPOSE} which on input an explicit description
of $p$, $\eps$ and $\eta$ runs in time $\poly(n,b,1/\eps,1/\eta)$ and
has the following guarantee :

\begin{itemize}

\item[(a)] If $\lambda_{\max}(p) \ge \eps \sqrt{\Var(p)}$, then
the algorithm outputs rational numbers $\lambda_1$, $\mu_1$ and a
degree-$2$ polynomial $r : \mathbb{R}^{n+1} \rightarrow \mathbb{R}$ with the
following property: for
$(y,x_1,\dots,x_n) \sim \mathcal{N}(0,1)^{n+1}$, the
   two distributions $p(x_1,\dots,x_n)$ and $q(y_1,x_1,\dots,x_n)$ are
   identical, where $q(y_1,x_1,\dots,x_n)$ equals $\lambda_1 y_1^2 + \mu_1
   y_1 + r(y_1,x_1,\dots,x_n).$
Further, $\Var(\mathop{Res}(r,y_1)) \le 4\eta^2 \Var(p)$ and $\Var(r)
\le (1-\eps^4/40) \cdot \Var(p)$.

\item[(b)] If $\lambda_{\max}(p) < \eps \sqrt{\Var(p)}$, then the algorithm
either outputs ``small max eigenvalue" or has the same guarantee as (a).

\end{itemize}
\end{theorem}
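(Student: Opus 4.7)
My plan is to build the decomposition around a subroutine that, given the symmetric matrix $A$ encoding the quadratic part of $p(x) = x^T A x + b^T x$, produces a unit vector $w \in \mathbb{R}^n$ that is an \emph{approximate top eigenvector} of $A$: $\lambda_1 := w^T A w$ should satisfy $\lambda_1^2 \ge (1-\kappa)\lambda_{\max}(A)^2$ for a small constant $\kappa$, and the residual $\|Aw - \lambda_1 w\|_2$ should be at most $\eta\sqrt{\Var(p)}$. I would obtain this by applying a (derandomized) power iteration to $A^2$, which yields a unit $w$ with $\|Aw\|_2^2 = w^T A^2 w$ close to $\lambda_{\max}(A)^2$. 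Power iteration alone does not give the second property: if $A$ has eigenvalues of nearly equal magnitude but opposite signs, $w$ may be an essentially arbitrary vector in the two-dimensional top-magnitude eigenspace of $A^2$, in which case $Aw$ can be almost orthogonal to $w$. I would handle this by checking whether $Aw$ is nearly parallel to $w$ after the iteration; if not, then $w$ and $Aw$ span (up to small error) a two-dimensional $A$-invariant subspace on which $A$ restricts to a symmetric $2\times 2$ matrix with eigenvalues $\pm\lambda_{\max}(A)$, and diagonalizing that $2\times 2$ matrix exactly extracts an approximate eigenvector with the required near-parallelism. Derandomization is obtained by sweeping over a polynomial-size net of starting vectors, and precision is controlled by rounding so that $w$ and all derived quantities have $\poly(n,b,1/\eps,1/\eta)$ bit complexity. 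I expect the design and precision analysis of this subroutine to be the main technical obstacle.

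Given such a $w$, set $L_1(x) = w^T x$ and, per Definition~\ref{def:residue}, decompose each $x_i = w_i y_1 + R_i(x)$ with $y_1 = w^T x$ and $R_i(x) = x_i - w_i(w^T x)$. For $x \sim \mathcal{N}(0,1)^n$ this is an orthogonal change of variables, so $y_1 \sim \mathcal{N}(0,1)$ is independent of $(R_1,\dots,R_n)$; equivalently, adjoining a fresh Gaussian $y_1$ does not change the joint distribution of $(y_1, R_1(x), \dots, R_n(x))$. Substituting $x = y_1 w + R(x)$ into $p$ yields
\begin{equation*}
p(x) \;=\; \lambda_1 y_1^2 + \mu_1 y_1 + \underbrace{2 y_1 (Aw)^T R(x) \,+\, R(x)^T A R(x) \,+\, b^T R(x)}_{r(y_1,\,x_1,\dots,x_n)},
\end{equation*}
with $\mu_1 = b^T w$, which gives the required form $q = \lambda_1 y_1^2 + \mu_1 y_1 + r$ and the claimed distributional identity.

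The residue $\mathop{Res}(r, y_1)$ is precisely the mixed term $2 y_1 (Aw)^T R(x)$. Using $y_1 \perp R$, $\E[y_1^2]=1$, and $\Cov(R) = I - ww^T$,
\begin{equation*}
\Var(\mathop{Res}(r, y_1)) \;=\; 4 (Aw)^T (I - ww^T)(Aw) \;=\; 4\bigl(\|Aw\|_2^2 - \lambda_1^2\bigr) \;=\; 4\|Aw - \lambda_1 w\|_2^2 \;\le\; 4\eta^2 \Var(p),
\end{equation*}
giving the first bound. For the second, the distributional identity gives $\Var(r) = \Var(p) - \Var(\lambda_1 y_1^2 + \mu_1 y_1) - 2\Cov(\lambda_1 y_1^2 + \mu_1 y_1, r)$. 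A direct check using $y_1 \perp R$ and the vanishing odd moments of $\mathcal{N}(0,1)$ (so $\Cov(y_1, y_1^2) = 0$) shows that every cross-covariance vanishes and that $\Var(\lambda_1 y_1^2 + \mu_1 y_1) = 2\lambda_1^2 + \mu_1^2$, hence $\Var(r) = \Var(p) - (2\lambda_1^2 + \mu_1^2)$. In case (a), the approximate-eigenvector guarantee gives $\lambda_1^2 \ge (1-\kappa)\eps^2 \Var(p)$, so with $\kappa$ a small absolute constant we obtain $2\lambda_1^2 \ge (\eps^4/40)\,\Var(p)$ with considerable slack. In case (b), either the subroutine still certifies such a $w$ (and the same argument applies) or it returns the ``small max eigenvalue'' flag.
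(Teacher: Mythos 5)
Your reduction of the theorem to an approximate-top-eigenvector primitive is essentially the paper's own route: the substitution $x=(w^Tx)w+(I-ww^T)x$, the distributional identity obtained by replacing $w^Tx$ with a fresh independent Gaussian $y_1$, and the computation $\Var(\mathop{Res}(r,y_1))=4\bigl(\|Aw\|_2^2-\lambda_1^2\bigr)\le 4\eta^2\Var(p)$ are exactly Claims~\ref{clm:construction} and~\ref{clm:bound-res}. You genuinely differ in two places. First, your variance-decrease argument is cleaner than the paper's: checking that all cross-covariances between $\lambda_1y_1^2+\mu_1y_1$ and $r$ vanish gives the exact identity $\Var(r)=\Var(p)-2\lambda_1^2-\mu_1^2$, whereas the paper routes the bound through Fact~\ref{fac:variance-difference} twice and loses constants; your identity is correct and yields $(1-\eps^4/40)$ with slack. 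Second, the primitive itself, which is where the bulk of the paper's proof lives: you run power iteration on $A^2$ and repair the $\pm\lambda_{\max}$ sign degeneracy by diagonalizing the $2\times2$ compression of $A$ to $\mathrm{span}(w,Aw)$, while the paper (Theorem~\ref{thm:primitive}) instead shifts to $A'=A+\lceil\|A\|_F\rceil I$, so that the spectrum becomes non-negative and plain power iteration on $A'$ (Lemma~\ref{lem:powering}, started from each $e_i$ for derandomization) already returns a $w$ with $Aw$ nearly parallel to $w$; the shift buys a single-case analysis with short quantitative proofs (Propositions~\ref{prop:eigen-max1}--\ref{prop:eigen-max3}), whereas your 2D fix should also work but the perturbation and precision analysis you defer as ``the main technical obstacle'' is precisely the nontrivial content there. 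One detail you should make explicit in case (b): a certificate $\lambda_1^2\ge(1-\kappa)\lambda_{\max}(A)^2$ by itself does not give the variance decrease when $\lambda_{\max}$ is small, so the algorithm must compare the certified value against an absolute threshold and output the decomposition only if, say, $\lambda_1^2\ge\Omega(\eps^4)\Var(p)$, returning the ``small max eigenvalue'' flag otherwise; this is what the paper's checks $\|A\|_F^2\ge\eps^2\Var(p)$ and $\tilde{\lambda}^2\ge(1-9\delta^{1/4})\eps^2\|A\|_F^2$ accomplish, and it is the reason the attainable decrement is $\Theta(\eps^4)$ rather than the $\Theta(\eps^2)$ your case-(a) computation alone would suggest.
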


\ifnum\confversion=0
In the rest of Section~\ref{sec:primitive} we prove Theorem~\ref{thm:degree-decomp}, but first we give
some high-level intuition.
\fi
Recall from the introduction that we would like to
compute the SVD of the symmetric matrix corresponding to the quadratic
part of the degree-2 polynomial $p$, but the exact SVD is hard to compute.
\textsf{APPROXIMATE-DECOMPOSE} works by computing an approximation to the
largest eigenvalue-eigenvector pair, and using the approximate eigenvector to
play the role of $L_1$ in Definition \ref{def:residue}.

The case that is of most interest to us is when the largest eigenvalue
has large magnitude compared to the square root of the variance of $p$
(since we will use Chatterjee's theorem to deal with
the complementary case) so we focus on this case below.
For this case, part (a) of Theorem \ref{thm:degree-decomp}
says that the algorithm outputs a degree-$2$ polynomial $q(y_1,x_1,\dots,x_n)$
with the same distribution as $p$.  Crucially, in this polynomial $q$,
the first variable $y_1$ is ``approximately decoupled" from the rest of
the polynomial, namely $r$ (because $\Var(\mathop{Res}(r,y_1))$ is small),
and moreover $\Var(r)$ is substantially smaller than $\Var(p)$ (this
is important because intuitively it means we have ``made progress'' on
the polynomial $p$).
Note that if we were given the exact eigenvalue-eigenvector pair
corresponding to the largest magnitude eigenvalue, it would be possible to
meet the conditions of case (a) with $\eta=0$.

While approximating the largest eigenvector is a well-studied problem,
we could not find any off-the-shelf solution
with the guarantees we required.
\textsf{APPROXIMATE-DECOMPOSE} adapts the
well-known powering method for finding the largest eigenvector to give
the desired guarantees.
\ifnum\confversion=1
The details can be found in the full version of the paper.
\fi

\ifnum\confversion=0

\subsubsection{Decomposing a matrix.}

In order to describe the \textsf{APPROXIMATE-DECOMPOSE} algorithm
we first need a more basic procedure which we call
\textsf{APPROXIMATE-LARGEST-EIGEN}.
Roughly speaking, given a real symmetric matrix $A$ with a large-magnitude
eigenvalue, the \textsf{APPROXIMATE-LARGEST-EIGEN} procedure
outputs approximations of the largest-magnitude
eigenvalue and its corresponding eigenvector.
Theorem~\ref{thm:primitive} gives a precise performance guarantee:

\begin{theorem} \label{thm:primitive}
Let $A \in \R^{n \times n}$ be a symmetric matrix whose entries are $b$-bit
integers (not all 0) and $\eps,\eta>0$ be  given
rational numbers. There exists a deterministic algorithm
\textsf{APPROXIMATE-LARGEST-EIGEN} which on input $A$, $\eps$ and $\eta$, runs in time $\poly(n, b, 1/\eps,1/\eta)$
and has the following behavior:
\begin{itemize}
\item[(a)] If $|\lambda_{\max}(A)| \ge \eps \|A\|_F$,
the algorithm outputs
a number $\tilde{\lambda} \in \R_+$ and unit vector $\tilde{w} \in \R^n$
such that

\begin{itemize}
\item [(i)] $(1-\eta)
|\lambda_{\max}(A)| \le |\tilde{\lambda}
| \le |\lambda_{\max}(A)|$;
\item [(ii)] the matrix $\tilde{B} = A - \tilde{\lambda} (\tilde{w}\tilde{w}^{T})$ satisfies $\|\tilde{B}\tilde{w}\|_2 < \eta \|A\|_F$; and
\item [(iii)] $\| \tilde{B} \|_F \le (1-\eps^2/40) \cdot \Vert A \Vert_F$.
\end{itemize}

\item[(b)] If $|\lambda_{\max}(A)| < \eps \|A\|_F$,
the algorithm either outputs ``small max eigenvalue'' or behaves
as in case (a).
\end{itemize}

\ignore{
\inote{Let's see exactly what we need from our algorithm. Do we need a separation really?}
}

\end{theorem}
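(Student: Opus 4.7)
\textbf{Proof plan for Theorem~\ref{thm:primitive}.} My plan is to adapt the classical power iteration, with two modifications that address the distinctive requirements here: a derandomization to remove the need for a random starting vector, and a small post-processing step that ensures the output $\tilde w$ is close to a genuine eigenvector of $A$ itself (not merely of $A^2$) with the correct sign. For each $i\in[n]$, I start from $u_0=e_i$ and iterate $u_{k+1}=A^2 u_k/\|A^2 u_k\|_2$ for $T=\Theta(\log(n/\eta)/\eps^2)$ rounds, rounding each coordinate to $\poly(b,n,\log(1/\eps\eta))$ bits to keep arithmetic polynomial-time. Because $\sum_i\langle v,e_i\rangle^2=1$ for any unit $v$, at least one starting vector has overlap $\geq 1/\sqrt n$ with the top eigenspace of $A^2$, so enumerating over $i$ costs only a factor of $n$.

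Since $A^2$ is symmetric PSD with top eigenvalue $\lambda_{\max}(A)^2$, the standard power-method analysis yields a $u_T$ whose component outside the top eigenspace of $A^2$ has squared norm at most $(\lambda_2/\lambda_{\max})^{2T}\cdot n$; choosing $T$ as above and using the hypothesis $|\lambda_{\max}(A)|\geq\eps\|A\|_F$ (so the top eigenvalue of $A^2$ is $\geq\eps^2\|A\|_F^2$ while the sum of the remaining eigenvalues is $\leq\|A\|_F^2$) gives exponential convergence. The output $u_T$ may, however, lie in the $A^2$-eigenspace but mix the $+|\lambda_{\max}|$ and $-|\lambda_{\max}|$ eigenspaces of $A$, so it need not be an $A$-eigenvector. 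To repair this, I orthonormalize $\{u_T, Au_T\}$ to a basis $P=[p_1\;p_2]$ of the $2$-D subspace $W=\mathrm{span}(u_T,Au_T)$, form the symmetric $2\times 2$ matrix $M=P^TAP$, diagonalize $M$ exactly (this is a closed-form quadratic), and take $\tilde w=Pz$ where $z\in\R^2$ is the unit eigenvector of $M$ with eigenvalue $\tilde\lambda$ of larger magnitude.

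The three conditions then follow with standard bookkeeping. For (i), the Rayleigh-quotient $|\tilde\lambda|=|\tilde w^T A\tilde w|$ inherits the power-method approximation and the $2$-D refinement can only improve it; taking $T$ larger by a constant factor ensures the $(1-\eta)$ bound. For (ii), decompose $u_T=u_\parallel+u_\perp$ with $u_\parallel$ in the top $A^2$-eigenspace and $u_\perp$ orthogonal; then $W$ agrees with a $2$-D subspace of the top $A^2$-eigenspace up to $O(\|u_\perp\|)$ error, $\tilde w$ is an $M$-eigenvector (so the component of $A\tilde w-\tilde\lambda\tilde w$ inside $W$ vanishes), and the component outside $W$ has norm bounded by $\|A\|_{\mathrm{op}}\|u_\perp\|\leq\|A\|_F\cdot\|u_\perp\|\leq\eta\|A\|_F$ for the chosen $T$. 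For (iii), the identity $\|A-\tilde\lambda\tilde w\tilde w^T\|_F^2=\|A\|_F^2-2\tilde\lambda\,\tilde w^T A\tilde w+\tilde\lambda^2$ combined with $\tilde w^T A\tilde w=\tilde\lambda$ (by construction) gives $\|\tilde B\|_F^2=\|A\|_F^2-\tilde\lambda^2\leq (1-(1-\eta)^2\eps^2)\|A\|_F^2$, and choosing $\eta$ a small absolute constant times $\eps$ yields the stated $1-\eps^2/40$ factor after taking square roots. In case (b), I simply threshold: if the final $|\tilde\lambda|<\eps\|A\|_F$ I output ``small max eigenvalue,'' otherwise the same analysis applies.

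The main obstacle, and the reason a plain reference to the power method does not suffice, is that power iteration on $A^2$ can produce a vector that is a nontrivial mixture of the $\pm|\lambda_{\max}|$ eigenspaces of $A$, which would fail condition (ii). The $2$-D projection step handles this cleanly when both signs are present, but one must verify it also behaves correctly in the degenerate case where only one sign occurs (so that $Au_T$ is nearly parallel to $u_T$ and the Gram--Schmidt step is numerically delicate); there I argue directly that $u_T$ is already approximately an $A$-eigenvector and bound the residual accordingly. The other bookkeeping burden is propagating rounding errors through the $T$ iterations of the power method and the $2\times2$ diagonalization, which is routine but must be done with enough slack to absorb into the $\eta$ budget.
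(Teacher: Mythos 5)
Your overall architecture (derandomized power iteration plus a two\-dimensional Rayleigh--Ritz step to undo the $\pm|\lambda_{\max}|$ mixing, and the clean Pythagorean identity $\|\tilde B\|_F^2=\|A\|_F^2-\tilde\lambda^2$ for part (iii)) is a legitimate alternative to the paper's route, which instead shifts to $A'=A+\lceil\|A\|_F\rceil I$ so that all eigenvalues become nonnegative and the sign\-mixing problem disappears before power iteration is ever run. But as written your convergence step fails. You bound the component of $u_T$ outside the \emph{exact} top eigenspace of $A^2$ by $(\lambda_2/\lambda_{\max})^{2T}\cdot n$ and then assert that the hypothesis $|\lambda_{\max}(A)|\ge\eps\|A\|_F$ "gives exponential convergence." It does not: that hypothesis controls how many eigenvalues can be large, not the gap between $\lambda_1(A^2)$ and $\lambda_2(A^2)$. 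If $A$ has two same\-sign eigenvalues $\lambda$ and $\lambda(1-2^{-n})$ (or a $+\lambda$ and a $-\lambda(1-2^{-n})$ pair), the ratio $(\lambda_2(A^2)/\lambda_1(A^2))^{2T}$ is essentially $1$ for every polynomial $T$, so $u_T$ retains constant mass outside the exact top eigenspace forever. Since your proofs of (i) and (ii) are run through the decomposition $u_T=u_\parallel+u_\perp$ with $u_\perp$ measured against that exact eigenspace and assumed to have norm $O(\eta)$, those arguments break down precisely in the near\-degenerate case, which is the case the theorem is designed to survive.

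The repair is standard but changes the analysis and the parameters: use the gap\-free guarantee (as in the paper's Lemma~\ref{lem:powering}), namely that after $T=\Theta(\log(n)/\kappa)$ iterations the mass of $u_T$ on eigenvectors of $A^2$ with eigenvalue below $(1-\kappa)\lambda_{\max}(A)^2$ is small, and then redo your $2\times 2$ Rayleigh--Ritz error analysis with respect to the span of all \emph{near\-top} eigenvectors of both signs (this is exactly what the paper's set $S$ and Propositions~\ref{prop:eigen-max1}--\ref{prop:eigen-max3} do after the shift). Inside that near\-top space $A$ is only approximately $\pm\lambda_{\max}$ times the identity on each sign block, so the residual picks up an extra $O(\kappa)\|A\|_F$ term beyond the $\|u_\perp\|$ term; to meet guarantee (ii) you therefore need $\kappa=O(\eta)$ (in fact $\poly(\eps,\eta)$ to cover (i) as well), which forces $T$ to grow like $\poly(1/\eps,1/\eta)\cdot\log n$. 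Your stated $T=\Theta(\log(n/\eta)/\eps^2)$, with $\eta$ appearing only inside the logarithm, cannot deliver a residual of $\eta\|A\|_F$ when $\eta\ll\eps^2$ (compare the paper's choice $\delta=\min\{\eps^4/100,\eta^4/10^8\}$ and $k=\Theta(\log(n)/\delta)$). With these corrections your two\-dimensional refinement does handle the sign degeneracy that plain power iteration on $A^2$ leaves behind, and your derivation of (iii) via $\tilde w^TA\tilde w=\tilde\lambda$ is actually simpler than the paper's trace argument, which is needed there because their $\tilde\lambda=\|A\tilde w\|_2$ is not the Rayleigh quotient.
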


Let us describe the \textsf{APPROXIMATE-LARGEST-EIGEN} algorithm.
Let $2^{-m} \le \epsilon \le 2^{-m+1}$. The running time of the
algorithm will have a polynomial dependence on $2^m$ .
Without loss of generality, assume that $\lambda_{\max}(A)$ is a positive
number. Instead of working directly with the matrix $A$, we will work with the
matrix $A ' = A + t \cdot I$ where $t = \lceil \Vert A \Vert_F \rceil$.
Note that an eigevector-eigenvalue pair $(v,\lambda)$ of $A$ maps to the
pair $(v,\lambda +t)$ for $A'$.

For $\delta = \min \{\eps^4/100, \eta^4/10^8 \}$,
the \textsf{APPROXIMATE-LARGEST-EIGEN} algorithm works as follows :

\begin{itemize}

\item For unit vectors $e_1, \ldots, e_n$
and $k =  \lceil \frac{1}{2\delta} \cdot \log (9n/4) \rceil$,
the algorithm computes $$\mu_i = \frac{\Vert A' \cdot (A'^k \cdot e_i)
\Vert_2^2}{ \Vert (A'^k \cdot e_i) \Vert_2^2}.$$

\item Let $i^{\ast} = \arg\max_{i \in [n]} \mu_i$, and define
  $$
w = \frac{A'^k \cdot e_{i^{\ast}}}{\Vert A'^k \cdot e_{i^{\ast}} \Vert_2},
\quad \lambda = \Vert A \cdot w \Vert_2.
$$
Note that since $w$ can have irrational entries,
exact computation of $w$ and $\lambda$ is not feasible. However, in time $\poly(1/\delta, b,n)$, we can compute a unit vector $\tilde{w}$ so that $\Vert w - \tilde{w} \Vert_2 \le \poly(\delta, 1/b,1/n)$.
Define $\tilde{\lambda}$ as $\Vert A \cdot \tilde{w} \Vert_2$
rounded to a precision $\poly(\delta,1/b,1/n)$. It is easy
to see that $| \tilde{\lambda} - \lambda| \le \poly(\delta,1/b,1/n)$.
\item If $\tilde{\lambda}^2 \ge (1-9 \cdot \delta^{1/4}) \cdot \eps^2 \cdot  \Vert A \Vert_F^2$, then output the pair $(\tilde{w},\tilde{\lambda})$. Else, output ``small max eigenvalue".
   \end{itemize}

\medskip
\noindent {\bf Proof of Theorem~\ref{thm:primitive}:}
We start with the following simple claim:

\begin{claim}\label{clm:eigen-1}
If $|\lambda_{\max} (A)| \le (\epsilon/2) \cdot \Vert A \Vert_F$,
then \textsf{APPROXIMATE-LARGEST-EIGEN} outputs ``small max eigenvalue".
\end{claim}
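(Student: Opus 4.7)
The plan is to show that under the hypothesis $|\lambda_{\max}(A)| \le (\epsilon/2)\,\|A\|_F$, the quantity $\tilde{\lambda}^2$ computed by \textsf{APPROXIMATE-LARGEST-EIGEN} is strictly below the acceptance threshold $(1 - 9\delta^{1/4})\,\epsilon^2\,\|A\|_F^2$, so the final conditional forces the algorithm into the ``small max eigenvalue'' branch. The key point is that this part of the proof does not require the power-iteration analysis at all: it only uses that $\tilde{w}$ is a \emph{unit} vector and that $\tilde{\lambda}$ is close to $\|A\tilde{w}\|_2$.

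First I would record the two structural facts about the output. Since $A$ is real symmetric, its operator norm equals the magnitude of its largest eigenvalue, so for every unit vector $u \in \mathbb{R}^n$ one has $\|A u\|_2 \le \|A\|_{\mathrm{op}} = |\lambda_{\max}(A)|$. The pseudocode guarantees that $\tilde{w}$ is a unit vector and that $\tilde{\lambda}$ is $\|A \tilde{w}\|_2$ rounded to precision $\poly(\delta,1/b,1/n)$. Combining these, and using $\|A\|_F \ge 1$ (which holds since $A$ has integer entries that are not all zero), I obtain
\[
\tilde{\lambda} \;\le\; \|A \tilde{w}\|_2 + \gamma \;\le\; |\lambda_{\max}(A)| + \gamma,
\]
where $\gamma = \poly(\delta, 1/b, 1/n)$ is a rounding error that is negligible compared to $\|A\|_F$ (in particular one may assume $\gamma \le \delta \|A\|_F$ after choosing the rounding precision appropriately).

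Next I would plug in the hypothesis $|\lambda_{\max}(A)| \le (\epsilon/2)\|A\|_F$ to obtain
\[
\tilde{\lambda}^2 \;\le\; \bigl((\epsilon/2)\|A\|_F + \gamma\bigr)^2 \;\le\; \tfrac{\epsilon^2}{4}\,\|A\|_F^2 \;+\; O(\delta)\,\epsilon\,\|A\|_F^2.
\]
It then suffices to verify $\tfrac{\epsilon^2}{4} + O(\delta)\epsilon < (1 - 9\delta^{1/4})\epsilon^2$, which rearranges to $9\delta^{1/4} + O(\delta/\epsilon) < 3/4$. Because $\delta = \min\{\epsilon^4/100,\,\eta^4/10^8\} \le \epsilon^4/100$, we have $9\delta^{1/4} \le 9\epsilon/\sqrt[4]{100} < 3\epsilon$ and $\delta/\epsilon \le \epsilon^3/100$, so the inequality is comfortable for all $\epsilon$ in the regime of interest (say $\epsilon \le 1/4$; the claim is trivially compatible with the statement of Theorem~\ref{thm:primitive} for larger $\epsilon$ since part (b) allows the algorithm to behave as in case (a) as well).

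The main obstacle to watch is purely bookkeeping: making sure the rounding error $\gamma$ in computing $\tilde{w}$ and $\tilde{\lambda}$ is chosen small enough (as a polynomial in $\delta, 1/n, 1/b$) that it is absorbed into the slack between $\epsilon^2/4$ and $(1 - 9\delta^{1/4})\epsilon^2$. No spectral information about $A'$ or the power method is needed here; the claim is established purely via $\|A u\|_2 \le |\lambda_{\max}(A)|$ for unit $u$, which is why this is a suitable warm-up before analyzing the harder case $|\lambda_{\max}(A)| \ge \epsilon \|A\|_F$.
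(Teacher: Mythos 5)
Your proposal is correct and follows essentially the same route as the paper: bound $\tilde{\lambda}$ by $\|A\tilde{w}\|_2 \le |\lambda_{\max}(A)| \le (\eps/2)\|A\|_F$ (using that $\tilde{w}$ is a unit vector), so $\tilde{\lambda}^2 \lesssim \eps^2\|A\|_F^2/4$, and then use the choice $\delta \le \eps^4/100$ to see this falls below the acceptance threshold $(1-9\delta^{1/4})\eps^2\|A\|_F^2$. The only difference is that you track the $\poly(\delta,1/b,1/n)$ rounding error and the small-$\eps$ regime explicitly, which the paper's one-line argument silently absorbs.
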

\begin{proof}
Note that if $|\lambda_{\max}(A)
|\le (\epsilon/2) \cdot \Vert A \Vert_F$, then
$\tilde{\lambda}^2 \le \epsilon^2 \Vert A \Vert_F^2/4$.
By our choice of $\delta$ we have that
$\tilde{\lambda}^2 <  (1-9 \cdot \delta^{1/4}) \cdot \eps^2 \cdot
\Vert A \Vert_F^2$, hence the algorithm will output ``small
max eigenvalue".
\end{proof}

Next let us recall the ``powering method" to compute the largest eigenvalue
of a symmetric matrix. See the monograph by Vishnoi~\cite{Vis:13}
(the following statement is implicit in Lemma~8.1).
\begin{lemma}\label{lem:powering}
Let $A \in \mathbb{R}^{n \times n}$ be a symmetric matrix, $\lambda_{\max}(A)$ be the largest magnitude eigenvalue of $A$ and $v$ be the corresponding eigenvector. Let $w$ be any unit vector such that $|\braket{v,w} | \ge \frac{2}{3 \sqrt{n}}$. Then, for $k  > \frac{1}{2\kappa} \cdot \log (9n/4)$, $\Vert A \cdot (A^k \cdot v) \Vert_2 \ge |\lambda_{\max} (A)| \cdot (1-\kappa) \cdot \Vert (A^k \cdot v) \Vert_2$.
\end{lemma}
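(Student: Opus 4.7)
The plan is to analyze the powering method in the eigenbasis of $A$. Since $A$ is symmetric, write $A = \sum_{i=1}^n \lambda_i u_i u_i^T$ with $u_1,\dots,u_n$ an orthonormal eigenbasis ordered so that $|\lambda_1| \ge |\lambda_2| \ge \cdots \ge |\lambda_n|$ and $|\lambda_1| = |\lambda_{\max}(A)|$; take $v = u_1$. Expand $w = \sum_i \alpha_i u_i$, so $\sum_i \alpha_i^2 = 1$ and $\alpha_1^2 = \braket{v,w}^2 \ge 4/(9n)$ by hypothesis. For any integer $m \ge 0$ we have $\|A^m w\|_2^2 = \sum_{i=1}^n \lambda_i^{2m} \alpha_i^2$, so (reading the norms as applying to $A^k w$ rather than $A^k v$, which would be trivial) the desired inequality is equivalent to the Rayleigh-type statement
\[
\sum_{i=1}^n \lambda_i^{2k+2}\alpha_i^2 \;\ge\; (1-\kappa)^2 \lambda_{\max}^2 \sum_{i=1}^n \lambda_i^{2k}\alpha_i^2.
\]

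The next step is to split the indices by magnitude. Define the heavy set $H = \{i : \lambda_i^2 \ge (1-\kappa)^2 \lambda_{\max}^2\}$ and the light set $L = [n]\setminus H$. On $H$, each term contributes an eigenvalue squared that is at least $(1-\kappa)^2 \lambda_{\max}^2$, so that
\[
\sum_{i=1}^n \lambda_i^{2k+2} \alpha_i^2 \;\ge\; \sum_{i \in H} \lambda_i^{2k+2}\alpha_i^2 \;\ge\; (1-\kappa)^2 \lambda_{\max}^2 \sum_{i\in H}\lambda_i^{2k}\alpha_i^2.
\]
Moreover the heavy part of the denominator is bounded below by the leading term alone: $\sum_{i \in H}\lambda_i^{2k}\alpha_i^2 \ge \alpha_1^2 \lambda_{\max}^{2k} \ge \tfrac{4}{9n} \lambda_{\max}^{2k}$, which is precisely the role played by the hypothesis on $|\braket{v,w}|$.

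The third step is to damp the light contribution by iteration. For each $i \in L$ we have $\lambda_i^{2k} \le (1-\kappa)^{2k} \lambda_{\max}^{2k}$, and since $\sum_{i \in L}\alpha_i^2 \le 1$,
\[
\sum_{i \in L}\lambda_i^{2k}\alpha_i^2 \;\le\; (1-\kappa)^{2k} \lambda_{\max}^{2k} \;\le\; e^{-2k\kappa}\lambda_{\max}^{2k}.
\]
The choice $k > \tfrac{1}{2\kappa}\log(9n/4)$ is exactly calibrated to force $e^{-2k\kappa} < 4/(9n)$, which is no larger than the heavy-sum lower bound from the previous step. Combining these bounds, the light tail is dominated by the heavy head in $\|A^k w\|_2^2$, so $\sum_{i\in H}\lambda_i^{2k}\alpha_i^2$ is a definite fraction of the full $\|A^k w\|_2^2$, and substituting into the bound for the numerator gives the claimed Rayleigh inequality.

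The main obstacle is purely quantitative: balancing the threshold used to define $H$ versus $L$, the initial overlap bound $\alpha_1^2 \ge 4/(9n)$, and the exponential damping $(1-\kappa)^{2k}$ so that the residual slack between the heavy contribution and the full $\|A^k w\|_2^2$ can be absorbed into the $(1-\kappa)$ factor on the right-hand side (for instance by using threshold $(1-\kappa/c)$ for a small constant $c>1$ in the definition of $H$, trading a slightly stronger exponential decay against a slightly weaker per-term bound). All of this is a standard tuning once the spectral decomposition is set up, with no further input needed beyond the orthonormal expansion and the hypothesis on $\braket{v,w}$.
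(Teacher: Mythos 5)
You set this up correctly (including the observation that the statement's $A^k v$ must be read as $A^k w$), but the last step is a genuine gap, not ``standard tuning''; note also that the paper itself offers no proof to compare against -- it cites the lemma as implicit in Lemma~8.1 of Vishnoi's monograph. Concretely, with the heavy set $H=\{i:\lambda_i^2\ge(1-\kappa)^2\lambda_{\max}^2\}$ your two bounds give $\sum_i\lambda_i^{2k+2}\alpha_i^2\ge(1-\kappa)^2\lambda_{\max}^2\sum_{i\in H}\lambda_i^{2k}\alpha_i^2$ and (since $e^{-2k\kappa}<4/(9n)\le\alpha_1^2$) only that the light part of $\|A^kw\|_2^2$ is at most the heavy part, i.e.\ $\sum_{i\in H}\lambda_i^{2k}\alpha_i^2\ge\tfrac12\|A^kw\|_2^2$. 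That proves $\|A^{k+1}w\|_2\ge\frac{1-\kappa}{\sqrt2}|\lambda_{\max}|\,\|A^kw\|_2$, a constant factor short of the claim, and with the lemma's constants there is no slack to absorb it: $k$ may be the least integer exceeding $\frac{1}{2\kappa}\log(9n/4)$, so $e^{-2k\kappa}$ can be arbitrarily close to $4/(9n)$. Your proposed re-tuning does not repair this. Raising the threshold to $(1-\kappa/c)^2\lambda_{\max}^2$ with $c>1$ weakens the tail bound to $(4/(9n))^{1/c}\lambda_{\max}^{2k}$, which for large $n$ dwarfs the guaranteed head mass $\frac{4}{9n}\lambda_{\max}^{2k}$, so the heavy fraction of $\|A^kw\|_2^2$ is no longer bounded below by a constant; lowering it to $(1-c\kappa)^2\lambda_{\max}^2$ makes the per-term factor $(1-c\kappa)^2<(1-\kappa)^2$ and the claimed inequality is lost at the first step. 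So no choice of threshold alone yields the stated $(1-\kappa)$ with the stated $k$.

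The lemma is nevertheless true with these constants, and there are two ways to finish. One is a sharper accounting along your lines: do not discard the light numerator terms, bound the per-index deficit $\bigl((1-\kappa)^2\lambda_{\max}^2-\lambda_i^2\bigr)\lambda_i^{2k}$ by $\max_{0\le c\le\theta}c^k(\theta-c)=\theta^{k+1}\frac{k^k}{(k+1)^{k+1}}$ (with $\theta=(1-\kappa)^2$ after normalizing $\lambda_{\max}^2=1$), and compare it with the surplus $\kappa(2-\kappa)\lambda_{\max}^{2k}\alpha_1^2$ contributed by the $i=1$ term; the extra factor of order $1/(k+1)$ from this maximization is exactly what closes the gap, and it is an additional idea, not a threshold choice. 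The cleaner route, which matches the constants with no case analysis and is essentially what lies behind the citation, avoids the head/tail split entirely: by symmetry and Cauchy--Schwarz, $\|A^{j+1}w\|_2\,\|A^{j-1}w\|_2\ge\braket{A^{j+1}w,A^{j-1}w}=\|A^jw\|_2^2$, so the ratios $r_j=\|A^{j+1}w\|_2/\|A^jw\|_2$ are nondecreasing in $j$; hence $r_k^k\ge\prod_{j=0}^{k-1}r_j=\|A^kw\|_2\ge|\braket{v,w}|\,|\lambda_{\max}|^k\ge\frac{2}{3\sqrt n}|\lambda_{\max}|^k$, giving $r_k\ge|\lambda_{\max}|\bigl(\tfrac{2}{3\sqrt n}\bigr)^{1/k}=|\lambda_{\max}|e^{-\frac{1}{2k}\log(9n/4)}>|\lambda_{\max}|e^{-\kappa}\ge(1-\kappa)|\lambda_{\max}|$, which is precisely the statement.
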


Let $v_{\max}$ be the eigenvector corresponding to the largest
eigenvalue of $A'$ and $\lambda_{\max}(A')$ be the corresponding
eigenvalue. It is clear that there is some $i \in [n]$ such that
$|\braket{v_{\max},e_i}|
\ge \frac{1}{\sqrt{n}}$.

Let $i^{\ast}$ be any such index.
We will show that
\begin{equation} \label{eq:wlambda}
w = \frac{A'^k \cdot e_{i^{\ast}}}{\Vert A'^k \cdot e_{i^{\ast}} \Vert_2}
\quad \textrm{ and } \quad \lambda = \Vert A \cdot w \Vert_2
\end{equation}
are such that $\tilde{w}$ and $\tilde{\lambda}$ satisfy the conditions
given in (a) and (b) of Theorem~\ref{thm:primitive}.
Lemma~\ref{lem:powering} gives that $\sqrt{\mu_{i^\ast}} \geq (1-\delta) \cdot \lambda_{\max}(A')$, and hence $\|A' \cdot w\|_2^2 \geq (1-\delta)^2 \lambda_{\max}(A')^2$.

Let $v_1, \ldots, v_n$ be the eigenvectors of $A'$ (and hence of $A$)
and $\lambda'_1, \ldots, \lambda'_n$ be the corresponding eigenvalues of $A'$.
Let $w = \sum_{i=1}^n c_i \cdot v_i$ and let $S = \{i \in [n] :
\lambda'_i \ge (1-\sqrt{\delta}) \cdot \lambda_{\max} (A')\}$.

\begin{proposition}\label{prop:eigen-max1}
$\sum_{i \not \in S} c_i^2 \le 2 \sqrt{\delta}$.
\end{proposition}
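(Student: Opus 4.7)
The plan is to exploit the exponential amplification provided by the powering step via a clean eigen-expansion calculation, which is the standard way to quantify how much of $w$'s mass must lie in the top-eigenvalue subspace.

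First, I would record two structural facts about $A'$. Since $t = \lceil \|A\|_F\rceil \geq |\lambda_i(A)|$ for every eigenvalue of $A$, all eigenvalues $\lambda'_i = \lambda_i + t$ of $A'$ are non-negative, and the spectral shift leaves the eigenvectors unchanged. Therefore $v_1,\dots,v_n$ is an orthonormal basis, so Parseval gives $\sum_i c_i^2 = \|w\|_2^2 = 1$, and
\[
\|A' w\|_2^2 \;=\; \sum_{i=1}^n c_i^2 (\lambda'_i)^2.
\]
Next, I would invoke the inequality already established immediately before the proposition: by Lemma~\ref{lem:powering} applied to $A'$ together with the fact that $i^\ast$ maximizes the Rayleigh-quotient-squared among coordinate starts, we have $\|A' w\|_2^2 \geq (1-\delta)^2 \lambda_{\max}(A')^2$.

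Now set $\alpha \eqdef \sum_{i\not\in S} c_i^2$. Since every $i\in S$ satisfies $(\lambda'_i)^2 \le \lambda_{\max}(A')^2$ and every $i\not\in S$ satisfies $(\lambda'_i)^2 < (1-\sqrt{\delta})^2\lambda_{\max}(A')^2$ (using non-negativity of the $\lambda'_i$), I would upper bound
\[
\sum_i c_i^2 (\lambda'_i)^2 \;\le\; \lambda_{\max}(A')^2 \bigl[(1-\alpha) + \alpha (1-\sqrt{\delta})^2\bigr] \;=\; \lambda_{\max}(A')^2\bigl[1 - \alpha(2\sqrt{\delta}-\delta)\bigr].
\]
Combining this with the lower bound $(1-\delta)^2 \lambda_{\max}(A')^2$ and cancelling $\lambda_{\max}(A')^2>0$ yields $1 - 2\delta + \delta^2 \le 1 - \alpha(2\sqrt{\delta}-\delta)$, hence
\[
\alpha \;\le\; \frac{2\delta-\delta^2}{2\sqrt{\delta}-\delta} \;=\; \frac{\sqrt{\delta}\,(2-\delta)}{2-\sqrt{\delta}}.
\]
For the range of $\delta$ we care about ($\delta \le 1$, which is ensured by the choice $\delta = \min\{\eps^4/100,\eta^4/10^8\}$) we have $2 - \sqrt{\delta} \ge 1$ and $2 - \delta \le 2$, so $\alpha \le 2\sqrt{\delta}$, which is exactly the claim.

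The only subtle point is making sure both sides of the two-sided bound on $(\lambda'_i)^2$ use non-negativity of $\lambda'_i$ — this is why the algorithm works with $A' = A + tI$ rather than $A$ directly. Everything else is a one-line convexity/Parseval computation, so I do not expect a real obstacle here; the proposition is the quantitative payoff of the shift-and-power trick already set up in the algorithm.
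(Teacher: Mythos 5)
Your proof is correct and follows essentially the same route as the paper: expand $w$ in the eigenbasis of $A'$, use the non-negativity of the shifted eigenvalues and the powering bound $\|A'w\|_2^2 \ge (1-\delta)^2\lambda_{\max}(A')^2$ to get the convexity inequality, and solve for $\sum_{i\notin S} c_i^2$. The only difference is cosmetic: you spell out the final algebra (including the $\delta\le 1$ step) slightly more explicitly than the paper does.
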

\begin{proof}
We have $A' \cdot w = \sum_{i=1}^n c_i \cdot \lambda'_i \cdot v_i$ and hence
$ \Vert A' \cdot w \Vert_2^2 = \sum_{i=1}^n c_i^2 \cdot \lambda_i^{'2} =
\sum_{i \in S} c_i^2\cdot \lambda_i^{'2} + \sum_{i \not \in S} c_i^2 \cdot
\lambda_i^{'2}$.
As all eigenvalues of $A'$ are non-negative,
for $i \notin S$ we have $\lambda_i^{'2} \le  (1-\sqrt{ \delta})^2
\cdot \lambda_{\max}^2(A')$. If $\sum_{i \not \in S} c_i^2 = \kappa$, then
$$(1-\delta)^2 \lambda_{\max}^2(A') \le
\sum_{i \in S} c_i^2\cdot \lambda_i^{'2} + \sum_{i \not \in S} c_i^2 \cdot
\lambda_i^{'2} \le (1-\kappa) \lambda_{\max}^2(A') + \kappa (1-\sqrt{\delta})^2
\lambda_{\max}^2(A').$$
The last inequality uses that $\|A' \cdot w\|_2^2 \geq (1-\delta)^2 \lambda_{\max}(A)^2$. Thus, $-\kappa \sqrt{\delta} (2- \sqrt{\delta}) \ge -\delta (2-\delta)$ and hence $\kappa \le 2 \sqrt{\delta}$.
\end{proof}


 \begin{proposition}\label{prop:eigen-max2}
If $\lambda_{\max}(A) \ge \epsilon \cdot \Vert A \Vert_F$, then for $w$ as defined above, $\Vert A \cdot w \Vert_2^2  \ge(1-6 \cdot\delta^{1/4})  \cdot \lambda_{\max}^2(A) $.
\end{proposition}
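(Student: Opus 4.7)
The plan is to combine the mass concentration of $w$ onto the top eigenspace (given by Proposition~\ref{prop:eigen-max1}) with a careful transfer of the spectral approximation from the shifted matrix $A' = A + tI$ back to $A$. The key observation is that $A$ and $A'$ share eigenvectors, and their eigenvalues differ only by the additive constant $t$; the hypothesis $\lambda_{\max}(A) \ge \epsilon\|A\|_F$ will be used precisely to control the ratio $t/\lambda_{\max}(A)$, which quantifies how much relative error is amplified by un-shifting.

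First I will expand $w = \sum_i c_i v_i$ in the common eigenbasis, write $\|Aw\|_2^2 = \sum_i c_i^2 \lambda_i^2$, and discard the terms with $i \notin S$. It then suffices to lower-bound $\lambda_i^2$ for each $i \in S$ and invoke Proposition~\ref{prop:eigen-max1} to get $\sum_{i \in S} c_i^2 \ge 1 - 2\sqrt{\delta}$. For $i \in S$, the inequality $\lambda_i' \ge (1-\sqrt{\delta})\lambda_{\max}(A')$ combined with $\lambda_i' = \lambda_i + t$ and $\lambda_{\max}(A') = \lambda_{\max}(A) + t$ yields
\[
\lambda_i \;\ge\; \lambda_{\max}(A) - \sqrt{\delta}\,(\lambda_{\max}(A) + t)
\;=\; \lambda_{\max}(A)\bigl(1 - \sqrt{\delta} - \sqrt{\delta}\,t/\lambda_{\max}(A)\bigr).
\]

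Next I will bound $t/\lambda_{\max}(A) \le 2/\epsilon$. Since $t = \lceil \|A\|_F\rceil \le \|A\|_F + 1$ and $\|A\|_F \le \lambda_{\max}(A)/\epsilon$ by hypothesis, and since $A$ has integer entries not all zero (so $\|A\|_F \ge 1$, hence $\lambda_{\max}(A) \ge \epsilon$), this gives $t/\lambda_{\max}(A) \le 1/\epsilon + 1/\lambda_{\max}(A) \le 2/\epsilon$. Plugging in, $\lambda_i \ge \lambda_{\max}(A)\bigl(1 - 3\sqrt{\delta}/\epsilon\bigr)$, which is positive for our choice of $\delta$, so squaring gives $\lambda_i^2 \ge \lambda_{\max}^2(A)\bigl(1 - 6\sqrt{\delta}/\epsilon\bigr)$. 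Combining with the mass bound, $\|Aw\|_2^2 \ge (1-2\sqrt{\delta})(1 - 6\sqrt{\delta}/\epsilon)\lambda_{\max}^2(A) \ge (1 - 2\sqrt{\delta} - 6\sqrt{\delta}/\epsilon)\lambda_{\max}^2(A)$. Finally, the choice $\delta \le \epsilon^4/100$ implies $\delta^{1/4} \le \epsilon/\sqrt{10}$, so both $\sqrt{\delta}$ and $\sqrt{\delta}/\epsilon$ are at most $\delta^{1/4}/\sqrt{10}$; the total error is therefore bounded by $8\delta^{1/4}/\sqrt{10} < 6\delta^{1/4}$, giving the claim with room to spare.

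The only real subtlety in this argument is step three: the shift $t$ is necessary so that the powering method applied to $A'$ actually converges to the genuine largest-magnitude eigenvector of $A$ (rather than merely the largest positive eigenvector), but $t$ can be as large as $\|A\|_F$, which is up to a factor $1/\epsilon$ larger than $\lambda_{\max}(A)$. This amplification is exactly what causes the final bound to degrade from the naive $\sqrt{\delta}$ to $\delta^{1/4}$, and is the reason the proposition must be stated with the $\delta^{1/4}$ exponent; choosing $\delta \le \epsilon^4/100$ in the algorithm is precisely what is required to absorb the $1/\epsilon$ factor and still obtain a meaningful bound.
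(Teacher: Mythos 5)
Your proof is correct and follows essentially the same route as the paper's: expand $w$ in the common eigenbasis, keep only the mass on $S$ via Proposition~\ref{prop:eigen-max1}, and transfer the spectral bound from $A'$ back to $A$ using the hypothesis $\lambda_{\max}(A)\ge\epsilon\|A\|_F$ to control $t$ (the paper phrases this as $\lambda_{\max}(A)\ge \epsilon t/2$, which is the same bound as your $t/\lambda_{\max}(A)\le 2/\epsilon$). The only difference is bookkeeping: the paper converts the $\sqrt{\delta}\,t$ error into a $(1-2\delta^{1/4})$ per-eigenvalue bound before squaring, whereas you carry $\sqrt{\delta}/\epsilon$ to the end and absorb it using $\delta\le\epsilon^4/100$; both yield the stated $(1-6\delta^{1/4})$ factor.
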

\begin{proof}
Recall that an eigenvector $v_i$ with eigenvalue $\lambda_i$  of $A$ maps to an eigenvalue $\lambda_i +t $ of $A'$. Thus, if $i$ is such that $\lambda_i +t \ge (1-\sqrt{\delta}) (\lambda_{\max} (A) +t)$, then $\lambda_i \ge  (1 -\sqrt{\delta}) \lambda_{\max} (A) - \sqrt{\delta} \cdot t  $. Since $\lambda_{\max}(A) \ge  \epsilon t /2$, if we choose $\delta \le \epsilon^4$, then $\lambda_i \ge (1-2 \cdot \delta^{1/4}) \cdot \lambda_{\max}(A)$. Thus, we have
$$
S \subseteq \{ i \in [n] : \lambda_i \ge (1-2\cdot \delta^{1/4}) \cdot \lambda_{\max}(A)\}.
$$
Now, observe that $A \cdot  w = \sum_{i \in S} c_i \cdot \lambda_i \cdot v_i + \sum_{ i \not \in S} c_i \cdot \lambda_i \cdot v_i$. Hence,
$$
\Vert A \cdot w \Vert_2^2  \ge \sum_{i \in S} c_i^2 \cdot \lambda_i^2
\ge (1-2\cdot \delta^{1/4})^2 \cdot \lambda_{\max}^2(A)
\cdot (1 -2 \sqrt{\delta}) \ge (1-6 \cdot\delta^{1/4})  \cdot \lambda_{\max}^2(A),
$$
where the second inequality uses Proposition~\ref{prop:eigen-max1}.
\end{proof}

\begin{proposition}\label{prop:eigen-max3}
For  $w$ as defined in Equation~(\ref{eq:wlambda})
and $\lambda \ge 0$,  if $\lambda_{\max}(A) \ge (\eps/2) \cdot \Vert A \Vert_F$ and $\Vert A \cdot w \Vert_2^2 = \lambda^2  \ge (1-10 \cdot \delta^{1/4}) \cdot \lambda_{\max}^2(A) $, then for  $B = A - \lambda w \cdot w^{T}$, $\Vert B \cdot w \Vert_2 \le \eta \cdot \Vert A \Vert_F/2$.
\end{proposition}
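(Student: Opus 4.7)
The plan is to expand $\|Bw\|_2^2$ in the eigenbasis of $A$. Writing $w = \sum_{i=1}^n c_i v_i$ with $\sum_i c_i^2 = 1$, and using $\|w\|_2 = 1$ so that $w^T w = 1$, we get
\[
Bw \;=\; Aw - \lambda (ww^T) w \;=\; Aw - \lambda w \;=\; \sum_i c_i (\lambda_i - \lambda)\, v_i,
\]
and hence $\|Bw\|_2^2 = \sum_i c_i^2 (\lambda_i - \lambda)^2$. The idea is to split this sum along the set $S$ from the proof of Proposition~\ref{prop:eigen-max2}, where for $i \in S$ the quantity $|\lambda_i - \lambda|$ is small (both $\lambda_i$ and $\lambda$ are close to $\lambda_{\max}(A)$), while for $i \notin S$ the $c_i^2$ mass is small by Proposition~\ref{prop:eigen-max1}.

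First I would observe that, by Cauchy--Schwarz, $\lambda = \|Aw\|_2 \leq \|A\|_{\mathrm{op}} = \lambda_{\max}(A)$. Combined with the hypothesis $\lambda^2 \ge (1 - 10\delta^{1/4})\lambda_{\max}^2(A)$, this gives $\lambda \in [(1 - 5\delta^{1/4})\lambda_{\max}(A),\, \lambda_{\max}(A)]$. Recall from the proof of Proposition~\ref{prop:eigen-max2} that $S \subseteq \{i : \lambda_i \ge (1 - 2\delta^{1/4})\lambda_{\max}(A)\}$, so for $i \in S$ both $\lambda_i$ and $\lambda$ lie in an interval of length $O(\delta^{1/4})\lambda_{\max}(A)$, giving $|\lambda_i - \lambda| \le 5\delta^{1/4}\lambda_{\max}(A)$.

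For the tail, I would use that all eigenvalues satisfy $|\lambda_i| \le \lambda_{\max}(A)$ (since $\lambda_{\max}(A)$ is the largest magnitude eigenvalue), so $|\lambda_i - \lambda| \le 2\lambda_{\max}(A)$ for $i \notin S$, while Proposition~\ref{prop:eigen-max1} (applied to $A'$ and inherited through the correspondence used in Proposition~\ref{prop:eigen-max2}) gives $\sum_{i \notin S} c_i^2 \le 2\sqrt{\delta}$. Assembling:
\[
\|Bw\|_2^2 \;\le\; (5\delta^{1/4}\lambda_{\max}(A))^2 \cdot 1 \,+\, (2\lambda_{\max}(A))^2 \cdot 2\sqrt{\delta} \;\le\; 33\sqrt{\delta}\,\lambda_{\max}^2(A) \;\le\; 33\sqrt{\delta}\,\|A\|_F^2.
\]
Thus $\|Bw\|_2 \le 6\,\delta^{1/4} \|A\|_F$, and since $\delta \le \eta^4/10^8$ we conclude $\|Bw\|_2 \le \eta\,\|A\|_F/2$.

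The main obstacle is not conceptual but rather careful bookkeeping of constants: one must verify that the power $\delta^{1/4}$ obtained from the eigenbasis decomposition is indeed compatible with the choice $\delta = \min\{\eps^4/100,\eta^4/10^8\}$ fixed by the algorithm, and must take care that the estimate $|\lambda_i - \lambda|$ on $S$ uses the \emph{stronger} hypothesis $\lambda^2 \ge (1 - 10\delta^{1/4})\lambda_{\max}^2(A)$ in Proposition~\ref{prop:eigen-max3} (which is why this proposition is stated separately from Proposition~\ref{prop:eigen-max2}). The remainder is routine manipulation.
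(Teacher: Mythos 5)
Your proposal is correct and follows essentially the same route as the paper: expand $\|Bw\|_2^2 = \sum_i c_i^2(\lambda_i-\lambda)^2$ in the eigenbasis, bound the head terms using that both $\lambda_i$ (for $i\in S$) and $\lambda$ are within $O(\delta^{1/4})\lambda_{\max}(A)$ of $\lambda_{\max}(A)$, bound the tail using Proposition~\ref{prop:eigen-max1} together with $|\lambda_i|,\lambda \le \lambda_{\max}(A) \le \|A\|_F$, and finish via $\delta \le \eta^4/10^8$. The only cosmetic difference is that the paper re-derives the inclusion $S \subseteq \{i : \lambda_i \ge (1-2\delta^{1/4})\lambda_{\max}(A)\}$ directly under the weaker hypothesis $\lambda_{\max}(A) \ge (\eps/2)\|A\|_F$ (rather than importing it from Proposition~\ref{prop:eigen-max2}, whose proof assumed $\lambda_{\max}(A) \ge \eps\|A\|_F$), a bookkeeping point you already flagged and which your slack in the constants absorbs.
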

\begin{proof}
We begin by noting that for $i \in S$, $\lambda'_i \ge (1-\sqrt{\delta}) \lambda_{\max}(A')$. This implies that
$
\lambda_i +t \ge (1-\sqrt{\delta}) (\lambda_{\max}(A) +t)
$.
Using the bounds $\lambda_{\max}(A) \ge (\eps/2) \Vert A \Vert_F$ and
$\delta \le \eps^4$,  we get that $\lambda_i \ge (1-2 \cdot \delta^{1/4})
\cdot \lambda_{\max}(A)$.

By assumption we have that $(1-10 \cdot  \delta^{1/4}) \lambda_{\max}(A) \le
\lambda$, and since $\|A \cdot w \|_2^2 = \lambda^2$ we also have that
$\lambda \le \lambda_{\max}(A)$.  As a consequence, we have that for every $i \in S$,  $|\lambda- \lambda_i| \le 10 \cdot \delta^{1/4} |\lambda_{\max}(A)|$.
Note that
\begin{eqnarray*}
\Vert B \cdot w \Vert_2^2= \Vert A \cdot w - \lambda \cdot w \Vert_2^2 &=& \sum_{i \in S} c_i^2 (\lambda_i - \lambda)^2 + \sum_{i \not \in S} c_i^2 (\lambda_i - \lambda)^2 \\
& \le & 100 \cdot \sqrt{\delta}\cdot \lambda_{\max}^2(A) \cdot \left(\sum_{i \in S} c_i^2\right)  + \sum_{i \not \in S} 2 \cdot c_i^2 \cdot  (\lambda_i^2 + \lambda^2) \\
&\le& 100 \cdot \sqrt{\delta} \cdot \lambda_{\max}^2(A) + 8 \sqrt{\delta}  \Vert A \Vert_F^2 \le 108 \sqrt{\delta} \Vert A \Vert_F^2 \le \eta^2 \cdot \Vert A
\Vert_F^2 /4,
\end{eqnarray*}
where we used Proposition~\ref{prop:eigen-max1} and Fact~\ref{fact:lin-alg}
in the last line.
The last inequality holds because $\delta \le \eta^4/10^8$.
\end{proof}

\begin{claim}\label{clm:eigen-max2}
If $\lambda_{\max}(A) \ge \epsilon \cdot \Vert A \Vert_F$, then the
output satisfies the guarantees of part (a)
of Theorem~\ref{thm:primitive}.
\end{claim}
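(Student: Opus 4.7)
The plan is to first verify properties (i)--(iii) of part (a) for the ``idealized'' quantities $w$ and $\lambda$ from equation~(\ref{eq:wlambda}), and then transfer the guarantees to their computable approximations $\tilde{w},\tilde{\lambda}$ by invoking the precision bounds $\|\tilde{w}-w\|_2, |\tilde{\lambda}-\lambda| \le \poly(\delta,1/b,1/n)$, which we may tighten as needed (paying only polynomial overhead). The hypothesis $\lambda_{\max}(A) \ge \eps\|A\|_F$ places us in the regime of Propositions~\ref{prop:eigen-max2} and~\ref{prop:eigen-max3}, which will do the substantive work.

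For (i), Proposition~\ref{prop:eigen-max2} gives $\lambda^2 = \|Aw\|_2^2 \ge (1-6\delta^{1/4})\lambda_{\max}^2(A)$, so $\lambda \ge (1-3\delta^{1/4})\lambda_{\max}(A)$; since $\delta \le \eta^4/10^8$ this yields $\lambda \ge (1-\eta/2)|\lambda_{\max}(A)|$. The upper bound $\lambda \le |\lambda_{\max}(A)|$ follows from $\lambda = \|Aw\|_2 \le \|A\|_{\mathrm{op}} = |\lambda_{\max}(A)|$ (by symmetry of $A$). Rounding $\tilde\lambda$ down from $\|A\tilde w\|_2$ at precision small compared to $\eta\eps\|A\|_F$ preserves both inequalities up to a further $\eta/2$ slack, giving (i).

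For (ii) and (iii), Proposition~\ref{prop:eigen-max3} applies (its hypothesis is weaker than what Proposition~\ref{prop:eigen-max2} has just provided), yielding $\|Bw\|_2 \le \eta\|A\|_F/2$ where $B = A - \lambda ww^T$. Writing $\tilde{B}\tilde{w} - Bw = A(\tilde w - w) - (\tilde\lambda - \lambda)\tilde w - \lambda(\tilde w - w)$ and bounding by $(\|A\|_F + \lambda)\|\tilde w - w\|_2 + |\tilde\lambda - \lambda|$, we may drive this difference below $\eta\|A\|_F/2$ through the precision budget, establishing (ii) by the triangle inequality. For (iii), I would expand the rank-one identity
\[
\|B\|_F^2 = \|A\|_F^2 - 2\lambda\langle Aw, w\rangle + \lambda^2,
\]
combine $|\langle Aw, w\rangle - \lambda| \le \|Bw\|_2 \le \eta\|A\|_F/2$ (Cauchy--Schwarz on $Aw - \lambda w = Bw$) with $\lambda^2 \ge (1-6\delta^{1/4})\eps^2\|A\|_F^2$, and conclude $\|B\|_F^2 \le \|A\|_F^2\bigl(1 - \eps^2(1 - O(\delta^{1/4})) + O(\eta)\bigr)$. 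The choices $\delta \le \eps^4/100$ and $\eta$ sufficiently small relative to $\eps$ yield $\|B\|_F \le (1-\eps^2/20)\|A\|_F$; a final precision-level control on $\|\tilde B - B\|_F$ gives the stated $(1-\eps^2/40)$ bound for (iii).

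The main technical obstacle is keeping the bookkeeping of approximation errors consistent across the three conditions: each of (i)--(iii) loses some slack to the discrepancy between $(w,\lambda)$ and $(\tilde w,\tilde\lambda)$, and one must also confirm that the algorithm does not erroneously exit via the ``small max eigenvalue'' branch on a large-eigenvalue input. The latter holds because Proposition~\ref{prop:eigen-max2} combined with the precision on $\tilde\lambda$ ensures $\tilde\lambda^2 \ge (1-8\delta^{1/4})\eps^2\|A\|_F^2 > (1-9\delta^{1/4})\eps^2\|A\|_F^2$, so the guard condition in the algorithm passes and the pair $(\tilde w,\tilde\lambda)$ is in fact output.
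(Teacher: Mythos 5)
Your proposal is correct and follows essentially the same route as the paper: combine Propositions~\ref{prop:eigen-max1}--\ref{prop:eigen-max3} to get the guarantees for the exact pair $(w,\lambda)$, then transfer to $(\tilde w,\tilde\lambda)$ via the $\poly(\delta,1/b,1/n)$ precision bounds; your rank-one expansion of $\|B\|_F^2$ for (iii) is exactly the content of the paper's Lemma~\ref{lem:variance-gets-smaller}, just re-derived inline. Your explicit check that the ``small max eigenvalue'' guard is not triggered is a point the paper leaves implicit, and is a welcome addition.
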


\begin{proof}
Recall that $\delta = \min \{\eps^4/100, \eta^4/10^8 \}$.
We can then combine Proposition~\ref{prop:eigen-max1}, Proposition~\ref{prop:eigen-max2} and Proposition~\ref{prop:eigen-max3} to get that $(1 -\eta/2) \lambda_{\max}(A) \le \lambda \le \lambda_{\max}(A)$ and $\Vert B w \Vert_2 \le (\eta/2) \cdot \Vert A \Vert_F$. Finally, we use
Lemma~\ref{lem:variance-gets-smaller} (proved below)
to get that $\Vert B \Vert_F \le (1-\eps^2/20) \Vert A \Vert_F$.

Now, recall that $\Vert w - \tilde{w} \Vert_2 \le \poly(\delta, 1/b , 1/n)$ and $| \lambda- \tilde{\lambda}| \le \poly(\delta,1/b,1/n)$. This implies guarantees (i), (ii) and (iii).
\end{proof}


\begin{claim}\label{clm:eigen-max3}
If $\epsilon \ge \lambda_{\max}(A) \ge \epsilon /2$, then the output
satisfies the conditions in part (b) of Theorem~\ref{thm:primitive}.
\end{claim}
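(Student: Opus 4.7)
The plan is to handle the intermediate regime $\tfrac{\epsilon}{2}\|A\|_F \le \lambda_{\max}(A) < \epsilon \|A\|_F$ (which I believe is the intended reading of the statement) by exploiting the flexibility built into part (b) of Theorem~\ref{thm:primitive}: the algorithm is allowed to either declare ``small max eigenvalue'' or satisfy the three guarantees of part (a). The entire proof will branch on the outcome of the final test $\tilde{\lambda}^2 \ge (1-9\delta^{1/4})\epsilon^2\|A\|_F^2$ that \textsf{APPROXIMATE-LARGEST-EIGEN} performs before deciding what to output.

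If the test fails, the algorithm outputs ``small max eigenvalue,'' and there is nothing to prove -- this is one of the two permissible outcomes under part (b). So the real work is in the other case. Suppose the test succeeds, i.e.\ $\tilde{\lambda}^2 \ge (1-9\delta^{1/4})\epsilon^2\|A\|_F^2$. Since $\tilde{w}$ is a unit vector and $\tilde{\lambda}$ is (up to a $\poly(\delta,1/b,1/n)$ rounding error) equal to $\|A\tilde{w}\|_2$, we have $\tilde{\lambda} \le |\lambda_{\max}(A)| + o(1)$, and combining with the successful test we conclude $|\lambda_{\max}(A)| \ge (1 - O(\delta^{1/4}))\,\epsilon\,\|A\|_F$. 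In other words, passing the test certifies that we are effectively in the regime $|\lambda_{\max}(A)| \gtrsim \epsilon \|A\|_F$ that was already handled in Claim~\ref{clm:eigen-max2}, up to a negligible loss in constants.

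Given this certification, I would re-run the three-step argument of Claim~\ref{clm:eigen-max2} essentially verbatim. Propositions~\ref{prop:eigen-max1} and~\ref{prop:eigen-max3} do not use the quantitative lower bound on $\lambda_{\max}(A)$ in any strong way -- only the qualitative fact that $\lambda_{\max}(A) \ge (\epsilon/2)\|A\|_F$, which holds by hypothesis in the present case. The only place where one must be careful is in reproducing Proposition~\ref{prop:eigen-max2}, whose proof invoked $\lambda_{\max}(A) \ge \epsilon t/2$ to convert the approximation of $\lambda_{\max}(A')$ by $\lambda_i+t$ into an approximation of $\lambda_{\max}(A)$ by $\lambda_i$ with factor $1-O(\delta^{1/4})$. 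Here I would absorb the small multiplicative loss $(1-O(\delta^{1/4}))$ coming from the test and the hypothesis $\lambda_{\max}(A) \ge (\epsilon/2)\|A\|_F$ into the choice of $\delta$, which was already taken to satisfy $\delta \le \epsilon^4/100$. This lets Proposition~\ref{prop:eigen-max2}'s conclusion go through with the same shape but a slightly worse constant (say, $(1-12\delta^{1/4})\lambda_{\max}^2(A)$ instead of $(1-6\delta^{1/4})\lambda_{\max}^2(A)$), which is still tolerated downstream.

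With these three propositions in hand, the concluding steps of Claim~\ref{clm:eigen-max2} -- combining them and invoking Lemma~\ref{lem:variance-gets-smaller} to obtain the Frobenius contraction $\|\tilde B\|_F \le (1-\epsilon^2/40)\|A\|_F$ -- carry over unchanged, yielding guarantees (i), (ii), (iii) of part (a) for the output pair $(\tilde w, \tilde\lambda)$. The mildly subtle point will be the book-keeping in the variant of Proposition~\ref{prop:eigen-max2}; everything else is a matter of confirming that no step of Claim~\ref{clm:eigen-max2} ever used $\lambda_{\max}(A) \ge \epsilon\|A\|_F$ beyond what follows from passing the final test.
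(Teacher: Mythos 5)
Your proposal is correct, and its skeleton (branch on the final test; the ``small max eigenvalue'' output is automatically acceptable under part (b); otherwise establish the part-(a) guarantees via Propositions~\ref{prop:eigen-max1} and~\ref{prop:eigen-max3} and Lemma~\ref{lem:variance-gets-smaller}) matches the paper's. Where you diverge is in how the hypothesis of Proposition~\ref{prop:eigen-max3}, namely $\lambda^2 \ge (1-10\delta^{1/4})\lambda_{\max}^2(A)$, gets supplied in the test-passing branch. The paper obtains it in one line without ever touching Proposition~\ref{prop:eigen-max2}: passing the test gives $\lambda^2 \ge (1-10\delta^{1/4})\eps^2\Vert A\Vert_F^2$, and the regime's \emph{upper} bound $\lambda_{\max}(A) \le \eps\Vert A\Vert_F$ converts this directly into $\lambda^2 \ge (1-10\delta^{1/4})\lambda_{\max}^2(A)$, which also yields $(1-\eta/2)\lambda_{\max}(A) \le \lambda \le \lambda_{\max}(A)$, i.e.\ guarantee (i); the rest proceeds exactly as in Claim~\ref{clm:eigen-max2}. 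You instead use only the \emph{lower} bound $\lambda_{\max}(A) \ge (\eps/2)\Vert A\Vert_F$ and re-prove Proposition~\ref{prop:eigen-max2} in that regime (the only quantitative use of its hypothesis is $\lambda_{\max}(A) \ge \eps t/2$, and $\eps t/4$ suffices with slightly worse constants), then rerun the argument of Claim~\ref{clm:eigen-max2}; note that your derived bound $\lambda_{\max}(A) \ge (1-O(\delta^{1/4}))\eps\Vert A\Vert_F$ is not even needed for this, since the claim's hypothesis already suffices. This works, but the degraded constant must then be threaded through Proposition~\ref{prop:eigen-max3} (its internal bound $108\sqrt{\delta}\Vert A\Vert_F^2$ grows to roughly $150\sqrt{\delta}\Vert A\Vert_F^2$, still far below $\eta^2\Vert A\Vert_F^2/4$ since $\delta \le \eta^4/10^8$) and through the application of Lemma~\ref{lem:variance-gets-smaller} (where $\lambda \gtrsim (\eps/2)\Vert A\Vert_F$ still gives a Frobenius contraction comfortably better than $1-\eps^2/40$); so your ``tolerated downstream'' remark is accurate, though it is precisely the bookkeeping the paper's shortcut avoids. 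A side benefit of your route is a marginally stronger conclusion: the (a)-guarantees hold whenever the algorithm emits a pair and $\lambda_{\max}(A) \ge (\eps/2)\Vert A\Vert_F$, with no use of the upper bound $\lambda_{\max}(A) \le \eps\Vert A\Vert_F$, whereas the paper's argument genuinely uses that upper bound.
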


\begin{proof}
If  $\tilde{\lambda}^2 \le (1-9 \cdot \delta^{1/4}) \cdot \eps^2 \cdot
\Vert A \Vert_F^2$, then the algorithm outputs ``small max eigenvalue"
and the output is correct.
On the other hand, if $\tilde{\lambda}^2 \ge (1-9 \cdot \delta^{1/4})
\cdot \eps^2 \cdot  \Vert A \Vert_F^2$, then it implies that $\lambda^2
\ge (1- 10 \cdot \delta^{1/4}) \cdot \eps^2 \cdot \Vert A \Vert_F^2$.
As $\lambda_{\max}^2(A) \le \eps^2 \Vert A \Vert_F^2$,
hence $\lambda^2 \ge (1- 10 \cdot \delta^{1/4}) \cdot
\lambda_{\max}^2$.  Since $\delta \le \eta^4/10^8$, we get
that  $(1 -\eta/2) \lambda_{\max}(A) \le \lambda \le \lambda_{\max}(A)$.
As above, Proposition~\ref{prop:eigen-max1}, Proposition~\ref{prop:eigen-max3}
and Proposition~\ref{prop:eigen-max3}
give that $\Vert B w \Vert_2 \le (\eta/2) \cdot \Vert A \Vert_F$, and
Lemma~\ref{lem:variance-gets-smaller} gives that $\Vert B \Vert_F
\le (1-\eps^2/20) \Vert A \Vert_F$.
As before, using that $\Vert w - \tilde{w} \Vert_2 \le \poly(\delta, 1/b , 1/n)$
and $| \lambda- \tilde{\lambda}| \le \poly(\delta,1/b,1/n)$, we get guarantees
(i), (ii) and (iii) from the output.
\end{proof}


Claims~\ref{clm:eigen-1},~\ref{clm:eigen-max2}
and~\ref{clm:eigen-max3} together establish Theorem~\ref{thm:primitive}
modulo the proof of
Lemma~\ref{lem:variance-gets-smaller}, which we provide below.
\qed

\begin{lemma} \label{lem:variance-gets-smaller}
Let $A \in \R^{n \times n}$ be symmetric with $\|A\|_F=1$. For $0<\delta \le \eps<1$,
let $\lambda$ with $|\lambda| \ge 3\eps$ and
$v \in \R^n$ with $\|v\|_2=1$ be such that the matrix $B = A - \lambda (vv^T)$ satisfies $\|Bv\|_2 \le \delta$. Then, $\|B\|^2_F \le (1-3\eps^2).$
\end{lemma}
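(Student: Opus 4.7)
\medskip

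\noindent\textbf{Proof plan for Lemma~\ref{lem:variance-gets-smaller}.}
My plan is to compute $\|B\|_F^2$ exactly by expanding the rank-one update, and then use the hypothesis $\|Bv\|_2 \le \delta$ to produce the desired bound on the cross term. Since $A$ and $vv^T$ are both symmetric, so is $B$, and therefore
\[
\|B\|_F^2 = \operatorname{tr}(B^2) = \operatorname{tr}\bigl((A-\lambda vv^T)^2\bigr) = \operatorname{tr}(A^2) - 2\lambda\,\operatorname{tr}(A vv^T) + \lambda^2\,\operatorname{tr}(vv^T vv^T).
\]
Using $\|v\|_2=1$, this simplifies to $\|B\|_F^2 = \|A\|_F^2 - 2\lambda\, v^T A v + \lambda^2 = 1 - (2\lambda\, v^T A v - \lambda^2)$. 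Thus it suffices to show that $2\lambda\, v^T A v - \lambda^2 \ge 3\eps^2$.

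The second step is to extract information from the constraint $\|Bv\|_2\le\delta$. Since $Bv = Av - \lambda v$, squaring and expanding gives
\[
\|Av\|_2^2 - 2\lambda\, v^T A v + \lambda^2 \;=\; \|Av - \lambda v\|_2^2 \;\le\; \delta^2,
\]
and rearranging yields the clean inequality $2\lambda\, v^T A v - \lambda^2 \ge \|Av\|_2^2 - \delta^2$.

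Finally, I would lower-bound $\|Av\|_2$ via the reverse triangle inequality applied to the same identity $Bv = Av - \lambda v$: using $\|v\|_2=1$, $\|Av\|_2 \ge |\lambda|\,\|v\|_2 - \|Bv\|_2 \ge 3\eps - \delta$. Squaring and using $\delta \le \eps$,
\[
\|Av\|_2^2 - \delta^2 \;\ge\; (3\eps-\delta)^2 - \delta^2 \;=\; 9\eps^2 - 6\eps\delta \;\ge\; 9\eps^2 - 6\eps^2 \;=\; 3\eps^2.
\]
Combining the three steps gives $\|B\|_F^2 \le 1 - 3\eps^2$, as desired.

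The argument is essentially a direct calculation; no step looks to be the "hard part" beyond being careful with signs and with the single use of the assumption $\delta\le\eps$ in the final estimate. The one place to be alert is ensuring the cross-term identity is used in the correct direction: the hypothesis $\|Bv\|_2\le\delta$ gives a one-sided bound that pairs correctly with the (positive) quantity $\|Av\|_2^2$, and the reverse triangle inequality supplies precisely the lower bound on $\|Av\|_2$ needed to dominate $\delta^2$ with room to spare.
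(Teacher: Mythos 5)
Your proof is correct and follows essentially the same route as the paper's: both expand $\|B\|_F^2 = \tr(B^2) = 1 + \lambda^2 - 2\lambda\, v^TAv$ and then use the hypothesis $\|Bv\|_2 \le \delta$ to control the cross term $\lambda\, v^TAv$. The only (cosmetic) difference is that the paper bounds $|v^TBv| \le \|Bv\|_2$ by Cauchy--Schwarz to get $\tr(Avv^T) \in [\lambda-\delta,\lambda+\delta]$, whereas you expand $\|Bv\|_2^2$ and lower-bound $\|Av\|_2 \ge |\lambda|-\delta$ by the reverse triangle inequality; both yield the stated bound with room to spare.
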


\begin{proof}
Recall that for any symmetric matrix $C \in \R^{n \times n}$ we have $\|C\|_F^2 = \tr(C^2).$ Hence we may prove the lemma by bounding
from above the quantity $\tr(B^2)$.
We can write
\[ B^2 = (A - \lambda vv^T)^2 = A^2 + \lambda^2 (vv^T)^2 - \lambda A (vv^T) - \lambda (vv^T)A\]
 and therefore
\[ \tr(B^2)  = \tr(A^2) + \lambda^2 \tr\left( (vv^T)^2\right) - \lambda \tr \left( A (vv^T) \right) - \lambda \tr \left( (vv^T)A \right).\]
Since $\|v\|_2 = 1$, we have that $ \tr\left( (vv^T)^2\right)=1.$ Moreover, $ \tr \left( A (vv^T) \right)  =  \tr \left( (vv^T)A \right).$
Therefore,
\[ \tr(B^2)  = 1 + \lambda^2 - 2 \lambda \tr \left( A (vv^T) \right).\]
We will need the following claim:
\begin{claim} \label{claim:trace}
We have that $|\tr (Bvv^T) | \le \delta.$
\end{claim}
\begin{proof}
It follows easily from the definition that $$\tr (Bvv^T) = \littlesum_{i=1}^n (b^{(i)} \cdot v)v_i,$$ where $b^{(i)} \in \R^n$ is the $i$-th row of $B$.
Since $Bv = [(b^{(1)} \cdot v) \ldots (b^{(n)} \cdot v)]^T$ the Cauchy-Schwarz inequality implies that
\[ |\tr (Bvv^T)| \le \|Bv\|_2 \| v\|_2.\]
The claim follows from the fact that $\| Bv\|_2 \le \delta$ and $\|v\|_2=1$.
\end{proof}
\noindent Since $$Avv^T = Bvv^T + \lambda (vv^T)^2$$ we have
\[  \tr(Avv^T) = \tr(Bvv^T) + \lambda \tr\left( (vv^T)^2 \right) = \tr(Bvv^T) + \lambda.\]
Claim~\ref{claim:trace} now implies that
\[
\lambda- \delta \le \tr(Avv^T) \le \lambda+\delta.
\]
Since $|\lambda| \ge\eps > \delta$, the numbers $\lambda-\delta, \lambda, \lambda+\delta$ have the same sign, hence
\[ \lambda \tr(Avv^T) = |\lambda| \cdot | \tr(Avv^T) |  \ge |\lambda| \cdot (|\lambda| - \delta) = \lambda^2 - \delta |\lambda|\]
which gives that
\[ \tr(B^2)  \le 1 + \lambda^2 - 2\lambda^2 + 2\delta|\lambda| = 1-\lambda^2+2\delta|\lambda| = 1-|\lambda| (|\lambda| - 2\delta) \le 1-3\eps^2\]
where the last inequality follows from our assumptions on $\lambda$ and $\delta$.
\end{proof}

\subsubsection{Technical claims about degree-$2$ polynomials.} \label{sec:tech}

Before presenting and analyzing the \textsf{APPROXIMATE-DECOMPOSE}
algorithm we need some technical setup.
We start with a useful definition:

\ignore{
We will now apply the above result to derive a useful claim about
decomposition of polynomials.}

\begin{definition}[Rotational invariance of polynomials]
Given two polynomials $p(x) = \littlesum_{1 \le i \le j \le n} a_{ij}x_i x_j
+ \littlesum_{1\le i \le n} b_i x_i +C$ and $q(x) =\littlesum_{1 \le i
\le j \le n} a'_{ij}x_i x_j + \littlesum_{1\le i \le n} b'_i x_i +C$
with the same constant term, we say that they are
\emph{rotationally equivalent} if there is an orthogonal matrix $Q$ such
that $Q^T \cdot A \cdot Q = A'$ and $ Q^T \cdot b = b'$.
If the matrix $A'$ is diagonal then the polynomial $q$ is said to be the
\emph{decoupled equivalent of $p$}.
In this case, the eigenvalues of $A$ (or
equivalently $A'$) are said to be the eigenvalues of the quadratic form $p$.
\end{definition}

\begin{claim}\label{clm:poly-equivalence}
For any degree-$2$ polynomials $p(x)$ and $q(x)$ which are rotationally equivalent, the distributions of $p(x)$ and $q(x)$ are identical when $(x_1, \ldots, x_n) \sim \mathcal{N}(0,1)^n$.
\end{claim}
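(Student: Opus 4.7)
The plan is to exploit the rotational invariance of the standard Gaussian measure: if $Q$ is any $n \times n$ orthogonal matrix and $x \sim \mathcal{N}(0,1)^n$, then $Q x \sim \mathcal{N}(0,1)^n$ as well (this is a standard fact, following from the spherical symmetry of the density $e^{-\|x\|_2^2/2}/(2\pi)^{n/2}$ and $\det Q = \pm 1$).

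Let $Q$ be the orthogonal matrix provided by the definition of rotational equivalence, so that $A' = Q^T A Q$ and $b' = Q^T b$. Write $p(x) = x^T A x + b^T x + C$ and $q(x) = x^T A' x + (b')^T x + C$. Given $x \sim \mathcal{N}(0,1)^n$, define $y = Q x$. First I would do the one-line algebraic substitution to show the pointwise identity
\[
p(y) = y^T A y + b^T y + C = (Qx)^T A (Qx) + b^T (Qx) + C = x^T (Q^T A Q) x + (Q^T b)^T x + C = q(x).
\]
So as random variables $p(y) = q(x)$ almost surely.

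Finally, by rotational invariance of $\mathcal{N}(0,1)^n$, $y = Qx$ is itself distributed as $\mathcal{N}(0,1)^n$, so $p(y)$ and $p(x)$ have identical distributions. Chaining the two equalities gives $p(x) \stackrel{d}{=} p(y) = q(x)$, which is exactly what was to be shown.

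There is essentially no obstacle: the argument is a direct one-line calculation combined with the rotational invariance of the standard Gaussian. The only thing worth being explicit about is that rotational equivalence as defined requires the \emph{same} constant term $C$ for $p$ and $q$, which is used to conclude equality of the full polynomials (not just their non-constant parts).
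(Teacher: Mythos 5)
Your proof is correct and follows essentially the same route as the paper: substitute $y = Qx$ to obtain the pointwise identity $q(x) = p(Qx)$ via $A' = Q^T A Q$ and $b' = Q^T b$, then invoke rotational invariance of $\mathcal{N}(0,1)^n$ to conclude the distributions coincide. Your version just spells out the algebraic substitution that the paper leaves implicit.
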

\begin{proof}
Observe that $q(x) = p(y)$ where $y = Q \cdot x$.
Now, since $(x_1, \ldots, x_n)$ is distributed according to
$\mathcal{N}(0,1)^n$ and $Q$ is an orthogonal matrix,
$(y_1, \ldots, y_n)$ has the same distribution. This proves the claim.
\end{proof}
We will also use the following simple fact which relates the
variance of a degree-$2$ polynomial $p$ with the Frobenius norm of the
quadratic part of $p$.
\begin{fact}\label{fac:quad-norm}
Let $p : \mathbb{R}^n \rightarrow \mathbb{R}$ be a degree-$2$ polynomial and let $A$ be the matrix corresponding to its quadratic part. Then
 $\Var(p) \ge 2 \Vert A \Vert_F^2$.
\end{fact}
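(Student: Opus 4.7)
The plan is to reduce to the decoupled (diagonalized) form of $p$ via Claim~\ref{clm:poly-equivalence} and then compute the variance by direct inspection, using the fact that the Frobenius norm is invariant under orthogonal conjugation.

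More concretely, first I would diagonalize the symmetric matrix $A$ via its spectral decomposition $A = Q \Lambda Q^T$ with $\Lambda = \operatorname{diag}(\lambda_1,\ldots,\lambda_n)$, and consider the rotationally equivalent polynomial $\tilde p(y) = y^T \Lambda y + \tilde b^T y + C = \sum_{i=1}^n \lambda_i y_i^2 + \sum_{i=1}^n \tilde\mu_i y_i + C$, where $\tilde b = Q^T b$. By Claim~\ref{clm:poly-equivalence}, when $x \sim \mathcal{N}(0,1)^n$ the random variables $p(x)$ and $\tilde p(y)$ (for $y \sim \mathcal{N}(0,1)^n$) have the same distribution, so in particular $\Var(p) = \Var(\tilde p)$. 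Likewise, $\|A\|_F^2 = \operatorname{Tr}(A^T A) = \operatorname{Tr}(\Lambda^2) = \sum_i \lambda_i^2$ by the orthogonal invariance of the Frobenius norm.

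Next, I would compute $\Var(\tilde p)$ using the fact that $y_1,\ldots,y_n$ are independent standard Gaussians, so the $n$ summands $\lambda_i y_i^2 + \tilde\mu_i y_i$ are independent and
\[
\Var(\tilde p) \;=\; \sum_{i=1}^n \Var\bigl(\lambda_i y_i^2 + \tilde\mu_i y_i\bigr) \;=\; \sum_{i=1}^n \Bigl( \lambda_i^2 \Var(y_i^2) + \tilde\mu_i^2 \Var(y_i) + 2 \lambda_i \tilde\mu_i \Cov(y_i^2, y_i) \Bigr).
\]
Here $\Var(y_i) = 1$, $\Var(y_i^2) = \E[y_i^4] - 1 = 2$, and $\Cov(y_i^2, y_i) = \E[y_i^3] - \E[y_i^2]\E[y_i] = 0$ because odd moments of the standard Gaussian vanish. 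Hence $\Var(\tilde p) = \sum_i (2\lambda_i^2 + \tilde\mu_i^2)$, and dropping the nonnegative linear contribution gives $\Var(p) = \Var(\tilde p) \ge 2 \sum_i \lambda_i^2 = 2 \|A\|_F^2$.

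There is essentially no obstacle here; the only thing one must be slightly careful about is the convention for associating a symmetric matrix to the quadratic part (so that the off-diagonal entries are halved), but this is already fixed by the setup of Definition~\ref{def:residue} and the discussion following Theorem~\ref{thm:degree-decomp}. Everything else is a one-line calculation using basic Gaussian moments plus rotational invariance.
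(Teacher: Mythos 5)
Your proposal is correct and follows essentially the same route as the paper's proof: diagonalize $A$ via the spectral theorem, invoke rotational invariance of the Gaussian to pass to the decoupled polynomial, and compute its variance coordinate-by-coordinate as $\sum_i (2\lambda_i^2 + \mu_i^2) \ge 2\|A\|_F^2$. No gaps.
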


\begin{proof}
Let $p(x) = \littlesum_{1 \le i \le j \le n} a_{ij} x_i x_j +
\littlesum_{1 \le i \le n} b_i x_i +C$. Equivalently, $p(x) = x^T
\cdot A \cdot x + b^T \cdot x +C$ where $A$ is the matrix corresponding to
the quadratic part of $p$. Using the fact that $A$ is symmetric, we get
that there is an orthogonal matrix $Q$ such that $Q^T A Q = \Lambda$
where $\Lambda$ is diagonal. Using the fact that if $x \sim
\mathcal{N}(0,1)^n$, then so is $Q x$, we get that the distribution
of $p(x)$ and $q(x) = x^T Q^T A Q x + b^T  Qx +C$ are identical
when $x \sim \mathcal{N}(0,1)^n$.  However, $q(x) = x^T \Lambda x +
\mu^T x + C$ where $\mu = Q^T \cdot b$. Note that $q(x) =
\littlesum_{i=1}^n (\lambda_i x_i^2 + \mu_i x_i) +C$. Hence,
$$
\Var(q) = \sum_{i=1}^n \Var(\lambda_i x_i^2 + \mu_i x_i) = \sum_{i=1}^n 2
\lambda_i^2 + \mu_i^2 \ge 2 \sum_{i=1}^n \lambda_i^2 = 2 \Vert A \Vert_F^2
$$
(recall that for a univariate Gaussian random variable $x \sim \mathcal{N}(0,1)$
we have $\E[x^4]=3$ and hence $\Var(x^2) = 2$.).
Since the distributions of $p(x)$ and $q(x)$ are identical, we have that
$\Var(p) \ge 2  \Vert A \Vert_F^2$.

\end{proof}
We will also require another definition for degree-$2$ polynomials.
\begin{definition}
Given $p : \mathbb{R}^n \rightarrow \mathbb{R}$ defined by
$p(x) =
\littlesum_{1\le i \le j \le n} a_{ij} x_i x_j + \littlesum_{1 \le i \le n}
b_i x_i + C$,
define $\mathop{SS}(p)$ as $\mathop{SS}(p)=
\littlesum_{1\le i \le j \le n} a_{ij}^2 + \littlesum_{1 \le i \le n} b_i^2$.
\end{definition}

We now have the following simple claim.
\begin{claim}\label{clm:variance-bound-SS}
Given $p : \mathbb{R}^n \rightarrow \mathbb{R}$,
we have that $2 \mathop{SS}(p) \ge \Var(p) \ge  SS(p)$.
\end{claim}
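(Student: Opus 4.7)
The plan is to compute $\Var(p)$ directly by exploiting the independence (and the Hermite orthogonality) properties of the standard Gaussian measure, and then compare the resulting expression termwise with $\mathop{SS}(p)$. Since $\Var$ is translation-invariant, I can immediately discard the constant term $C$, so the only contributors to the variance are the monomials $a_{ij}x_ix_j$ with $i < j$, the squared monomials $a_{ii}x_i^2$, and the linear monomials $b_i x_i$.

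The first step is to observe that under $x \sim \mathcal{N}(0,1)^n$, monomials of the three ``types'' above are pairwise uncorrelated: $\Cov(x_i, x_j x_k) = 0$, $\Cov(x_i, x_j^2) = 0$, and $\Cov(x_i x_j, x_k^2) = 0$ whenever the indices are as indicated. Moreover, within each type, distinct monomials are uncorrelated: $\Cov(x_i x_j, x_k x_\ell) = 0$ when $\{i,j\} \ne \{k,\ell\}$, and similarly $\Cov(x_i^2, x_j^2) = 0$ and $\Cov(x_i, x_j) = 0$ for $i \ne j$. (Equivalently, these are all Hermite polynomials of distinct multi-indices and hence orthogonal in $L^2(\mathcal{N}(0,1)^n)$.) All of these covariances are immediate from independence of the coordinates together with $\E[x_i] = \E[x_i^3] = 0$.

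Next I would compute the individual variances: $\Var(x_i) = 1$, $\Var(x_i x_j) = \E[x_i^2] \E[x_j^2] = 1$ for $i < j$, and $\Var(x_i^2) = \E[x_i^4] - (\E[x_i^2])^2 = 3 - 1 = 2$. Summing the contributions from uncorrelated monomials yields
\begin{equation*}
\Var(p) \;=\; \sum_{i<j} a_{ij}^2 \,+\, 2\sum_{i} a_{ii}^2 \,+\, \sum_{i} b_i^2.
\end{equation*}

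Finally, I would compare this expression against
\begin{equation*}
\mathop{SS}(p) \;=\; \sum_{i<j} a_{ij}^2 \,+\, \sum_{i} a_{ii}^2 \,+\, \sum_{i} b_i^2,
\end{equation*}
so that $\Var(p) = \mathop{SS}(p) + \sum_i a_{ii}^2$. The lower bound $\Var(p) \ge \mathop{SS}(p)$ is immediate because $\sum_i a_{ii}^2 \ge 0$, and the upper bound $\Var(p) \le 2\mathop{SS}(p)$ follows because $\sum_i a_{ii}^2 \le \mathop{SS}(p)$. There is no real obstacle here; this is a routine computation, and the only ``non-trivial'' ingredient is the orthogonality of distinct Hermite-type monomials under the Gaussian measure, which is standard.
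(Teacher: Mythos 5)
Your proof is correct and follows essentially the same route as the paper: a direct computation of $\Var(p)$ under the Gaussian measure arriving at the identity $\Var(p)=\sum_{i<j}a_{ij}^2+2\sum_i a_{ii}^2+\sum_i b_i^2$, from which both bounds are immediate. The only cosmetic difference is that you invoke pairwise uncorrelatedness of the monomials to make the variance additive, whereas the paper expands $\E[p^2]$ (keeping the $2a_{ii}a_{jj}$ cross terms) and cancels them against $(\E[p])^2$.
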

\begin{proof}

Since neither $\mathop{SS}(p)$ nor $\Var(p)$ is changed by adding a constant
term to $p$ it suffices to prove the claim for
$p(x) =
\littlesum_{1\le i \le j \le n} a_{ij} x_i x_j + \littlesum_{1 \le i \le n}
b_i x_i.$ We have
\begin{eqnarray*}
\mathbf{E}_{x_1, \ldots, x_n  \sim \mathcal{N}(0,1)} [p(x_1, \ldots, x_n)^2]
&=& \sum_{i=1}^n b_i^2 \mathbf{E} [x_i^2] + \sum_{1 \le i < j \le n}
a_{ij}^2 \mathbf{E} [x_i^2 x_j^2] + \sum_{i=1}^n a_{ii}^2 \mathbf{E}[x_i^4]
+ \sum_{1 \le i <j \le n}2 a_{ii} a_{jj} \mathbf{E} [x_i^2 x_j^2] \\
&=& \sum_{i=1}^n (b_i^2 + 3 a_{ii}^2)
+ \sum_{1 \le i < j \le n}
(a_{ij}^2 + 2 a_{ii} a_{jj}).
\end{eqnarray*}
The first equality holds because every other cross-term has an
odd power of some $x_i$ (for $i \in [n]$),
and for $x \sim \mathcal{N}(0,1)$ we have that $\mathbf{E} [x^t]=0$
if $t$ is odd. On the other hand, by linearity of expectation,
$\mathbf{E}_{x_1, \ldots, x_n} [p] = \littlesum_{i=1}^n a_{ii}$
and hence
$$
\left(\mathbf{E}_{x_1, \ldots, x_n} [p] \right)^2 =
\littlesum_{i=1}^n a_{ii}^2 + \sum_{1 \le i < j \le n}  2 a_{ii} a_{jj}
$$
Hence, $\Var(p) = \sum_{i=1}^n (b_i^2 + 2 a_{ii}^2) + \sum_{1 \le i < j \le n} a_{ij}^2$. This implies the claimed bounds.
\end{proof}

\begin{claim}\label{clm:construction}
For the polynomial $q(y_1,x)$ constructed in Definition~\ref{def:residue},
the distributions of $q(y_1,x)$ and $p(x)$ are identical when $(y_1, x_1,
\ldots, x_n) \sim \mathcal{N}(0,1)^n$.
\end{claim}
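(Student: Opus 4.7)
The plan is to exhibit an explicit substitution showing that the random vector $(\tilde{x}_1, \ldots, \tilde{x}_n)$, where $\tilde{x}_i := \alpha_{i1} y_1 + R_i(x)$, is itself distributed as $\mathcal{N}(0,1)^n$ when $(y_1, x_1, \ldots, x_n) \sim \mathcal{N}(0,1)^{n+1}$. Since by the construction in Definition~\ref{def:residue} we have $q(y_1,x) = p(\tilde{x}_1, \ldots, \tilde{x}_n)$, this will immediately yield the desired equality in distribution.

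First, I would unpack the definition of the residue: writing $L_1(x) = \sum_i w_i x_i$ with $\sum_i w_i^2 = 1$, the orthogonal decomposition $x_i = \alpha_{i1} L_1(x) + R_i(x)$ (with $R_i$ orthogonal to $L_1$) forces $\alpha_{i1} = w_i$ and $R_i(x) = x_i - w_i L_1(x)$. A short computation using $\sum_j w_j^2 = 1$ then verifies that $L_1(x)$ and each $R_i(x)$ are uncorrelated; since they are jointly Gaussian this means $L_1(x)$ is independent of the whole vector $(R_1(x), \ldots, R_n(x))$.

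Second, since each $\tilde{x}_i = w_i y_1 + R_i(x)$ is an affine combination of the independent standard Gaussians $y_1, x_1, \ldots, x_n$, the vector $(\tilde{x}_1, \ldots, \tilde{x}_n)$ is jointly Gaussian with mean zero. The proof then reduces to a short covariance computation: $\E[\tilde{x}_i \tilde{x}_j] = w_i w_j + \E[R_i(x) R_j(x)]$, and expanding $R_i = x_i - w_i L_1(x)$ gives $\E[R_i R_j] = \delta_{ij} - w_i w_j$, so the total equals $\delta_{ij}$. Hence $(\tilde{x}_1,\ldots,\tilde{x}_n) \sim \mathcal{N}(0,1)^n$, and the claim follows.

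Conceptually, there is no real obstacle here: the statement expresses nothing more than rotational invariance of the standard Gaussian, specialized to the substitution that replaces the coordinate $L_1(x)$ by a fresh independent standard Gaussian $y_1$ while leaving the components of $x$ orthogonal to $L_1$ unchanged. Equivalently, one could extend $L_1$ to an orthonormal basis $(L_1, L_2, \ldots, L_n)$ of $\R^n$, observe that $(L_1(x), L_2(x), \ldots, L_n(x))$ is then i.i.d.\ $\mathcal{N}(0,1)$, replace the first coordinate by $y_1$ (preserving the joint distribution), and undo the orthogonal change of basis to recover $\tilde{x}$.
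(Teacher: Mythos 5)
Your proof is correct and takes essentially the same route as the paper: both rest on the orthogonal decomposition $x_i = w_i L_1(x) + R_i(x)$ together with rotational invariance of the standard Gaussian. The paper phrases the key step as ``$(R_1(x),\ldots,R_n(x))$ is independent of $L_1(x)$, which is itself a standard normal, so replacing $L_1(x)$ by the fresh variable $y_1$ preserves the distribution of the arguments,'' whereas you make the same fact explicit by checking that the substituted vector $(w_i y_1 + R_i(x))_{i=1}^n$ is mean zero with identity covariance and hence distributed as $\mathcal{N}(0,1)^n$ --- a harmless difference in presentation.
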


\begin{proof}
Note that
$$
p(x) = p(\alpha_1 L_1(x) + R_1(x), \ldots, \alpha_n L_1(x)  +R_n(x)).
$$
Let $D$ be the joint distribution of $(R_1(x), \ldots, R_n(x))$
when $(x_1, \ldots, x_n) \sim \mathcal{N}(0,1)^n$.
As $R_i(x)$ is orthogonal to $L_1(x)$ for all $i \in [n]$, hence $D$ is
independent of the distribution of $L_1(x)  = \littlesum_{i=1}^n w_i x_i$
(when $(x_1, \ldots, x_n) \sim \mathcal{N}(0,1)^n$). Also, $L_1(x)$
is distributed like a standard normal.
Using these two facts, we get the claimed statement.
\end{proof}

\begin{claim}\label{clm:residue}
Given a degree-$2$ polynomial $p : \mathbb{R}^n \rightarrow \mathbb{R}$,
let $L_1(x)$ be a normalized linear form and $A$ be the matrix corresponding to
the quadratic part of $p$. Let $w_1$ be the vector corresponding to $L_1(x)$
and let $Q$ be any orthonormal matrix whose first column is $w_1$.
Let $\tilde{A} = Q^T \cdot A \cdot Q$. Then $\Var(\mathop{Res}(p,L_1(x))) =
4 \cdot \littlesum_{1 \le j \leq n}
\tilde{A}_{1j}^2$.
\end{claim}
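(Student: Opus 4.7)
The plan is to carry out the substitution defining $\mathop{Res}(p,L_1(x))$ explicitly, identify the multilinear degree-$2$ part containing $y_1$, and then reduce the variance computation to a statement about entries of $\tilde{A}$ via the orthogonality of $Q$.

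First, I would rewrite the substitution in vector form. Since $L_1(x) = w_1^T x$ and $\alpha_{i1} = w_{1,i}$, we have $R_i(x) = x_i - w_{1,i} L_1(x)$, so the vector-valued substitution $\tilde{x}_i = \alpha_{i1} y_1 + R_i(x)$ becomes
\[
\tilde{x} = P x + w_1 y_1, \qquad \text{where } P = I - w_1 w_1^T
\]
is the orthogonal projection onto the hyperplane perpendicular to $w_1$. Writing $p(x) = x^T A x + b^T x + C$, one gets
\[
q(y_1,x) \;=\; p(\tilde x) \;=\; (Px)^T A (Px) \;+\; 2 y_1 (Px)^T A w_1 \;+\; y_1^2 (w_1^T A w_1) \;+\; b^T P x \;+\; y_1 (b^T w_1) \;+\; C.
\]

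Second, I would extract the homogeneous multilinear degree-$2$ part with $y_1$ in every monomial. The $y_1^2$ term is not multilinear, the $b^T P x$ and $y_1 (b^T w_1)$ terms are degree $1$, and $(Px)^T A (Px)$ contains no $y_1$. Thus the residue is exactly
\[
\mathop{Res}(p,L_1) \;=\; 2 y_1 \cdot (Px)^T A w_1 \;=\; 2 y_1 \cdot x^T u, \qquad \text{with } u := P A w_1 = A w_1 - (w_1^T A w_1)\, w_1.
\]
One small subtlety to be careful about here is that when I expand the same quantity via the scalar shift $\xi = y_1 - L_1(x)$, a contribution of the right form also arises from $\xi^2 (w_1^T A w_1)$; both routes must give the same $u$, which is a useful sanity check.

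Third, since $y_1$ is a fresh $\mathcal{N}(0,1)$ variable independent of $x \sim \mathcal{N}(0,1)^n$ and $\mathbf{E}[y_1]=0$, the variance is easy:
\[
\Var\bigl(2 y_1 \, x^T u\bigr) \;=\; 4\, \mathbf{E}[y_1^2]\,\mathbf{E}[(x^T u)^2] \;=\; 4\,\|u\|_2^2.
\]

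Finally, I would translate $\|u\|_2^2$ into entries of $\tilde A = Q^T A Q$. Letting $q_j$ denote the $j$-th column of $Q$ (so $q_1 = w_1$), we have $\tilde A_{1j} = w_1^T A q_j$, hence $(\tilde A_{1j})_{j=1}^n$ are the coefficients of $A w_1$ in the orthonormal basis $\{q_j\}$. By Parseval, $\|A w_1\|_2^2 = \sum_{j=1}^n \tilde A_{1j}^2$, and $w_1^T A w_1 = \tilde A_{11}$. A short expansion gives
\[
\|u\|_2^2 \;=\; \|A w_1\|_2^2 - (w_1^T A w_1)^2 \;=\; \sum_{j=1}^n \tilde A_{1j}^2 - \tilde A_{11}^2,
\]
which yields the claimed formula (with the sum effectively running over $j \ge 2$, the $j=1$ contribution being cancelled by projection onto $P$). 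No serious obstacle is expected; the one point that merits care is correctly accounting for \emph{all} multilinear $y_1$-containing terms in the expansion, in particular making sure the $y_1^2(w_1^T A w_1)$ and $(Px)^T A (Px)$ pieces do not contribute.
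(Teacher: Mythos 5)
Your proof is correct and is essentially the paper's own argument: both expand $p$ under the substitution $x \mapsto w_1 y_1 + (I - w_1 w_1^T)x$, identify the residue as the $y_1$-cross term $2y_1\,(A w_1 - (w_1^T A w_1)w_1)^T x$ (equivalently $\sum_{j\ge 2} 2\tilde A_{1j} y_1 y_j$ in the $Q$-basis), and compute its variance using the independence of the fresh Gaussian $y_1$ from $x$. Your final value $4\sum_{j=2}^{n}\tilde A_{1j}^2$ is the right one: the paper's proof itself derives the residue as $\sum_{j=2}^{n} 2\tilde A_{1j} y_1 y_j$ and its later application of the claim (bounding $\Var(\mathop{Res}(r,y_1))$) sums over $j>1$, so the ``$1\le j\le n$'' in the printed statement is an off-by-one typo rather than a gap in your argument.
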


\begin{proof}
Let $x_i = \alpha_{i1} L_1(x) + R_i(x)$.
Let $Q = [w_1, \ldots, w_n]$ be an orthonormal matrix. Let $L_i(x)$ be the linear form corresponding to $w_i$. Note that $x_i = \littlesum_{j=1}^n Q_{ij} L_j(x)$ and hence $R_i(x) = \littlesum_{j>1}^n Q_{ij} L_j(x)$. Since $L_1(x), \ldots, L_n(x)$ are orthonormal, hence their joint distribution is same as $(y_1, \ldots, y_n) \sim \mathcal{N}(0,1)^n$.

Also, observe that this implies that the joint distribution of $R_1(x), \ldots, R_n(x)$ is independent of $L_1(x)$. As a consequence, we get that the distribution of $\mathop{Res}(p, L_1(x))$ is same as $\mathop{Res}(\tilde{p}, y_1)$ where
$$
\tilde{p}(y_1, \ldots, y_n) = p(\littlesum_{j=1}^n Q_{1j} y_j , \ldots, \littlesum_{j=1}^n Q_{nj} y_j) =p((Q\cdot y)_1 , \ldots, (Q \cdot y)_n)
$$
Note that since $A$ is the matrix corresponding to the quadratic part of $p$, we get that
$$
\mathop{Res}(\tilde{p}, y_1) = \mathop{Res}(p((Q\cdot y)_1 , \ldots, (Q \cdot y)_n),y_1) = \mathop{Res}(y^T \cdot Q^T \cdot A \cdot Q \cdot y, y_1) = \mathop{Res} (y^T \cdot \tilde{A} \cdot y, y_1)
$$
Thus, $\mathop{Res}(y^T \cdot \tilde{A} \cdot y, y_1) = \littlesum_{j=2}^n 2\tilde{A}_{1j} y_1 y_j$. Thus, $\Var (\mathop{Res}(y^T \cdot \tilde{A} \cdot y, y_1) = 4 \cdot \littlesum_{1 \le j \leq n} \tilde{A}_{1j}^2$.
\end{proof}


\subsubsection{Proof of Theorem~\ref{thm:degree-decomp}.}
\label{sec:pf-thm-degree-decomp}

Recall the statement of Theorem~\ref{thm:degree-decomp}:

\medskip
\noindent
{\bf Theorem~\ref{thm:degree-decomp}.}
\emph{
Let $p : \mathbb{R}^n \rightarrow \mathbb{R}$ be a degree-$2$ polynomial
(with constant term $0$) whose entries are $b$-bit integers and
let $\epsilon, \eta>0$.  There exists a deterministic algorithm
\textsf{APPROXIMATE-DECOMPOSE} which on input an explicit description
of $p$, $\eps$ and $\eta$ runs in time $\poly(n,b,1/\eps,1/\eta)$ and
has the following guarantee :
\begin{itemize}
\item[(a)] If $\lambda_{\max}(p) \ge \eps \sqrt{\Var(p)}$, then
the algorithm outputs rational numbers $\lambda_1$, $\mu_1$ and a
degree-$2$ polynomial $r : \mathbb{R}^{n+1} \rightarrow \mathbb{R}$ with the
following property: for
$(y,x_1,\dots,x_n) \sim \mathcal{N}(0,1)^{n+1}$, the
   two distributions $p(x_1,\dots,x_n)$ and $q(y_1,x_1,\dots,x_n)$ are
   identical, where $q(y_1,x_1,\dots,x_n)$ equals $\lambda_1 y_1^2 + \mu_1
   y_1 + r(y_1,x_1,\dots,x_n).$
Further, $\Var(\mathop{Res}(r,y_1)) \le 4\eta^2 \Var(p)$ and $\Var(r)
\le (1-\eps^4/40) \cdot \Var(p)$.
\item[(b)] If $\lambda_{\max}(p) < \eps \sqrt{\Var(p)}$, then the algorithm
either outputs ``small max eigenvalue" or has the same guarantee as (a).
\end{itemize}
}

\medskip

\begin{proof}
The algorithm works as follows :
\begin{itemize}

\item[(i)] Let $A$ be the matrix corresponding to the quadratic part of $p$.
If $\Vert A \Vert_F^2 < \eps^2 \cdot \Var (p)$, then output ``small max
eigenvalue".

\item[(ii)] Run the algorithm \textsf{APPROXIMATE-LARGEST-EIGEN} from
Theorem~\ref{thm:primitive} on the matrix $A$. If the output is ``small max
eigenvalue", then return ``small max eigenvalue".

\item[(iii)] If the output of the algorithm \textsf{APPROXIMATE-LARGEST-EIGEN}
is the tuple $(\lambda, w_1)$, then for each unit vector $e_i$, we
express $e_i = \alpha_i w_1 + v_i$ where $v_i$ is orthogonal to $w_1$.  For the
sake of brevity, we will henceforth refer to $\littlesum_{j=1}^n v_{ij} x_j$
as $R_i(x)$.

\item[(iv)] Define the polynomial $q(y_1, x_1, \ldots, x_n)$ as $p(\alpha_1 y_1 + R_1(x), \ldots, \alpha_n y_1 + R_n(x))$.



\end{itemize}

The bound on the running time of the algorithm is obvious.
We now give the proof of correctness of the algorithm. First of all,
if $\Vert A \Vert_F^2 < \eps^2 \cdot \Var (p)$, then $\lambda_{\max}^2(A)
\le \Vert A \Vert_F^2 < \eps^2 \cdot \Var (p)$ and hence the output is correct.
So from now on, we assume that $\Vert A \Vert_F^2 \geq \eps^2 \cdot \Var(p)$.

Now, assuming that the output of Theorem~\ref{thm:primitive} in Step (ii) is
``small max eigenvalue", then by the guarantee of Theorem~\ref{thm:primitive}
it must be the case that $\lambda_{\max}^2 (A) \le \epsilon^2
\cdot \Vert A \Vert_F^2$. Using Fact~\ref{fac:quad-norm}, we get
that $\lambda_{\max}^2 (A) \le \epsilon^2 \cdot \Var(p)/2$.

Thus, in both the cases that the output of the algorithm is ``small max
eigenvalue", it is the case that $\lambda_{\max}^2 (A) \le \epsilon^2
\cdot \Var(p)$.

Now, consider the case in which the algorithm reaches Step (iii).
It must be the case that
$\Vert A \Vert_F \ge \eps \cdot \sqrt{\Var(p)}$,
and by Claim~\ref{clm:eigen-1}
it must hold that $\lambda_{\max}(A)
\geq (\eps/2) \cdot \Vert A \Vert_F$.  We start with the following claim.

\begin{claim}
The distribution of $q(y_1,x_1,\ldots, x_n)$ and $p(x_1,\ldots, x_n)$
are identical when $(y_1,x_1,\ldots, x_n) \sim \mathcal{N}(0,1)^{n+1}$.
\end{claim}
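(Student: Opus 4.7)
The plan is to reduce this claim to Claim~\ref{clm:construction}, which is exactly the analogous statement for the general setup of Definition~\ref{def:residue}. The first step is to verify that the construction in Steps (iii)--(iv) is the specialization of Definition~\ref{def:residue} to the normalized linear form $L_1(x) = w_1 \cdot x$. Indeed, the decomposition $e_i = \alpha_i w_1 + v_i$ with $v_i \perp w_1$ from Step (iii) translates (via $x_i = e_i \cdot x$) into $x_i = \alpha_i L_1(x) + R_i(x)$, where $R_i(x) = v_i \cdot x$ is a linear form orthogonal to $L_1(x)$ in the sense required by Definition~\ref{def:residue}. The polynomial defined in Step (iv), namely $q(y_1,x) = p(\alpha_1 y_1 + R_1(x), \ldots, \alpha_n y_1 + R_n(x))$, is then precisely the polynomial $q(y_1,x)$ built in Definition~\ref{def:residue} for this choice of $L_1$.

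Given this identification, the claim follows immediately from Claim~\ref{clm:construction}. For completeness, I would restate the core probabilistic content: because $w_1$ is a unit vector and each $v_i$ is orthogonal to $w_1$, the jointly Gaussian random variables $L_1(x)$ and $(R_1(x), \ldots, R_n(x))$ (arising from $x \sim \mathcal{N}(0,1)^n$) have zero covariance, hence are independent, and $L_1(x) \sim \mathcal{N}(0,1)$. Writing $p(x) = F\bigl(L_1(x), R_1(x), \ldots, R_n(x)\bigr)$ where $F(t, r_1, \ldots, r_n) = p(\alpha_1 t + r_1, \ldots, \alpha_n t + r_n)$, we can replace the independent coordinate $L_1(x)$ by the independent standard Gaussian $y_1$ without changing the joint distribution, so $q(y_1, x) = F(y_1, R_1(x), \ldots, R_n(x))$ is distributed identically to $p(x)$. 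There is no serious obstacle here; the only mild care needed is to confirm that the orthogonality of $v_i$ to $w_1$ in $\R^n$ is exactly the condition invoked in Definition~\ref{def:residue}, so that Claim~\ref{clm:construction} applies unchanged.
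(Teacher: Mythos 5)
Your proposal is correct and matches the paper's own proof: the paper likewise observes that the polynomial built in Steps (iii)--(iv) of \textsf{APPROXIMATE-DECOMPOSE} is exactly the construction of Definition~\ref{def:residue} with $L_1(x) = w_1 \cdot x$, and then invokes Claim~\ref{clm:construction}. The independence/rotational-invariance argument you spell out for completeness is just the content of the paper's proof of Claim~\ref{clm:construction}, so nothing is missing.
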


\begin{proof}
The polynomial $q(y_1, x_1, \ldots, x_n)$ is the same as the one
constructed in Definition~\ref{def:residue} and hence we can use
Claim~\ref{clm:construction} to get the stated claim.
\ignore{


END IGNORE}
\end{proof}
Next, we prove the bound on $\Var  (\mathop{Res}(q, y_1))$.
\begin{claim}\label{clm:bound-res}
$\Var  (\mathop{Res}(q, y_1)) \le 2 \eta^2  \Var(p)$.
\end{claim}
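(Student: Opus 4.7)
The plan is to leverage Claim~\ref{clm:residue} together with guarantee (ii) of the underlying eigenvector routine Theorem~\ref{thm:primitive}. My first step is to observe that, by its very construction in Step~(iv), the polynomial $q(y_1,x)$ is exactly the polynomial produced by Definition~\ref{def:residue} when one takes $p$ and the normalized linear form $L_1(x) = w_1^T x$ (where $w_1$ is the approximate top eigenvector returned by \textsf{APPROXIMATE-LARGEST-EIGEN}). Consequently $\mathop{Res}(q,y_1) = \mathop{Res}(p,L_1)$, and I may apply Claim~\ref{clm:residue}: letting $Q=[w_1\,|\,q_2\,|\,\cdots\,|\,q_n]$ be any orthonormal extension of $w_1$ and setting $\tilde A = Q^T A Q$, the claim yields
\[
\Var(\mathop{Res}(q,y_1)) \;=\; 4 \sum_{j=2}^n \tilde A_{1j}^2.
\]

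The conceptual heart of the argument is the next step: recognizing that the off-diagonal entries $\tilde A_{1j}$ for $j\ge 2$ are precisely the coefficients of $Aw_1$ in the directions orthogonal to $w_1$, so they measure how far $w_1$ is from being an exact eigenvector of $A$. Writing $Aw_1$ in the orthonormal basis $\{w_1,q_2,\ldots,q_n\}$ and using symmetry of $A$, I get $Aw_1 = \tilde A_{11} w_1 + \sum_{j\ge 2} \tilde A_{1j}\, q_j$, so with $B = A - \lambda w_1 w_1^T$,
\[
\|Bw_1\|_2^2 \;=\; (\tilde A_{11}-\lambda)^2 + \sum_{j\ge 2} \tilde A_{1j}^2 \;\ge\; \sum_{j\ge 2} \tilde A_{1j}^2.
\]
This is the inequality that lets the ``approximate decoupling'' quality $\|Bw_1\|_2$ from Theorem~\ref{thm:primitive} directly bound the residual cross-terms of $q$ with $y_1$.

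The rest is routine book-keeping. Since the algorithm reached Step~(iii), \textsf{APPROXIMATE-LARGEST-EIGEN} did not return ``small max eigenvalue'' and so its guarantee~(ii) applies, giving $\|Bw_1\|_2 < \eta \|A\|_F$. Chaining the two displayed inequalities produces $\Var(\mathop{Res}(q,y_1)) < 4\eta^2 \|A\|_F^2$. Finally, Fact~\ref{fac:quad-norm} converts the Frobenius norm of the quadratic part into the variance of the whole polynomial via $\|A\|_F^2 \le \Var(p)/2$, and the desired bound $\Var(\mathop{Res}(q,y_1)) \le 2\eta^2\Var(p)$ follows. I do not foresee any real obstacle in executing this plan; the only subtlety worth double-checking is that the algorithm's construction of $q$ does coincide with Definition~\ref{def:residue} (so that Claim~\ref{clm:residue} is legitimately applicable), which is immediate from the formulas $e_i = \alpha_i w_1 + v_i$ and $R_i(x)=\sum_j v_{ij} x_j$ in Step~(iii).
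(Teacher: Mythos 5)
Your proposal is correct and follows essentially the same route as the paper's proof: it invokes Claim~\ref{clm:residue} to write $\Var(\mathop{Res}(q,y_1)) = 4\sum_{j\ge 2}\tilde A_{1j}^2$, bounds this sum by $\|Bw_1\|_2^2 \le \eta^2\|A\|_F^2$ via guarantee (a)(ii) of Theorem~\ref{thm:primitive}, and converts $\|A\|_F^2$ to $\Var(p)/2$ with Fact~\ref{fac:quad-norm}. The only differences are cosmetic (you phrase the key inequality via the expansion of $Aw_1$ in the orthonormal basis, while the paper writes it as $\sum_j (w_j^T A w_1 - \lambda w_j^T w_1)^2$), so nothing further is needed.
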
 \begin{proof}
Consider the orthogonal matrix $Q = [w_1, \ldots, w_n]$. Then, by using Claim~\ref{clm:residue}, for $\tilde{A} = Q^T \cdot A \cdot Q$,
$
\Var   (\mathop{Res}(q, y_1)) = 4 \littlesum_{j>1} \tilde{A}_{1j}^2
$. Now, using that $w_1, \ldots, w_n$ form an orthonormal basis, we get
$$
\sum_{j>1} \tilde{A}_{1j}^2 = \sum_{j>1} (w_j^T \cdot A \cdot w_1)^2
$$
Now, note that
for the value $\lambda$ that
\textsf{APPROXIMATE-LARGEST-EIGEN} outputs in Step~(iii) of
\textsf{APPROXIMATE-DECOMPOSE}, we have
$$
\eta^2 \Vert A \Vert_F^2 \ge \Vert A \cdot w_1  -
{\lambda} w_1 \Vert_2^2 = \sum_{j=1}^n (w_j^T \cdot A \cdot w_1 - \lambda w_j^T \cdot w_1)^2  \ge \sum_{j=2}^n (w_j^T \cdot A \cdot w_1)^2
$$
Here the first inequality follows from Theorem~\ref{thm:primitive}
part (a)(ii) while the second equality follows from the definition of $\ell_2$ norm of a vector. Thus, we get that
$$
\Var   (\mathop{Res}(q, y_1)) = 4 \littlesum_{j>1} \tilde{A}_{1j}^2 \le4 \cdot \Vert A \cdot w_1  - \lambda_1 w_1 \Vert_2^2  \le4 \eta^2 \Vert A \Vert_F^2 \le 2 \eta^2 \Var(p)
$$
The last inequality uses Fact~\ref{fac:quad-norm}.
\end{proof}

Thus, the only part that remains to be shown is that the variance of $r$ goes down.
\begin{claim}
$\Var(r (y_1,x_1,x_2, \ldots, x_n)) \le (1- \eps^4/40) \cdot \Var(p).$
\end{claim}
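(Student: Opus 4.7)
The plan is to relate $\Var(r)$ to $\Var(q)=\Var(p)$ (equality coming from the preceding claim that $q$ and $p$ are identically distributed) through the decomposition $q(y_1,x) = \lambda_1 y_1^2 + \mu_1 y_1 + r(y_1,x)$, reducing the claim to a lower bound on $\lambda_1^2$. By construction $\lambda_1$ and $\mu_1$ are the coefficients of $y_1^2$ and $y_1$ in $q$, which one computes from the expansion of $p(w_1 y_1 + (I - w_1 w_1^T)x)$ to be $\lambda_1 = w_1^T A w_1$ and $\mu_1 = b^T w_1$. Reading off the same expansion, one sees that $r$ splits cleanly as $r(y_1,x) = 2 y_1 (u \cdot x) + p_0(x)$ where $u = A w_1 - \lambda_1 w_1$ is the vector measuring how far $w_1$ is from being an exact eigenvector (and, by Theorem~\ref{thm:primitive}(a)(ii), satisfies $\|u\|_2 \le \eta \|A\|_F$) and $p_0$ is a polynomial in $x$ only.

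Using that $y_1$ is independent of $x$ and that $\E[y_1] = \E[y_1^3] = 0$, together with the fact that every linear form $\ell(x)$ in the $x_i$ has mean zero under $x \sim \mathcal{N}(0,1)^n$, a direct calculation shows that all four cross-covariances between $\{\lambda_1 y_1^2,\,\mu_1 y_1\}$ and $\{2 y_1 (u \cdot x),\,p_0(x)\}$ vanish. Hence
\[
\Var(q) \;=\; \Var(\lambda_1 y_1^2 + \mu_1 y_1) + \Var(r) \;=\; 2\lambda_1^2 + \mu_1^2 + \Var(r),
\]
and combining this with $\Var(q) = \Var(p)$ gives $\Var(r) = \Var(p) - 2\lambda_1^2 - \mu_1^2 \le \Var(p) - 2\lambda_1^2$.

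It remains to produce a lower bound $2\lambda_1^2 \ge (\eps^4/40)\,\Var(p)$. Because $\lambda_1 = w_1^T A w_1$, the Cauchy--Schwarz inequality applied to Theorem~\ref{thm:primitive}(a)(ii) yields $|\lambda_1 - \tilde\lambda| \le \eta \|A\|_F$, while Theorem~\ref{thm:primitive}(a)(i) gives $\tilde\lambda \ge (1-\eta)|\lambda_{\max}(A)|$. Combining these two estimates with the hypothesis $|\lambda_{\max}(A)| \ge \eps \sqrt{\Var(p)}$ and with $\|A\|_F \le \sqrt{\Var(p)/2}$ (Fact~\ref{fac:quad-norm}), and taking $\eta$ to be a sufficiently small constant multiple of $\eps$ (as the algorithm is free to do at polynomial cost), we obtain $\lambda_1^2 \ge \tfrac{1}{2}\eps^2 \Var(p)$. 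Plugging this in yields $\Var(r) \le (1-\eps^2)\Var(p) \le (1-\eps^4/40)\Var(p)$, where the last inequality uses $\eps \le 1$.

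The only mildly delicate step will be the last one: handling the slack between the bounds Theorem~\ref{thm:primitive} controls on $\tilde\lambda$ and the actual quantity $\lambda_1 = w_1^T A w_1$ appearing in the decomposition of $q$. The covariance-vanishing step is entirely routine Gaussian-moment bookkeeping, and the final numerical comparison $\eps^2 \ge \eps^4/40$ is loose, leaving plenty of slack for absorbing the $O(\eta)$ errors.
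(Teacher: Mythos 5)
Your structural idea is sound and, in one respect, cleaner than the paper's route. The paper subtracts $\tilde r=\mathop{Res}(r,y_1)$ and invokes Fact~\ref{fac:variance-difference} twice, paying multiplicative $(1\pm 2\eta)$-type losses each time; you instead observe that all cross-covariances between $\lambda_1 y_1^2+\mu_1 y_1$ and $r=2y_1(u\cdot x)+p_0(x)$ vanish exactly (odd moments of $y_1$, mean-zero linear forms in $x$), so that $\Var(p)=\Var(q)=2\lambda_1^2+\mu_1^2+\Var(r)$ holds as an identity, reducing everything to a lower bound on $\lambda_1^2$. The expansion of $p(y_1w_1+(I-w_1w_1^T)x)$, the identification $\lambda_1=w_1^TAw_1$, $\mu_1=b^Tw_1$, $u=Aw_1-\lambda_1w_1$, and the transfer $|\lambda_1-\tilde\lambda|\le\eta\|A\|_F$ from Theorem~\ref{thm:primitive}(a)(ii) are all correct.

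There is, however, a genuine gap in your last step: you invoke the hypothesis $|\lambda_{\max}(A)|\ge\eps\sqrt{\Var(p)}$, which is only the case-(a) hypothesis of Theorem~\ref{thm:degree-decomp}. The claim sits inside the proof and must hold whenever the algorithm reaches Step~(iv); part (b) of the theorem promises the very same guarantee when $|\lambda_{\max}(A)|<\eps\sqrt{\Var(p)}$ but the subroutine nevertheless returns a decomposition. At that point the only available facts are $\|A\|_F\ge\eps\sqrt{\Var(p)}$ (Step~(i) did not trigger) and, by Claim~\ref{clm:eigen-1}, $|\lambda_{\max}(A)|>(\eps/2)\|A\|_F$, i.e.\ only $|\lambda_{\max}(A)|\ge(\eps^2/2)\sqrt{\Var(p)}$ --- quadratically weaker than what you assume, and precisely the reason the claim carries $\eps^4/40$ rather than $\eps^2$. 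Under the actual hypotheses your intermediate conclusions $\lambda_1^2\ge\frac12\eps^2\Var(p)$ and $\Var(r)\le(1-\eps^2)\Var(p)$ are not attainable. The repair is immediate within your framework: plugging the weaker bound into your chain (together with $\eta\le\eps^4/10^8$, which the paper implicitly assumes here) gives $\lambda_1\ge(1-o(1))(\eps^2/2)\sqrt{\Var(p)}$, hence $2\lambda_1^2\ge(\eps^4/40)\Var(p)$ and $\Var(r)\le(1-\eps^4/40)\Var(p)$, recovering the claim --- indeed with a slightly better constant than the paper's argument, since your variance decomposition is exact rather than approximate.
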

\begin{proof}
Note that $q(x) = \lambda_1y_1^2 + \mu_1 y_1 + r(y_1,x_1, \ldots, x_n)$. Let $\tilde{r}(y_1, x_1, \ldots, x_n) =\mathop{Res}(r,y_1)$. Since $\Var(\tilde{r}) \le 2 \eta^2 \Var(p)$, hence using Fact~\ref{fac:variance-difference}, we get that
$$
\Var( \lambda_1 y_1^2 + \mu_1 y_1 + r (y_1,x_1,x_2, \ldots, x_n) -
\tilde{r}(y_1, x_1, \ldots, x_n) ) \le (1+2\eta)^2 \Var(p).
$$
However, note that $r (y_1,x_1,x_2, \ldots, x_n) - \tilde{r}(y_1, x_1, \ldots, x_n)$ is independent of $y$ (call it $\tilde{r}_1(x_1, \ldots, x_n)$).
As a result, we get
$$
\Var(\tilde{r}_1(x_1, \ldots, x_n)) +\Var( \lambda_1 y_1^2 + \mu_1 y_1) \le (1+2\eta)^2 \Var(p)
$$
Since the algorithm reaches Step (iii) only if $\lambda_{\max}(A) \ge
\eps \Vert A \Vert_F/2$ and $\Vert A \Vert_F \ge
\eps \sqrt{\Var(p)}$, hence $\lambda_1 \ge (1-\eta) (\eps^2/2)
\sqrt{\Var(p)}$ and hence
$$
\Var(\tilde{r}_1(x_1, \ldots, x_n))  \le  (1+2\eta)^2 \Var(p)  - (1-\eta)^2
(\eps^4{/4}) \Var(p) \le (1- \eps^4/20) \cdot \Var(p).
$$
This uses the fact $\eta \le \eps^4/10^8$. Again, using Fact~\ref{fac:variance-difference} and that $\eta \le \eps^4/10^8$, we get that since $\Var(\tilde{r}) \le 2 \eta^2 \Var(p)$
$$
\Var(r (y_1,x_1,x_2, \ldots, x_n)) \le (1- \eps^4/40) \cdot \Var(p).
$$
\end{proof}

This concludes the proof of Theorem~\ref{thm:degree-decomp}.
\end{proof}
\fi

\subsection{The first stage:  Constructing a junta polynomial.}\label{sec:first-stage}~
In this section, we describe an algorithm {\tt Construct-Junta-PTF}  
which given as input an $n$ variable quadratic polynomial $p$, runs in time $\poly(n/\eps)$ and outputs a quadratic polynomial $q$ on $\tilde{O}(1/\eps^4)$ variables such that the distributions of $p(Y)$ and $q(Y)$ (when $Y\sim \mathcal{N}(0,1)^n$) are $O(\eps)$ close in Kolmogorov distance.  More precisely, we prove the following theorem.

\begin{theorem} \label{thm:construct-junta-PTF}
The algorithm {\tt Construct-Junta-PTF} has the following performance
guarantee: It takes as input an explicit description of an $n$-variable
degree-$2$ polynomial $p$ with {$b$-bit} integer coefficients, and
a value $\eps > 0.$ It runs (deterministically) in time
$\poly(n,b,1/\eps)$ and outputs a degree-$2$
polynomial $q = \sum_{i=1}^K (\lambda_i y_i^2 + \mu_i y_i) + C'$ such that
$
\left|\Pr_{x \in \mathcal{N}(0,1)^n}[p(x) \geq 0] - \Pr_{y \in \mathcal{N}(0,1)^K}[q(y) \geq 0]
\right| \leq O(\eps),
$
where
each $\lambda_i, \mu_i \in \Z$ and $K = \tilde{O}(1/\eps^{4})$.
\end{theorem}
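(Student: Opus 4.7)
The plan is to iteratively apply \textsf{APPROXIMATE-DECOMPOSE} to peel off approximately-decoupled quadratic-linear contributions one variable at a time, and then handle the residual polynomial via either Chatterjee's central limit theorem (when the residual has small max eigenvalue relative to $\sqrt{\Var}$) or a variance-shrinkage argument (when the decomposition has run for enough iterations). The critical-index framework previewed in the intuition will drive the termination condition.

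Concretely, I would set $\tau = \Theta(\eps^2)$, $K_0 = \tilde{O}(1/\eps^4) = \Theta(\tau^{-2}\log(1/\tau))$, and choose $\eta$ as a small enough polynomial in $\eps/K_0$ so that the cumulative error from the ``almost decoupling'' is $o(\eps)$. Starting from $p_0 := p$, at each step $i$ the algorithm calls \textsf{APPROXIMATE-DECOMPOSE}$(p_{i-1},\tau,\eta)$, producing either the verdict ``small max eigenvalue'' (in which case stop and set $K=i-1$) or a tuple $(\lambda_i,\mu_i,r_i)$ with the guarantees of Theorem~\ref{thm:degree-decomp}. In the latter case, I split $r_i(y_i,x) = \mathop{Res}(r_i,y_i) + p_i(x)$ so that $p_i$ collects the $x$-only monomials, accumulate $\lambda_i y_i^2 + \mu_i y_i$ into the junta polynomial under construction, and iterate on $p_i$. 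The process also halts (setting $K=K_0$) if $i$ reaches $K_0$.

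Next I would build the output polynomial $q$ by cases on how the loop terminated. In the early-stop case, $p_K$ satisfies $\lambda_{\max}^2(p_K) \le \tau^2\Var(p_K)$, so Chatterjee's CLT (Theorem~\ref{thm:chat}) yields $d_{\mathrm{TV}}(p_K,\mathcal{N}(\E[p_K],\Var(p_K))) = O(\sqrt{\tau}) = O(\eps)$, and I set
$$q(y_1,\dots,y_{K+1}) := \littlesum_{j \le K}(\lambda_j y_j^2 + \mu_j y_j) \,+\, \sqrt{\Var(p_K)}\,y_{K+1} + \E[p_K].$$
In the late-stop case, iterating $\Var(p_i) \le (1-\eps^4/40)\Var(p_{i-1})$ gives $\Var(p_{K_0}) \le (1-\eps^4/40)^{K_0}\Var(p) \le \eps^{\Omega(1)}\Var(p)$, so $p_{K_0}$ is sharply concentrated around its mean and I set $q := \sum_{j\le K_0}(\lambda_j y_j^2+\mu_j y_j) + \E[p_{K_0}]$. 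In either case $q$ has at most $K_0+1 = \tilde{O}(1/\eps^4)$ variables, already in the required decoupled form. The remaining task is to bound $|\Pr[p(x)\ge 0]-\Pr[q(y)\ge 0]|$ by combining three error sources: (i) the dropped residues $\mathop{Res}(r_i,y_i)$, whose variances sum via the geometric decay to $\le 4\eta^2\sum_i \Var(p_{i-1}) \le O(\eta^2/\eps^4)\Var(p)$; (ii) the $O(\sqrt{\tau})=O(\eps)$ TV error from Chatterjee in the early-stop case; and (iii) the $\eps^{\Omega(1)}\Var(p)$ tail variance dropped in the late-stop case. To turn the $L^2$ bounds in (i) and (iii) into Kolmogorov bounds on the threshold probability, I would use the Carbery--Wright anticoncentration theorem applied to the accumulated head polynomial (which has variance $\Theta(\Var(p))$) together with Chebyshev on the perturbation, obtaining a bound of the form $O((\Var(\Delta)/\Var(\text{head}))^{1/6})$; choosing $\eta = \poly(\eps)$ small enough makes all three contributions $O(\eps)$. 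A final housekeeping step rounds $\lambda_i,\mu_i$ to integers at $\poly(\eps)$ granularity (after an overall rescaling of $p$) contributing only $o(\eps)$ additional Kolmogorov error.

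The main obstacle I expect is controlling the cumulative error from source (i): over $K_0=\tilde{O}(1/\eps^4)$ iterations each \textsf{APPROXIMATE-DECOMPOSE} leaves behind a cross-term residue between the newly-extracted variable and the remaining ones, and these $K_0$ residues must be dropped wholesale without moving $\Pr[\cdot\ge 0]$ by more than $\eps$. This requires choosing $\eta$ small enough in terms of $\eps$ while preserving the $\poly(n,b,1/\eps)$ runtime, and relies crucially on anticoncentration for degree-$2$ Gaussian polynomials to convert $L^2$ closeness of the perturbation into Kolmogorov closeness of the threshold probability.
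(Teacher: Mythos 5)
Your proposal is correct and follows essentially the same route as the paper: iterate \textsf{APPROXIMATE-DECOMPOSE}, use Chatterjee's CLT when the residual has small maximum eigenvalue, use the geometric variance decay (critical-index style) to truncate after $\tilde{O}(1/\eps^4)$ iterations, and convert variance bounds on the dropped pieces into Kolmogorov bounds via Carbery--Wright anticoncentration plus a Gaussian tail bound. The only differences are bookkeeping: the paper accumulates the error per iteration (an $\eps/K$ Kolmogorov budget for each residue removal and for a per-iteration rounding of the residual polynomial, which also keeps the coefficients as small integers so that \textsf{APPROXIMATE-DECOMPOSE} applies), whereas you sum the residue variances and invoke anticoncentration once at the end, which works with the same choice of $\eta = \poly(\eps/K)$.
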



\begin{figure}[tb]
\hskip-.2in \framebox{
\medskip \noindent \begin{minipage}{16.5cm}

\medskip

{\tt Construct-Junta-PTF}

\medskip

\noindent {\bf Input:}  Explicit description of an $n$-variable
degree-$2$  polynomial $p$ and $\eps >0.$\\
\noindent {\bf Output:} A degree-2 polynomial $q(y) = \sum_{i=1}^K (\lambda_i y_i^2 + \mu_i y_i)+C'$,
with $K = \tilde{O}(1/\eps^{{4}})$, such that
$
|\Pr_{x \in \mathcal{N}(0,1)^n}[p(x) \geq 0] - \Pr_{y \in \mathcal{N}(0,1)^K}[q(y) \geq 0]
| = O(\eps).
 $

\medskip

Let $p(x) = p'(x)+C$, where $p'$ has constant term $0$.
Assume by rescaling that $\Var(p')= \Var(p)= 1$.

\smallskip

{\bf Initialize:} $i=1$; $s_1(x) = p'(x)$; $h_0 \equiv 0$.

\smallskip

\noindent {\bf Repeat the following:}

\begin{enumerate}
\item Fix $\alpha \eqdef \Theta(\eps^4 / \log^2(1/\eps))$.
\item {\bf If} $\Var(s_i) < \alpha$ {\bf output} the polynomial $q_{i-1}:\R^{i-1} \to \R$ defined by
$q_{i-1}(y_1, \ldots, y_{i-1}) = h_{i-1}(y_1, \ldots, y_{i-1}) + \E[s_{i}] + C.$
\item {\bf If} $\Var(s_i) \ge \alpha$ {\bf do the following}
\begin{enumerate}
\item[(a)] Round down each coefficient of $s_i$ to its closest integer multiple of $\gamma/(Kn)$,
where {$\gamma = \tilde{\Theta}((\eps/K)^2) \sqrt{\alpha}$}
and {$K = \tilde{\Theta}(1/\eps^4)$}.
Let $s'_i(x)$ be the rounded polynomial.
\item[(b)] Run the routine \textsf{APPROXIMATE-DECOMPOSE}$(s'_i, \eps, \eta:={\tilde{\Theta}(\eps^4/K^4)})$.
\item[(c)] {\bf If} the routine returns ``small max eigenvalue" {\bf output} the polynomial $q'_{i}:\R^{i} \to \R$
$q'_i(y_1, \ldots, y_i) = h_{i-1}(y_1, \ldots, y_{i-1}) +  \beta_{s'_i} y_i + \E[s'_i]+C.$
where $\beta_{s'_i}$ is obtained by rounding down $\sqrt{\Var(s'_i)}$ to the nearest integer multiple of
{$\epsilon \alpha / 2$}.
\item[(d)] {\bf If} the routine outputs numbers $\lambda_i, \mu_i$ and a polynomial $r_i: \R^{n+1} \to \R$,
we define
$
s_{i+1}(x)  = r_i(y_i,x_1, \ldots, x_n) - \mathop{Res}(r_i , y_i)
$
and
$
h_i(y_1, \ldots, y_i) = h_{i-1}(y_1,\ldots, y_{i-1}) + (\lambda_i y_i^2 + \mu_i y_i).
$
\end{enumerate}
\item $i=i+1.$



%
\end{enumerate}

\ifnum\confversion=0
\noindent {\bf End Loop}
\fi



\end{minipage}}
\label{fig:aa}
\end{figure}
\medskip


The full proof of the theorem is technical
\ifnum\confversion=1
and is deferred to the  full version. Here we give some intuition behind the algorithm and its
proof of correctness.
\else
so first we give some intuition behind the algorithm and its
proof of correctness.
\fi

As mentioned in the introduction, if we were given the exact SVD
then we could construct a decoupled $n$-variable polynomial $\tilde{p}$ such
that $p(X)$ and $\tilde{p}(X)$ have the same distribution when
$X \sim \mathcal{N}(0,1)^n$.
Since we cannot compute the exact SVD, we instead iteratively use the
\textsf{APPROXIMATE-DECOMPOSE} algorithm.

Consider the first time the \textsf{APPROXIMATE-DECOMPOSE} algorithm
is called (on the polynomial $s'_1$ -- think of this as just
being the input polynomial $p$).  If it outputs ``small max
eigenvalue'', then using Chatterjee's recent CLT for functions of Gaussians,
we show that for $X \sim \mathcal{N}(0,1)^n$
the distribution of $s'_1(X)$ is close to
$\mathcal{N}(\mathbf{E}[s'_1], \Var(s'_1))$ in total variation distance.
In this case we can set the polynomial $q = \sqrt{\Var[s'_1]}
y_1 +  \mathbf{E}[s'_1]$
(note that we ignore technical details like the ``rounding'' that the
algorithm performs in this intuitive discussion).

On the other hand,
if the \textsf{APPROXIMATE-DECOMPOSE} algorithm does not output ``small max
eigenvalue'', then let $\lambda_1 y_1^2 + \mu_1 y_1 + r_1(y_1, x_1,
\ldots, x_n)$ be the output of \textsf{APPROXIMATE-DECOMPOSE}.
Since $\Var(\mathop{Res}(r,y_1))$ is small, it is not difficult to show
that the distribution of $\lambda_1 y_1^2 + \mu_1 y_1 + r_1(y_1, x_1,
\ldots, x_n) - \mathop{Res}(r,y_1)$ is close to the distribution of $s'_1$
in Kolmogorov distance. However, note that by definition,
$r_1(y_1, x_1, \ldots, x_n) - \mathop{Res}(r_1,y_1)$ does not
involve the variable $y_1$.
We now iteratively work with the polynomial
$s_2(x_1, \ldots, x_n) = r_1(y_1, x_1, \ldots, x_n) - \mathop{Res}(r_1,y_1)$.
In particular, we
apply  \textsf{APPROXIMATE-DECOMPOSE} to the polynomial
$s'_2$ (this is a ``rounded'' version of $s_2$ -- think of it as
just being $s_2$).
In this second stage, if \textsf{APPROXIMATE-DECOMPOSE} outputs
``small max eigenvalue'', we can set the polynomial $q$ to be
$q= \lambda_1 y_1^2 + \mu_1 y_1 + \sqrt{\Var(s'_2)} y_2 +
\mathbf{E}[s'_2]$; otherwise we proceed as we did earlier
to obtain $\lambda_2$, $\mu_2$, $r_2(y_2,x_1,\dots,x_n)$; and so on.

If this iterative procedure terminates within
$\tilde{O}(1/\epsilon^4)$ steps because of
\textsf{APPROXIMATE-DECOMPOSE} returning ``small max eigenvalue'' at some
stage, then its output is easily seen to satisfy the conditions of the
theorem.
If the procedure continues through $K=\tilde{O}(1/\epsilon^4)$ steps
without terminating, then using the critical-index style analysis,
it can be shown that the variance of the remaining polynomial $s_K$
is at most $O(\eps)$
(recall from Theorem \ref{thm:degree-decomp} that each call to
\textsf{APPROXIMATE-DECOMPOSE} reduces the variance of the polynomial
by a multiplicative factor of $(1 - \eps^4/40)$).
Since this variance is so small it can be shown that the remaining
polynomial can be safely ignored and that the polynomial
$\sum_{1 \le i \le K} \lambda_i y_i^2 + \mu_i y_i + \mathbf{E}[s_K]$
satisfies the conditions of the theorem.


\medskip

\ifnum \confversion=0
\noindent The rest of this section is devoted to the proof of the above
theorem.  We start with a couple of preliminary lemmas:

\begin{lemma}\label{lem:low-variance-kolmogorov}
Let $ p,q :\mathbb{R}^n \rightarrow \mathbb{R}$ be degree-$2$ polynomials
such that $\Var(q) \le \eps' \Var(p)$, where
{$\eps' = O(\eps^4 / \log^2(1/\eps))$}.
For $x \sim \mathcal{N}(0,1)^n$, let $D$ denote the distribution of $p(x) + q(x)$
and $\tilde{D}$ the distribution of $p(x) + \E[q(x)]$. Then we have
$d_K(D, \tilde{D}) \le \eps$.
\end{lemma}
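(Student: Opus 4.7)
The plan is a standard sandwich argument balancing anti-concentration of $p$ against concentration of $q$ around its mean. By rescaling both polynomials by the same positive factor (which preserves Kolmogorov distance), I may assume $\Var(p)=1$, so that the hypothesis becomes $\Var(q) \le \eps'$. For a fixed threshold $t \in \R$ and a parameter $\delta > 0$ to be chosen, I split on the event $\{|q(x) - \E[q(x)]| \le \delta\}$: on this event $p(x)+q(x)$ lies in $[p(x)+\E[q(x)] - \delta,\ p(x)+\E[q(x)] + \delta]$, and the routine sandwich computation yields, for every $t$,
\[
\bigl|\Pr[p(x)+q(x) \le t] - \Pr[p(x)+\E[q(x)] \le t]\bigr| \le \Pr\bigl[|p(x)+\E[q(x)]-t| \le \delta\bigr] + \Pr\bigl[|q(x)-\E[q(x)]| > \delta\bigr].
\]

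Each term on the right is then controlled by a standard tool. For the anti-concentration term I would invoke the Carbery--Wright inequality applied to the degree-$2$ polynomial $p(x)+\E[q(x)]-t$, whose variance equals $\Var(p)=1$; this gives a bound of the form $O(\sqrt{\delta})$, uniformly in $t$. For the concentration term I would use the hypercontractive tail bound for degree-$2$ Gaussian polynomials (which follows from $\|q-\E q\|_r \le (r-1)\sqrt{\Var(q)}$ and an optimization of Markov), obtaining $\Pr[|q(x)-\E[q(x)]| > \delta] \le \exp\bigl(-\Omega(\delta/\sqrt{\Var(q)})\bigr) \le \exp\bigl(-\Omega(\delta/\sqrt{\eps'})\bigr)$.

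It remains only to choose $\delta$ to balance the two bounds. Setting $\delta = C\sqrt{\eps'}\log(1/\eps)$ for a sufficiently large absolute constant $C$ drives the concentration term below $\eps/2$; the anti-concentration term then becomes $O\bigl(\eps'^{1/4}\sqrt{\log(1/\eps)}\bigr)$, which is at most $\eps/2$ precisely when $\eps' \le O(\eps^4/\log^2(1/\eps))$ — exactly the hypothesis of the lemma. Taking the supremum over $t$ yields $d_K(D,\tilde D) \le \eps$. I do not expect any serious technical obstacle here; the somewhat delicate form of the permissible $\eps'$ is forced by the rate mismatch between the $\sqrt{\delta}$ anti-concentration for degree-$2$ polynomials and the $\exp(-\delta/\sqrt{\eps'})$ sub-exponential concentration tail, so any improvement in the lemma's dependence on $\eps$ would require a sharper anti-concentration or concentration input.
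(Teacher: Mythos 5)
Your proposal is correct and follows essentially the same route as the paper's proof: the same split of the event into the anti-concentration band $\{|p(x)+\E[q]-t|\le\delta\}$ and the deviation event $\{|q(x)-\E[q]|>\delta\}$, bounded respectively by Carbery--Wright and the degree-2 Gaussian tail bound, with $\delta$ chosen (up to constants, $\delta=\Theta(\eps^2)\sqrt{\Var(p)}$) exactly as in the paper. No gaps.
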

\begin{proof}
By the definition of the Kolmogorov distance it is no loss of
generality to assume that $\E[q]=0.$ For a fixed but arbitrary
$\theta \in \R$ we will show
that
\[ \left| \Pr[p(x)+q(x) \le \theta] - \Pr[p(x) \le \theta]  \right| \le \eps.\]
We bound the LHS as follows: Fix $x \in \R^n$. The point $x$
contributes to the LHS only if
there exists $\delta>0$ such that either $|p(x) - \theta| \le \delta$ or $|q(x)| \ge \delta.$ We select $\delta$ appropriately and
bound the probability of the first event using Theorem~\ref{thm:cw} and the probability of the second event using Theorem~\ref{thm:dcb}.
 Indeed, fix $\delta  = \Theta(\eps^2) \sqrt{ \Var(p)} \ge \Omega (\log(1/\eps)) \sqrt{\Var(q)}$, where the inequality follows from the assumption
 $\Var(q) \le \eps' \Var(p)$.
Theorem~\ref{thm:cw} yields
\begin{equation} \label{eqn:cw-ub}
\Pr_{x \sim \mathcal{N}(0,1)^n} \left[ |p(x) - \theta| \le \delta \right]
= \Pr_{x \sim \mathcal{N}(0,1)^n} \left[ |p(x) - \theta| \le  \Theta(\eps^2) \sqrt{
\Var(p)}  \right]
\le \eps/2
\end{equation}
and by Theorem~\ref{thm:dcb}
\begin{equation} \label{eqn:cb-ub}
\Pr_{x \sim \mathcal{N}(0,1)^n} \left[|q(x) | \ge \delta \right] \le
\Pr_{x \sim \mathcal{N}(0,1)^n} \left[|q(x) | \ge  \Omega (\log(1/\eps)) \sqrt{\Var(q)}
\right] \le \eps/2.
\end{equation}
The lemma now follows by a union bound.
\end{proof}

\noindent As a consequence we have the following:
\begin{proposition} \label{prop:round}
Let $p: \R^n \to \R$ be a degree-$2$ polynomial with $\Var(p) {\ge \alpha}$.
Consider the polynomial $p': \R^n \to \R$ obtained by rounding down
each coefficient of $p$ to its closest integer multiple of $\gamma/n$, where {$\gamma = O (\eps^2/\log(1/\eps)) \cdot \sqrt{\alpha}$}. Then, we have
\[ \dk (p, p') \le \eps.\]
\end{proposition}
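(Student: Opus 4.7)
The plan is to write $p = p' + q$ with $q := p - p'$, then apply Lemma~\ref{lem:low-variance-kolmogorov} to $p'$ and $q$, and finally account for the mean shift $\mathbf{E}[q]$ via the Carbery--Wright anti-concentration bound (Theorem~\ref{thm:cw}).

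First I would bound the size of the perturbation $q$. Since every non-constant coefficient of $q$ has magnitude at most $\gamma/n$ and there are only $\binom{n}{2}+n = O(n^2)$ non-constant monomials of degree $\le 2$, we get $\mathop{SS}(q) = O(\gamma^2)$, and hence by Claim~\ref{clm:variance-bound-SS}, $\Var(q) \le 2\,\mathop{SS}(q) = O(\gamma^2)$. Moreover, using $\mathbf{E}_{x \sim \mathcal{N}(0,1)^n}[x_i x_j] = \delta_{ij}$, we have $|\mathbf{E}[q]| \le (n+1) \cdot \gamma/n = O(\gamma)$. A triangle inequality for standard deviations gives $\sqrt{\Var(p')} \ge \sqrt{\Var(p)} - \sqrt{\Var(q)} \ge \sqrt{\alpha}/2$ for the stated choice of $\gamma$, so $\Var(q)/\Var(p') = O(\gamma^2/\alpha) = O(\eps^4/\log^2(1/\eps))$. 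This matches the hypothesis of Lemma~\ref{lem:low-variance-kolmogorov} with $p'$ playing the role of the lemma's ``$p$'' and our $q$ playing the role of its perturbation.

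The lemma then yields $d_K(p,\, p' + \mathbf{E}[q]) \le \eps$, since $p' + q = p$. To finish, I would bound the Kolmogorov distance from $p'$ to its shift $p' + \mathbf{E}[q]$: this distance equals $\sup_\theta \bigl|\Pr[p' \le \theta - \mathbf{E}[q]] - \Pr[p' \le \theta]\bigr|$, which is at most the probability that $p'$ falls in some interval of length $|\mathbf{E}[q]| = O(\gamma)$. The degree-$2$ Carbery--Wright inequality (Theorem~\ref{thm:cw}) applied to $p'$ (using $\Var(p') \ge \alpha/4$) bounds this by $O\!\bigl((\gamma/\sqrt{\alpha})^{1/2}\bigr) = O(\eps/\sqrt{\log(1/\eps)}) = O(\eps)$, where the $(1/2)$ exponent comes from the degree-$2$ case. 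Combining via the triangle inequality gives $d_K(p,p') \le O(\eps)$.

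The argument is essentially bookkeeping once Lemma~\ref{lem:low-variance-kolmogorov} is in hand; the only subtle point is that the specific scaling $\gamma = O(\eps^2/\log(1/\eps))\sqrt{\alpha}$ is precisely calibrated to simultaneously meet two constraints: the $\eps^4/\log^2(1/\eps)$ relative-variance budget required by the lemma, and the square-root loss incurred by Carbery--Wright when handling an $O(\gamma)$ mean shift. I do not anticipate any genuine obstacle beyond verifying both constants consistently.
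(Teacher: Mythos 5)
Your proof is correct and follows essentially the same route as the paper: write $p = p' + q$, bound $\mathop{SS}(q) = O(\gamma^2)$ so that $\Var(q) = O(\eps^4/\log^2(1/\eps))\Var(p')$, and invoke Lemma~\ref{lem:low-variance-kolmogorov}. The one point where you go beyond the paper's own write-up is your explicit treatment of the mean shift: the lemma's conclusion compares $p$ with $p' + \E[q]$ rather than with $p'$ itself, and since $q$ may contain diagonal $x_i^2$ terms one can have $\E[q] \neq 0$; the paper's proof silently passes over this, whereas your extra step---bounding $\dk(p', p'+\E[q])$ by anti-concentration of $p'$ (Theorem~\ref{thm:cw}) on an interval of length $|\E[q]| = O(\gamma)$ and finishing with the triangle inequality---is a legitimate and welcome patch, costing only an $O(\eps)$ (constant-factor) loss that is absorbed by the $O(\cdot)$ in the definition of $\gamma$.
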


\begin{proof}
Note that $e(x) = p(x) - p'(x) =  \littlesum_{i\le j} \delta_{i,j} x_ix_j + \littlesum_{i} \gamma_i x_i$ where $|\delta_{i,j}| \le \gamma/n$ and $|\gamma_i| \le \gamma/n$.
As a consequence, we have $\mathop{SS}(e) \le \gamma^2$ and therefore
$$\Var(e) \le 2SS(e) \le \Theta (\eps^4 / \log^2(1/\eps)) \alpha \le  \Theta (\eps^4 / \log^2(1/\eps)) \Var(p)$$
where we used Claim~\ref{clm:variance-bound-SS} and the definition of $\gamma$.
{Fact~\ref{fac:variance-difference} gives that
$\Var(e) \leq \Theta (\eps^4 / \log^2(1/\eps)) \Var(p')$} and
Lemma~\ref{lem:low-variance-kolmogorov} now implies that
\[ \dk (p, p') \le \eps.\]
\end{proof}

For the sake of intuition, we start by analyzing the first
iteration of the loop.
In the beginning of the first iteration, we have $s_1(x) = p'(x)$,
where $p'$ is the polynomial $p$ without its constant term $C$.
{Hence} we have $\Var(s_1) = 1$, which means that Step~2 of the loop
is not executed. In Step~3(a) we round $s_1$ to obtain
the rounded polynomial $s'_1$. By Proposition~\ref{prop:round}, it follows that
\begin{equation} \label{eq:dksvssprime}
\dk(s_1, s'_1) \le \eps/K.
\end{equation}
Also note that the coefficients of $s'_1$ are integer multiples
{of $\gamma/(Kn)$} of magnitude $\poly(n/\eps)$,
{hence up to a scaling factor they are $\ell$-bit integers for
$\ell = O(\log(n/\eps)).$}
In Step~3(b) we run the routine
\textsf{APPROXIMATE-DECOMPOSE} on the rounded polynomial $s'_1$.
(Note that the routine runs in $\poly(n/\eps)$ time.)

If the routine returns ``small max eigenvalue'' (Step~3(c))  then Theorem~\ref{thm:degree-decomp} guarantees that the maximum magnitude eigenvalue
of $s'_1$ is indeed small, in particular $|\lambda_{\max}(s'_1)| \le \eps
\sqrt{\Var(s'_1)}.$ In this case, the algorithm outputs the univariate
polynomial $q'_1(y_1) = \beta_{s'_1} y_1 + \E[s'_1]+C$, where $ |\beta_{s'_1} - \sqrt{\Var(s'_1)}| \le {\eps\alpha/2}$.
We have the following:

\begin{claim} \label{claim:gaussian-first-iteration}
If $|\lambda_{\max}(s'_1)| \le \eps\sqrt{\Var(s'_1)}$, we have that
$\dk (s'_1(x),  q'_1(y_1)) = O(\eps).$
\end{claim}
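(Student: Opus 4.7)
The plan is to establish this bound by triangle inequality, interposing the reference Gaussian $\mathcal{N}(\E[s'_1] + C,\, \Var(s'_1))$ between the two distributions. Since Kolmogorov distance is translation-invariant, the additive constant $C$ in $q'_1$ can be absorbed by shifting $s'_1(x)$ by $C$ on the other side; it then suffices to bound (i) the Kolmogorov distance between $s'_1(x)$ and $\mathcal{N}(\E[s'_1], \Var(s'_1))$, and (ii) the Kolmogorov distance between $\mathcal{N}(\E[s'_1], \Var(s'_1))$ and $\beta_{s'_1} y_1 + \E[s'_1] \sim \mathcal{N}(\E[s'_1], \beta_{s'_1}^2)$, each by $O(\eps)$.

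For part (i), I would invoke Chatterjee's central limit theorem for functions of Gaussians (Theorem~\ref{thm:chat}). Specialized to a degree-2 polynomial, the hypotheses of Chatterjee's CLT reduce to a spectral quantity governed by the operator norm of the polynomial's Hessian relative to $\sqrt{\Var(s'_1)}$, i.e.\ essentially by the ratio $|\lambda_{\max}(s'_1)|/\sqrt{\Var(s'_1)}$. Under the hypothesis $|\lambda_{\max}(s'_1)| \le \eps\sqrt{\Var(s'_1)}$, this ratio is $O(\eps)$, and Chatterjee's bound then yields total variation distance $O(\eps)$ between $s'_1(x)$ and the target Gaussian, which in particular bounds the Kolmogorov distance by $O(\eps)$.

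For part (ii), both distributions are Gaussians with the same mean $\E[s'_1]$ and variances $\Var(s'_1)$ and $\beta_{s'_1}^2$ respectively. Step~3 is reached only when $\Var(s_1) \ge \alpha$, and the rounding from $s_1$ to $s'_1$ in Step~3(a) perturbs the variance only by a vanishing amount (Proposition~\ref{prop:round} combined with Fact~\ref{fac:variance-difference}), so $\Var(s'_1) = \Theta(\alpha)$. Together with the specification $|\beta_{s'_1} - \sqrt{\Var(s'_1)}| \le \eps\alpha/2$, this gives $|\beta_{s'_1} - \sqrt{\Var(s'_1)}|/\sqrt{\Var(s'_1)} = O(\eps)$. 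A direct CDF comparison for two centered Gaussians whose standard deviations differ by a multiplicative $(1 \pm O(\eps))$ factor then yields Kolmogorov distance $O(\eps)$.

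The main obstacle is the technical invocation of Chatterjee's CLT: the theorem is phrased in terms of the generator of the Ornstein--Uhlenbeck process and derivatives of a smoothed test function, and some care is needed to translate its hypotheses into the clean spectral quantity $|\lambda_{\max}(s'_1)|/\sqrt{\Var(s'_1)}$ that we want. A secondary bookkeeping point is tracking the effect of the Step~3(a) rounding through both pieces of the triangle inequality; once translation-invariance is used to absorb $C$, all remaining estimates are short.
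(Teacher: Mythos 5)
Your proposal is correct and follows essentially the same route as the paper: the paper's proof is exactly your two-step triangle inequality, bounding $\dk(s'_1(x),\sqrt{\Var(s'_1)}\,y_1+\E[s'_1])$ via Chatterjee's CLT (packaged as Lemma~\ref{lem:stein}, whose proof in Section~\ref{ssec:stein} carries out precisely the spectral reduction you flag as the main obstacle, so you may simply cite it) and then handling the rounding $\beta_{s'_1}\approx\sqrt{\Var(s'_1)}$ by a Gaussian-vs-Gaussian comparison (Fact~\ref{fact:gr}) using $\Var(s'_1)=\Theta(\alpha)$ and $|\beta_{s'_1}-\sqrt{\Var(s'_1)}|\le\eps\alpha/2$. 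Your treatment of the constant $C$ by shifting both sides matches how the claim is actually used later (e.g.\ in Claim~\ref{claim:jth-iteration-small}), where the comparison is against $\beta_{s'_1}y_1+\E[s'_1]$ with $C$ added to both distributions.
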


To prove this claim we will need the following lemma. Its proof uses a powerful version of the Central Limit Theorem for functions of independent Gaussian random variables (which can be obtained using Stein's method):

\begin{lemma} \label{lem:stein}
Let $p:\R^n \to \R$ be a degree-$2$ polynomial over independent standard Gaussians.
If $|\lambda_{\max}(p)| \le \eps \sqrt{\Var(p)}$, then $p$ is $O(\eps)$-close to the Gaussian
$\mathcal{N}(\E[p], \Var(p))$ in total variation distance (hence, also in Kolmogorov distance).
\end{lemma}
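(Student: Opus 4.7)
The plan is to first decouple the polynomial using the rotational invariance of the Gaussian, and then apply Chatterjee's CLT (Theorem~\ref{thm:chat}), after which the bound should follow from a direct computation involving the eigenvalues of the quadratic part.

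First, I would write $p(x) = x^T A x + b^T x + C$ and diagonalize the symmetric matrix $A$ as $A = Q^T \Lambda Q$ with $Q$ orthogonal and $\Lambda = \mathrm{diag}(\lambda_1, \ldots, \lambda_n)$. Setting $y = Q x$, rotational invariance of $\mathcal{N}(0,1)^n$ gives that $y \sim \mathcal{N}(0,1)^n$ and that $p(x)$ and the decoupled polynomial
\[ \tilde{p}(y) \;=\; \sum_{i=1}^n \lambda_i y_i^2 \;+\; \sum_{i=1}^n \mu_i y_i \;+\; C, \qquad \mu = Q b, \]
have the same distribution. The $\lambda_i$'s are exactly the eigenvalues of the quadratic part of $p$, so the hypothesis becomes $\max_i |\lambda_i| \le \eps \sqrt{\Var(\tilde{p})}$; a standard computation gives $\E[\tilde{p}] = \sum_i \lambda_i + C$ and $\Var(\tilde{p}) = \sum_i (2\lambda_i^2 + \mu_i^2)$. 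Since total variation distance is preserved under the rotation, it suffices to bound $\dtv(\tilde{p}(y), \mathcal{N}(\E[\tilde{p}], \Var(\tilde{p})))$.

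Next, I would apply Chatterjee's CLT to $\tilde{p}$. The key advantage of the decoupled form is that the Hessian of $\tilde{p}$ is the diagonal matrix $2 \Lambda$, so all spectral quantities that enter Chatterjee's bound are trivial to control: the operator norm is $2\max_i |\lambda_i|$ and the Frobenius norm is $2(\sum_i \lambda_i^2)^{1/2}$. The regularity parameter that controls Chatterjee's TV bound will, up to constants, reduce to $\max_i |\lambda_i|/\sqrt{\Var(\tilde{p})}$, which is at most $\eps$ by hypothesis, yielding the desired $O(\eps)$ bound.

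The main obstacle I anticipate is lining up Chatterjee's theorem --- which is phrased in terms of generators of the Ornstein--Uhlenbeck semigroup --- with the quadratic setting and extracting a bound of the exact form $O(\max_i |\lambda_i|/\sqrt{\Var(\tilde{p})})$. Because the OU semigroup on $\mathcal{N}(0,1)^n$ factors as a tensor product over coordinates, for the decoupled quadratic $\tilde{p}$ the computation should split coordinatewise: each $\lambda_i y_i^2 + \mu_i y_i$ is a one-dimensional function whose contribution to the relevant Stein-type quantities can be computed explicitly, and summing these contributions and normalizing by $\Var(\tilde{p}) = \sum_i (2\lambda_i^2 + \mu_i^2)$ should yield the $\max_i |\lambda_i|/\sqrt{\Var(\tilde{p})}$ bound cleanly. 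The $\mu_i^2$ terms only help (they increase the variance in the denominator without contributing to the spectral-norm numerator), so the worst case for the lemma is when the linear part vanishes, which is exactly the scenario governed by Chatterjee's CLT for homogeneous quadratic forms in Gaussians.
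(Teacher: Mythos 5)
Your proposal follows essentially the same route as the paper's proof: reduce to a decoupled polynomial $\sum_i (\lambda_i y_i^2 + \mu_i y_i) + C$ via rotational invariance of the Gaussian (the paper's Claim~\ref{clm:poly-equivalence}), then apply Chatterjee's Theorem~\ref{thm:chat}, where the coordinatewise computation of the Stein quantity $T$ gives $\Var(T) \le O(1)\max_i \lambda_i^2 \cdot \Var(p)$ and hence a TV bound of $O(\max_i|\lambda_i|/\sqrt{\Var(p)}) = O(\eps)$. The only slight imprecision is the remark that the $\mu_i$'s do not contribute to the numerator --- they do enter $\Var(T)$ through cross terms $\lambda_i\mu_i$, but these are absorbed by the same $\max_i\lambda_i^2 \cdot \Var(p)$ bound, so the conclusion is unaffected.
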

The proof of Lemma~\ref{lem:stein} is deferred to Section~\ref{ssec:stein}.
\begin{proof}[Proof of Claim~\ref{claim:gaussian-first-iteration}]
Since $|\lambda_{\max}(s'_1)| \le \eps\sqrt{\Var(s'_1)}$, by Lemma~\ref{lem:stein} it follows that
$$\dk\left(s'_1(x), \sqrt{\Var(s'_1)}y_1+\E[s'_1]\right) = O(\eps),$$ where $x \sim \mathcal{N}(0,1)^n$ and $y_1 \sim \mathcal{N}(0,1)$.
Since $\beta_{s'_1} \le \sqrt{\Var(s'_1)}$, Fact~\ref{fact:gr} yields
$$\dk \left( \beta_{s'_1} y_1+\E[s'_1], \sqrt{\Var(s'_1)}y_1+\E[s'_1] \right) \le (1/2) \frac{| \beta^2_{s'_1}  - \Var(s'_1)|}{\beta^2_{s'_1}}
\le  \frac{\eps \alpha}{\beta^2_{s'_1}} = O(\eps)$$
where we used that $| \beta^2_{s'_1} - \Var(s'_1)| \le 2 | \beta_{s'_1} - \sqrt{\Var(s'_1)}|   \le \eps\alpha$
and that $\beta^2_{s'_1} \ge \alpha/2$ (which uses that $\Var(s_1) \ge \alpha$ and $|\Var(s_1) - \Var(s'_1)| \le 2\gamma^2/K^2 \le \eps \alpha$).

The claim follows from the aforementioned and the triangle inequality.
\end{proof}

{Now we analyze the execution of Step~3(d).}\ignore{
Theorem~\ref{thm:degree-decomp} guarantees that the maximum magnitude
eigenvalue of $s'_1$ is not too small, in particular $|\lambda_{\max}(s'_1)|
\ge {\eps/2} \sqrt{\Var(s'_1)}.$}
Consider the numbers $\lambda_1, \mu_1$
and degree-$2$ polynomial $r_1:\R^{n+1} \to \R$
returned by the routine Approximate-Decompose. Consider the polynomial
$$g_1(y_1, x_1, \ldots, x_n) = \lambda_1 y_1^2 +
\mu_1y_1+ r_1(y_1, x_1, \ldots, x_n).$$
Theorem~\ref{thm:degree-decomp} guarantees that the random variables $s'_1(x_1, \ldots, x_n)$ and $g_1(y_1, x_1, \ldots, x_n)$,
with $(y_1, x_1, \ldots, x_n) \sim \mathcal{N}(0,1)^{n+1}$
have identical distributions. (In particular, this implies that $\Var(g_1) = \Var(s'_1).$)
The algorithm proceeds to define
\[ s_2(x_1, \ldots, x_n) = r_1(y_1, x_1, \ldots, x_n) - Res(r_1, y_1)\] and
\[ h_1 (y_1) = \lambda_1y^2_1 + \mu_1y_1.\]
Note that the two summands ($\lambda_1y_1^2+\mu_1y_1$ and $r_1(y_1, x_1, \ldots, x_n)$)
defining $g_1$ are correlated (because of the variable $y_1$).
An important fact is that if we subtract $Res(r_1, y_1)$ from the polynomial $r_1$, the distribution of the resulting polynomial remains close in Kolmogorov distance:
\begin{claim} \label{claim:kolmogorov-first-iteration}
We have that $\dk (s'_1, s_2+h_1) \le {\eps/K}.$
\end{claim}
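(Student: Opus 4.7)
The plan is to apply Lemma~\ref{lem:low-variance-kolmogorov} with $p := h_1 + s_2$ and $q := \mathop{Res}(r_1, y_1)$. First, by Theorem~\ref{thm:degree-decomp}, the polynomial $g_1 = \lambda_1 y_1^2 + \mu_1 y_1 + r_1(y_1, x_1, \ldots, x_n)$ has the same distribution as $s'_1(x_1, \ldots, x_n)$, so by the triangle inequality $\dk(s'_1, h_1 + s_2) = \dk(g_1, h_1 + s_2)$. Using $s_2 = r_1 - \mathop{Res}(r_1, y_1)$, I can rewrite $g_1 = h_1 + r_1 = (h_1 + s_2) + \mathop{Res}(r_1, y_1) = p + q$, so the goal reduces to bounding $\dk(p+q, p)$.

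Next I will verify the hypotheses of Lemma~\ref{lem:low-variance-kolmogorov}. By Definition~\ref{def:residue}, every monomial in $\mathop{Res}(r_1, y_1)$ is a cross term of the form $c_j \cdot y_1 \cdot x_j$ with $j \ne 1$, which has mean $0$ since $y_1$ and $x_j$ are independent standard Gaussians; hence $\E[q] = 0$ and $p + \E[q] = p$. For the variance hypothesis, Theorem~\ref{thm:degree-decomp} guarantees $\Var(q) = \Var(\mathop{Res}(r_1,y_1)) \le 4\eta^2 \Var(s'_1)$ with $\eta = \tilde{\Theta}(\eps^4/K^4)$ as chosen in Step~3(b). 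For the denominator, since $g_1 = p + q$, Fact~\ref{fac:variance-difference} gives $\sqrt{\Var(p)} \ge \sqrt{\Var(g_1)} - \sqrt{\Var(q)} \ge (1 - O(\eta))\sqrt{\Var(s'_1)}$, and in the first iteration the assumption $\Var(s_1) = 1$ together with Proposition~\ref{prop:round} yields $\Var(s'_1) = \Theta(1)$. Combining these bounds, $\Var(q)/\Var(p) = O(\eta^2) = \tilde{O}(\eps^8/K^8)$, which is comfortably below the $O((\eps/K)^4/\log^2(K/\eps))$ threshold required by Lemma~\ref{lem:low-variance-kolmogorov} for target accuracy $\eps/K$.

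Applying Lemma~\ref{lem:low-variance-kolmogorov} with target accuracy $\eps/K$ then yields $\dk(p + q, p) \le \eps/K$, i.e., $\dk(g_1, h_1 + s_2) \le \eps/K$, which by Step~1 proves the claim. The only real ``work'' in this proof is the bookkeeping of variance bounds; the main point -- which justifies the choice of $\eta = \tilde{\Theta}(\eps^4/K^4)$ in the algorithm in the first place -- is that the residue piece discarded in defining $s_2$ has variance so much smaller than that of the surviving polynomial $h_1 + s_2$ that its distributional effect is negligible in Kolmogorov distance.
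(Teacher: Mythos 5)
Your proposal is correct and follows essentially the same route as the paper: both identify the distribution of $s'_1$ with $g_1 = (h_1+s_2) + \mathop{Res}(r_1,y_1)$ and apply Lemma~\ref{lem:low-variance-kolmogorov} with $p = h_1+s_2$, $q = \mathop{Res}(r_1,y_1)$, using the variance bound on the residue from Theorem~\ref{thm:degree-decomp} and the fact that $\E[\mathop{Res}(r_1,y_1)]=0$. Your version just verifies the lemma's hypotheses (the mean-zero condition and the variance ratio against $\Var(h_1+s_2)$ rather than $\Var(g_1)$) a bit more explicitly than the paper does.
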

\begin{proof}
Note that $$\dk (s'_1, h_1+s_2) =  \dk(g_1, h_1+s_2) = \dk (g_1, g_1 - Res(r_1, y_1)).$$
By Theorem~\ref{thm:degree-decomp} it follows that $$\Var(Res(r_1, y_1)) \le \eta^2 \Var(s'_1) =  \eta^2 \Var(g_1).$$
{Since $\E[Res(r_1,y_1)]=0$,}
by Lemma~\ref{lem:low-variance-kolmogorov} we get
that $\dk (g_1, s_2+h_1) \le {\eps/K}$
as desired.
\end{proof}

The advantage of doing this is that $s_2$ and $h_1$ are
independent random variables, {since $s_2$ does not depend on $y_1.$}
As a consequence, we also obtain the following:
\begin{fact} \label{fact:big-first-iteration}
${\Var(s_2+h_1) =} \Var(s_2)+\Var(h_1) \ge 1-\eps/K.$
\end{fact}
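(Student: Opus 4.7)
The plan is to split the statement into two parts: the equality $\Var(s_2+h_1)=\Var(s_2)+\Var(h_1)$ and the quantitative lower bound $\Var(s_2+h_1)\ge 1-\eps/K$.

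First I would establish the equality by an independence argument. By Definition~\ref{def:residue}, $\mathop{Res}(r_1,y_1)$ collects precisely the multilinear monomials of $r_1$ in which $y_1$ appears, and by construction of $r_1$ (output by \textsf{APPROXIMATE-DECOMPOSE}) the pure $y_1^2$ and $y_1$ terms were already stripped off into $\lambda_1 y_1^2 +\mu_1 y_1=h_1$. Consequently $s_2 = r_1 - \mathop{Res}(r_1,y_1)$ is a polynomial in $x_1,\dots,x_n$ only, while $h_1$ is a polynomial in $y_1$ only. Since $(y_1,x_1,\dots,x_n)\sim \mathcal{N}(0,1)^{n+1}$ has independent coordinates, $s_2$ and $h_1$ are independent random variables, giving the equality.

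Next I would prove the lower bound by relating $s_2+h_1$ back to the polynomial $g_1=\lambda_1 y_1^2+\mu_1 y_1 + r_1=h_1+r_1$ produced by \textsf{APPROXIMATE-DECOMPOSE}. Indeed,
\[
s_2+h_1 \;=\; h_1 + r_1 - \mathop{Res}(r_1,y_1) \;=\; g_1 - \mathop{Res}(r_1,y_1).
\]
The $L^2$ triangle inequality then yields
\[
\sqrt{\Var(s_2+h_1)} \;\ge\; \sqrt{\Var(g_1)} - \sqrt{\Var(\mathop{Res}(r_1,y_1))}.
\]
Now I would control the two terms on the right separately. Since $g_1$ and $s'_1$ have identical distributions by Theorem~\ref{thm:degree-decomp}, $\Var(g_1)=\Var(s'_1)$. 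Because $s'_1$ is obtained from $s_1$ (with $\Var(s_1)=1$) by rounding each coefficient to a multiple of $\gamma/(Kn)$ with $\gamma=\tilde{\Theta}((\eps/K)^2)\sqrt{\alpha}$, the error polynomial $e=s_1-s'_1$ satisfies $\mathop{SS}(e)=O(\gamma^2/K^2)$ (a term-by-term bound over the $O(n^2)$ coefficients), and hence by Claim~\ref{clm:variance-bound-SS} $\Var(e)=O(\gamma^2/K^2)$. Applying the $L^2$ triangle inequality once more gives $\sqrt{\Var(s'_1)}\ge 1-O(\gamma/K)$, which is well below the slack we need. Meanwhile, Theorem~\ref{thm:degree-decomp} guarantees $\Var(\mathop{Res}(r_1,y_1))\le 4\eta^2\Var(s'_1)$ with $\eta=\tilde{\Theta}(\eps^4/K^4)$, so $\sqrt{\Var(\mathop{Res}(r_1,y_1))}\le 2\eta\sqrt{\Var(s'_1)}$, again much smaller than $\eps/K$.

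Substituting these two estimates into the triangle inequality and squaring shows $\Var(s_2+h_1)\ge (1-O(\gamma/K)-O(\eta))^2 \ge 1-\eps/K$, completing the proof. The argument is essentially bookkeeping of error terms built on top of Theorem~\ref{thm:degree-decomp}; the only place one needs to pay attention is to ensure the parameter choices $\alpha=\Theta(\eps^4/\log^2(1/\eps))$, $\gamma=\tilde{\Theta}((\eps/K)^2)\sqrt{\alpha}$, and $\eta=\tilde{\Theta}(\eps^4/K^4)$ are calibrated so that both the rounding-induced and the residual-induced slack sit comfortably below $\eps/K$, which is the main (and only mildly delicate) thing to verify.
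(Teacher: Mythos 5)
Your proposal is correct and follows essentially the same route as the paper: the paper also writes $s_2+h_1=g_1-\mathop{Res}(r_1,y_1)$, uses $\Var(g_1)=\Var(s'_1)\ge(1-\eps^2/K^2)\Var(s_1)$ together with $\Var(\mathop{Res}(r_1,y_1))\le 4\eta^2\Var(s'_1)$ from Theorem~\ref{thm:degree-decomp}, and combines them via Fact~\ref{fac:variance-difference}, which is exactly your standard-deviation triangle inequality in squared form. Your explicit independence argument for the equality and the $\mathop{SS}$-based rounding estimate are just slightly more spelled-out versions of what the paper invokes, so there is no substantive difference.
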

\begin{proof}
Note that $$g_1 = h_1 + r_1 = h_1+ s_2+ Res(r_1, y_1).$$
We have that $\Var(g_1) = \Var(s'_1) \ge (1-\eps^2/K^2) \Var(s_1)$
and $\Var(Res(r_1, y_1)) \le 4\eta^2\Var(s'_1).$
By Fact~\ref{fac:variance-difference} it follows that
\[ \Var(s_2+h_1) \ge (1-2\eta) \Var(g_1) = (1-2\eta) \Var(s'_1)
\ge (1-2\eta)(1-\eps^2/K^2) \Var(s_1) \]
which completes the proof {since $\Var(s_1)=1.$}
\end{proof}

We also have that the variance of the polynomial $s_2$
{is smaller than $\Var(s_1)$} by a multiplicative factor:
\begin{claim} \label{claim:variance-first-iteration}
We have $\Var(s_2) \le (1-\eps^4/40).$
\end{claim}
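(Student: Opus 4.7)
The plan is to use the orthogonality between $s_2$ and $\Res(r_1,y_1)$, combined with the variance contraction given by Theorem~\ref{thm:degree-decomp}, and then absorb the (tiny) error introduced by the rounding step that passes from $s_1$ to $s'_1$.

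First I would establish the structural observation that $s_2 = r_1 - \Res(r_1,y_1)$ is in fact a polynomial in $x_1,\dots,x_n$ alone, with no dependence on $y_1$: by construction \textsf{APPROXIMATE-DECOMPOSE} strips the pure $y_1^2$ and $y_1$ contributions into the scalars $\lambda_1,\mu_1$, so the only $y_1$-terms left in $r_1$ are the cross terms $y_1 x_j$, and these are precisely what $\Res(r_1,y_1)$ captures. Hence $r_1 = s_2(x) + \Res(r_1,y_1)$, where $\Res(r_1,y_1) = \sum_j c_j\, y_1 x_j$ is linear in $y_1$.

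Next I would argue that $\Cov\bigl(s_2,\Res(r_1,y_1)\bigr)=0$ under $(y_1,x_1,\dots,x_n)\sim\mathcal{N}(0,1)^{n+1}$. Since $y_1$ is independent of $x_1,\dots,x_n$ and $\E[y_1]=0$, we get $\E[s_2\cdot y_1 x_j] = \E[y_1]\,\E[s_2 x_j] = 0$ for every $j$, and similarly $\E[\Res(r_1,y_1)]=0$. This immediately yields the Pythagorean identity
\[
\Var(r_1) \;=\; \Var(s_2)\;+\;\Var\bigl(\Res(r_1,y_1)\bigr),
\]
and in particular $\Var(s_2) \le \Var(r_1)$.

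To finish, I would apply Theorem~\ref{thm:degree-decomp}(a) to the input polynomial $s'_1$, which gives $\Var(r_1) \le (1-\eps^4/40)\,\Var(s'_1)$. It then remains only to compare $\Var(s'_1)$ with $\Var(s_1) = 1$. Since the rounding granularity in Step~3(a) is chosen so fine ($\gamma/(Kn)$ with $\gamma = \tilde{\Theta}((\eps/K)^2)\sqrt{\alpha}$ and $K=\tilde{\Theta}(1/\eps^4)$), an argument in the spirit of Proposition~\ref{prop:round} shows $\Var(s_1-s'_1) \ll \eps^4$, and Fact~\ref{fac:variance-difference} converts this into $\Var(s'_1) \le 1 + o(\eps^4)$, a slack that is absorbed by the constant in $(1-\eps^4/40)$ (just as the same slack is absorbed in the proof of Theorem~\ref{thm:degree-decomp} when going from the ``clean'' contraction factor to the stated one). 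Chaining these bounds yields $\Var(s_2) \le 1-\eps^4/40$.

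The only conceptually nontrivial step is the decorrelation of $s_2$ and $\Res(r_1,y_1)$; once one notices that $\Res(r_1,y_1)$ is odd in the independent Gaussian $y_1$ while $s_2$ does not depend on $y_1$ at all, the vanishing covariance (and hence the clean Pythagorean split of $\Var(r_1)$) follows immediately, and the rest is bookkeeping.
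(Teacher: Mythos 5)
Your argument coincides with the paper's proof in its two substantive steps: the contraction $\Var(r_1)\le(1-\eps^4/40)\Var(s'_1)$ from Theorem~\ref{thm:degree-decomp}, and the bound $\Var(s_2)\le\Var(r_1)$. For the latter the paper simply remarks that $s_2$ is obtained from $r_1$ by deleting a subset of its terms; your explicit justification --- $\mathop{Res}(r_1,y_1)$ is mean-zero and uncorrelated with $s_2$ because $y_1$ is independent of $x$, giving $\Var(r_1)=\Var(s_2)+\Var\bigl(\mathop{Res}(r_1,y_1)\bigr)$ --- is exactly the reason that deletion cannot increase the variance, so this part is fine and matches the paper in substance.

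The gap is in your final bookkeeping step. The claim asserts the bound $1-\eps^4/40$ with that exact constant, which is then used verbatim in the inductive hypothesis (d) ($\Var(s_{i+1})\le(1-\eps^4/40)^i$) and in the termination argument $(1-\eps^4/40)^K<\alpha$. Your chain only yields $\Var(s_2)\le(1-\eps^4/40)\,\Var(s'_1)$ with $\Var(s'_1)\le 1+o(\eps^4)$, i.e.\ a bound strictly larger than $1-\eps^4/40$; there is no remaining slack in the stated constant to ``absorb'' this factor, unlike inside the proof of Theorem~\ref{thm:degree-decomp}, where the authors deliberately weaken $\eps^4/20$ to $\eps^4/40$ precisely to create such room. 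The paper avoids the issue with an exact inequality: the rounding in Step~3(a) only shrinks each coefficient in magnitude, and by the computation in Claim~\ref{clm:variance-bound-SS} the variance equals $\sum_i(b_i^2+2a_{ii}^2)+\sum_{i<j}a_{ij}^2$, a monotone function of the coefficient magnitudes, so $\Var(s'_1)\le\Var(s_1)=1$ with no error term. Replacing your approximate comparison of $\Var(s'_1)$ with $1$ by this one-line observation (or else weakening the claimed constant and propagating the change through the induction) closes the gap; as written, the proposal does not establish the claim in the form in which it is used later.
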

\begin{proof}
By Theorem~\ref{thm:degree-decomp} we know that
$$\Var(r_1) \le (1-\eps^4/40) \Var(s'_1) \le  (1-\eps^4/40)$$
where the second inequality used the fact that $\Var(s'_1) \le \Var(s_1) \le 1.$
Now note that $s_2$ is obtained from $r_1$ by removing a subset of its terms.
Therefore, $\Var(s_2) \le \Var(r_1)$ and the claim follows.
\end{proof}

{This concludes our analysis of the first iteration of the loop.}

\medskip

\ignore{

START SECOND ITERATION OF LOOP

We proceed to analyze the second iteration of the loop.
In the beginning of the second iteration ($i=2$), we have the polynomial $s_2(x_1, \ldots, x_n)$,
satisfying $\Var(s_2) \le (1-\eps^4/40)$ and the polynomial $h_1(y_1)$.
We know that the variance has decreased, but it may be the case that $\Var[s_2]$ is negligible, in which case
we can ``truncate'' $s_2$ taking into account its expectation (Step~2). Consider the polynomial
\[ q_1(y_1) = h_1(y_1) + \E[s_2]+C.\] We have the following claim:
\begin{claim} \label{claim:small-variance-second-iteration}
Suppose that $\Var[s_2] < \alpha$, where $\alpha \eqdef \Theta (\eps^4 / \log^2(1/\eps)).$ Then, we have that
\[ \dk(h_1+s_2+C, q_1 ) \le \eps. \]
\end{claim}
\begin{proof}
The claim is equivalent to showing that $\dk (h_1+s_2, h_1+\E[s_2]) \le \eps.$
Note that by Fact~\ref{fact:big-first-iteration} we have that $\Var[h_1]+\Var[s_2] \ge 1-\eps/K$.
Since $\Var[s_2] \le \alpha$ we get that $\Var[h_1] \ge 1-\eps/K - \alpha \ge 1/2.$
The claim follows by an application of Lemma~\ref{lem:low-variance-kolmogorov}
for the polynomials $h_1$ and $s_2$.
\end{proof}

Combining the above claim with the previously established facts that $\dk(h_1+s_2, s'_1) \le \eps/K$
and $\dk(s'_1, s_1) \le \eps/K$
completes the correctness analysis in this case.

\medskip

\noindent We now consider the complementary case that $\Var[s_2] \ge \alpha$ (Step~3).
In Step~3(a) we round $s_2$ to obtain
the rounded polynomial $s'_2$. By Proposition~\ref{prop:round}, it follows that
\[ \dk(s_2, s'_2) \le \eps/K.\]
Also note that the coefficients of $s'_2$ are integer multiples of magnitude $\poly(n/\eps)$.
In Step~3(b) we run the routine Approximate-Decompose on the rounded polynomial $s'_2$.
Note that the upper bound on the magnitude of the coefficients of $s'_2$ guarantees that the
routine runs in $\poly(n/\eps)$ time.

If the routine returns ``small max eigenvalue'' (Step~3(c))  then Theorem~\ref{thm:degree-decomp} guarantees that the maximum magnitude eigenvalue
of $s'_2$ is indeed small, in particular $|\lambda_{\max}(s'_2)| \le \eps \sqrt{\Var[s'_2]}.$ In this case, the algorithm outputs the bivariate
polynomial $q'_2(y_1, y_2) = h_1(y_1) +  \beta_{s'_2} y_2 + \E[s'_2]+C$, where $ |\beta_{s'_2} - \sqrt{\Var[s'_2]}| \le {\eps\alpha/2}$.
We have the following:
\begin{claim} \label{claim:gaussian-second-iteration}
If $|\lambda_{\max}(s'_2)| \le \eps\sqrt{\Var[s'_2]}$, we have that
$\dk (s'_2(x) , \beta_{s'_2} y_2 + \E[s'_2] ) = O(\eps).$
\end{claim}

\begin{proof}
Since $|\lambda_{\max}(s'_2)| \le \eps\sqrt{\Var[s'_2]}$, by Lemma~\ref{lem:stein} it follows that
$$\dk\left(s'_2(x), \sqrt{\Var[s'_2]}y_2+\E[s'_2]\right) = O(\eps),$$ where $x \sim \mathcal{N}(0,1)^n$ and $y_2 \sim \mathcal{N}(0,1)$.
Hence, by Fact~\ref{fact:gr}, we get that
$$\dk \left( \beta_{s'_2} y_2+\E[s'_2], \sqrt{\Var[s'_2]}y_2+\E[s'_2] \right) \le (1/2) \frac{| \beta^2_{s'_2}  - \Var[s'_2]|}{\beta^2_{s'_2}}
\le  \frac{\eps \alpha}{\beta^2_{s'_2}} = O(\eps)$$
where the second inequality used the fact that $| \beta^2_{s'_2}  - \Var[s'_2] | \le 2\eps \alpha$ and the last uses the fact
that $\beta^2_{s'_2} \ge \alpha/2.$
The claim follows from the aforementioned and the triangle inequality.
\end{proof}

\noindent Our final claim for this case is the following:

\begin{claim} \label{claim:second-iteration-small}
We have that $\dk(q'_2 , p) = O(\eps).$
\end{claim}

\begin{proof}
First, recall that $\dk(s_2, s'_2) \le \eps/K$ and therefore by the above claim and triangle inequality we get
$\dk (s_2(x) , \beta_{s'_2} y_2 + \E[s'_2] ) = O(\eps).$ A convolution argument (exploiting independence) now gives
that $\dk (h_1(y_1)+ s_2(x) , h_1(y_1)+ \beta_{s'_2} y_2 + \E[s'_2] ) = O(\eps).$
By Claim~\ref{claim:kolmogorov-first-iteration} and  the triangle inequality
we get that $\dk (s'_1(x) , h_1(y_1)+ \beta_{s'_2} y_2 + \E[s'_2] ) = O(\eps).$ The claim follows from the fact
 $\dk (s_1 , s'_1) \le \eps/K$.
\end{proof}

If Step~3(d)  is executed, Theorem~\ref{thm:degree-decomp} guarantees that the maximum magnitude eigenvalue
of $s'_2$ is not too small, in particular $|\lambda_{\max}(s'_2)| \ge {\eps/2} \sqrt{\Var[s'_2]}.$
Consider the numbers $\lambda_2, \mu_2$ and degree-$2$ polynomial $r_2:\R^{n+1} \to \R$
returned by the routine Approximate-Decompose. Consider the polynomial
$$g_2(y_2, x_1, \ldots, x_n) = \lambda_2 y_2^2 + \mu_2y_2+ r_2(y_2, x_1, \ldots, x_n).$$
Theorem~\ref{thm:degree-decomp} guarantees that the random variables $s'_2(x_1, \ldots, x_n)$ and $g_2(y_2, x_1, \ldots, x_n)$,
with $(y_2, x_1, \ldots, x_n) \sim \mathcal{N}(0,1)^{n+1}$ have identical distributions. (In particular, this implies that $\Var[g_2] = \Var[s'_2].$)
The algorithm proceeds to define
\[ s_3(x_1, \ldots, x_n) = r_2(y_2, x_1, \ldots, x_n) - Res(r_2, y_2)\] and
\[ h_2 (y_1, y_2) = \lambda_1y^2_1 +  \lambda_2y^2_2  + \mu_1y_1 +\mu_2y_2 .\]
Note that the terms in $g_2$ are correlated (because of the variable $y_2$).
Similarly to the first iteration, if we remove $Res(r_2, y_2)$ we lose very little in Kolmogorov distance:
\begin{claim} \label{claim:kolmogorov-second-iteration}
We have that $\dk (s'_2,  \lambda_2 y_2^2 + \mu_2y_2+s_3) \le {\eps/K}.$
\end{claim}
\begin{proof}
Note that $$\dk (s'_2,  \lambda_2 y_2^2 + \mu_2y_2+s_3) =  \dk (g_2,  \lambda_2 y_2^2 + \mu_2y_2+s_3) = \dk (g_2, g_2 - Res(r_2, y_2)).$$
By Theorem~\ref{thm:degree-decomp} it follows that $$\Var[Res(r_2, y_2)] \le \eta^2 \Var[s'_2] =  \eta^2 \Var[g_2].$$
Therefore, by Lemma~\ref{lem:low-variance-kolmogorov} we get that $\dk (g_2, s_3+h_2) \le {\eps/K}$
as desired.
\end{proof}


As a corollary, we obtain

\begin{corollary} \label{cor:second-iteration-kol2}
We have that $\dk (h_1+s_2, h_2+s_3) \le 2\eps/K$.
\end{corollary}
\begin{proof}
By Claim~\ref{claim:kolmogorov-second-iteration} and the fact that $\dk(s_2, s'_2) \le \eps/K$, we get that
$\dk (s_2,  \lambda_2 y_2^2 + \mu_2y_2+s_3) \le 2\eps/K.$
By a convolution argument (exploiting independence), it follows that
$\dk (h_1+s_2,  h_1+ \lambda_2 y_2^2 + \mu_2y_2+s_3) \le 2\eps/K$
which completes the proof.
\end{proof}

As a consequence, we also obtain the following:
\begin{fact} \label{fact:big-second-iteration}
$\Var[s_3]+\Var[h_2] \ge {(1-\eps/K)^2}.$
\end{fact}
\begin{proof}
By definition we can write $$g_2+h_1=   h_2 + r_2 =  h_2 + s_3+ Res(r_2, y_2).$$
We first claim that $$\Var[h_2+s_3] \ge (1-2\eta) \Var[h_1+g_2].$$
Indeed, by Theorem 20, we have that $\Var[Res(r_2, y_2)] \le 4\eta^2\Var[s'_2]$
and $$\Var[g_2+h_1] =  \Var[g_2]+\Var[h_1]  \ge \Var[g_2] =\Var[s'_2]$$
where the first equality used independence. The claim now follows by Fact~\ref{fac:variance-difference}.
Our second claim is that
\[ \Var[g_2+h_1] \ge (1-\eps^2/K^2) \Var[s_2+h_1].\]
Indeed, we can write
$$\Var[g_2+h_1] = \Var[s'_2]+\Var[h_1] \ge  (1-\eps^2/K^2) \Var[s_2]+\Var[h_1] \ge (1-\eps^2/K^2) \Var[s_2+h_1].$$
The desired fact follows by combining the above two claims with Fact~\ref{fact:big-first-iteration}.
\end{proof}

We also have that the variance of the polynomial $s_3$ will decrease by a multiplicative factor.
\begin{claim} \label{claim:variance-second-iteration}
We have $\Var[s_3] \le (1-\eps^4/40)^2.$
\end{claim}
\begin{proof}
By Theorem~\ref{thm:degree-decomp} we know that
$$\Var[r_2] \le (1-\eps^4/40) \Var[s'_2] \le  (1-\eps^4/40)\Var[s_2] \le (1-\eps^4/40)^2$$
where we used Claim~\ref{claim:variance-first-iteration}.
and the fact that $\Var[s'_2] \le \Var[s_2].$
Now note that $s_3$ is obtained from $r_2$ by removing a subset of its terms.
Therefore, $\Var[s_3] \le \Var[r_2]$ and the claim follows.
\end{proof}

END SECOND ITERATION OF THE LOOP
}

We are now ready to analyze a generic iteration of the loop.
{Many aspects of this analysis will be similar to our earlier analysis
of the first iteration.}
Consider the ${j}$-th iteration of the loop, where $j \ge 2$.
We can assume by induction that for all $i<j$ the following hold:
\begin{itemize}

\item[(a)] $\dk (s_i, s'_i) \le \eps/K$
{(for $j=2$ this holds by (\ref{eq:dksvssprime}));}

\item[(b)] $\dk (h_{i-1}+s_{i}, h_i+s_{i+1}) \le 2\eps/K$,
{(for $j=2$ this holds by (a) and
Claim~\ref{claim:kolmogorov-first-iteration});}

\item[(c)] $\Var(s_{i+1}) + \Var(h_i) \ge (1-\eps/K)^{i}$, and
{(for $j=2$ this holds by
Fact~\ref{fact:big-first-iteration}); and}

\item[(d)] $\Var(s_{i+1}) \le (1-\eps^4/40)^i$
{(for $j=2$ this holds by
Claim~\ref{claim:variance-first-iteration}).}

\end{itemize}

{We start by observing that $j-1 \le K$, i.e.,
the total number of iterations is at most $K+1$.
Indeed, by (d) above, for $i= K$ we will have $\Var(s_{i+1})
\le(1-\eps^4/40)^{K} < \alpha $
and the algorithm terminates in Step~2.}

In the beginning of the $j$-th iteration, we have the polynomial $s_j(x_1,
\ldots, x_n)$,
satisfying $\Var(s_j) \le (1-\eps^4/40)^{j-1}$ and the polynomial
$h_{j-1}(y_1, \ldots, y_{j-1})$.  If the variance has become very small,
we can ``truncate'' $s_{{j}}$ taking into account its expectation (Step~2). Consider the polynomial
\[
q_{j-1}(y_1, \ldots, y_{j-1}) = h_{j-1}(y_1, \ldots, y_{j-1}) + \E[s_j]+C.
\]
We have the following claim:

\begin{claim} \label{claim:small-variance-jth-iteration}
Suppose that $\Var(s_j) < \alpha$, where $\alpha \eqdef \Theta (\eps^4 / \log^2(1/\eps)).$ Then, we have that
\[ \dk(h_{j-1}+s_j+C, q_{j-1} ) \le \eps. \]
\end{claim}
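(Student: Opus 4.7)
The plan is to reduce the claim to an application of Lemma~\ref{lem:low-variance-kolmogorov}, which is exactly the tool for ``replacing a low-variance summand by its expectation'' in Kolmogorov distance. Observe that because $q_{j-1}(y_1,\dots,y_{j-1}) = h_{j-1}(y_1,\dots,y_{j-1}) + \E[s_j] + C$, subtracting the common constant $C$ from both sides of the inequality shows that the claim is equivalent to
\[ \dk\bigl(h_{j-1}(y_1,\dots,y_{j-1}) + s_j(x_1,\dots,x_n),\ h_{j-1}(y_1,\dots,y_{j-1}) + \E[s_j]\bigr) \le \eps. \]
Here $h_{j-1}$ and $s_j$ act on disjoint blocks of variables, but we can view both as degree-$2$ polynomials in the joint vector $(y_1,\dots,y_{j-1},x_1,\dots,x_n)\sim\mathcal{N}(0,1)^{n+j-1}$, so Lemma~\ref{lem:low-variance-kolmogorov} applies verbatim with ``$p$'' $=h_{j-1}$ and ``$q$'' $=s_j$.

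Next I would verify the variance hypothesis of the lemma, namely that $\Var(s_j) \le \eps' \Var(h_{j-1})$ for $\eps' = O(\eps^4/\log^2(1/\eps))$. The upper bound $\Var(s_j) < \alpha = \Theta(\eps^4/\log^2(1/\eps))$ is given. For the lower bound on $\Var(h_{j-1})$, I would use inductive hypothesis~(c), which gives $\Var(s_j) + \Var(h_{j-1}) \ge (1-\eps/K)^{j-1}$. Since $j-1 \le K$ (as argued immediately after the inductive hypotheses are stated, using inductive hypothesis (d) to ensure termination by iteration $K+1$), Bernoulli's inequality yields $(1-\eps/K)^{j-1} \ge 1-\eps$, and hence
\[ \Var(h_{j-1}) \ge 1-\eps - \Var(s_j) \ge 1 - \eps - \alpha \ge 1/2 \]
for all sufficiently small $\eps$. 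Consequently, $\Var(s_j)/\Var(h_{j-1}) \le 2\alpha = O(\eps^4/\log^2(1/\eps))$, which meets the hypothesis of Lemma~\ref{lem:low-variance-kolmogorov} with room to spare.

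With both the variance hypothesis and the ``same-$x$'' formalism in place, applying Lemma~\ref{lem:low-variance-kolmogorov} directly delivers the required bound of $\eps$ on the Kolmogorov distance, completing the claim. There is no real obstacle here: the claim is a straightforward generalization of Claim~\ref{claim:small-variance-second-iteration} (the $j=2$ case sketched earlier in the ignored block), and all the work has been pushed into the inductive hypotheses (c) (to control $\Var(h_{j-1})$ from below) and (d) (to bound the total number of iterations). The only subtlety to be careful about is the change of ambient variable set when invoking Lemma~\ref{lem:low-variance-kolmogorov}; since the lemma's proof uses only the anti-concentration of $h_{j-1}$ (Carbery--Wright) and a tail bound on $s_j$ (degree-$2$ concentration), both of which are insensitive to enlarging the underlying Gaussian space, this reformulation is harmless.
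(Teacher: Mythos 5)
Your proposal is correct and follows essentially the same route as the paper: reduce to $\dk(h_{j-1}+s_j,\,h_{j-1}+\E[s_j])$, use inductive hypothesis (c) together with $j \le K+1$ to get $\Var(h_{j-1}) \ge 1/2$, and apply Lemma~\ref{lem:low-variance-kolmogorov} to $h_{j-1}$ and $s_j$. Your extra remarks (Bernoulli's inequality, the disjoint-variable formalism) only make explicit details the paper leaves implicit.
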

\begin{proof}
The claim is equivalent to showing that
$\dk (h_{j-1}+s_j, h_{j-1}+\E[s_j]) \le \eps.$
Note that by Part (c) of the inductive hypothesis we have that $\Var(h_{j-1})
+\Var(s_j) \ge (1-\eps/K)^{j-1}$.
Since $\Var(s_j) {<} \alpha$ we get
that $$\Var(h_{j-1}) {>} (1-\eps/K)^{j-1} - \alpha \ge 1/2,$$
where the last inequality uses the fact that $j \le K+1$.
The claim follows by an application of
Lemma~\ref{lem:low-variance-kolmogorov}
for the polynomials $h_{j-1}$ and $s_j$.
\end{proof}

Combining the above claim with Parts (a) and (b) of the inductive hypothesis
and using the triangle inequality completes the correctness analysis
{of the algorithm in the case that it exits in Step~2.}

\medskip

\noindent We now consider the complementary case that $\Var(s_j) \ge \alpha$ (Step~3).
In Step~3(a) we round $s_j$ to obtain
the rounded polynomial $s'_j$. By Proposition~\ref{prop:round}, it follows that
\[ \dk(s_j, s'_j) \le \eps/K\]
establishing Part (a) of the inductive hypothesis for $i=j$.

Also note that the coefficients of $s'_j$ are integer multiples
{of $\gamma/(Kn)$}
of magnitude $\poly(n/\eps)$.
In Step~3(b) we run the routine
{\textsf{APPROXIMATE-DECOMPOSE}} on the rounded polynomial $s'_j$.
{(Note that the routine runs in $\poly(n/\eps)$ time.)}

If the routine returns ``small max eigenvalue'' (Step~3(c))  then Theorem~\ref{thm:degree-decomp} guarantees that the maximum magnitude eigenvalue
of $s'_j$ is indeed small, in particular $|\lambda_{\max}(s'_j)| \le \eps \sqrt{\Var(s'_j)}.$ In this case, the algorithm outputs the
polynomial $q'_j(y_1, \ldots, j_j) = h_{j-1}(y_1, \ldots, y_{j-1}) +  \beta_{s'_j} y_j + \E[s'_j]+C$, where $ |\beta_{s'_j} - \sqrt{\Var(s'_j)}| \le {\eps\alpha/2}$.
We have the following, {which is very similar to
Claim~\ref{claim:gaussian-first-iteration}:}

\begin{claim} \label{claim:gaussian-jth-iteration}
If $|\lambda_{\max}(s'_j)| \le \eps\sqrt{\Var(s'_j)}$, we have that
$\dk (s'_j(x) , \beta_{s'_j} y_j + \E[s'_j] ) = O(\eps).$
\end{claim}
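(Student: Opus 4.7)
The plan is to follow the proof of Claim~\ref{claim:gaussian-first-iteration} essentially verbatim, since the statement has the same form with $s'_1$ replaced by $s'_j$ and $y_1$ replaced by $y_j$, and the only hypothesis used is the small-max-eigenvalue assumption together with the fact that we are in the Step~3 branch (so $\Var(s_j) \ge \alpha$). First I would apply Lemma~\ref{lem:stein} to $s'_j$: the hypothesis $|\lambda_{\max}(s'_j)| \le \eps\sqrt{\Var(s'_j)}$ is exactly what the lemma requires, so it yields
\[
\dk\!\left(s'_j(x),\ \sqrt{\Var(s'_j)}\, y_j + \E[s'_j]\right) = O(\eps),
\]
where $x \sim \mathcal{N}(0,1)^n$ and $y_j \sim \mathcal{N}(0,1)$.

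Next I would compare the two centered Gaussians $\beta_{s'_j} y_j + \E[s'_j]$ and $\sqrt{\Var(s'_j)}\, y_j + \E[s'_j]$ using Fact~\ref{fact:gr}, which bounds their Kolmogorov distance by $\tfrac{1}{2} |\beta_{s'_j}^2 - \Var(s'_j)|/\beta_{s'_j}^2$. Since the algorithm sets $\beta_{s'_j}$ to be $\sqrt{\Var(s'_j)}$ rounded down to the nearest multiple of $\eps\alpha/2$, we have $|\beta_{s'_j} - \sqrt{\Var(s'_j)}| \le \eps\alpha/2$ and therefore $|\beta_{s'_j}^2 - \Var(s'_j)| \le 2\sqrt{\Var(s'_j)}\cdot \eps\alpha/2 \le \eps\alpha$, using that $\Var(s'_j) \le \Var(s_1) = 1$.

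The one place where some genuine (but small) verification is needed — and thus the main obstacle — is the lower bound $\beta_{s'_j}^2 \ge \alpha/2$. The algorithm reaches Step~3 only when $\Var(s_j) \ge \alpha$, and the rounding of coefficients in Step~3(a) has error in Frobenius-type norm controlled by $\gamma$, so by Claim~\ref{clm:variance-bound-SS} and Fact~\ref{fac:variance-difference} we have $|\Var(s_j) - \Var(s'_j)| \le O(\gamma^2) \le \eps\alpha/4$. Combined with the rounding error $|\beta_{s'_j} - \sqrt{\Var(s'_j)}| \le \eps\alpha/2$, this yields $\beta_{s'_j}^2 \ge \alpha(1 - O(\eps)) \ge \alpha/2$ for small enough $\eps$. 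Plugging into Fact~\ref{fact:gr} gives
\[
\dk\!\left(\beta_{s'_j} y_j + \E[s'_j],\ \sqrt{\Var(s'_j)}\, y_j + \E[s'_j]\right) \le \frac{\eps\alpha}{\alpha/2} = O(\eps).
\]

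Finally I would combine the two displayed bounds via the triangle inequality to conclude that $\dk(s'_j(x), \beta_{s'_j} y_j + \E[s'_j]) = O(\eps)$, which is the statement of the claim. The argument is essentially identical to that of Claim~\ref{claim:gaussian-first-iteration}; the only subtlety is that in the $j$-th iteration we must use the Step~3 threshold $\Var(s_j) \ge \alpha$ (rather than the normalization $\Var(s_1) = 1$) to secure the denominator in Fact~\ref{fact:gr}.
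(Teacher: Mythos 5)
Your proof is correct and follows essentially the same route as the paper's: apply Lemma~\ref{lem:stein} under the small-eigenvalue hypothesis, compare the two Gaussians via Fact~\ref{fact:gr} using the rounding bound $|\beta_{s'_j}^2 - \Var(s'_j)| \le O(\eps\alpha)$ together with $\beta_{s'_j}^2 \ge \alpha/2$ (which, as you note, comes from the Step~3 condition $\Var(s_j)\ge\alpha$ and the rounding error bounds), and finish with the triangle inequality. No gaps.
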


\begin{proof}
Since $|\lambda_{\max}(s'_j)| \le \eps\sqrt{\Var(s'_j)}$, by Lemma~\ref{lem:stein} it follows that
$$\dk\left(s'_j(x), \sqrt{\Var(s'_j)}y_j+\E[s'_j]\right) = O(\eps),$$ where $x \sim \mathcal{N}(0,1)^n$ and $y_j \sim \mathcal{N}(0,1)$.
Hence, by Fact~\ref{fact:gr}, we get that
$$\dk \left( \beta_{s'_j} y_j+\E[s'_j], \sqrt{\Var(s'_j)}y_j+\E[s'_j] \right) \le (1/2) \frac{| \beta^2_{s'_j}  - \Var(s'_j)|}{\beta^2_{s'_j}}
\le  \frac{\eps \alpha}{\beta^2_{s'_j}} = O(\eps)$$
where the second inequality used the fact that $| \beta^2_{s'_j}  - \Var(s'_j) | \le 2\eps \alpha$ and the last uses the fact
that $\beta^2_{s'_j} \ge \alpha/2.$
The claim follows from the aforementioned and the triangle inequality.
\end{proof}

\noindent Our final claim for this case {(the case
that the algorithm exits in Step~3(c))} is the following:

\begin{claim} \label{claim:jth-iteration-small}
We have that $\dk(q'_j , p) = O(\eps).$
\end{claim}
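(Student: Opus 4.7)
My plan is to prove $\dk(q'_j, p) = O(\eps)$ by chaining together Kolmogorov-distance bounds via the triangle inequality, starting from $s'_j$ (which Claim~\ref{claim:gaussian-jth-iteration} says is close to its Gaussian surrogate) and telescoping back to the original $p$ using part (b) of the inductive hypothesis.

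First I would apply Claim~\ref{claim:gaussian-jth-iteration} to obtain
\[
\dk\bigl(s'_j(x),\,\beta_{s'_j}y_j+\E[s'_j]\bigr) = O(\eps).
\]
Part (a) of the inductive hypothesis, which was just re-established in Step~3(a) via Proposition~\ref{prop:round}, gives $\dk(s_j, s'_j) \le \eps/K$, so by the triangle inequality
\[
\dk\bigl(s_j(x),\,\beta_{s'_j}y_j+\E[s'_j]\bigr) = O(\eps).
\]
Next I would exploit independence: the polynomial $s_j$ is a function of fresh Gaussians $x_1,\dots,x_n$ while $h_{j-1}$ depends only on $y_1,\dots,y_{j-1}$, and $\beta_{s'_j}y_j+\E[s'_j]$ depends only on $y_j$, so both sides above can be convolved with the independent random variable $h_{j-1}(y_1,\dots,y_{j-1})$ without increasing Kolmogorov distance. (This is the standard fact $\dk(X+Z, X'+Z) \le \dk(X,X')$ when $Z$ is independent of both $X$ and $X'$, proved by conditioning on $Z$.) This gives
\[
\dk\bigl(h_{j-1}+s_j,\, h_{j-1}+\beta_{s'_j}y_j+\E[s'_j]\bigr) = O(\eps),
\]
and the second argument is exactly $q'_j - C$.

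Then I would telescope. Using part (b) of the inductive hypothesis repeatedly for $i=1,\dots,j-1$, together with the triangle inequality,
\[
\dk\bigl(h_0+s_1,\, h_{j-1}+s_j\bigr) \;\le\; \sum_{i=1}^{j-1} \dk\bigl(h_{i-1}+s_i,\, h_i+s_{i+1}\bigr) \;\le\; (j-1)\cdot \frac{2\eps}{K} \;=\; O(\eps),
\]
since $j-1 \le K$ (as established right before Claim~\ref{claim:small-variance-jth-iteration}). Combining with the previous display and noting $h_0 \equiv 0$ and $s_1 = p'$, I obtain
\[
\dk\bigl(p',\, q'_j - C\bigr) = O(\eps).
\]
Adding the constant $C$ to both random variables does not change the Kolmogorov distance, and $p = p' + C$, so $\dk(p, q'_j) = O(\eps)$, as claimed.

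The main potential pitfall is the independence structure needed for the convolution step; I expect this is straightforward here because the algorithm has set things up so that at every iteration the new variable $y_i$ introduced by \textsf{APPROXIMATE-DECOMPOSE} is a fresh standard Gaussian independent of everything previously constructed, so $h_{j-1}$, $s_j$, and $y_j$ live on disjoint sets of independent Gaussians. The only other thing to track is that $j-1 \le K$, which keeps the telescoped error at $O(\eps)$; this was already noted in the analysis leading into this claim.
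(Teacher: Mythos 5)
Your proof is correct and follows essentially the same route as the paper: Claim~\ref{claim:gaussian-jth-iteration} plus $\dk(s_j,s'_j)\le\eps/K$, then the convolution-with-$h_{j-1}$ argument, then telescoping part (b) of the inductive hypothesis back to $s_1=p'$ (using $j-1\le K$) and absorbing the constant $C$. You have simply spelled out explicitly the telescoping and the constant-term step that the paper compresses into "combining with the inductive hypothesis and the triangle inequality."
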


\begin{proof}
First, recall that $\dk(s_j, s'_j) \le \eps/K$ and therefore by the above claim and triangle inequality we get
$\dk (s_j(x) , \beta_{s'_j} y_j + \E[s'_{{j}}] ) = O(\eps).$ A convolution argument (exploiting independence) now gives
that $\dk (h_{j-1}+ s_j , h_{j-1}+ \beta_{s'_j} y_j + \E[s'_j] ) = O(\eps).$
Combining the above with Parts (a) and (b) of the inductive hypothesis yields the claim by an application
of the triangle inequality.
\end{proof}

{Now we analyze the execution of Step~3(d).  To finish the proof it
suffices to show that the inductive hypotheses (a)--(d) all hold for $i=j.$}
\ignore{
Theorem~\ref{thm:degree-decomp} guarantees that the maximum magnitude
eigenvalue of $s'_j$ is not too small, in particular $|\lambda_{\max}(s'_j)|
\ge {\eps/2} \sqrt{\Var(s'_j)}.$}Consider the
numbers $\lambda_j, \mu_j$ and degree-$2$ polynomial $r_j:\R^{n+1} \to \R$
returned by the routine \textsf{APPROXIMATE-DECOMPOSE}.
Consider the polynomial
$$g_j(y_j, x_1, \ldots, x_n) = \lambda_j y_j^2 + \mu_jy_j+ r_j(y_j, x_1, \ldots, x_n).$$
Theorem~\ref{thm:degree-decomp} guarantees that the
random variables $s'_j(x_1, \ldots, x_n)$ and $g_j(y_j, x_1, \ldots, x_n)$,
with $(y_j, x_1, \ldots, x_n) \sim \mathcal{N}(0,1)^{n+1}$ have identical
distributions. (In particular, this implies that $\Var(g_j) = \Var(s'_j).$)
The algorithm proceeds to define
\[ s_{j+1}(x_1, \ldots, x_n) = r_j(y_j, x_1, \ldots, x_n) - Res(r_j, y_j)\] and
\[ h_j (y_1, \ldots, y_j) = \littlesum_{i=1}^j (\lambda_i y_i^2+\mu_i y_i) .\]
Note that the {two summands $\lambda_j y_j^2 + \mu_j y_j$ and
$r_j(y_j,x_1,\dots,x_n)$}
in $g_j$ are correlated (because of the variable $y_j$).
Similarly to the first iteration,
if we remove $Res(r_j, y_j)$, there is a very small change in the Kolmogorov distance :

\begin{claim} \label{claim:kolmogorov-jth-iteration}
We have that $\dk (s'_j,  \lambda_j y_j^2 + \mu_jy_j+s_{j+1}) \le {\eps/K}.$
\end{claim}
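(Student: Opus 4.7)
The plan is to mirror the argument used for the $j=1$ case (Claim~\ref{claim:kolmogorov-first-iteration}) essentially verbatim, using the analogous facts established in the generic iteration. First I would use the distributional identity guaranteed by Theorem~\ref{thm:degree-decomp}(a) applied to $s'_j$: the random variables $s'_j(x_1,\dots,x_n)$ and $g_j(y_j,x_1,\dots,x_n)=\lambda_j y_j^2+\mu_j y_j + r_j(y_j,x_1,\dots,x_n)$ have identical distributions when $(y_j,x_1,\dots,x_n) \sim \mathcal{N}(0,1)^{n+1}$. Therefore
$$ \dk\bigl(s'_j,\ \lambda_j y_j^2 + \mu_j y_j + s_{j+1}\bigr) \;=\; \dk\bigl(g_j,\ \lambda_j y_j^2 + \mu_j y_j + s_{j+1}\bigr). $$

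Second, using the definition $s_{j+1}(x_1,\dots,x_n) = r_j(y_j,x_1,\dots,x_n) - \mathop{Res}(r_j,y_j)$ from the algorithm description, and the fact that $g_j = \lambda_j y_j^2+\mu_j y_j + r_j$, the right-hand side rewrites as $\dk(g_j,\ g_j - \mathop{Res}(r_j,y_j))$. Note that $\mathop{Res}(r_j,y_j)$ is, by Definition~\ref{def:residue}, a sum of multilinear monomials of the form $y_j \cdot x_i$ in independent standard Gaussians, so $\E[\mathop{Res}(r_j,y_j)] = 0$.

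Third, I would invoke the variance bound from Theorem~\ref{thm:degree-decomp}(a): $\Var(\mathop{Res}(r_j,y_j)) \le 4\eta^2 \Var(s'_j) = 4\eta^2 \Var(g_j)$. Since $\eta = \tilde{\Theta}(\eps^4/K^4)$ as specified in Step~3(b) of {\tt Construct-Junta-PTF}, the ratio $4\eta^2$ is much smaller than the threshold $\eps' = O((\eps/K)^4/\log^2(K/\eps))$ required by Lemma~\ref{lem:low-variance-kolmogorov} (applied with error parameter $\eps/K$ in place of $\eps$).

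Finally, I would apply Lemma~\ref{lem:low-variance-kolmogorov} with ``$p$'' set to $g_j$ and ``$q$'' set to $\mathop{Res}(r_j,y_j)$ to conclude that $\dk(g_j,\, g_j - \mathop{Res}(r_j,y_j)) \le \eps/K$, which is the desired bound. The main (and only nontrivial) obstacle is verifying that the parameters chosen in {\tt Construct-Junta-PTF} are small enough to satisfy the variance-ratio hypothesis of Lemma~\ref{lem:low-variance-kolmogorov} at scale $\eps/K$; this is immediate from the choice $\eta = \tilde{\Theta}(\eps^4/K^4)$, which was made precisely so that $4\eta^2$ comfortably fits inside the required $\tilde{O}((\eps/K)^4)$ window.
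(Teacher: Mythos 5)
Your proposal is correct and follows essentially the same route as the paper's proof: pass from $s'_j$ to the identically distributed $g_j$, rewrite the second distribution as $g_j - \mathop{Res}(r_j,y_j)$, note $\E[\mathop{Res}(r_j,y_j)]=0$ and the variance bound $\Var(\mathop{Res}(r_j,y_j)) \le 4\eta^2\Var(g_j)$ from Theorem~\ref{thm:degree-decomp}, and apply Lemma~\ref{lem:low-variance-kolmogorov} at scale $\eps/K$. Your explicit check that $\eta = \tilde{\Theta}(\eps^4/K^4)$ makes $4\eta^2$ fit the lemma's variance-ratio hypothesis is a harmless (and welcome) elaboration of what the paper leaves implicit.
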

\begin{proof}
Note that $$\dk (s'_j,  \lambda_j y_j^2 + \mu_jy_j+s_{j+1}) =  \dk (g_j,  \lambda_j y_j^2 + \mu_jy_j+s_{j+1}) = \dk (g_j, g_j - Res(r_j, y_j)).$$
By Theorem~\ref{thm:degree-decomp} it follows that $$\Var(Res(r_j, y_j)) \le \eta^2 \Var(s'_j) =  \eta^2 \Var(g_j).$$
{Since $\E[Res(r_j,y_j)=0],$}
by Lemma~\ref{lem:low-variance-kolmogorov} we get that
$\dk (g_j, s_{j+1}+h_j) \le {\eps/K}$ as desired.
\end{proof}



As a corollary, we establish Part (b) of the induction for $i=j$.
\begin{corollary} \label{cor:jth-iteration-kol2}
We have that $\dk (h_{j-1}+s_j, h_j+s_{j+1}) \le 2\eps/K$.
\end{corollary}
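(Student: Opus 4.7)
My plan is to derive this corollary in two short steps from results that were just established in the $j$-th iteration analysis. The approach mirrors the first-iteration argument of Claim~\ref{claim:kolmogorov-first-iteration} and proceeds by a triangle inequality followed by a convolution argument that exploits independence of the fresh Gaussians introduced by \textsf{APPROXIMATE-DECOMPOSE}.

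First, I would combine Claim~\ref{claim:kolmogorov-jth-iteration}, which gives $\dk(s'_j,\; \lambda_j y_j^2 + \mu_j y_j + s_{j+1}) \le \eps/K$, with Part (a) of the inductive hypothesis, which gives $\dk(s_j, s'_j) \le \eps/K$, via the triangle inequality, to obtain
\[ \dk\bigl(s_j,\; \lambda_j y_j^2 + \mu_j y_j + s_{j+1}\bigr) \le 2\eps/K. \]
Second, I would invoke a standard convolution argument to add $h_{j-1}(y_1,\ldots,y_{j-1})$ to both sides. The key observation is that $h_{j-1}$ depends only on the Gaussians $y_1,\ldots,y_{j-1}$, while $s_j(x_1,\ldots,x_n)$ depends only on $x_1,\ldots,x_n$, and $\lambda_j y_j^2 + \mu_j y_j + s_{j+1}(x_1,\ldots,x_n)$ depends only on $y_j$ and $x_1,\ldots,x_n$. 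Since $y_1,\ldots,y_j,x_1,\ldots,x_n$ are mutually independent standard Gaussians by the algorithm's construction (each $y_i$ is a fresh Gaussian produced by the $i$-th call to \textsf{APPROXIMATE-DECOMPOSE}, and $s_{j+1}$ no longer depends on $y_j$ after the residue subtraction), $h_{j-1}$ is independent of each of the two random variables on the left- and right-hand sides of the above Kolmogorov bound. The one-line fact that adding an independent summand to both arguments preserves Kolmogorov distance (argue by conditioning on the common summand) then yields
\[ \dk\bigl(h_{j-1} + s_j,\; h_{j-1} + \lambda_j y_j^2 + \mu_j y_j + s_{j+1}\bigr) \le 2\eps/K. \]

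Finally, since $h_j = h_{j-1} + \lambda_j y_j^2 + \mu_j y_j$ by definition, the right-hand random variable equals $h_j + s_{j+1}$, which is exactly the stated bound. I do not anticipate any genuine obstacle here: both inputs to the triangle inequality have just been proved, and the only subtlety worth double-checking is the independence structure, which is immediate from the fact that $y_j$ is introduced as a fresh Gaussian by the $j$-th call to \textsf{APPROXIMATE-DECOMPOSE} and that $s_{j+1}$ is free of $y_j$ by construction.
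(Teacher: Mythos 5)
Your proposal is correct and follows essentially the same route as the paper's proof: the triangle inequality applied to Claim~\ref{claim:kolmogorov-jth-iteration} together with $\dk(s_j,s'_j)\le \eps/K$, followed by a convolution argument using the independence of $h_{j-1}$ from the other summands. Your additional remarks spelling out the independence structure are accurate but add nothing beyond what the paper's argument already uses.
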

\begin{proof}
By Claim~\ref{claim:kolmogorov-jth-iteration} and the fact that $\dk(s_j, s'_j) \le \eps/K$, we get that
$\dk (s_j,  \lambda_j y_j^2 + \mu_jy_j+s_{j+1}) \le 2\eps/K.$
By a convolution argument (exploiting independence), it follows that
$\dk (h_{j-1}+s_j,  h_{j-1}+ \lambda_j y_j^2 + \mu_jy_j+s_{j+1}) \le 2\eps/K$
which completes the proof.
\end{proof}

As a consequence, we also obtain the following, establishing Part (c) of the induction:
\begin{fact} \label{fact:big-jth-iteration}
$\Var(s_{j+1})+\Var(h_j) \ge {(1-\eps/K)^j}.$
\end{fact}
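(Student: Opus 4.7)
The proof plan mirrors the two previous ``big'' facts (\ref{fact:big-first-iteration} and \ref{fact:big-second-iteration}) and proceeds in three moves. First, I would exploit the identity
\[
g_j + h_{j-1} \;=\; h_j + r_j \;=\; h_j + s_{j+1} + \mathop{Res}(r_j, y_j),
\]
which holds by the definitions of $g_j$, $h_j$ and $s_{j+1}$. Since $\Var(\mathop{Res}(r_j,y_j)) \le 4\eta^2 \Var(s'_j) = 4\eta^2 \Var(g_j)$ by Theorem~\ref{thm:degree-decomp}, and since $\Var(g_j) \le \Var(g_j + h_{j-1})$ (using independence of $g_j$ from $h_{j-1}$, which holds because $g_j$ is a function of $y_j$ and the $x_i$'s while $h_{j-1}$ depends only on $y_1,\dots,y_{j-1}$), Fact~\ref{fac:variance-difference} will yield
\[
\Var(h_j + s_{j+1}) \;\ge\; (1-2\eta)\,\Var(g_j + h_{j-1}).
\]

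Second, I would bound $\Var(g_j + h_{j-1})$ from below in terms of $\Var(s_j + h_{j-1})$. Using independence of $g_j$ and $h_{j-1}$ and $\Var(g_j) = \Var(s'_j)$, I have
\[
\Var(g_j + h_{j-1}) \;=\; \Var(s'_j) + \Var(h_{j-1}) \;\ge\; (1-\eps^2/K^2)\Var(s_j) + \Var(h_{j-1}),
\]
where the inequality uses that rounding in Step~3(a) changes the variance by at most a $(1-\eps^2/K^2)$ factor (following the same computation as in the proof of Proposition~\ref{prop:round} applied to the error polynomial, combined with Fact~\ref{fac:variance-difference}). Since $s_j$ and $h_{j-1}$ are independent, $\Var(s_j) + \Var(h_{j-1}) = \Var(s_j + h_{j-1})$, and the inductive hypothesis (c) gives $\Var(s_j) + \Var(h_{j-1}) \ge (1-\eps/K)^{j-1}$.

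Finally, I would combine the two estimates to obtain
\[
\Var(s_{j+1}) + \Var(h_j) \;=\; \Var(h_j + s_{j+1}) \;\ge\; (1-2\eta)(1-\eps^2/K^2)(1-\eps/K)^{j-1},
\]
where the first equality uses that $s_{j+1}$ (a function of $x_1,\dots,x_n$ alone, with $y_j$ removed via $\mathop{Res}$) is independent of $h_j$ (a function of $y_1,\dots,y_j$). Since $\eta = \tilde\Theta(\eps^4/K^4)$ is chosen far smaller than $\eps/K$, the product $(1-2\eta)(1-\eps^2/K^2) \ge (1-\eps/K)$, yielding the desired bound $(1-\eps/K)^j$.

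The only step that requires a bit of care is the ``rounding loses at most a $(1-\eps^2/K^2)$ factor in variance'' claim, since the parameter $\gamma$ in Step~3(a) is set so that the error polynomial $e = s_j - s'_j$ satisfies $\mathop{SS}(e) \le (\gamma/K)^2 = \tilde\Theta((\eps/K)^4)\alpha$; combined with the assumption $\Var(s_j) \ge \alpha$ at this stage and Claim~\ref{clm:variance-bound-SS}, this gives $\Var(e) \le \tilde\Theta((\eps/K)^4)\Var(s_j)$, and then Fact~\ref{fac:variance-difference} converts this into the desired multiplicative comparison between $\Var(s'_j)$ and $\Var(s_j)$. All other pieces are direct applications of independence and the already-established quantitative guarantees of \textsf{APPROXIMATE-DECOMPOSE}.
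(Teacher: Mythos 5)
Your proof is correct and follows essentially the same route as the paper: the identity $g_j+h_{j-1}=h_j+s_{j+1}+\mathop{Res}(r_j,y_j)$, the $(1-2\eta)$ loss via Fact~\ref{fac:variance-difference} and the bound on $\Var(\mathop{Res}(r_j,y_j))$ from Theorem~\ref{thm:degree-decomp}, the $(1-\eps^2/K^2)$ rounding loss relating $\Var(s'_j)$ to $\Var(s_j)$, and then Part~(c) of the inductive hypothesis. The only difference is that you spell out the rounding-variance estimate and the final comparison $(1-2\eta)(1-\eps^2/K^2)\ge(1-\eps/K)$ in slightly more detail than the paper does, which is fine.
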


\begin{proof}
By definition we can write
$$ g_j+h_{j-1}=   h_j + r_j =  h_j + s_{j+1}+ Res(r_j, y_j).$$
We first claim that
$$\Var(h_j+s_{j+1}) \ge (1-2\eta) \Var(h_{j{-1}}+g_j).$$
Indeed, by Theorem~{\ref{thm:degree-decomp}},
we have that $\Var(Res(r_j, y_j)) \le
4\eta^2\Var(s'_j)$ and $$\Var(g_j+h_{j-1}) =  \Var(g_j)+\Var(h_{j-1})
\ge \Var(g_j) =\Var(s'_j)$$
where the first equality used independence.
The claim now follows by Fact~\ref{fac:variance-difference}.
Our second claim is that
\[
\Var(g_j+h_{j-1}) \ge (1-\eps^2/K^2) \Var(s_j+h_{j-1}).\]
Indeed, we can write
$$\Var(g_j+h_{j-1}) = \Var(s'_j)+\Var(h_{j-1}) \ge  (1-\eps^2/K^2) \Var(s_j)
+\Var(h_{j-1}) \ge (1-\eps^2/K^2) \Var(s_j+h_{j-1}).
$$
The desired fact follows by combining the above two claims with Part (c) of the inductive hypothesis.
\end{proof}

{Finally, we} show that the variance of the polynomial $s_{j+1}$ will
decrease by a multiplicative factor, {giving (d) and}
completing the induction.

\begin{claim} \label{claim:variance-jth-iteration}
We have $\Var(s_{j+1}) \le (1-\eps^4/40)^j.$
\end{claim}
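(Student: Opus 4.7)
The plan is to mimic exactly the argument used for the base case (Claim~\ref{claim:variance-first-iteration}), since the same chain of reasoning goes through for a general iteration $j$. The key observation is that $s_{j+1}$ is obtained from $r_j$ by deleting a subset of its monomials (specifically, the cross terms in $\mathop{Res}(r_j, y_j)$), so independence of the remaining monomials from the discarded ones (combined with the fact that variance is a sum of squares over uncorrelated Gaussian monomials) yields the monotonicity
\[
\Var(s_{j+1}) \;\le\; \Var(r_j).
\]

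First, I would invoke part~(a) of Theorem~\ref{thm:degree-decomp}, which was applied in Step~3(b) of the iteration to the polynomial $s'_j$: this gives directly
\[
\Var(r_j) \;\le\; \bigl(1 - \eps^4/40\bigr)\, \Var(s'_j).
\]
Next, I would relate $\Var(s'_j)$ back to $\Var(s_j)$ using the rounding step 3(a) exactly as in the base case: since each coefficient of $s_j$ is moved by at most $\gamma/(Kn)$, the error polynomial $e = s_j - s'_j$ has $\mathop{SS}(e) = O(\gamma^2)$ and hence, by Claim~\ref{clm:variance-bound-SS} and the choice of $\gamma$, contributes negligibly to the variance; combining with Fact~\ref{fac:variance-difference} gives $\Var(s'_j) \le \Var(s_j)$ (up to a harmless lower-order term that is absorbed into the slack already present in $(1-\eps^4/40)$, as was done for the base case).

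Then the inductive hypothesis (d) at stage $i = j-1$ supplies $\Var(s_j) \le (1-\eps^4/40)^{j-1}$, and chaining the inequalities yields
\[
\Var(s_{j+1}) \;\le\; \Var(r_j) \;\le\; \bigl(1 - \eps^4/40\bigr)\Var(s'_j) \;\le\; \bigl(1 - \eps^4/40\bigr)\Var(s_j) \;\le\; \bigl(1 - \eps^4/40\bigr)^{j},
\]
which is exactly the desired bound. This completes the induction for part~(d) of the inductive hypothesis, and together with the earlier corollaries establishing parts~(a)--(c), finishes the analysis of the generic iteration.

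The only mildly delicate step is the second one: the rounding in Step~3(a) does not literally preserve the inequality $\Var(s'_j) \le \Var(s_j)$ as a hard bound, since rounding down individual coefficients can in principle slightly increase variance (the signs of coefficients interact). I would handle this exactly as the paper implicitly does for $j=1$: by arguing that the rounding error has variance $O(\eps^4/\log^2(1/\eps))$-small relative to $\Var(s_j)$, so any slack is absorbed into the multiplicative factor $1-\eps^4/40$ (whose $40$ leaves room compared to the $20$ of Theorem~\ref{thm:degree-decomp}). No new idea beyond the base case is required.
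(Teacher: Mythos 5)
Your chain of inequalities --- $\Var(s_{j+1})\le\Var(r_j)\le(1-\eps^4/40)\Var(s'_j)\le(1-\eps^4/40)\Var(s_j)\le(1-\eps^4/40)^j$, using Theorem~\ref{thm:degree-decomp}, the rounding step, Part~(d) of the inductive hypothesis, and the observation that $s_{j+1}$ is $r_j$ with a subset of its terms removed --- is exactly the paper's proof of this claim. The only divergence is your closing caveat: the paper simply uses $\Var(s'_j)\le\Var(s_j)$ outright (rounding the coefficient magnitudes down can only decrease the variance, which is monotone in the squared coefficients as in Claim~\ref{clm:variance-bound-SS}), so no slack-absorption argument is needed, and in any case the factor $(1-\eps^4/40)$ is what Theorem~\ref{thm:degree-decomp} itself states (the $1-\eps^4/20$ appears only inside that theorem's proof).
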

\begin{proof}
By Theorem~\ref{thm:degree-decomp} we know that
$$\Var(r_j) \le (1-\eps^4/40) \Var(s'_j) \le  (1-\eps^4/40)\Var(s_j)
\le (1-\eps^4/40)^j$$
where we used the fact that $\Var(s'_2) \le \Var(s_2)$
and Part (d) of the induction hypothesis.
Now note that $s_{j+1}$ is obtained from $r_j$ by removing a subset of its terms.
Therefore, $\Var(s_{j+1}) \le \Var(r_j)$ and the claim follows.
\end{proof}

This completes the proof of correctness.
We claim that the algorithm runs in $\poly({n,b,1/\eps})$ time.
This follows from the fact that the number of iterations
of the loop is at most $K+1 = \poly(1/\eps)$ and each iteration runs in
$\poly({n,b,1/\eps})$ time. Indeed, it is easy to verify that the
running time of a given iteration is dominated by the call
to the \textsf{APPROXIMATE-DECOMPOSE} routine.
Since the input to this routine is the polynomial $s'_i$ whose coefficients
(up to rescaling) are integers whose magnitude is $\poly(n/\eps)$,
it follows that the routine runs in polynomial time.

\subsubsection{Proof of Lemma~\ref{lem:stein}.} \label{ssec:stein}
We note that even the Kolmogorov distance version of the lemma (which is sufficient for our purposes)
does not follow immediately from the Berry-Ess{\'e}en theorem. 
One can potentially deduce our desired statement from Berry-Ess{\'e}en  by using an appropriate
case analysis on the structure of the coefficients. However, we show that it can be deduced in a more principled way from a CLT version obtained
using Stein's method. In particular, we will need the following theorem of Chatterjee:

\begin{theorem} \label{thm:chat}
\cite{Chatterjee:09}
Let $X \sim \mathcal{N}(0,1)^n$ and $f : \mathbb{R}^n \rightarrow \mathbb{R}$. Let $W = f(X_1, \ldots, X_n)$. Suppose that $\E[W]=\mu$ and $\Var[W]=\sigma^2$.
{Let $Y \sim \mathcal{N}(0,1)^n$ be independent of $X$.}
Define the random variable $T(X)$ as
$$
T(X) = \int_{t=0}^1 \frac{1}{2\sqrt{t}} \cdot \E_{Y}  \left[ \littlesum_{i=1}^n \frac{\partial f(X)}{\partial X_i}  \cdot \frac{\partial f(\sqrt{t} X + \sqrt{1-t} Y)}{\partial X_i} \right] \ dt.$$
Then we have that $$\dtv \left( f(X), \mathcal{N}(\mu,\sigma^2) \right) \le \frac{\sqrt{\Var[ T]}}{\sigma^2}.$$
\end{theorem}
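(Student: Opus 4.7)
The plan is to prove this via Stein's method, where the key technical step is a Gaussian integration-by-parts identity along the interpolation $X_t = \sqrt{t}\,X + \sqrt{1-t}\,Y$, and this identity is precisely what produces the quantity $T(X)$ appearing in the statement.

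\emph{Step~1 (Stein reduction).} For each bounded measurable $h$ with $\|h\|_\infty \le 1$, I would solve the Stein equation $\sigma^2 g_h'(x) - (x-\mu)\, g_h(x) = h(x) - \E[h(\mathcal{N}(\mu,\sigma^2))]$; standard bounds on Gaussian Stein solutions give $\|g_h'\|_\infty = O(1/\sigma^2)$. Evaluating at $W = f(X)$ and taking expectations,
\[
\E[h(f(X))] - \E[h(\mathcal{N}(\mu,\sigma^2))] = \E\!\left[\sigma^2 g_h'(W) - (W-\mu)\, g_h(W)\right].
\]
Since $\dtv(f(X), \mathcal{N}(\mu,\sigma^2))$ equals the supremum of the left-hand side over indicators (up to a factor of $2$), it suffices to bound the right-hand side uniformly in such $h$.

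\emph{Step~2 (the key identity).} I would then prove that, for any smooth $g$,
\[
\E[(f(X)-\mu)\, g(f(X))] = \E\!\left[g'(f(X))\cdot T(X)\right].
\]
Since $Y \stackrel{d}{=} X$ we have $\mu = \E f(Y)$, so writing $f(X)-\mu = f(X_1)-f(X_0) = \int_0^1 \tfrac{d}{dt} f(X_t)\,dt$ with $\dot X_{t,i} = X_i/(2\sqrt t) - Y_i/(2\sqrt{1-t})$ yields
\[
\E[(f(X)-\mu)\, g(f(X))] = \int_0^1 \sum_i \E\!\left[\partial_i f(X_t)\!\left(\tfrac{X_i}{2\sqrt t}-\tfrac{Y_i}{2\sqrt{1-t}}\right) g(f(X))\right]\,dt.
\]
Now apply Gaussian integration by parts in $X_i$ and in $Y_i$ separately. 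Crucially, $g(f(X))$ does not depend on $Y$, so the $Y_i$-IBP produces only a $\sqrt{1-t}\,\partial_{ii}f(X_t)\,g(f(X))$ term; the $X_i$-IBP produces a matching $\sqrt{t}\,\partial_{ii} f(X_t)\, g(f(X))$ term, \emph{plus} the cross-term $\sqrt{t}\,\partial_i f(X_t)\,\partial_i f(X)\, g'(f(X))$ coming from $\partial_{X_i} g(f(X)) = g'(f(X))\,\partial_i f(X)$. After dividing by the respective factors $2\sqrt t$ and $2\sqrt{1-t}$, the two $\partial_{ii}f$ contributions cancel exactly, leaving precisely the integrand that defines $T(X)$ times $g'(f(X))$.

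\emph{Step~3 (conclusion).} Specializing the identity to $g(x)=x$ (so $g' \equiv 1$) gives $\E[T(X)] = \E[(f(X)-\mu)^2] = \sigma^2$. Substituting back into Step~1,
\[
\E[h(f(X))] - \E[h(\mathcal{N}(\mu,\sigma^2))] = \E\!\left[g_h'(f(X))\,(\E[T]-T(X))\right] \le \|g_h'\|_\infty \cdot \sqrt{\Var T} = O\!\left(\frac{\sqrt{\Var T}}{\sigma^2}\right)
\]
by Cauchy--Schwarz, yielding the theorem up to an absorbed universal constant. The main obstacle is the rigorous justification of Step~2: swapping the $t$-integral with the expectation and applying the double IBP requires mild regularity and integrability of $f$ and its partial derivatives, which is typically handled via a smooth-truncate approximation (and is automatic for the degree-$2$ polynomial $f$ in the paper's application). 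The real content of the argument is the exact cancellation of the two second-derivative terms produced by the $X_i$- and $Y_i$-IBPs, which is what makes the clean symmetric quantity $\sum_i \partial_i f(X)\,\partial_i f(X_t)$ survive and thereby defines $T$.
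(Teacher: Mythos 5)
This theorem is imported by the paper directly from \cite{Chatterjee:09} without proof, so there is no in-paper argument to compare against; what you have written is, in essence, Chatterjee's own proof, and it is correct. The heart of the matter is exactly your Step~2: writing $f(X)-\mu = \E_Y[f(X)-f(Y)]$, differentiating along the Ornstein--Uhlenbeck interpolation $X_t=\sqrt{t}X+\sqrt{1-t}Y$, and performing Gaussian integration by parts in $X_i$ and $Y_i$ separately so that the two $\partial_{ii}f$ terms cancel after division by $2\sqrt{t}$ and $2\sqrt{1-t}$, leaving $\E[(f(X)-\mu)g(f(X))]=\E[g'(f(X))\,T(X)]$; specializing to $g(x)=x$ gives $\E[T]=\sigma^2$, and Cauchy--Schwarz against the Stein solution finishes the bound. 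Two small points: (i) to get the clean constant $1$ in $\dtv\le\sqrt{\Var[T]}/\sigma^2$ rather than your $O(\cdot)$, use the convention $\dtv=\tfrac12\sup_{\|h\|_\infty\le1}|\E h(W)-\E h(Z)|$ together with the scaled Stein bound $\|g_h'\|_\infty\le 2/\sigma^2$; (ii) the identity must be applied with $g=g_h$ for indicator $h$, so one should note that the Stein solution is Lipschitz (hence absolutely continuous with bounded a.e.\ derivative) and extend the smooth-$g$ identity by approximation --- this is the same regularity caveat you already flag, and it is harmless for the degree-$2$ polynomial $f$ used in the paper.
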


\noindent {Note that by Claim~\ref{clm:poly-equivalence}
we can assume that $p$ is of the form $p(x) = \littlesum_i (\lambda_i x_i^2 + \mu_i x_i)$.}
We want to apply this theorem to deduce that the random variable $p(X)$ with $X \sim \mathcal{N}(0,1)^n$ is $O(\eps)$ close to a Gaussian with the right mean
and variance. Note that $\E[p(X)] = \littlesum_{i=1}^n \lambda_i$ and $\Var[p(X)] = \littlesum_{i=1}^n (2\lambda_i^2+\mu_i^2)$. We will apply the above theorem
for the function $f(x) = p(x) = \littlesum_{i=1}^n (\lambda_i x_i^2 + \mu_i x_i).$
We have that $ \frac{\partial p(x)}{\partial x_i} = 2\lambda_ix_i + \mu_i.$
For $t \in [0,1]$ we can write $$p(\sqrt{t}x+\sqrt{1-t}y) =  \littlesum_{i=1}^n \lambda_i (\sqrt{t} x_i + \sqrt{1-t}y_i)^2 + \littlesum_{i=1}^n \mu_i (\sqrt{t} x_i + \sqrt{1-t}y_i)$$
and therefore
\[  \frac{ \partial p(\sqrt{t}x+\sqrt{1-t}y)}{\partial x_i} = \lambda_i (2tx_i + 2\sqrt{t(1-t)}y_i) + \mu_i \sqrt{t}.\]
Therefore,
\[
\E_{Y}  \left[ \littlesum_{i=1}^n \frac{\partial p(X)}{\partial X_i}  \cdot \frac{\partial p(\sqrt{t} X + \sqrt{1-t} Y)}{\partial X_i} \right] =
\littlesum_{i=1}^n (2\lambda_iX_i+\mu_i)(2\lambda_itX_i+\mu_i \sqrt{t})
\] and the desired integral equals
\begin{eqnarray*}
T = \littlesum_{i=1}^n (2\lambda_iX_i+\mu_i)(\lambda_i X_i \int_{0}^1 \sqrt{t} dt+\mu_i /2) &=&
 \littlesum_{i=1}^n (2\lambda_iX_i+\mu_i)(2\lambda_i X_i/3 +\mu_i /2)\\
  &=&  \littlesum_{i=1}^n (4\lambda_i^2 X_i^2/3 + 5/3\lambda_i \mu_iX_i + \mu_i^2/2)
\end{eqnarray*} from which it follows that
\begin{eqnarray*}
\Var[T] = \littlesum_{i=1}^n \left( 2 (4\lambda_i^2/3)^2 + (5/3\lambda_i \mu_i)^2 \right) &\le&  \littlesum_{i=1}^n \left( 2 (5\lambda_i^2/3)^2 + (5/3\lambda_i \mu_i)^2 \right) \\
 &=& (25/9) \littlesum_{i=1}^n \left( 2 \lambda_i^4 +  \lambda_i^2 \mu_i^2 \right)  \\
 &\le& (25/9) \max_i \lambda_i^2 \cdot \littlesum_{i=1}^n \left( 2 \lambda_i^2 + \mu_i^2 \right) \\
&\le& (25/9) (\eps^2 \Var[p]) \cdot \Var[p] \\
&=& (25/9) (\eps \Var[p])^2.
\end{eqnarray*}
An application of the Theorem
now yields that $\dtv\left( p(X), \mathcal{N}(\mu, \sigma^2) \right) \le 5\eps/3$.
Since $\dk(X, Y) \le \dtv(X,Y)$
for any pair of random variables $X, Y$ the lemma follows. \qed

\fi

\subsection{The second stage:
deterministic approximate counting for degree-2 junta PTFs
over Gaussians.} \label{sec:second-stage}

\ifnum\confversion=0
In this section we use the $K=\tilde{O}(1/\eps^4)$-variable polynomial $q(y)$ provided by Theorem~\ref{thm:construct-junta-PTF} to do efficient deterministic approximate counting.

One possible approach is to break the polynomial $q(y)$ into two components $q_+$ (corresponding
to those variables $y_i$ that have $\lambda_i > 0$) and $q_-$ (containing those $y_i$
that have $\lambda_i < 0$).  Both $q_+(y)$ and $-q_-(y)$ follow non-centered generalized chi-squared distributions, so it is conceivable that by directly analyzing the pdfs of such
distributions, one could (approximately) specify the region of $\R^K$
over which $q_+(y) - q_-(y) \geq 0$,
and then perform approximate numerical integration over that region to directly
estimate $\Pr_{y \sim \mathcal{N}(0,1)^K}[q(y) \geq 0]$.  While expressions have been given
for the pdf of a generalized chi-squared distribution without the linear part, we need expressions for the pdf when there is an additional linear part.  Even in the case where there is no
linear part,
the expressions for the pdf are somewhat forbidding (see equations (6) and (7) of
\cite{BHO09}), so an approach along these lines is somewhat unappealing.

Instead of pursuing this technically involved direction, we
adopt a technically and conceptually straightforward approach
based on simple dynamic programming.  The algorithm
{\tt Count-Junta}
that we propose and analyze is given below.
{Intuitively, the rounding that is performed
in the first step transforms the polynomial $q$ to one with
``small integer weights.'' This, together with the
discretization in Step~2 (which lets us approximate each independent
Gaussian input with a small discrete set of values), makes it possible
to perform dynamic programming to exactly count the
number of satisfying assignments of the corresponding PTF.}
\fi

\ifnum\confversion=1
In this section we use the $K=\tilde{O}(1/\eps^4)$-variable
polynomial $q$ provided by Theorem~\ref{thm:construct-junta-PTF}
to do efficient deterministic approximate counting.  The algorithm
{\tt Count-Junta}
that we propose and analyze is given below.
{Intuitively, the rounding that is performed
in the first step transforms the polynomial $q$ to one with
``small integer weights.'' This, together with the
discretization in Step~2 (which lets us approximate each independent
Gaussian input with a small discrete set of values), makes it possible
to perform dynamic programming to exactly count the
number of satisfying assignments of the corresponding PTF.}
\fi

\bigskip

\hskip-.2in \framebox{
\medskip \noindent \begin{minipage}{16.5cm}

\smallskip

{\tt Count-Junta}

\smallskip

\noindent {\bf Input:}  Explicit description of a
$K=\tilde{O}(1/\eps^{{4}})$-variable degree-$2$
polynomial $q(y) = \littlesum_{i=1}^K (\lambda_i y_i^2 + \mu_i y_i) + \tau$,
where each $\lambda_i,\mu_i, \tau \in \Z$;
parameter $\eps> 0.$

\noindent {\bf Output:} A value $v \in [0,1]$ such that
\ifnum\confversion=0
\[
\left|\Pr_{y \in \mathcal{N}(0,1)^K}[q(y) \geq 0] - v \right| \leq \eps.
 \]
\else
$
|\Pr_{y \in \mathcal{N}(0,1)^K}[q(y) \geq 0] - v | \leq \eps.
$
\fi

\medskip

\begin{enumerate}

\item {\bf Rounding.}  Set
$\eps' = \tilde{\Theta}(\eps^{{6}})$
to be a value of the form $1/2^{\tiny{\text{integer}}}$.
Let $M = \max\{|\lambda_1|,\dots,|\lambda_K|,
|\mu_1|,\dots,|\mu_K|\}.$  Let
$q'(y) = \littlesum_{i=1}^K (\lambda'_i y_i^2 + \mu'_i y_i) + {\tau'}$
be obtained from $q(y)$ by dividing all coefficients $\lambda_i,\mu_i,\tau$
by $2^{\lceil \log_2 M \rceil} \cdot (\eps'/2)$
and rounding the result to the nearest integer (so each of $\lambda_i{'},
\mu_i{'}$ is an integer with absolute value at most $2/\eps'$),

\ignore{Set $N=2/\eps'.$}

\ignore{
{If $|\tau'| > \tilde{O}(1/\eps^2)$ then output (1 if $\tau>0$, 0 if
$\tau < 0$); otherwise proceed to the next step.}
}

\ignore{
 rounding each coefficient of $q/\|q\|$ to
the nearest integer multiple of $\eps'/2$,
where $\eps' = \tilde{\Theta}(\eps)$
is of the form $1/2^{\tiny{\text{integer}}}$.
Write $\lambda'_i$ as $\ell_i/N$ and $\mu'_i$ as $m_i/N$ and $\tau'$ as $t/N$
where $N  = 2/\eps'$ is a positive integer.
}

\item {\bf Discretizing each coordinate.}
Set $\eps^\star = \Theta(\eps/K)$ to be of the
form $1/2^{\text{\tiny{integer}}}.$
For $i=1,\dots,K$:  run {\tt Discretize}$(\lambda'_i,\mu'_i,\ignore{N,}
\eps^\star)$ and
let $S_i = \{s_{i,1},\dots,s_{i,R}\}$
be the multiset that it returns.

\item {\bf Counting via dynamic programming.}
Run {\tt DP}$(S_1,\dots,S_K, \tau'\ignore{,N})$ and output the value it returns.

\end{enumerate}

\end{minipage}}

\bigskip

The performance guarantee of {\tt Count-Junta} is given in the following
theorem:

\begin{theorem} \label{thm:count-junta}
Algorithm {\tt Count-Junta} is given as input an explicit description of a
polynomial
$q(y) = \littlesum_{i=1}^K (\lambda_i y_i^2 + \mu_i y_i) + \tau$ and
$\eps>0$
where $\lambda_i,\mu_i \in \Z$, $K = \tilde{O}(1/\eps^{4})$,
{and each coefficient $\lambda_i, \mu_i, \tau$ is a $B$-bit
integer.}
It runs (deterministically) in $O({K B})\cdot \polylog(1/\eps) +
{\poly(1/\eps)}$ bit operations
and outputs a value $v \in
[0,1]$ such that
\ifnum\confversion=0
\begin{equation} \label{eq:close}
\left|\Pr_{y \sim \mathcal{N}(0,1)^K}[q(y) \geq 0] - v \right| \leq \eps.
\end{equation}
\else
$\left|\Pr_{y \sim \mathcal{N}(0,1)^K}[q(y) \geq 0] - v \right| \leq \eps.$
\fi
\end{theorem}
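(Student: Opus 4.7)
The plan is to bound the two sources of approximation error (coefficient rounding and Gaussian discretization) each by $O(\eps)$ in Kolmogorov distance, and then to compute the probability under the discretized distribution exactly via dynamic programming. For the rounding step, I use that $\Pr[q(y) \geq 0]$ is invariant under positive scaling, so it equals $\Pr[\tilde q(y) \geq 0]$ for the rescaled polynomial $\tilde q = q/(2^{\lceil \log_2 M \rceil} \cdot \eps'/2)$. By construction, $q'$ is obtained from $\tilde q$ by rounding each coefficient to the nearest integer, so the error $e := \tilde q - q'$ has every coefficient in $[-1/2,1/2]$, giving $\mathop{SS}(e) \leq K$ and hence, by Claim~\ref{clm:variance-bound-SS}, $\Var(e) \leq 2K$. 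On the other hand, after rescaling, $\tilde q$ has at least one coefficient of magnitude at least $1/\eps'$, so $\Var(\tilde q) \geq \mathop{SS}(\tilde q) \geq 1/\eps'^2$. With $\eps' = \tilde\Theta(\eps^6)$ and $K = \tilde O(1/\eps^4)$ we get $\Var(e)/\Var(\tilde q) \leq 2K\eps'^2 = O(\eps^4/\log^2(1/\eps))$, so Lemma~\ref{lem:low-variance-kolmogorov} yields $\dk(\tilde q,\, q' + \E[e]) \leq O(\eps)$; since $|\E[e]| = O(K)$ while $\sqrt{\Var(q')} = \Omega(1/\eps')$, the shift satisfies $|\E[e]|/\sqrt{\Var(q')} = \tilde O(\eps^2)$ and one application of Carbery--Wright absorbs it to conclude $|\Pr[\tilde q \geq 0] - \Pr[q' \geq 0]| \leq O(\eps)$.

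For the discretization step, for each $i$ I construct a discrete integer-valued random variable $Z_i$ satisfying $\dtv\bigl(Z_i,\, \lambda'_i y_i^2 + \mu'_i y_i\bigr) \leq \eps^\star := O(\eps/K)$, with $Z_i$ supported on at most $R = \poly(1/\eps)$ distinct integer values of magnitude at most $W = \poly(1/\eps)$. This is done by truncating $y_i$ to $[-T,T]$ with $T = O(\sqrt{\log(K/\eps)})$ (losing $\eps^\star/2$ TV mass), partitioning $[-T,T]$ into $R$ equiprobable cells, and picking a representative in each cell whose image under $y \mapsto \lambda'_i y^2 + \mu'_i y$ is rounded to a nearest integer; since $|\lambda'_i|,|\mu'_i| \leq 2/\eps'$ and $|y|\leq T$, these integers lie in a $\poly(1/\eps)$-bounded range, and taking $R = \poly(1/\eps)$ sufficiently large keeps the induced per-coordinate TV error below $\eps^\star$. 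By independence, a hybrid argument then gives $\dtv\bigl(\sum_i Z_i + \tau',\, \sum_i (\lambda'_i y_i^2+\mu'_i y_i) + \tau'\bigr) \leq K\cdot \eps^\star = O(\eps)$.

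For the DP step, $\sum_i Z_i + \tau'$ is a sum of $K$ independent integer random variables, each on a known $R$-point support of integers of magnitude $\leq W$, so the sum lies in $[-KW-|\tau'|,\, KW+|\tau'|] = [-\poly(1/\eps),\poly(1/\eps)]$. The standard convolution DP computes $P_j(v) := \Pr[\sum_{i\leq j}Z_i = v]$ via the recurrence $P_j(v) = \sum_{s \in S_j} \Pr[Z_j=s]\,P_{j-1}(v-s)$; the table has $\poly(1/\eps)$ entries, each updated in $O(R)$ time, giving $\poly(1/\eps)$ bit operations total. Summing $P_K(v)$ over $v \geq -\tau'$ outputs the exact value of $\Pr[\sum_i Z_i + \tau' \geq 0]$, and the two approximation steps combine to give the claimed $O(\eps)$ bound on the distance to $\Pr[q(y) \geq 0]$ (which can be tightened to $\eps$ by adjusting constants). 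The running time $O(KB)\cdot\polylog(1/\eps) + \poly(1/\eps)$ splits as $O(KB)\cdot\polylog(1/\eps)$ for reading and rounding the $B$-bit inputs plus $\poly(1/\eps)$ for the discretization and DP, which operate on $\poly(1/\eps)$-sized numbers. The main obstacle is calibrating the discretization so that each $Z_i$ is supported on only $\poly(1/\eps)$ integer values within a $\poly(1/\eps)$-bounded range (keeping the DP table polynomial) while the accumulated TV error across $K = \tilde O(1/\eps^4)$ coordinates stays below $\eps$; the preceding rounding step, which bounds the coefficient magnitudes by $O(1/\eps')$, is precisely what enables both constraints to be simultaneously met.
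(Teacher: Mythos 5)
Your rounding step and your dynamic-programming step are essentially sound and parallel the paper's argument (the paper's Lemma~\ref{lem:round-q} likewise combines the degree-2 tail bound with Carbery--Wright to absorb both the coefficient-rounding error and the induced mean shift), but the middle step of your proposal contains a genuine gap. You claim to construct, for each coordinate, an integer-valued discrete random variable $Z_i$ with $\dtv\bigl(Z_i,\ \lambda'_i y_i^2 + \mu'_i y_i\bigr) \le \eps^\star$. This is impossible whenever $(\lambda'_i,\mu'_i)\neq(0,0)$: the law of $\lambda'_i y_i^2+\mu'_i y_i$ is then absolutely continuous, so its total variation distance to \emph{any} discrete distribution is exactly $1$. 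The same objection applies to the intermediate object in your construction (a distribution supported on one representative per cell is at TV distance $1$ from the truncated Gaussian). Consequently the hybrid argument you invoke, which is a TV-subadditivity argument, has a false premise and the bound $\dtv(\sum_i Z_i+\tau',\,\sum_i(\lambda'_iy_i^2+\mu'_iy_i)+\tau')\le K\eps^\star$ does not follow.

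The repair is to carry out the discretization in Kolmogorov distance, which is what the paper does: it builds a $\dk$-cover of $\mathcal{N}(0,1)$ of size $O(1/\eps^\star)$ (Fact~\ref{fact:cover}), pushes it forward through the map $t\mapsto \ell t^2+mt$, using the fact that this map is unimodal so that Kolmogorov distance degrades by at most a factor $2$ (Claim~\ref{clm:data-processing}, Lemma~\ref{lem:discretize1}), and then crucially uses the subadditivity of Kolmogorov distance under sums of \emph{independent} random variables, $\dk(\sum_i A_i,\sum_i B_i)\le\sum_i\dk(A_i,B_i)$ (Lemma~\ref{lem:subadditive}), to control the error of the $K$-fold sum. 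None of these ingredients appears in your write-up, and the last one is not a generic ``data processing'' fact -- it relies on $\dk(X_1+Y,X_2+Y)\le\dk(X_1,X_2)$ for independent $Y$, which has no TV analogue in your setting. A secondary point: rounding the support values to the nearest integer shifts each $Z_i$ by up to $1/2$, hence the sum by up to $K/2$; this deterministic shift would need yet another anti-concentration application relative to $\sqrt{\Var(q')}$ (it happens to be affordable at your parameter settings, but you never argue it), whereas the paper sidesteps it by keeping the support points as rationals with common denominator $C'/(\eps^\star)^2$, which the dynamic program handles just as easily.
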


\ifnum\confversion=0
\subsubsection{Proof of Theorem~\ref{thm:count-junta}}

~

\medskip

\noindent {\bf Runtime analysis.}  It is straightforward to verify
that Step~1 (rounding) can be carried out, and the integers
$\lambda'_i, \mu'_i, \tau'$ obtained, in $O({K B}) \cdot \log(1/\eps)$ bit
operations (note that the coefficients $\lambda'_i, \mu'_i$ are
obtained from the original values simply by discarding all but the
$O(\log(1/\eps))$ most significant bits).
Note that each of $\lambda'_i, \mu'_i$ {is
a $O(\log(1/\eps))$-bit integer}.

The claimed running time of {\tt Count-Junta} then follows
easily from the running times established in Lemma~\ref{lem:discretize1}
and the analysis of {\tt DP} given below.

\medskip

\noindent
{\bf Correctness.}
\ignore{
{
We begin by observing that if $|\tau'| > \tilde{O}(1/\eps^2)$
(in which case the algorithm halts and outputs a value in $\{0,1\}$
at the end of Step~1) then it must be the case
that $|\tau| > O(\log(1/\eps))\Var(q).$
In this case correctness follows from the
``degree-$2$ Chernoff bound,'' Theorem~\ref{thm:dcb}.  So we henceforth assume
that $|\tau| \leq \tilde{O}(1/\eps^2).$
}}We start with a simple lemma establishing that the ``rounding'' step,
Step~1, does not change the acceptance probability of the PTF by more than a
small amount:

\begin{lemma} \label{lem:round-q}
We have
\begin{equation}\label{eq:round-q}
\left|
\Pr_{y \sim \mathcal{N}(0,1)^K}[q(y)\geq 0] - \Pr_{y \sim \mathcal{N}(0,1)^K}[q'(y) \geq 0] \right|
\leq \eps/2.
\end{equation}
\end{lemma}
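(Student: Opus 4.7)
The plan is to bound $\dk(\hat q, q')$ by $\eps/2$, from which \eqref{eq:round-q} follows immediately. Let $S = 2^{\lceil \log_2 M\rceil}(\eps'/2)$ and set $\hat q(y) = q(y)/S$; since $S > 0$ we have $\{q(y) \geq 0\} = \{\hat q(y) \geq 0\}$ pointwise, so it suffices to bound $\dk(\hat q, q')$. Define the error polynomial $e(y) = \hat q(y) - q'(y)$. By construction each quadratic, linear, and constant coefficient of $e$ lies in $[-1/2, 1/2]$, which gives three key size estimates: (a) $\Var(e) \leq 2\,\mathop{SS}(e) \leq K$ by Claim~\ref{clm:variance-bound-SS}; (b) $|\E[e]| \leq (K+1)/2$; and (c) since $M$ is attained by some non-constant coefficient of $q$, the corresponding coefficient of $\hat q$ has magnitude $M/S \geq 1/\eps'$, hence $\Var(\hat q) \geq 1/\eps'^2$, and a fortiori $\Var(q') = \Omega(1/\eps'^2)$ (using $\sqrt{\Var(e)} \leq \sqrt{K} = \tilde{O}(1/\eps^2) \ll 1/\eps'$).

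Next, I will split via the triangle inequality:
\[
  \dk(\hat q,\, q') \;\leq\; \dk(q'+e,\, q'+\E[e]) \;+\; \dk(q'+\E[e],\, q').
\]
The first term is exactly the setting of Lemma~\ref{lem:low-variance-kolmogorov} (with ``$p$''$=q'$ and ``$q$''$=e$): the hypothesis $\Var(e)/\Var(q') = \tilde{O}(K\eps'^2) = \tilde{O}(\eps^8)$ comfortably beats the required $O((\eps/4)^4/\log^2(1/\eps))$, yielding that the first term is at most $\eps/4$. The second term is the Kolmogorov distance between $q'$ and its shift by $\E[e]$, which equals $\sup_{\theta} \Pr[|q'(y) - \theta| \leq |\E[e]|]$. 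By the Carbery--Wright anti-concentration inequality (Theorem~\ref{thm:cw}) applied to the degree-$2$ polynomial $q' - \theta$, this is $O\bigl(\sqrt{|\E[e]|/\sqrt{\Var(q')}}\,\bigr) = O(\sqrt{K\eps'})$. Since $K\eps' = \tilde{\Theta}(\eps^2)$, this is $\tilde{O}(\eps)$, which can be driven below $\eps/4$ by choosing the polylogarithmic factors inside $\eps' = \tilde{\Theta}(\eps^6)$ appropriately.

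The main (mild) obstacle is the polylog bookkeeping, which arises because the systematic shift $|\E[e]|$ can be as large as $(K+1)/2 = \tilde{\Theta}(1/\eps^4)$ in the worst case (when all $K$ rounding errors align in sign). The choice $\eps' = \tilde{\Theta}(\eps^6)$ is calibrated precisely so that $|\E[e]|/\sqrt{\Var(q')} \leq \tilde{O}(K\eps') = \tilde{O}(\eps^2)$, making the Carbery--Wright estimate $\tilde{O}(\eps)$; the hidden logarithmic factors in $\eps'$ must be tuned large enough relative to those in $K$ to absorb the logs in both bounds, but this is routine. Once both terms on the right-hand side of the triangle inequality are at most $\eps/4$, the argument delivers $\dk(\hat q, q') \leq \eps/2$ and hence \eqref{eq:round-q}.
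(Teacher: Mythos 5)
Your proof is correct, and it reaches the bound by a somewhat different decomposition than the paper, though both rest on the same two tools. The paper argues directly about the event $\sign(q(y))\neq\sign(q'(y))$: writing $a(y)=\bigl(2^{\lceil \log_2 M\rceil}\cdot(\eps'/2)\bigr)q'(y)-q(y)$ for the rounding error, a sign flip forces either $|a(y)|$ to be large or $q(y)$ to land near $0$, and these two events are union-bounded via the degree-$2$ tail bound (Theorem~\ref{thm:dcb}) applied to the whole error polynomial $a$ (mean included) and Carbery--Wright (Theorem~\ref{thm:cw}) applied to $q$. You instead pass to Kolmogorov distance and use the triangle inequality through the intermediate variable $q'+\E[e]$: the fluctuating part of the error is absorbed by re-invoking Lemma~\ref{lem:low-variance-kolmogorov} as a black box (with roles $p=q'$, $q=e$), while the deterministic shift $\E[e]$ is handled by anti-concentration of $q'$ (harmless that it is $q'$ rather than $q$, since their variances are comparable). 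Because Lemma~\ref{lem:low-variance-kolmogorov} is itself proved by exactly this concentration-plus-anti-concentration union bound, the two arguments use identical machinery and, reassuringly, the same calibration: in both, the binding constraint is that the systematic (mean) part of the rounding error, of order $K\eps' M$ in unscaled units (your $(K+1)/2$ after dividing by the scaling factor), be small compared to $\eps^2$ times the scale of the polynomial, which is what forces $\eps'=\tilde{\Theta}(\eps^6)$ given $K=\tilde{O}(1/\eps^4)$; your explicit polylog tuning for the $\sqrt{K\eps'}=\tilde{O}(\eps)$ term is the same slack the paper implicitly consumes. Your supporting estimates ($\Var(e)\le K$, $|\E[e]|\le (K+1)/2$, and $\Var(q')=\Omega(1/\eps'^2)$ via the coefficient of magnitude $M$ and Claim~\ref{clm:variance-bound-SS}) are all correct, so the argument goes through; what your packaging buys is modularity, at the cost of one extra triangle-inequality leg that the paper's direct union bound avoids.
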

\begin{proof}
This is a standard argument using concentration (tail bounds for degree-2
polynomials in Gaussian random variables) and anti-concentration
(Carbery-Wright).  Let $a(y) =
(2^{\lceil \log_2 M \rceil} \cdot (\eps'/2))q'(y) - q(y)$.
\ignore{\new{(note that the constant term of $a(y)$ is zero).}}
We have that $\sign(q(y)) \neq \sign(q'(y))$ only if at least one
of the following events occurs:
(i) $|a(y)| \geq c \eps^2 {\Var(q)}
\ignore{\|q\|_{{2}}}$, or
(ii) $|q(y)| \leq c \eps^2 {\Var(q)}$ (where $c$ is an
absolute constant).
For (i), we observe that $a(y)$ has at most $2K+1$
coefficients that are each at most $\eps' M$ in magnitude
{and hence $|\E_{y \sim \mathcal{N}(0,1)^K}[a(y)]| \leq (K+1)\eps'M$}
while ${\Var(q)} \geq M$
(recall that at last one of {$|\lambda_i|, |\mu_i|$} is at least $M$).
So ${\Var(a)} \leq \sqrt{2K+1} \cdot \eps' \cdot M$, and
by Theorem~\ref{thm:dcb} (the ``degree-2 Chernoff bound'')
$\Pr_{y \sim \mathcal{N}(0,1)^K}[|a(y)| \geq c \eps^2 {\Var(q)}]
\leq \eps/4$.
On the other hand, Theorem~\ref{thm:cw} gives us that $\Pr[
{|q(y)|} \leq c \eps^2 {\Var(q)}] \leq O(\sqrt{c \eps^2})
\leq \eps/4.$
The lemma follows {by a union bound}.
\end{proof}

We now turn to Step~2 of the algorithm, in which each distribution
$\lambda'_i y_i^2 + \mu'_i y_i${, $y_i \sim \mathcal{N}(0,1)$,} is converted to a nearby discrete distribution.
The procedure {\tt Discretize} is given below:

\medskip

\hskip-.2in \framebox{
\medskip \noindent \begin{minipage}{16.5cm}

\smallskip

{\tt Discretize}

\smallskip

\noindent {\bf Input:}
Integers $\ell,m\ignore{,N}$;
real value $\eps^\star = 1/2^{\tiny \text{integer}}$.

\noindent {\bf Output:} A multiset $S=\{s_1,\dots,s_R\}$
of $R=2/\eps^\star$ values such that
the distribution ${\cal D}_S$ satisfies
$
\dk({\cal D}_S, \ell Y^2 + m Y) \leq \eps^\star$,
where $Y \sim \mathcal{N}(0,1).
$

\medskip

\begin{enumerate}

\item Let $t_1 < \cdots < t_R$ be the real values given by
Fact~\ref{fact:cover} when its algorithm is run on input parameter
$\eps^\star.$

\item Output the multiset $S = \{
s_1,\dots,s_R\}$ where $s_i =
\ell t_i^2 + m t_i$ for
all $i$.

\end{enumerate}

\end{minipage}}

\bigskip

Our key lemma here says that ${\cal D}_S$ is Kolmogorov-close to the
univariate degree-2 Gaussian polynomial $\ell y_i^2 +
m y_i$:

\begin{lemma} \label{lem:discretize1}
Given $\ell,m,\eps^\star$ as specified in {\tt Discretize},
the procedure {\tt Discretize}$(\ell,m,\eps^\star)$
outputs a multiset $S=\{s_1,\dots,s_R\}$ of $R=8/\eps^\star$ values such that
the distribution ${\cal D}_S$ satisfies
\begin{equation} \label{eq:DS}
\dk({\cal D}_S, \ell Y^2 +  m Y) \leq \eps^\star
\quad\text{where~}Y \sim \mathcal{N}(0,1).
\end{equation}
Moreover, if $\eps^\star$ is of the form $1/2^{\text{\tiny{integer}}}$
and $|\ell|,|m| \leq L$, then  the running time is
$\tilde{O}(1/(\eps^\star)^4 + {\log(L)}/\eps^\star)$  and
each element $s_i$ is of the form $a_i/(C' / (\eps^\star)^2)$
where $C'$ is a (positive integer) absolute constant and
$a_i$ is an integer satisfying $|a_i| = \tilde{O}(L/(\eps^\star)^2).$
\end{lemma}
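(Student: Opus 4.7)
\textbf{Proof plan for Lemma~\ref{lem:discretize1}.} My plan is to pass the Gaussian-approximating grid produced by Fact~\ref{fact:cover} through the univariate quadratic $f(y) = \ell y^2 + m y$, and argue that the Kolmogorov distance is only distorted by a constant factor because each sublevel set of $f$ is a union of at most two intervals. Concretely, I would first invoke Fact~\ref{fact:cover} with error parameter $\eps^\star/4$ to obtain an ordered list $t_1 < \cdots < t_R$ with $R = O(1/\eps^\star)$ such that the random variable $\tilde Y$ that is uniform on $\{t_1,\dots,t_R\}$ satisfies $\dk(\tilde Y, Y) \le \eps^\star/4$ for $Y \sim \mathcal{N}(0,1)$. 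By construction, the multiset output by \texttt{Discretize} is then exactly the pushforward of $\tilde Y$ under $f$, so $\mathcal{D}_S$ and $f(\tilde Y)$ are identically distributed and it remains to bound $\dk(f(\tilde Y), f(Y))$.

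For the Kolmogorov bound, I would fix an arbitrary $\theta \in \mathbb{R}$ and note that $f^{-1}((-\infty,\theta])$ is either $\mathbb{R}$, empty, or a union of at most two disjoint intervals (since $f$ is a univariate quadratic). For any interval $I = (a,b]$ and any two random variables $U,V$ one has $|\Pr[U \in I] - \Pr[V \in I]| \le 2\dk(U,V)$ by writing the probability of $I$ as a difference of two CDF values. Summing over the at most two intervals comprising $f^{-1}((-\infty,\theta])$ yields
\[
|\Pr[f(Y) \le \theta] - \Pr[f(\tilde Y) \le \theta]| \;\le\; 4\,\dk(Y,\tilde Y) \;\le\; \eps^\star,
\]
which, taking the supremum over $\theta$, establishes \eqref{eq:DS}.

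For the structural and running-time claims, I would use the (standard) feature of the grid from Fact~\ref{fact:cover}: each $t_i$ is a rational of the form $b_i \cdot \eps^\star / C_0$ with $C_0$ an absolute positive integer and $b_i$ an integer of magnitude $\tilde{O}(1/\eps^\star)$ (the polylogarithmic factor accounts for the $O(\sqrt{\log(1/\eps^\star)})$ effective support of a Gaussian up to $\eps^\star$ tails, and the assumption that $\eps^\star$ is a dyadic rational makes $1/\eps^\star$ an integer so no extra denominators appear). Then
\[
s_i \;=\; \ell t_i^2 + m t_i \;=\; \frac{\ell b_i^2 + m b_i C_0/\eps^\star}{C_0^2/(\eps^\star)^2} \cdot (\eps^\star)^2,
\]
which has the form $a_i / (C'/(\eps^\star)^2)$ for $C' := C_0^2$ and $a_i \in \mathbb{Z}$; the magnitude bound $|s_i| \le |\ell|t_i^2 + |m||t_i| = \tilde{O}(L)$ gives $|a_i| = \tilde{O}(L/(\eps^\star)^2)$. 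Finally, the running time decomposes into the cost of invoking Fact~\ref{fact:cover} — which by its stated guarantee is $\tilde{O}(1/(\eps^\star)^4)$ — plus the $R = O(1/\eps^\star)$ arithmetic operations needed to form each $s_i$, each involving $O(\log L + \log(1/\eps^\star))$-bit integers, for a total of $\tilde{O}(1/(\eps^\star)^4 + \log(L)/\eps^\star)$ bit operations.

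The only genuinely delicate step is the factor-$4$ loss in Kolmogorov distance when pushing forward through $f$; the rest is bookkeeping about dyadic rationals and Gaussian tail truncation. Everything else should follow by invoking Fact~\ref{fact:cover} as a black box and applying the triangle inequality.
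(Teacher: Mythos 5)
Your proposal is correct and follows essentially the same route as the paper: construct the Kolmogorov cover of $\mathcal{N}(0,1)$ via Fact~\ref{fact:cover} and push it through $f(y)=\ell y^2+my$, using that a sublevel set of a univariate quadratic is an interval or the complement of one. The paper packages the pushforward step as a general claim that a unimodal map at most doubles Kolmogorov distance (so it invokes Fact~\ref{fact:cover} at accuracy $\eps^\star/2$ rather than your $\eps^\star/4$), which differs from your factor-$4$ bound only in constants, and your bookkeeping for the rational form of the $s_i$ and the running time matches the paper's (modulo a harmless typographical extra factor of $(\eps^\star)^2$ in your displayed expression for $s_i$).
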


\begin{proof}
We will use the following basic fact which says that it is easy to
construct a high-accuracy $\eps$-cover for $\mathcal{N}(0,1)$ w.r.t.
Kolmogorov distance:

\begin{fact} \label{fact:cover}
There is a deterministic procedure with the following property:  given as
input any value $\eps^\star=1/2^j$ where $j \geq 0$ is an integer,
the procedure runs in time $\tilde{O}(1/(\eps^\star)^4)$
bit operations
and outputs a set $T=\{t_1,\dots,t_{4/\eps^\star}\}$ of $4/\eps^\star$
real values such that
$\dk({\cal D}_T,Z) \leq \eps^\star$ where $Z \sim \mathcal{N}(0,1).$
Moreover each $t_i$ is a rational number of the form
integer$/(C/\eps^\star)$, where $C>0$ is some absolute constant
and the numerator is at most $\tilde{O}(1/\eps^\star).$
\end{fact}
\begin{proof}
(A range of different proofs could be given for this fact; we chose this one
both for its simplicity of exposition and because the computational overhead
of the dynamic programming routine outweighs the runtime of this procedure,
so its exact asymptotic running time is not too important.)

The deterministic procedure is very simple:  it explicitly
computes the values $p_j := {n \choose j}/2^n$ for $j=0,1,\dots,n$ where
 $n = \Theta(1/(\eps^\star)^2)$ is odd (we can and do take it to
additionally be a perfect square).
It is easy to see that all $n$ of these
values can be computed using a total of
${\tilde{O}(n^2) =}  \tilde{O}(1/(\eps^\star)^4)$ bit operations (each of the ${n =}O(1/(\eps^\star)^2)$
binomial coefficients can be
computed from the previous one by performing a constant number of
multiplications and divisions of an ${n =}O(1/(\eps^\star)^2)$-bit number
with an $O(\log(1/\eps^\star))$-bit number).
{For $r \in \mathbb{R}$, let $A_r = \{z \in \mathbb{Z} : r \ge z \ge 0 \}$ and let $P_r = \littlesum_{z \in A_r} p_z$.
The Berry-Ess{\'e}en
theorem implies that for all $r=0,1,\dots,n$ we have
\[
|P_r -
\Pr_{Z \sim \mathcal{N}(0,1)}[Z \leq ((r-(n+1)/2)/ \sqrt{n})]| \leq 1/\sqrt{n}
\leq \eps^\star/10.
\]
Let $\mathcal{D}_{\mathcal{B}}$ denote the binomial distribution $B(n,1/2)$.
Consider the distribution $\widetilde{\mathcal{D}_{\mathcal{B}}} = (\mathcal{D}_{\mathcal{B}} -(n+1)/2)/\sqrt{n}$. Then, note that $d_K(\mathcal{N}(0,1), \widetilde{\mathcal{D}_{\mathcal{B}}} ) \le \eps^{\ast}/10$. Note that
every element in the support of $\widetilde{\mathcal{D}_{\mathcal{B}}}$ is a rational number whose numerator is an integer (bounded by $C/\eps^{\ast 2}$) and the denominator is $C/\eps^{\ast}$. Further, as $1/\sqrt{n} \leq \eps^\star/10$ for any $z \in \mathbb{R}$, $\Pr[\widetilde{\mathcal{D}_{\mathcal{B}}}=z] \le \eps^{\ast}/10$. This gives a straightforward method to obtain a distribution $\mathcal{D}'$ supported on a $2/\eps^{\ast}$ points
such that  $d_K(\mathcal{D}', \widetilde{\mathcal{D}_{\mathcal{B}}}) \le \eps^{\ast}/2$.
Rounding each value to a multiple of $\eps^{\ast}/4$ (and carefully moving the mass around),  it is easy to obtain the set $T$ of size $4/\eps^{\ast}$ points such that
$d_{TV}(\mathcal{D}', \mathcal{D}_T) \le \eps^{\ast}/4$. As a consequence,
$$
d_K(\mathcal{D}_T, \mathcal{N}(0,1)) \le d_{TV}(\mathcal{D}', \mathcal{D}_T) + d_K(\mathcal{D}', \widetilde{\mathcal{D}_{\mathcal{B}}})+ d_K(\mathcal{N}(0,1), \widetilde{\mathcal{D}_{\mathcal{B}}} )  \le \eps.
$$
The claim about the representation of points in $T$ follows from the fact that they are a subset of the points in the support of $\widetilde{\mathcal{D}_{\mathcal{B}}}$ for which we proved this property.
}
This concludes the proof of Fact~\ref{fact:cover}
\end{proof}


We next make the following claim for Kolmogorov distance for
functions of random variables which is an analogue of the
``data processing inequality" for total variation distance.
First, we make the following definition.
\begin{definition}\label{def:unimodal}
Let $f : \mathbb{R} \rightarrow \mathbb{R}$ be a differentiable function. It is said to be $k$-modal if there are at most $k$ points $z_1, \ldots, z_k$
such that $\frac{df}{dx}|_{x=z_i} =0$.
\end{definition}
\begin{claim}\label{clm:data-processing}
Let $f$ be a unimodal function and let $X$ and $Y$ be real valued random variables. Then $d_K(f(X), f(Y)) \le 2 d_K(X,Y)$.
\end{claim}
\begin{proof}
For any real number $t$, there are two possibilities :
\begin{itemize}
\item[(i)] There are real numbers $t_1 \le t_2$ such that $f(x) \ge t $ if and only if $x \in [t_1, t_2]$.
\item[(ii)] There are real numbers $t_1\le t_2$ such that $f(x) \ge t$ if and only if $x \not \in (t_1,t_2)$.
\end{itemize}
In case (i),
$$
\Pr[f(X) \in [t,\infty)] = \Pr[ X \in [t_1, t_2]]  \quad
\text{and}
\quad \Pr[f(Y) \in [t,\infty)] = \Pr[ Y \in [t_1, t_2]].
$$
As a consequence,
\begin{eqnarray*}
d_K(f(X), f(Y)) &=& \sup_{t \in \mathbb{R}}  \left| \Pr[f(X)
\in [t,\infty)]  - \Pr[f(Y) \in [t,\infty)] \right| \\
&\le& \sup_{t_1 , t_2 \in \mathbb{R}} \left| \Pr[ X \in [t_1, t_2]]
-\Pr[ Y \in [t_1, t_2]]\right| \le 2d_K(X,Y).
\end{eqnarray*}
In case (ii),
$$
\Pr[f(X) \in [t,\infty)] = \Pr[ X \not \in [t_1, t_2]]  \quad
\text{and}
\quad \Pr[f(Y) \in [t,\infty)] = \Pr[ Y \not \in [t_1, t_2]]  .
$$
As a consequence,
\begin{eqnarray*}
d_K(f(X), f(Y)) &=& \sup_{t \in \mathbb{R}}  \left| \Pr[f(X) \in [t,\infty)]
- \Pr[f(Y) \in [t,\infty)] \right| \\ &\le& \sup_{t_1 , t_2 \in \mathbb{R}}
\left| \Pr[ X \not \in [t_1, t_2]]-\Pr[ Y \not \in [t_1, t_2]]\right|
\le 2d_K(X,Y).
\end{eqnarray*}
This proves the stated claim.
\end{proof}

We first apply Fact~\ref{fact:cover} to construct a
distribution   ${\cal D}_T$  such that $d_K( {\cal D}_T, \mathcal{N}(0,1))
\le \eps^{\ast}/2$. We then observe that the function $f(t) = \ell t^2
+ m t$ is unimodal and hence ${\cal D}_S = f({\cal D}_T)$,
we have $d_K({\cal D}_S, \ell Y^2 + m Y) \le \eps^{\ast}$
where $Y \sim \mathcal{N}(0,1)$.
This concludes the proof of Lemma~\ref{lem:discretize1}.
\end{proof}


With Lemma~\ref{lem:discretize1} in hand it is simple to obtain the
following:

\begin{lemma} \label{lem:subadditive}
Let $X_1,\dots,X_K$ be independent random variables where
$X_i \sim {\cal D}_{S_i}$
(see Step~2 of {\tt Count-Junta}).  Then
\begin{equation}
\label{eq:sum}
\dk \left( \littlesum_{i=1}^K X_i, \littlesum_{i=1}^K (\lambda'_i y_i^2 + \mu'_i y_i) \right)
\leq K \eps^\star \leq \eps/2
\quad \text{where~}y=(y_1,\dots,y_k) \sim \mathcal{N}(0,1)^K.
\end{equation}
\end{lemma}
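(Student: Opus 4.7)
The plan is to prove this by a standard hybrid argument that combines the per-coordinate Kolmogorov guarantee of Lemma~\ref{lem:discretize1} with the fact that Kolmogorov distance is subadditive under convolution with an independent random variable. By Lemma~\ref{lem:discretize1}, for each $i \in [K]$ we have $\dk(X_i, \lambda'_i y_i^2 + \mu'_i y_i) \leq \eps^\star$. Our task is to lift these $K$ one-coordinate bounds to a single bound on the sum.

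The first key step I would establish is the following ``convolution lemma": if $Z$ and $Z'$ are real-valued random variables and $W$ is independent of both, then $\dk(Z+W, Z'+W) \leq \dk(Z, Z')$. This follows by writing, for any threshold $t \in \R$,
\[
\Pr[Z+W \leq t] - \Pr[Z'+W \leq t] = \E_W\bigl[\Pr[Z \leq t-W] - \Pr[Z' \leq t-W]\bigr],
\]
and observing that the integrand is bounded in absolute value by $\dk(Z,Z')$ uniformly over the value of $W$, so the same bound survives taking the expectation and then the supremum over $t$.

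With this in hand, I would run a hybrid argument over the $K$ coordinates. Define the hybrid sums
\[
H_j \;=\; \sum_{i \leq j} X_i \;+\; \sum_{i > j} (\lambda'_i y_i^2 + \mu'_i y_i),
\]
so that $H_0 = \sum_{i=1}^K (\lambda'_i y_i^2 + \mu'_i y_i)$ and $H_K = \sum_{i=1}^K X_i$. For each $j \in [K]$, the transition from $H_{j-1}$ to $H_j$ replaces the $j$-th summand $\lambda'_j y_j^2 + \mu'_j y_j$ with $X_j$, while all the other summands (which are mutually independent of both $X_j$ and $y_j$) play the role of the independent ``$W$" in the convolution lemma. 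Applying the convolution lemma together with Lemma~\ref{lem:discretize1} gives $\dk(H_{j-1}, H_j) \leq \eps^\star$.

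Finally, a telescoping application of the triangle inequality for Kolmogorov distance yields
\[
\dk\!\left( \sum_{i=1}^K X_i,\; \sum_{i=1}^K (\lambda'_i y_i^2 + \mu'_i y_i) \right) \;\leq\; \sum_{j=1}^K \dk(H_{j-1}, H_j) \;\leq\; K \eps^\star,
\]
which is at most $\eps/2$ by our choice $\eps^\star = \Theta(\eps/K)$ from Step~2 of {\tt Count-Junta}. There is no real obstacle here; the only point that requires a (brief) calculation is the convolution lemma, and the rest is bookkeeping via the hybrid/telescope.
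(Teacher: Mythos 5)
Your proposal is correct and takes essentially the same route as the paper: the paper simply invokes the sub-additivity of Kolmogorov distance under independent sums (citing an explicit statement in the literature) and remarks that it follows from the triangle inequality together with the bound $\dk(X_1+Y, X_2+Y) \leq \dk(X_1,X_2)$ for independent $Y$ — precisely the convolution lemma plus hybrid/telescoping argument you write out in full.
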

\begin{proof}
This follows immediately from (\ref{eq:DS}) and
the sub-additivity property of
Kolmogorov distance:  for $A_1,\dots,A_n$
independent random variables and $B_1,\dots,B_n$ independent
random variables,
\[
\dk( \littlesum_{i=1}^n A_i,  \littlesum_{i=1}^n B_i)
\leq  \littlesum_{i=1}^n \dk(A_i,B_i).
\]
(See e.g. Equation~(4.2.3) of~\cite{BK01} for an explicit
statement; this also follows
easily from the triangle inequality and the basic bound that
$\dk(X_1 + Y, X_2 + Y) \leq \dk(X_1,X_2)$ for
$X_1,X_2,Y$ independent random variables.)
\end{proof}


Finally we turn to Step~3, the dynamic programming.
The algorithm {\tt DP} uses dynamic programming to compute the
exact value of
$\Pr[X_1 + \cdots + X_K + {\tau'} \geq 0]$, where $X_1,\dots,X_K$
are independent random variables with $X_i$ distributed
according to ${\cal D}_{S_i}$.  Observe that by Lemma~\ref{lem:discretize1},
for any $1 \leq i \leq K$
the partial sum $X_1 + \cdots + X_i$ must always be of the form
$c/(C'/(\eps^\star)^2)$ where $c$ is an integer satisfying
$|c| \leq \tilde{O}((i/\eps)/(\eps/K)^2) = \tilde{O}((K/\eps)^3)
= N$, where $N={\poly(1/\eps)}.$
Thus the dynamic program has a variable $v_{i,n}$
for each pair $(i,n)$ where $1 \leq i \leq K$ and $|n| \leq N$;
the value of variable $v_{i,n}$ is $\Pr[X_1 + \cdots + X_i =
n/(C'/(\eps^\star)^2).$
Given the values of variables $v_{i-1,n}$ for all $n$ and the multiset $S_i$
it is straightforward to compute the values of variables $v_{i,n}$ for
all $n$.
Since each nonzero probability under any distribution ${\cal D}_{S_i}$
is a rational number {with both numerator and denominator
$O(\log(1/\eps))$ bits long}, the bit
complexity of every value $v_{i,n}$ is at most $\tilde{O}(K)$ bits,
and since there are $K N$ entries in the table,
the overall running time of {\tt DP} is $\tilde{O}(K^2 N) =
{\poly(1/\eps)}$ bit operations.  The procedure {\tt DP}$(S_1,\dots,
S_K,{\tau'})$
returns the value $v = \sum_{n=0}^N v_{K,n}$, which by the above
discussion is exactly equal to

\begin{equation} \label{eq:sumofXis}
\Pr_{(X_1,\dots,X_K) \sim {\cal D}_{S_1} \times \cdots \times {\cal D}_{S_K}}
[X_1 + \cdots + X_K + {\tau'} \geq 0].
\end{equation}

Now equations~(\ref{eq:round-q}) and~(\ref{eq:sum}) together establish
that the value (\ref{eq:sumofXis})
output by {\tt Count-Junta} satisfies (\ref{eq:close})
as required for correctness.  This concludes the proof of
Theorem~\ref{thm:count-junta}. \qed

\subsubsection{Putting it all together.}

\fi

Combining algorithms {\tt Construct-Junta-PTF} and {\tt Count-Junta},
Theorems~\ref{thm:construct-junta-PTF} and~\ref{thm:count-junta},
give our main result for degree-2 PTFs over Gaussian
variables:

\begin{theorem} \label{thm:count-gaussian-ptf}
[Deterministic approximate counting of degree-$2$ PTFs over Gaussians]
There is an algorithm with the following properties:
It takes as input an explicit description of an $n$-variable degree-$2$
polynomial $p$ with {$b$-bit} integer coefficients and a value $\eps > 0.$
It runs (deterministically) in time $\poly(n,b,1/\eps)$
and outputs a value $v \in [0,1]$ such that
\ifnum\confversion=0
\begin{equation} \label{eq:close2}
\left|\Pr_{x \sim \mathcal{N}(0,1)^n}[p(x) \geq 0] - v \right| \leq \eps.
\end{equation}
\else
$\left|\Pr_{x \sim \mathcal{N}(0,1)^n}[p(x) \geq 0] - v \right| \leq \eps.$
\fi
\end{theorem}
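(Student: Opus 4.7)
\medskip

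\noindent\textbf{Proof plan for Theorem~\ref{thm:count-gaussian-ptf}.}
The plan is to obtain the algorithm by composing the two subroutines {\tt Construct-Junta-PTF} and {\tt Count-Junta}, whose guarantees are exactly Theorems~\ref{thm:construct-junta-PTF} and~\ref{thm:count-junta}. First I would set an internal accuracy parameter $\eps' = \eps/c$ for a sufficiently large absolute constant $c$ (absorbing the $O(\cdot)$ constants that appear in the two subroutines). Then on input $(p,\eps)$ the combined algorithm runs {\tt Construct-Junta-PTF}$(p,\eps')$ to obtain a degree-$2$ junta polynomial $q(y) = \sum_{i=1}^K(\lambda_i y_i^2+\mu_i y_i)+C'$ on $K = \tilde{O}(1/\eps'^{4})$ variables with integer coefficients, and feeds $q$ into {\tt Count-Junta}$(q,\eps')$ to obtain the output value~$v$.

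The correctness analysis is a one-line triangle inequality: Theorem~\ref{thm:construct-junta-PTF} gives
\[
\bigl|\Pr_{x\sim\mathcal{N}(0,1)^n}[p(x)\ge 0]-\Pr_{y\sim\mathcal{N}(0,1)^K}[q(y)\ge 0]\bigr|\le O(\eps'),
\]
while Theorem~\ref{thm:count-junta} gives $|\Pr_{y\sim\mathcal{N}(0,1)^K}[q(y)\ge 0]-v|\le\eps'$, so choosing $c$ large enough makes the total error at most $\eps$, which is the required bound in~(\ref{eq:close2}).

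For the running time I would check two things. First, Theorem~\ref{thm:construct-junta-PTF} already asserts that {\tt Construct-Junta-PTF} runs in $\poly(n,b,1/\eps')=\poly(n,b,1/\eps)$ time. Second, I need to verify that the integer coefficients $\lambda_i,\mu_i,C'$ of $q$ have bit length $B=\poly(n,b,1/\eps)$, so that the $O(KB)\cdot\polylog(1/\eps)+\poly(1/\eps)$ bit-operation bound from Theorem~\ref{thm:count-junta} is itself $\poly(n,b,1/\eps)$. This bound on $B$ follows from inspecting {\tt Construct-Junta-PTF}: each $\lambda_i,\mu_i$ produced by a call to \textsf{APPROXIMATE-DECOMPOSE} is a rational with numerator and denominator of bit length $\poly(n,b,1/\eps)$ (this is the ``$\poly(n,b,1/\eps,1/\eta)$'' guarantee in Theorem~\ref{thm:degree-decomp} applied with $\eta=\tilde{\Theta}(\eps^4/K^4)$), and the rounding of $s_i$ performed in Step~3(a) is to a granularity of $\gamma/(Kn)$ with $\gamma$ of magnitude $\poly(\eps)$; scaling the final polynomial $q$ by a suitable integer common denominator of size $\poly(n,b,1/\eps)$ (which does not affect $\sign(q)$) yields integer coefficients of bit length $\poly(n,b,1/\eps)$ as required by the input format of {\tt Count-Junta}.

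The main obstacle I expect is precisely this bookkeeping of bit complexity across the $K=\tilde{O}(1/\eps^4)$ iterations of {\tt Construct-Junta-PTF}. Each iteration introduces approximate eigenvectors and a rounding step, so one must argue that the bit lengths of the accumulated $\lambda_i,\mu_i$ do not blow up super-polynomially as $i$ grows; since each application of \textsf{APPROXIMATE-DECOMPOSE} only increases bit complexity by an additive $\poly(n,b,1/\eps)$ and there are $\poly(1/\eps)$ iterations, this stays polynomial. Everything else (the two error bounds and the two running-time bounds) is quoted directly from the already-proved theorems, so the composition itself requires no new analysis beyond this bit-complexity audit and the rescaling of $\eps$.
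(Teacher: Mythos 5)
Your proposal is correct and matches the paper's own argument, which obtains Theorem~\ref{thm:count-gaussian-ptf} exactly by composing {\tt Construct-Junta-PTF} (Theorem~\ref{thm:construct-junta-PTF}) with {\tt Count-Junta} (Theorem~\ref{thm:count-junta}) and rescaling $\eps$; the bit-complexity bookkeeping you flag is sensible but is already implicitly covered by the stated $\poly(n,b,1/\eps)$ running-time and integer-coefficient guarantees of Theorem~\ref{thm:construct-junta-PTF}.
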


\section{Deterministic approximate counting for degree-2 polynomials
over $\{-1,1\}^n$}
\label{sec:count-bool}

In this section we extend the results of the previous section to give
a deterministic algorithm for approximately counting
satisfying assignments of a degree-2 PTF over the Boolean hypercube.
We prove the following:

\begin{theorem} \label{thm:count-boolean-ptf}
[Deterministic approximate counting of degree-2 PTFs over
$\{-1,1\}^n$]
There is an algorithm with the following properties:
It takes as input an explicit description of an $n$-variable degree-2
multilinear polynomial $p$ with {$b$-bit} integer coefficients
\ignore{and total bit complexity $|p|$,} and a value $\eps > 0.$
It outputs a value $v \in [0,1]$ such that
\ignore{\begin{equation} \label{eq:close3}}$
\left|\Pr_{x \sim \{-1,1\}^n}[p(x) \geq 0] - v \right| \leq \eps
$
and runs (deterministically) in time
$\poly(n,b,2^{\tilde{O}(1/\eps^{{9}})}).$
\end{theorem}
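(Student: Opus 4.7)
The plan is to reduce the Boolean counting problem to the Gaussian counting problem handled by Theorem~\ref{thm:count-gaussian-ptf}, using two established tools: the (deterministic, algorithmic) \emph{regularity lemma} for degree-$d$ PTFs from \cite{DSTW:10} (or alternatively the variant implicit in \cite{HKM:09}), and the \emph{invariance principle} of \cite{MOO10}. Concretely, I would first set the regularity parameter $\tau = \Theta(\eps^{9})$ so that a $\tau$-regular degree-$2$ PTF $\sign(q)$ satisfies
\[
\left| \Pr_{x \sim \{-1,1\}^n}[q(x) \ge 0] - \Pr_{x \sim \mathcal{N}(0,1)^n}[q(x) \ge 0] \right| \le \eps/4,
\]
as given by the invariance principle applied to degree-$2$ polynomials (this is precisely the parameter regime used in Theorem~\ref{thm:count-regular-ptf-informal}).

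Next I would apply the deterministic regularity-lemma procedure to $p$. This produces (in time $\poly(n,b) \cdot 2^{\poly(1/\tau)}$) an explicit depth-$d^\star$ decision tree $T$ over a subset of the $n$ variables, with $d^\star = \poly(1/\tau)$, such that each leaf $\ell$ is labelled either by a constant (``trivial'' leaf) or by a restricted degree-$2$ polynomial $p_\ell$ on the remaining variables which is $\tau$-regular, and the total probability mass of the ``bad'' (non-regular, non-constant) leaves is at most $\eps/4$. Then
\[
\Pr_{x \sim \{-1,1\}^n}[p(x) \ge 0] \;=\; \sum_{\ell} \Pr[x \text{ reaches } \ell] \cdot \Pr_{y \sim \{-1,1\}^{n_\ell}}[p_\ell(y) \ge 0],
\]
up to an additive error of $\eps/4$ from the bad leaves. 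For each regular leaf $\ell$, invariance replaces the Boolean probability by the Gaussian probability of $p_\ell$, introducing error at most $\eps/4$ per leaf; summing against the probability weights gives a total contribution of $\eps/4$. Finally, for each regular leaf I invoke Theorem~\ref{thm:count-gaussian-ptf} on $p_\ell$ with accuracy $\eps/4$ to obtain a deterministic $\pm(\eps/4)$-approximation of $\Pr_{y \sim \mathcal{N}(0,1)^{n_\ell}}[p_\ell(y) \ge 0]$ in $\poly(n, b, 1/\eps)$ time, and return the probability-weighted sum over all leaves.

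The three error contributions (bad leaves, invariance, Gaussian counting) each bounded by $\eps/4$ combine, by a union/triangle argument, to the desired additive error $\eps$. For the running time: the decision tree has at most $2^{d^\star} = 2^{\poly(1/\tau)} = 2^{\tilde{O}(1/\eps^{9})}$ leaves, each of which requires a single call to the algorithm of Theorem~\ref{thm:count-gaussian-ptf} costing $\poly(n,b,1/\eps)$, for a total of $\poly(n,b,2^{\tilde{O}(1/\eps^{9})})$ as claimed. Note that the bit complexity of the restricted polynomials $p_\ell$ is at most $b$ (restriction only substitutes $\pm 1$ values), so the inputs to Theorem~\ref{thm:count-gaussian-ptf} are of the required form.

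The only subtle step is verifying that the parameters match: one needs the invariance-principle error bound for degree-$2$ polynomials under $\tau$-regularity to be $O(\eps)$ at $\tau = \Theta(\eps^{9})$ (this is exactly the exponent used in Theorem~\ref{thm:count-regular-ptf-informal}, and follows from the standard MOO-type bound $O(\tau^{1/(8d)}\log(1/\tau))$ at $d=2$), and that the deterministic regularity lemma of \cite{DSTW:10} can be instantiated to give the decision-tree decomposition with the stated ``at most $\eps/4$ bad-leaf mass'' guarantee at depth $\poly(1/\tau)$. Both of these are off-the-shelf; the rest of the argument is bookkeeping of the three error contributions and the runtime.
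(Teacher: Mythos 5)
Your plan is correct and is essentially the paper's own proof: a deterministic, algorithmic version of the \cite{DSTW:10} regularity lemma with $\tau=\Theta(\eps^{9})$ yields a decision tree of depth $\tilde{O}(1/\eps^{9})$ whose non-regular leaves carry $O(\eps)$ mass, the invariance principle of \cite{MOO10} converts each regular leaf's Boolean probability to a Gaussian one, Theorem~\ref{thm:count-gaussian-ptf} is invoked once per regular leaf, and the weighted sum with triangle-inequality bookkeeping gives error $\eps$ in time $\poly(n,b,2^{\tilde{O}(1/\eps^{9})})$. The only small slip is your quoted form of the invariance bound: the bound actually used is $O(d\,\tau^{1/(4d+1)})$ (Theorem~\ref{thm:invariance}), i.e.\ exponent $1/9$ at $d=2$, rather than $O(\tau^{1/(8d)}\log(1/\tau))$, and this is exactly where the $\eps^{9}$ regularity setting comes from.
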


The main ingredient in the proof of Theorem~\ref{thm:count-boolean-ptf}
is the ``regularity lemma for PTFs'' of \cite{DSTW:10}.
As stated in \cite{DSTW:10},
this lemma is an existential
statement which says that every degree-$d$ PTF over $\{-1,1\}^n$ can
be expressed as a shallow decision tree with variables at the internal
nodes and degree-$d$ PTFs at the leaves, such that a random
path in the decision tree is quite likely to reach a leaf that has a
``close-to-regular'' PTF.  \ifnum \confversion=1
Fortunately the \cite{DSTW:10} proof is constructive and in fact yields
the following statement:
\begin{theorem} \label{thm:algorithmic-regularity}
Let $p(x_1,\dots,x_n)$ be an input multilinear degree-$d$ PTF with
{$b$-bit} integer coefficients.
Fix any $\tau>0$.  There is an algorithm $A_{\mathrm{Construct-Tree}}$
which, on input $p$ and a parameter $\tau > 0$, runs in
{$\poly(n,b,2^{\depth(d,\tau)})$}
time and outputs a decision tree $\T$ of depth
$
\depth(d,\tau) := {\frac 1 \tau} \cdot \left(d \log {\frac 1 \tau}
\right)^{O(d)},
$
where each internal node of the tree is labeled with a variable and
each leaf $\rho$ of the tree is labeled with a pair $(p_\rho,\mathrm{label}
(\rho))$
where $\mathrm{label}(\rho) \in \{+1,-1,\text{``fail''},\text{``regular''}\}.$
The tree $\T$ has the following properties:

\begin{enumerate}

\item Every input $x \in \{-1,1\}^n$ to the tree reaches a leaf
$\rho$ such that $p(x)=p_\rho(x)$;

\item If leaf $\rho$ has $\mathrm{label}(\rho) \in \{+1,-1\}$ then
$\Pr_{x \in \{-1,1\}^n}[\sign(p_\rho(x)) \neq \mathrm{label}(\rho)]
\leq \tau$;

\item If leaf $\rho$ has $\mathrm{label}(\rho) = \text{``regular''}$
then $p_\rho$ is $\tau$-regular; and

\item With probability at most $\tau$, a random path from the
root reaches a leaf $\rho$ such that $\mathrm{label}(\rho)=\text{``fail''}.
$

\end{enumerate}

\end{theorem}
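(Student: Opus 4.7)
The plan is to give a direct algorithmic implementation of the existential regularity lemma from \cite{DSTW:10}, realizing $\T$ by a recursive procedure \textsf{BUILD-TREE} that descends by branching on the most influential variable. At every node of the tree we maintain an explicit multilinear representation of the restricted polynomial $p_\rho = p|_\rho$: because $p$ has $b$-bit integer coefficients and each restriction just fixes a variable to $\pm 1$, the coefficients of $p_\rho$ remain $b$-bit integers after any sequence of restrictions. Consequently, all $n$ influences $\Inf_i(p_\rho) = \sum_{S \ni i} \widehat{p_\rho}(S)^2$ can be computed exactly from the explicit coefficient vector in $\poly(n,b)$ time at each node.

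Based on those influences, the current node is processed by one of four rules:
\begin{itemize}
\item (``regular'') If $\max_i \Inf_i(p_\rho) \leq \tau \cdot \sum_j \Inf_j(p_\rho)$, make this node a leaf, set $p_\rho := p_\rho$ and label it ``regular''.
\item (``$\pm 1$'') Else, if the constant term $p_\rho(\emptyset)$ dominates in the quantitative sense used in \cite{DSTW:10} --- concretely, if a combination of the Carbery--Wright anti-concentration inequality with the degree-$d$ concentration bound applied to $p_\rho - p_\rho(\emptyset)$ certifies that $\sign(p_\rho)$ agrees with $\sign(p_\rho(\emptyset))$ on all but a $\tau$-fraction of $\{-1,1\}^n$ --- make this node a leaf and label it by $\sign(p_\rho(\emptyset))$.
\item (``fail'') Else, if the current depth equals $\depth(d,\tau)$, make this node a leaf and label it ``fail''.
\item (Branch) Else, let $i^\star \in [n]$ achieve the maximum of $\Inf_i(p_\rho)$, create two children for the restrictions $x_{i^\star}\leftarrow +1$ and $x_{i^\star}\leftarrow -1$, and recurse on each.
\end{itemize}

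Property~1 of the theorem is immediate from the construction, since every leaf is reached only by inputs consistent with its path restriction, and $p_\rho(x)=p(x)$ on such inputs. Property~3 is exactly the stopping condition of the ``regular'' leaves, and property~2 is exactly the labeling rule for the ``$\pm 1$'' leaves together with the anti-concentration/concentration calculation cited above. The substantive claim is property~4 --- that the probability a random root-to-leaf path ends at a ``fail'' leaf is at most $\tau$. This is the quantitative heart of the regularity lemma of \cite{DSTW:10}: their argument shows that repeatedly branching on the largest-influence variable drives all but a $\tau$-fraction of random paths to a leaf polynomial that is either $\tau$-regular or sufficiently biased, provided we allow depth $\depth(d,\tau) = (1/\tau)(d\log(1/\tau))^{O(d)}$. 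Our branching rule is chosen to coincide with theirs, so their bound transports verbatim.

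The remaining obligation is the running-time bound. The tree has at most $2^{\depth(d,\tau)}$ leaves, and the work at each node --- computing $n$ influences, checking regularity, testing the constant-term-dominance criterion, and forming the two restricted polynomials --- is polynomial in $n$, $b$ and the number of monomials of $p_\rho$ (which stays bounded by $O(n^d)$). Thus the overall cost is $\poly(n,b,2^{\depth(d,\tau)})$ as required. The main obstacle I anticipate is faithfully transcribing the precise quantitative threshold used in \cite{DSTW:10} for the ``biased leaf'' rule, so that properties~2 and~4 hold simultaneously with the stated parameters; once those thresholds are copied from \cite{DSTW:10}, no new probabilistic inequality is required and the algorithm is essentially a bookkeeping wrapper around the existential analysis.
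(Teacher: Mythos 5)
Your properties 1--3 and the running-time accounting are fine, but the heart of the theorem is property 4, and there your argument has a genuine gap. You assert that ``our branching rule is chosen to coincide with theirs, so their bound transports verbatim,'' but the tree analyzed in \cite{DSTW:10} (and implemented in the paper) is \emph{not} built by branching one most-influential variable at a time. The DSTW procedure sorts the variables of the current restricted polynomial $p_\rho$ by influence, computes the $\tau$-critical index $\ci_\tau(p_\rho)$, and then expands the node by a \emph{complete block} of restrictions: all variables up to $\ci_\tau(p_\rho)$ when the critical index is below $\alpha/\tilde{\tau}$ (with $\alpha = \Theta(d\log\log(1/\tau)+d\log d)$ and $\tilde\tau$ related to $\tau$ by $\tau = \tilde\tau\,(C' d\ln d\ln(1/\tilde\tau))^d$), or a block of $\alpha/\tilde{\tau}$ top variables when the critical index is large; only after restricting an entire block does it apply the bias test, which compares the head value $|p'_\rho(\rho')|$ against the fixed threshold $t^\star = 1/(2C^d)$ and requires the tail norm $\|q_\rho(\rho,x_T)\|_2$ to be small. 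The probabilistic guarantee that at most a $\tau$ fraction of random paths fail is proved for exactly this block-structured tree: in the large-critical-index case the argument is that a random restriction of the whole head block makes the head value dominate the (geometrically small) tail with high probability, and in the small-critical-index case it analyzes a simultaneous restriction of all head variables. Your tree re-sorts influences after every single-variable restriction and applies a bias test at every node, so it is a different object; the failure-probability bound for it does not follow ``verbatim'' and would have to be re-derived, which is precisely the nontrivial probabilistic content you have omitted. Note also that in the large-critical-index case the restricted polynomials may \emph{never} become $\tau$-regular within the depth budget, so one cannot wave at regularity being eventually reached; termination there hinges on the block-level bias analysis.

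The fix is simply to implement the \cite{DSTW:10} construction as stated (as the paper does): maintain $p_\rho$ explicitly, compute influences and the $\tau$-critical index at each node, expand by the prescribed blocks, and label $\pm 1$ leaves by the head/tail test above; then property 4 is literally the conclusion of Theorem~1 of \cite{DSTW:10} for the tree being built, and only the (routine) verification that each step is computable in $\poly(n,b,2^{\depth(d,\tau)})$ time remains. A minor further inaccuracy: coefficients of $p_\rho$ need not remain $b$-bit integers under restriction (monomials collapse and coefficients add), though they stay $\bigl(b+O(d\log n)\bigr)$-bit integers, which suffices for the time bound.
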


\fi

 \ifnum \confversion=0The precise statement is:

\begin{lemma} \label{lem:DSTW10} [Theorem~1 of \cite{DSTW:10}]
Let $f(x) = \sign(p(x))$ be any degree-$d$ PTF.  Fix any $\tau > 0.$
Then $f$ is equivalent to a decision tree $\T$, of depth
\[
\depth(d,\tau) := {\frac 1 \tau} \cdot \big(d \log {\frac 1
\tau}\big)^{O(d)}
\]
\noindent with variables at the internal nodes and
a degree-$d$ PTF $f_\rho = \sign(p_\rho)$ at each leaf $\rho$,  with the
following property: with probability at least $1 - \tau$, a random path
from the root reaches a leaf $\rho$ such that $f_\rho$ is $\tau$-close to
some $\tau$-regular
degree-$d$ PTF.
\end{lemma}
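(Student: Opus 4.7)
I would prove the regularity lemma by a greedy decision-tree construction. At each internal node corresponding to a restricted polynomial $p$, the procedure checks whether $\max_i \Inf_i(p) > \tau \cdot \sum_j \Inf_j(p)$. If not, $p$ is already $\tau$-regular and the node becomes a ``good'' leaf whose labeling PTF is $p_\rho := p$. Otherwise, the procedure branches on a variable achieving the maximum influence, setting it to $+1$ and $-1$, and recurses on both restricted polynomials. The recursion is cut off when a path has reached the prescribed depth $D = (1/\tau)\cdot(d\log(1/\tau))^{O(d)}$, at which point the node becomes a ``bad'' leaf. The entire proof then reduces to showing that (i) the total probability mass of bad leaves under a uniformly random root-to-leaf walk is at most $\tau$, and (ii) the PTF at every non-bad leaf, as well as the PTF at every bad leaf, is $\tau$-close to a $\tau$-regular degree-$d$ PTF.

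The main obstacle is proving the depth bound $D$ via a critical-index analysis of the influence sequence. For degree $d=1$, the classical critical-index fact of Servedio says that if one orders variables by decreasing influence and the $\tau$-critical index of this sorted sequence exceeds $O((1/\tau)\log(1/\tau))$, then the LTF is exponentially well concentrated on the signs of its head variables, so after branching on them every leaf is nearly constant. My plan for $d\ge 2$ is to combine this with the Carbery--Wright anti-concentration inequality for degree-$d$ polynomials, which gives anti-concentration at scale $t^{1/d}$ for probability $t$. The inductive step is to show that whenever a polynomial $p$ has a variable with influence exceeding $\tau\cdot \Inf(p)$, branching on it (averaged over the two restrictions) either strictly decreases a carefully chosen potential $\Phi(p)$ --- essentially the number of variables whose normalized influence lies above $\tau$, combined with a correction term tracking accumulated variance reductions --- or else $p$ is already so concentrated that its PTF is $\tau$-close to a constant. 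A careful bookkeeping of $\Phi$, with the Carbery--Wright blow-up producing the $(d\log(1/\tau))^{O(d)}$ factor for degree $d$, will give the claimed depth bound; the correction term ensures that the depth along any individual root-to-leaf path is bounded, not merely the average depth.

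Finally, to establish the ``$\tau$-close to $\tau$-regular'' conclusion at every leaf reached with probability at least $1-\tau$, I would argue as follows. At every ``good'' leaf the polynomial $p_\rho$ is literally $\tau$-regular by construction, so we may take the approximating PTF to be $f_\rho$ itself. At a ``bad'' leaf where the procedure exhausted its depth, the critical-index analysis above guarantees that the residual polynomial $p_\rho$ is concentrated, which by anti-concentration forces $f_\rho$ to agree with a constant $\pm 1$ on a $(1-\tau)$-fraction of inputs; a constant function is vacuously $0$-regular, hence $\tau$-close to a $\tau$-regular PTF. The remaining set of ``fail'' paths --- those where neither the branching procedure terminated in a regular leaf nor the concentration argument certified near-constancy --- has total probability mass at most $\tau$ by the potential-function bound, yielding the claim.
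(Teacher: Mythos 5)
There is a genuine gap here, and it sits exactly where the real work lies. First, note that the paper does not prove this lemma at all: it is quoted verbatim as Theorem~1 of \cite{DSTW:10}, and the paper only supplies an algorithmic wrapper (Theorem~\ref{thm:algorithmic-regularity}) by observing that the \cite{DSTW:10} construction can be run deterministically, deferring the probabilistic analysis to that paper. The \cite{DSTW:10} construction is also structurally different from yours: it does a case analysis on the $\tau$-critical index of the influence sequence and restricts \emph{blocks} of variables at once --- if the critical index is $0$ the node is a regular leaf; if it exceeds roughly $\alpha/\tilde{\tau}$ it restricts the first $\alpha/\tilde{\tau}$ most influential variables and shows that for most such restrictions the (now constant) head dominates the geometrically small tail, so the leaf PTF is $\tau$-close to a constant sign; and if the critical index is small and nonzero it restricts exactly up to the critical index and recurses. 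The depth bound $\frac{1}{\tau}(d\log\frac{1}{\tau})^{O(d)}$ falls out of this block structure and a bound on the number of recursion stages, not from a per-variable potential.

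Your proposal instead branches one variable at a time on the currently most influential variable and asserts that a potential $\Phi$ (``number of variables whose normalized influence exceeds $\tau$, plus a correction term'') decreases, or else the polynomial is concentrated. This assertion is the entire technical content of the lemma and is not established; worse, the natural candidate potential can \emph{increase} under restriction, since fixing a heavy variable lowers the total influence and can raise the normalized influences of the remaining variables (and create new heavy ones), so no monotonicity argument of the kind you sketch is available without substantial new ideas. Your handling of the depth-$D$ (``bad'') leaves is also circular: you appeal to ``the critical-index analysis above'' to claim the residual polynomials there are concentrated and hence near-constant, but for your greedy one-at-a-time tree no such analysis has been carried out --- that concentration statement is precisely what the block-restriction argument of \cite{DSTW:10} is designed to prove. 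Finally, the anti-concentration input cannot be taken off the shelf as stated: Carbery--Wright is a Gaussian inequality, and transferring it to $\{-1,1\}^n$ requires the invariance principle, which itself applies only to regular polynomials (or to the head junta after restriction), a circularity the actual proof is careful to avoid and your sketch glosses over.
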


Intuitively, this lemma is helpful for us because for regular polynomials
$g$ we can simply use $\Pr_{x \sim \mathcal{N}(0,1)^n}[g(x)\geq 0]$
(which we can approximate efficiently using
Theorem~\ref{thm:count-gaussian-ptf})
as a proxy for $\Pr_{x \sim \{-1,1\}^n}[g(x) \geq 0]$ and incur only small
error.   However, to use the lemma in our context we need a
deterministic algorithm which efficiently constructs the
decision tree.  While it is not clear from the lemma statement above,
fortunately the \cite{DSTW:10} proof in fact provides such an algorithm,
as we explain below.

\subsection{Proof of Theorem~\ref{thm:count-boolean-ptf}}
\ignore{
Observe that each leaf $\rho$ of a decision tree corresponds
to a restriction fixing some coordinates of an input $x$.
We write ``$x \sim \rho$'' to indicate that input $x \in \{-1,1\}^n$
is compatible with the restriction $\rho$.
}

As we describe below, the argument of
\cite{DSTW:10} actually gives the following lemma, which is
an effective version of Lemma~\ref{lem:DSTW10} above.

\begin{theorem} \label{thm:algorithmic-regularity}
Let $p(x_1,\dots,x_n)$ be an input multilinear degree-$d$ PTF with
{$b$-bit} integer coefficients.
Fix any $\tau>0$.  There is an algorithm $A_{\mathrm{Construct-Tree}}$
which, on input $p$ and a parameter $\tau > 0$, runs in
{$\poly(n,b,2^{\depth(d,\tau)})$}
time and outputs a decision tree $\T$ of depth
\[
\depth(d,\tau) := {\frac 1 \tau} \cdot \left(d \log {\frac 1 \tau}
\right)^{O(d)},
\]
where each internal node of the tree is labeled with a variable and
each leaf $\rho$ of the tree is labeled with a pair $(p_\rho,\mathrm{label}
(\rho))$
where $\mathrm{label}(\rho) \in \{+1,-1,\text{``fail''},\text{``regular''}\}.$
The tree $\T$ has the following properties:

\begin{enumerate}

\item Every input $x \in \{-1,1\}^n$ to the tree reaches a leaf
$\rho$ such that $p(x)=p_\rho(x)$;

\item If leaf $\rho$ has $\mathrm{label}(\rho) \in \{+1,-1\}$ then
$\Pr_{x \in \{-1,1\}^n}[\sign(p_\rho(x)) \neq \mathrm{label}(\rho)]
\leq \tau$;

\item If leaf $\rho$ has $\mathrm{label}(\rho) = \text{``regular''}$
then $p_\rho$ is $\tau$-regular; and

\item With probability at most $\tau$, a random path from the
root reaches a leaf $\rho$ such that $\mathrm{label}(\rho)=\text{``fail''}.
$

\end{enumerate}

\end{theorem}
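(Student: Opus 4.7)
The plan is to turn the existence proof of Lemma~\ref{lem:DSTW10} into an explicit recursive construction, verifying that every step is efficiently computable from the polynomial's coefficient vector. The algorithm $A_{\mathrm{Construct-Tree}}$ maintains, at each internal node of the tree under construction, the restricted polynomial $p_\rho$ (the multilinearization after fixing the path variables). Starting from the root with $p_\emptyset = p$, at each node it performs three tests.

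First, compute $\mathbf{E}[p_\rho]$, $\mathrm{Var}(p_\rho)$, and the influences $\mathrm{Inf}_i(p_\rho)$ for every remaining variable $i$; all of these are simple functions of the coefficients of $p_\rho$ that cost $\mathrm{poly}(n,b)$ bit operations. Second, test $\tau$-regularity: if $\max_i \mathrm{Inf}_i(p_\rho) \leq \tau \cdot \sum_j \mathrm{Inf}_j(p_\rho)$, then halt and label the leaf $(\,p_\rho,\textrm{``regular''}\,)$. Third, test closeness to a constant sign: if $|\mathbf{E}[p_\rho]| \geq C_d (\log(1/\tau))^{d/2}\sqrt{\mathrm{Var}(p_\rho)}$ for the appropriate constant $C_d$ coming from the degree-$d$ Chernoff bound (Theorem~\ref{thm:dcb}), then by that inequality $\Pr_{x \in \{-1,1\}^n}[\sign(p_\rho(x)) \neq \sign(\mathbf{E}[p_\rho])] \leq \tau$ and we label the leaf $(\,p_\rho,\sign(\mathbf{E}[p_\rho])\,)$. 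Otherwise, select the variable $i^\star$ of largest influence, create an internal node labeled $x_{i^\star}$, and recurse on the two restrictions $x_{i^\star}=+1$ and $x_{i^\star}=-1$. Any path that reaches depth $\depth(d,\tau)$ without triggering one of the stopping rules is truncated and its leaf labeled $(\,p_\rho,\textrm{``fail''}\,)$.

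Correctness of properties (1)--(3) in the theorem is immediate from the construction: (1) holds because the polynomial is carried down consistently under restriction; (2) is exactly the conclusion of the degree-$d$ Chernoff bound applied at the stopping test; (3) is the stopping condition itself. Property (4)---that only a $\tau$-measure of paths reach a ``fail'' leaf---and the depth bound $\depth(d,\tau) = (1/\tau)(d\log(1/\tau))^{O(d)}$ are precisely what the critical-index analysis of \cite{DSTW:10} establishes: on any path that avoids ``regular'' and ``$\pm 1$'' labels, the sequence of branched-on variables forms the prefix of an influence-decreasing sequence, and a $\tau$-critical-index argument shows that once this prefix reaches length $\depth(d,\tau)$, with probability at least $1-\tau$ over the choice of the path one of the stopping conditions must have fired already. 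Hence the only ``fail'' mass corresponds to the exceptional event in the existence proof, and is at most $\tau$.

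The main obstacle I foresee is lining up the threshold constants so that the three tests interlock correctly: the threshold $C_d(\log(1/\tau))^{d/2}\sqrt{\mathrm{Var}(p_\rho)}$ in the ``close to constant'' test must be small enough that the critical-index potential argument still forces one of the three tests to fire within $\depth(d,\tau)$ branchings, yet large enough that Theorem~\ref{thm:dcb} actually certifies a $\tau$-agreement with a constant sign. This is exactly the quantitative trade-off already balanced in \cite{DSTW:10}, and reusing their parameter choice gives the claimed depth. The running-time accounting is then routine: there are at most $2^{\depth(d,\tau)}$ nodes, and each costs $\mathrm{poly}(n,b)$ (since restricting a multilinear degree-$d$ polynomial never increases the bit-complexity of the coefficients by more than an additive $O(\log n)$ per fixed variable), yielding the claimed $\mathrm{poly}(n,b,2^{\depth(d,\tau)})$ total runtime.
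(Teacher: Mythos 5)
Your algorithm is not the \cite{DSTW:10} construction, and the place where you lean on \cite{DSTW:10} --- property (4) and the depth bound --- is exactly where the substitution breaks down. The paper's proof simply observes that the iterative procedure in the proof of Theorem~1 of \cite{DSTW:10} is already explicit: at each surviving node it computes the influences and the $\tau$-critical index of $p_\rho$, declares ``regular'' if the top influence is small, and otherwise expands a \emph{complete block} of head variables (of depth $\alpha/\tilde{\tau}$ with $\alpha = \Theta(d\log\log(1/\tau)+d\log d)$ if the critical index is large, of depth $\ci_\tau(p_\rho)$ otherwise), labelling a child $\pm 1$ only when the evaluated head part $|p'_\rho(\rho')|$ is large and the tail norm $\|q_\rho(\rho',x_T)\|_2$ is small; property (4) is then literally the statement \cite{DSTW:10} prove for that procedure. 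You instead branch on a single most-influential variable at a time, recompute influences after every restriction, test $|\E[p_\rho]| \ge C_d(\log(1/\tau))^{d/2}\sqrt{\Var(p_\rho)}$ at every node, and assert that ``the critical-index analysis of \cite{DSTW:10}'' forces a stopping rule to fire within $\depth(d,\tau)$ steps on all but a $\tau$ fraction of paths. No such statement is proved there: their probabilistic lemmas are about restricting, in one shot, a fixed prefix of the influence ordering of the \emph{current} polynomial, combining anti-concentration of the head part (in the style of Theorem~\ref{thm:cw}) over that block restriction with the geometric decay of tail weight beyond the critical index. In your greedy process the ordering is re-sorted after each single restriction, so your claim that the branched-on variables ``form the prefix of an influence-decreasing sequence'' is false in general, the restricted variables need not be the head variables of any one polynomial, and the \cite{DSTW:10} lemmas do not control the paths of your tree. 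Consequently neither the depth bound $\depth(d,\tau)$ nor the bound of $\tau$ on the mass of ``fail'' leaves follows from anything you have cited.

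This is not just a matter of ``lining up the threshold constants.'' The crux of the regularity lemma is the large-critical-index case, in which the restricted polynomial typically \emph{never} becomes regular; what \cite{DSTW:10} show is that after restricting the entire head block at once, with probability $1-O(\tau)$ over that block assignment the fixed head value dominates the residual tail norm by the margin needed for the concentration bound (Theorem~\ref{thm:dcb}) to certify a constant sign. Your per-node mean-versus-standard-deviation test would indeed fire on that event, but to know the event occurs for a $1-\tau$ fraction of paths of your adaptively grown tree you would have to redo the probabilistic analysis for the one-variable-at-a-time process (or show that your tree refines a \cite{DSTW:10} tree), and the proposal does neither. The repair is exactly what the paper does: implement the \cite{DSTW:10} procedure verbatim --- critical-index computation, block expansion, head-value/tail-norm test --- inherit property (4) from their Theorem~1, and add only the routine verification that each node is processed in $\poly(n,b)$ time, giving $\poly(n,b,2^{\depth(d,\tau)})$ overall.
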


We prove Theorem~\ref{thm:algorithmic-regularity} in
Section~\ref{sec:alg-reg} below, but first we show how
it gives Theorem~\ref{thm:count-boolean-ptf}.
\fi
\ifnum \confversion=1
The proof of Theorem~\ref{thm:algorithmic-regularity}  is deferred to the full version but we show here how it gives
Theorem~\ref{thm:count-boolean-ptf}.
\fi
\medskip

\noindent {\bf Proof of Theorem~\ref{thm:count-boolean-ptf},
assuming Theorem~\ref{thm:algorithmic-regularity}:}  The
algorithm for approximating $\Pr_{x \in \{-1,1\}^n}[
p(x) \geq 0]$ to $\pm \eps$ works as follows.  It
first runs $A_{\mathrm{Construct-Tree}}$
with its ``$\tau$'' parameter set to $\Theta(\eps^{{9}})$
to construct the decision tree $\T$.  It then iterates over
all leaves $\rho$ of the tree.  For each leaf $\rho$ at depth
$d_\rho$ that has $\mathrm{label}(\rho)=+1$ it adds $2^{-d_\rho}$ to
$v$ (which is initially zero), and for each leaf $\rho$ at depth
$d_\rho$ that has $\mathrm{label}(\rho)=\text{``regular''}$
it runs the algorithm of Theorem~\ref{thm:count-gaussian-ptf} on $p_\rho$
(with its ``$\eps$'' parameter set to $\Theta(\eps^{{9}})$) to
obtain a value $v_\rho \in [0,1]$ and adds $v_\rho \cdot 2^{-d_\rho}$ to $v$.
It outputs the value $v \in [0,1]$ thus obtained.

Theorems~\ref{thm:algorithmic-regularity} and~\ref{thm:count-gaussian-ptf}
imply that the running time is as claimed.
To establish correctness of the algorithm we will use the
``invariance principle'' of \cite{MOO10}: 

\begin{theorem}[\cite{MOO10}]
\label{thm:invariance} Let $p(x) = \littlesum_{S\subseteq[n], |S| \leq d}
p_S x_S $ be a degree-$d$
multilinear polynomial with
$\Var[p]=1$.  Then
$
\sup_{t \in \R}|\Pr_{x \in \bn }[p(x)\leq t] -
\Pr_{\mathcal{G} \sim \mathcal{N}(0,1)^n}[p(\mathcal{G})\leq t]| \leq O(d\tau^{1/(
{4d+1})}),$
where $\tau$ is such that
each coordinate $i\in[n]$ has $\Inf_i(p) \leq \tau$.
 \end{theorem}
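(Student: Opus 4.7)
The plan is to prove this by the Lindeberg replacement method applied to smoothed versions of the indicator $\mathbf{1}[p \leq t]$, then use anti-concentration to pass back to the indicator itself. I would begin by fixing a smooth, nonnegative, nondecreasing bump $\psi_\delta : \R \to [0,1]$ of scale $\delta>0$ such that $\psi_\delta(u) = 0$ for $u \leq -\delta$, $\psi_\delta(u)=1$ for $u \geq 0$, and $\|\psi_\delta^{(k)}\|_\infty = O(1/\delta^k)$ for $k=1,2,3$. It suffices to bound $|\E[\psi_\delta(p(x)-t)] - \E[\psi_\delta(p(\mathcal{G})-t)]|$ uniformly in $t$, and then convert back to $\Pr[p \leq t]$ by showing the smoothing error is controlled by the Carbery--Wright anti-concentration bound $\Pr[|p(\mathcal{G}) - t|\leq \delta] \leq O(d \delta^{1/d})$.

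For the replacement step, define hybrid random vectors $Z^{(i)} = (\mathcal{G}_1,\ldots,\mathcal{G}_i,x_{i+1},\ldots,x_n)$ so that $Z^{(0)}=x$ and $Z^{(n)}=\mathcal{G}$. Writing $W_i = p(\mathcal{G}_1,\ldots,\mathcal{G}_{i-1},0,x_{i+1},\ldots,x_n)$ and $D_i p$ for the partial derivative $\partial p / \partial x_i$ (which does not depend on $x_i$ because $p$ is multilinear), I would Taylor expand $\psi_\delta(p(Z^{(i-1)})-t)$ and $\psi_\delta(p(Z^{(i)})-t)$ around $W_i - t$ to third order in the single variable $x_i$ or $\mathcal{G}_i$. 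The zeroth, first, and second moments of $x_i$ and $\mathcal{G}_i$ agree (both are $0$, $0$, $1$), so those terms cancel in expectation, leaving only the third-order (and higher) remainder. The remainder is bounded by $O(1)\cdot \|\psi_\delta'''\|_\infty \cdot \E[|D_i p(Z^{(i-1)})|^3] \cdot (\E[|x_i|^3] + \E[|\mathcal{G}_i|^3])$, which is $O(1/\delta^3) \cdot \E[|D_i p|^3]$.

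Summing over $i = 1,\ldots,n$, the total replacement error is $O(\delta^{-3}) \sum_i \E[|D_i p|^3]$. The main tool here is hypercontractivity: since $D_i p$ is a degree-$(d-1)$ polynomial, the $(2,3)$-hypercontractive inequality gives $\E[|D_i p|^3] \leq C^d \cdot \E[(D_i p)^2]^{3/2} = C^d \cdot \Inf_i(p)^{3/2}$. Therefore
\[
\sum_i \E[|D_i p|^3] \leq C^d \sum_i \Inf_i(p)^{3/2} \leq C^d \left(\max_i \Inf_i(p)\right)^{1/2} \sum_i \Inf_i(p) \leq C^d \cdot \tau^{1/2} \cdot d,
\]
using $\sum_i \Inf_i(p) \leq d\, \Var(p) = d$. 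Combining the smooth-test-function bound $O(C^d d \tau^{1/2}/\delta^3)$ with the anti-concentration loss $O(d \delta^{1/d})$ and optimizing over $\delta$ yields the stated exponent $\tau^{1/(4d+1)}$ (up to a multiplicative $O(d)$), since the two error terms balance when $\delta^{3+1/d} \asymp \tau^{1/2}$.

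The main obstacle is packaging the smoothing/de-smoothing step cleanly: one must argue that $|\Pr[p(x)\leq t] - \E[\psi_\delta(p(x)-t)]|$ and the analogous Gaussian quantity are both small, which requires anti-concentration for $p(x)$ over the hypercube \emph{and} over the Gaussian measure. The Gaussian side is standard Carbery--Wright; the Boolean side is subtler but follows from a bootstrapping argument that uses the already-proved smoothed invariance bound to transfer Gaussian anti-concentration back to the hypercube at a small additional cost, which is absorbed into the final $O(d\tau^{1/(4d+1)})$ rate.
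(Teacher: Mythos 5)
First, a point of comparison: the paper does not prove this statement at all --- it is imported verbatim from \cite{MOO10} --- so your proposal should be measured against the proof in that reference, which is indeed the Lindeberg replacement argument with hypercontractivity and Carbery--Wright anti-concentration that you outline (including the bootstrapping trick for transferring anti-concentration back to the hypercube). Your overall architecture is the right one, and the individual steps you describe are sound: multilinearity makes $D_i p$ independent of coordinate $i$, the $(2,3)$-hypercontractive bound gives $\E[|D_i p|^3] \le C^d \Inf_i(p)^{3/2}$ on the mixed hybrids, and $\sum_i \Inf_i(p) \le d\,\Var(p) = d$.

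The genuine gap is quantitative: your optimization does not deliver the stated exponent. Balancing $C^d d\,\tau^{1/2}/\delta^3$ against $d\,\delta^{1/d}$, i.e.\ $\delta^{3+1/d}\asymp\tau^{1/2}$, gives $\delta\asymp\tau^{d/(6d+2)}$ and hence a final bound $O(d\,\tau^{1/(6d+2)})$, which is strictly weaker than the claimed $O(d\,\tau^{1/(4d+1)})$ (for $d=2$ this is $\tau^{1/14}$ versus $\tau^{1/9}$; it would not break the paper qualitatively, but it would degrade the $\eps^{9}$ regularity parameters used throughout). To get $1/(4d+1)$ you must use the additional fact that the \emph{third} moments of $x_i$ and $\mathcal{G}_i$ also match (both vanish, by symmetry of the Rademacher distribution), so the third-order Taylor terms cancel as well. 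One then expands to fourth order and bounds the remainder by $O(\|\psi_\delta^{(4)}\|_\infty)\cdot\E[(D_i p)^4] \le O(\delta^{-4})\,C^d\,\Inf_i(p)^2$ via $(2,4)$-hypercontractivity; summing, $\sum_i \Inf_i(p)^2 \le \tau \sum_i \Inf_i(p) \le \tau d$, so the replacement error becomes $O(C^d d\,\tau/\delta^4)$, and balancing $\tau/\delta^4 \asymp \delta^{1/d}$ yields $\delta \asymp \tau^{d/(4d+1)}$ and the stated $O(d\,\tau^{1/(4d+1)})$, with the $C^d$ absorbed as in \cite{MOO10} (the bound is trivial unless $\tau$ is small as a function of $d$). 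With that modification, and keeping your bootstrapping step for Boolean anti-concentration, your argument coincides with the proof in the cited reference.
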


By Theorem~\ref{thm:algorithmic-regularity}, the leaves of $\T$ that are
marked $+1$, $-1$ or ``fail''
collectively contribute at most $\Theta(\eps^{{9}}) \leq \eps/2$ to the error
of the output value $v$.  Theorem~\ref{thm:invariance} implies that
each leaf $\rho$ at depth $d_\rho$ that is marked ``regular''
contributes at most
$2^{-d_\rho} \cdot \eps/2$ to the error, so the total contribution
from all such leaves is at most $\eps/2$.
This concludes the proof of Theorem~\ref{thm:count-boolean-ptf}.
\qed
\ifnum \confversion=0
\subsection{Proof of Theorem~\ref{thm:algorithmic-regularity}:
The \cite{DSTW:10} construction.} \label{sec:alg-reg}

Theorem~1 of \cite{DSTW:10} establishes the existence of the claimed
decision tree $\T$ by analyzing an iterative procedure that constructs
$\T$.  Inspection
of this procedure reveals that it can be straightforwardly
implemented by an efficient
deterministic algorithm.  We first provide some
details of the procedure below
and then analyze its running time.

The iterative procedure uses parameters $\beta = \tau$,\ignore{$\tilde{\tau}'
= \tau$,} and $\tilde{\tau}$ chosen such that
${\tau} = \tilde{\tau} \cdot (C' d \ln d \ln(1/\tilde{\tau}))^d$
where $C'$ is a universal constant (see Lemma~12 and the proof of Theorem~1
of \cite{DSTW:10}).
It works to construct $\T$ by processing each
node of the tree that has not yet been declared a leaf of ${\cal T}$
in the manner that we now describe.

\medskip

\noindent {\bf Processing a single node:}
Consider a given node that is currently a leaf of the partially-constructed
decision tree but has not yet been declared a leaf of ${\cal T}.$
Call such a node $\rho$; it corresponds to a restriction
of some of the variables, and
such a node is currently labeled with the restricted
polynomial $p_\rho$.  (At the beginning of the procedure the
node $\rho$ is the root of $\T$, corresponding to the empty restriction
that fixes no variables, and the polynomial $p_\rho$ is simply $p$.)
Let us write $p_\rho(x) = \sum_{|S| \leq d, S \subset [n]} p_{\rho,S} x_S$
where $x_S = \prod_{i \in S} x_i$.

If the depth $d_\rho$ of $\rho$ is greater than $\depth(d,\tau)$ then the
procedure declares $\rho$ to be a leaf of $\T$ and labels
it with the pair $(p_\rho,\text{``fail''})$.  Otherwise,
the procedure first computes $\Inf_i(p_\rho) = \sum_{S \ni i} (p_{\rho,S})^2$
for all $i=1,\dots,n$ and $\Inf(p_\rho) = \sum_{i=1}^n \Inf_i(p_\rho)$.
It sorts the variables in decreasing
order of influence (for notational convenience we shall suppose that
$\Inf_1(p_\rho) \geq \Inf_2(p_\rho) \geq \cdots$), and operates as follows:

\begin{enumerate}
\item
If $\Inf_1(p_\rho) \leq \tau \cdot \Inf(p_\rho)$ then
the node $\rho$ is declared a leaf of $\T$ and is labeled with the pair
$(p_\rho,\text{``regular''}).$

Otherwise,
let $\ci_\tau(p_\rho)$, the $\tau$-critical index of $p_\rho$, be the
least $i$ such that
\[
\Inf_{i+1}(p_\rho) \leq \tau \cdot \sum_{j=i+1}^n \Inf_j(p_\rho).
\]
Let $\alpha = \Theta(d \log \log (1/\tau) + d \log d)$.

\item
If $\ci_\tau(p_\rho) \geq \alpha/\tilde{\tau}$ then the procedure ``expands''
node $\rho$ by replacing it with a complete decision tree of
depth $\alpha/\tilde{\tau}$, where all internal nodes at the $i$-th level
of this tree contain variable $x_i$.  For each new restriction
$\rho'$ (an extension of $\rho$)
resulting from this expansion the procedure computes
$p_{\rho'}$ and labels node $\rho'$ with that polynomial.

Let us write $p_\rho(x) = p_\rho(x_H,x_T) =
p'_\rho(x_H) + q_\rho(x_H,x_T)$ where $p'_\rho(x_H)$ is the truncation
of $p$ containing only the monomials all of whose variables
lie in the set $H=\{1,\dots,\ci_\tau(p_\rho)\}$.

It is easy to see that the constant term of the polynomial $p_{\rho'}$
is precisely $p'_\rho(\rho')$.
The procedure computes
$|p'_\rho(\rho')|$ and
$\|q_\rho(\rho,x_T)\|_2$.  If
$|p'_\rho(\rho')| \geq t^\star \eqdef 1/(2C^d)$
(here $C>0$ is a universal constant; see Definition~2
and the discussion at the end of Section~1.2 of \cite{DSTW:10})
and
$\|q_\rho(\rho,x_T)\|_2 \leq t^\star \cdot
(\Theta(\log (1/\beta)))^{-d/2}$ then the procedure
declares $\rho'$ to be a leaf and labels it with the pair
$(p_{\rho'},\sign(p'_\rho(\rho'))).$

\item If $\ci_\tau(p_\rho) < \alpha/\tilde{\tau}$ then the procedure expands
node $\rho$ by replacing it with a complete decision tree of
depth $\ci_\tau(p_\rho)$, where again all
internal nodes at the $i$-th level of this tree contain
variable $x_i$.
As in the previous case,
for each new restriction
$\rho'$ resulting from this expansion the procedure computes
$p_{\rho'}$ and labels node $\rho'$ with that polynomial.

\end{enumerate}

It is clear that the above procedure constructs a tree $\T$ that satisfies
properties (1), (2) and (3) of Theorem~\ref{thm:algorithmic-regularity}.
The analysis of \cite{DSTW:10} establishes that the tree $\T$
satisfies property (4).

Finally, the running time bound is easily verified from the
description of the algorithm and the fact that the input
is a degree-$d$ PTF with $b$-bit integer coefficients.
\ignore{
{Bound bit complexity
of the polynomials $p_\rho$ that are ever constructed by this procedure --
using depth bound this should be easy.
Bound runtime required to compute influences etc. at each node.
Should be straightforward.}
}

\subsection{Fully polynomial deterministic approximate counting
for regular degree-2 PTFs.}

As a special case of the above analysis we easily obtain
the following result for regular PTFs:

\begin{theorem} \label{thm:count-regular-boolean-ptf}
[Deterministic approximate counting of regular degree-2 PTFs over
$\{-1,1\}^n$]
Let $p$ be an $n$-variable degree-2
multilinear polynomial $p$ with {$b$-bit} integer coefficients
that is $O(\eps^{{9}})$-regular.
Then the above algorithm runs in deterministic time
$\poly(n,b,1/\eps)$
and outputs a value $v \in [0,1]$ such that
\[
\left|\Pr_{x \sim \{-1,1\}^n}[p(x) \geq 0] - v \right| \leq \eps.
\]
\end{theorem}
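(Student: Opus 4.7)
The plan is to observe that when the input polynomial $p$ is already $O(\eps^{9})$-regular, the decision tree step of the general algorithm trivially terminates at the root, and so the only cost incurred is that of the Gaussian counting procedure, Theorem~\ref{thm:count-gaussian-ptf}. More precisely, when the procedure $A_{\mathrm{Construct-Tree}}$ from Theorem~\ref{thm:algorithmic-regularity} (run with parameter $\tau = \Theta(\eps^{9})$) is invoked on a $\tau$-regular polynomial, the very first check at the root node -- ``is $\Inf_1(p) \leq \tau \cdot \Inf(p)$?'' -- succeeds, so the root itself is declared a leaf with label ``regular''. Hence $\T$ consists of a single leaf, and the main algorithm simply runs the Gaussian counting routine on $p$ directly.

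Concretely, the algorithm for Theorem~\ref{thm:count-regular-boolean-ptf} is: on input $(p,\eps)$ with $p$ being $O(\eps^{9})$-regular, run the algorithm of Theorem~\ref{thm:count-gaussian-ptf} on $p$ with its accuracy parameter set to $\eps/2$, and return the value $v$ it outputs. Theorem~\ref{thm:count-gaussian-ptf} guarantees that this takes $\poly(n,b,1/\eps)$ deterministic time and returns $v$ with $|\Pr_{x \sim \mathcal{N}(0,1)^n}[p(x) \geq 0] - v| \leq \eps/2.$

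Correctness then follows from the invariance principle, Theorem~\ref{thm:invariance}. We may assume (by rescaling, which does not affect either probability) that $\Var(p)=1$; then the regularity assumption $\max_i \Inf_i(p) \leq O(\eps^{9}) \cdot \sum_j \Inf_j(p)$ combined with the fact that $\sum_j \Inf_j(p) = O(\Var(p)) = O(1)$ for a degree-2 polynomial gives $\max_i \Inf_i(p) \leq O(\eps^{9})$. Applying Theorem~\ref{thm:invariance} with $d=2$ and $\tau = O(\eps^{9})$ yields
\[
\left| \Pr_{x \sim \{-1,1\}^n}[p(x) \geq 0] - \Pr_{x \sim \mathcal{N}(0,1)^n}[p(x) \geq 0] \right| \leq O(d\tau^{1/(4d+1)}) = O(\eps^{9/9}) = O(\eps),
\]
where we used that the exponent $1/(4d+1) = 1/9$ for $d=2$ precisely matches the power of $\eps$ in the regularity parameter. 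Combining this with the $\eps/2$ bound from Theorem~\ref{thm:count-gaussian-ptf} and adjusting constants gives the desired $\pm \eps$ approximation. The main (in fact, only) obstacle is simply verifying that the constants in the regularity exponent match those coming out of the invariance principle; there is no new algorithmic content beyond what Theorems~\ref{thm:count-gaussian-ptf} and~\ref{thm:invariance} already provide.
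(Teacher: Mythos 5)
Your proposal is correct and matches the paper's own argument: the paper proves this theorem exactly by noting that on an $O(\eps^{9})$-regular input the tree-construction procedure of Theorem~\ref{thm:algorithmic-regularity} halts immediately at the root (declaring it a ``regular'' leaf), so the general algorithm reduces to a single call to the Gaussian counting routine of Theorem~\ref{thm:count-gaussian-ptf}, with correctness supplied by the invariance principle (Theorem~\ref{thm:invariance}) with $d=2$, exponent $1/(4d+1)=1/9$. Your explicit verification that $\max_i \Inf_i(p) = O(\eps^9)$ after normalizing $\Var(p)=1$ (using $\sum_j \Inf_j(p) \le 2\Var(p)$) just spells out what the paper leaves implicit.
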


This is because if $p$ is already $\eps^{{9}}$-regular then the
tree-construction procedure will halt immediately at the root.

\fi

\section{A deterministic FPT
approximation algorithm for absolute moments} \label{sec:moments}

In this section we prove Theorem~\ref{thm:compute-kth-moment}.
{Note that since we are working with polynomials over the
domain $\{-1,1\}^n$, it is sufficient to consider
multilinear polynomials. }

We begin with the following easy observation:

\begin{observation} \label{obs:moments-big}
Let $q(x)$ be a degree-2 multilinear
polynomial over $\{-1,1\}^n$ that has $\E_{x \in \{-1,1\}^n}[q(x)^2] = 1.$
Then for all $k \geq 1$ we have that the $k$-th raw moment
$\E_{x \in \{-1,1\}^n}[|q(x)|^k]$ is at least $c$ where $c>0$ is some universal constant.
\end{observation}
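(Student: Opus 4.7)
\textbf{Proof plan for Observation~\ref{obs:moments-big}.}

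The plan is to split into two regimes depending on the size of $k$. For the large-$k$ regime $k \ge 2$, the bound is essentially free from Jensen's inequality: by the power-mean inequality,
\[
\left( \E_{x \in \{-1,1\}^n}[|q(x)|^k] \right)^{1/k} \ge \left( \E_{x \in \{-1,1\}^n}[q(x)^2] \right)^{1/2} = 1,
\]
so $\E[|q(x)|^k] \ge 1$ whenever $k \ge 2$. The delicate regime is therefore $1 \le k < 2$, where smaller powers of $|q|$ could in principle be much smaller than $\E[q^2]$, and this is where some real input is needed.

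For $1 \le k < 2$, I would invoke hypercontractivity (the Bonami--Beckner $(2,4)$-hypercontractive inequality for degree-$d$ polynomials on $\{-1,1\}^n$), which gives $\|q\|_4 \le 3^{d/2} \|q\|_2$. Since $q$ has degree $2$ and $\|q\|_2=1$, this yields $\|q\|_4 \le 3$, i.e., $\E[q(x)^4] \le 81$. Combined with $\E[q(x)^2]=1$, the Paley--Zygmund inequality applied to the nonnegative random variable $q(x)^2$ gives, for any $\theta \in [0,1]$,
\[
\Pr_{x \in \{-1,1\}^n}\!\left[ q(x)^2 \ge \theta \cdot \E[q(x)^2] \right] \ge (1-\theta)^2 \cdot \frac{\E[q(x)^2]^2}{\E[q(x)^4]} \ge \frac{(1-\theta)^2}{81}.
\]
Taking $\theta = 1/4$ gives $\Pr[|q(x)| \ge 1/2] \ge 9/(16 \cdot 81) = 1/144$. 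Hence for any $k \ge 1$,
\[
\E_{x \in \{-1,1\}^n}[|q(x)|^k] \ge (1/2)^k \cdot \Pr[|q(x)| \ge 1/2] \ge \frac{1}{4 \cdot 144} = \frac{1}{576},
\]
since in the range $1 \le k < 2$ we have $(1/2)^k \ge 1/4$.

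Combining the two regimes, the observation holds with, say, $c = 1/576$. I do not expect any real obstacle here: the only ingredient that is not entirely routine is the $(2,4)$-hypercontractive bound $\|q\|_4 \le 3 \|q\|_2$, which is the same tool that already underlies the randomized sampling algorithm for {\sc Absolute-Moment-of-Quadratic} referenced earlier in the paper, so its use is consistent with the rest of the presentation.
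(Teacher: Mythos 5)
Your proof is correct, and for the case $k \geq 2$ it coincides exactly with the paper's argument (monotonicity of $L_r$ norms, giving $\E[|q|^k] \geq 1$). The only divergence is in the remaining case: the paper disposes of $k=1$ by citing Theorem~4.1 of \cite{AH11}, whereas you give a self-contained argument via the degree-$2$ $(2,4)$-hypercontractive inequality $\|q\|_4 \le 3\|q\|_2$ together with Paley--Zygmund, which yields the explicit anti-concentration bound $\Pr[|q(x)| \ge 1/2] \ge 1/144$ and hence $\E[|q(x)|^k] \ge 1/576$ for $1 \le k < 2$. Your route is essentially an unpacking of the standard proof of the fact the paper cites (a lower bound on $\E[|q|]$ in terms of $\|q\|_2$ for low-degree polynomials), so mathematically the two proofs rest on the same tool; what your version buys is an explicit universal constant and independence from the external reference, at the cost of a few extra lines, while the paper's version is shorter by deferring to \cite{AH11}. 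All the constants in your computation check out ($\E[q^4] \le 81$, $(3/4)^2/81 = 1/144$, $(1/2)^k \ge 1/4$ on $1 \le k < 2$), so the observation holds with $c = 1/576$ as you state.
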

\begin{proof}
For $k \geq 2$ this is an immediate consequence of the monotonicity
of norms, which gives us
\[
1 = \E[|q(x)|^2]^{1/2} \leq \E[|q(x)|^k]^{1/k} \quad \text{for~}k\geq 2.
\]
For $k=1$ the desired statement is an easy consequence of
Theorem~4.1 of \cite{AH11}.
\end{proof}

Given an input degree-2 {multilinear}
polynomial $p(x_1,\dots,x_n)$, we may divide
by $\|p\|_2$ to obtain a scaled version $q=p/\|p\|_2 $ which has
$\|q\|_2=1.$  Observation~\ref{obs:moments-big} implies that
an additive $\pm \eps$-approximation to
$\E[|q(x)|^k]$ is also a multiplicative $(1\pm O(\eps))$-approximation
to $\E[|q(x)|^k]$.
Multiplying the approximation by $\|p\|_2^k$ we obtain a multiplicative
$(1 \pm O(\eps))$-approximation to $\E[|p(x)|^k]$.  Thus to
prove Theorem~\ref{thm:compute-kth-moment} it suffices to give a
deterministic algorithm which finds an additive $\pm \eps$-approximation
to $\E[|q(x)|^k]$ for degree-2 polynomials with $\|q\|_2=1$.
We do this by proving Theorem~\ref{thm:deg2-computemoments} below:

\begin{theorem} \label{thm:deg2-computemoments}
Let $p(x)$ be an input multilinear degree-2 PTF with {$b$-bit}
integer coefficients.
Let $q(x) = p(x)/\|p\|_2$ so $\|q\|=1.$

There is an algorithm $A_{\mathrm{moment}}$ that, on input
$k \in \Z^+$, $p$, and $\eps > 0$, runs in time
{$\poly \left( n,b,2^{\tilde{O}((k \log k \log(1/\eps))^{9k}/\eps^9)}
\right)$}
and outputs a value $\tilde{\mu}_k$ such that
\[
\left|\tilde{\mu}_k - \E_{x \in \{-1,1\}}[|q(x)|^k] \right|
\leq \eps.
\]
\end{theorem}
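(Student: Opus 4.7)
By Observation~\ref{obs:moments-big} and the rescaling argument preceding the theorem statement, it suffices to deterministically approximate $\E_{x \in \{-1,1\}^n}[|q(x)|^k]$ to additive error $\pm \eps$, where $q$ is a degree-$2$ multilinear polynomial with $\|q\|_2 = 1$. The plan is to use the layer-cake identity
\[
\E[|q(x)|^k] \;=\; \int_0^\infty k\, t^{k-1}\, \Pr_{x \in \{-1,1\}^n}[|q(x)| \geq t]\, dt,
\]
which expresses the $k$-th absolute moment as a weighted integral of the tail probabilities of $q$. Since $\Pr[|q(x)| \geq t] = \Pr[q(x) - t \geq 0] + \Pr[-q(x) - t \geq 0]$ is a sum of acceptance probabilities of two degree-$2$ PTFs, each probability can be approximated deterministically using Theorem~\ref{thm:count-boolean-ptf}.

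The plan has three steps. \emph{First}, truncate the integral at $T = O(k \log(k/\eps))$. The degree-$2$ tail bound (obtained by applying Markov to the $r$-th moment of $q$ together with $(2,r)$-hypercontractivity $\|q\|_r \leq (r-1)\|q\|_2$, cf.\ Theorem~\ref{thm:dcb}) yields $\Pr[|q(x)| \geq t] \leq C_1 e^{-t/C_2}$ for absolute constants $C_1,C_2$, whereupon $\int_T^\infty k t^{k-1} \Pr[|q(x)|\geq t]\,dt$ becomes an incomplete-Gamma quantity that is easily seen to be at most $\eps/3$ for this choice of $T$. \emph{Second}, partition $[0,T]$ into $N$ equal subintervals of length $h = T/N$ with endpoints $0 = t_0 < t_1 < \cdots < t_N = T$, and write $\rho_i = \Pr[|q(x)| \geq t_i]$. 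Replacing the truncated integral by the sum $\sum_{i=0}^{N-1} \rho_i (t_{i+1}^k - t_i^k)$ introduces error at most $\sum_i (\rho_i - \rho_{i+1})(t_{i+1}^k - t_i^k) \leq kT^{k-1}h$ by monotonicity of the $\rho_i$ and telescoping, so setting $h = \eps/(3kT^{k-1})$, i.e., $N = \Theta(kT^k/\eps)$, keeps this contribution below $\eps/3$. \emph{Third}, invoke Theorem~\ref{thm:count-boolean-ptf} twice per grid point (once for $q(x)\geq t_i$, once for $-q(x)\geq t_i$) with accuracy $\eps'' := \eps/(3T^k)$ to obtain $\tilde\rho_i$ with $|\tilde\rho_i - \rho_i| \leq \eps''$; the induced error in the Riemann sum is $\eps'' \sum_i (t_{i+1}^k - t_i^k) = \eps'' T^k \leq \eps/3$. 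The three sources of error sum to at most $\eps$. Substituting $T = O(k \log(k/\eps))$ and $1/\eps'' = (O(k\log k \log(1/\eps)))^k/\eps$ into the $\poly(n,b)\cdot 2^{\tilde O(1/(\eps'')^9)}$ runtime of Theorem~\ref{thm:count-boolean-ptf}, multiplied by the $O(N)$ calls, yields the claimed total bound $\poly(n,b,2^{\tilde O((k\log k\log(1/\eps))^{9k}/\eps^9)})$.

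The one technical wrinkle I expect to grind through is that $q = p/\|p\|_2$ typically has irrational coefficients, whereas Theorem~\ref{thm:count-boolean-ptf} demands an integer-coefficient PTF. I plan to handle this in standard fashion by rewriting each query as $\Pr[p(x) \geq t_i \|p\|_2]$, taking each $t_i$ to be a dyadic rational, and approximating $\|p\|_2$ from its exactly computable square $\|p\|_2^2$ to inverse-polynomial-in-$(n,b,1/\eps)$ precision. The error introduced by perturbing the threshold by such a small amount is bounded using Carbery--Wright anti-concentration (Theorem~\ref{thm:cw}), and once the threshold is rational, clearing denominators produces an integer-coefficient degree-$2$ polynomial of $\poly(n,b,\log(1/\eps''))$ bit length on which Theorem~\ref{thm:count-boolean-ptf} can be invoked directly without blowing up the runtime.
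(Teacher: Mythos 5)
Your overall strategy is essentially the paper's: the paper estimates the histogram of $q(x)$ (the masses $q_{j,\Delta}=\Pr[q(x)\in[(j-1)\Delta,j\Delta]]$ via Claim~\ref{claim:interval}) and sums $|j\Delta|^k\,\tilde q_{j,\Delta}$, truncating the range using the subexponential tail bound (Theorem~\ref{thm:gaussian-tail-bound}); you instead integrate the tail CDF via the layer-cake formula and take a Riemann sum of $\rho_i(t_{i+1}^k-t_i^k)$. These are the same discretization-plus-counting-oracle argument with the same truncation point, the same per-call accuracy of order $\eps/T^k$, and the same number of calls up to polynomial factors, so the error budget and the runtime bookkeeping match the paper's.

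The one step that would fail as written is your handling of the irrational threshold $t_i\|p\|_2$. Theorem~\ref{thm:cw} (Carbery--Wright) is a statement about Gaussian inputs; over $\{-1,1\}^n$ there is no anti-concentration at small scales for degree-$2$ polynomials: a unit-norm polynomial such as $q(x)=x_1x_2$ has point masses of weight $1/2$, so if $t_i\|p\|_2$ happens to lie within your approximation error of such an atom, perturbing the threshold can change $\rho_i$ by $\Omega(1)\gg\eps''$, no matter how fine the (inverse-polynomial) precision. The repair is easy and uses integrality rather than anti-concentration: since $p$ has integer coefficients, $p(x)\in\Z$ for every $x\in\{-1,1\}^n$, so $\Pr[p(x)\ge t_i\|p\|_2]=\Pr[p(x)\ge\lceil t_i\|p\|_2\rceil]$, and $\lceil t_i\|p\|_2\rceil$ is exactly computable in polynomial time by comparing the rational number $t_i^2\|p\|_2^2$ (note $\|p\|_2^2$ is an integer) with squares of candidate integers; one then feeds the integer-coefficient polynomial $p(x)-\lceil t_i\|p\|_2\rceil$ to Theorem~\ref{thm:count-boolean-ptf} with no perturbation error at all. (Alternatively one can argue that the distance from $t_i\|p\|_2$ to the nearest integer is at least $2^{-\poly}$ and take that much precision, but inverse-polynomial precision alone, as you propose, is not enough.) For what it is worth, the paper itself elides this point, asserting Claim~\ref{claim:interval} follows ``immediately'' from Theorem~\ref{thm:count-boolean-ptf}; your instinct to reduce to an integer threshold is the right one, only the anti-concentration justification needs to be replaced by the integrality argument.
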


\subsection{Proof of Theorem~\ref{thm:deg2-computemoments}.}
The idea behind the theorem is very simple.  Since we can estimate
$\Pr_{x \sim \{-1,1\}^n}[q(x) \geq t]$ for any $t$ of our choosing, we can get
a detailed picture of where the probability mass of the random variable
$q(x)$ lies (for $x$ uniform over $\{-1,1\}^n$), and with this detailed
picture it is straightforward to estimate the $k$-th moment.

We now enter into the details.
For $j \in \Z$ let $q_{j,\Delta}$ denote
$\Pr_{x \in \{-1,1\}^n}[q(x) \in [(j-1)\Delta,j\Delta]]$.

We start with the following claim which follows immediately from
Theorem~\ref{thm:count-boolean-ptf}:
\begin{claim} \label{claim:interval}
Fix any $0<\Delta<1$ and any degree-2 multilinear
polynomial ${p}$ {with $b$-bit integer coefficients.
As above let $q(x)=p(x)/\|p\|_2.$}
There is a
{$\poly(n,b,2^{\tilde{O}(1/\eps^{{9}})})$}-time
algorithm which, given as input
${p}$, $0<\eps<1/2$, $\Delta \in \R$ and $j \in \Z$,
outputs a value $\tilde{{q}}_{j,\Delta}$ such that
\[
\tilde{{q}}_{j,\Delta} \in
[{q}_{j,\Delta} - \eps, {q}_{j,\Delta}+\eps].
\]
\end{claim}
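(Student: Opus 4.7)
The plan is to express $q_{j,\Delta}$ as the difference of two degree-$2$ PTF acceptance probabilities and then to invoke Theorem~\ref{thm:count-boolean-ptf} on each. I would write
\[
q_{j,\Delta} = \Pr_{x \sim \{-1,1\}^n}[q(x) \geq (j-1)\Delta] \; - \; \Pr_{x \sim \{-1,1\}^n}[q(x) > j\Delta],
\]
and observe that since $p(x) = \|p\|_2 \cdot q(x)$, each event is an inequality of the form $p(x) \geq \theta$ or $p(x) > \theta$ for some real threshold $\theta$.

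The main obstacle is that Theorem~\ref{thm:count-boolean-ptf} requires a polynomial with integer coefficients, whereas the relevant thresholds involve $\|p\|_2$, which is generally irrational. The key simplification is that $p$ is multilinear with integer coefficients and $x_i \in \{-1,1\}$, so $p(x)$ takes only integer values; hence $p(x) \geq \theta \iff p(x) \geq \lceil \theta \rceil$ and $p(x) > \theta \iff p(x) \geq \lfloor \theta \rfloor + 1$. It therefore suffices to compute the integer thresholds $N_1 := \lceil (j-1)\Delta \|p\|_2 \rceil$ and $N_2 := \lfloor j\Delta \|p\|_2 \rfloor + 1$. These can be obtained exactly in $\poly(n,b,\log(1/\Delta),\log|j|)$ time by binary search, using that $\|p\|_2^2 = \sum_S c_S^2$ is an explicitly computable nonnegative integer and that comparing an integer $N$ with $t\|p\|_2$ (for rational $t$) reduces to comparing the integers $N^2$ and $t^2\|p\|_2^2$ after clearing denominators and handling signs.

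With $N_1, N_2$ in hand, I would define the degree-$2$ multilinear polynomials $p_1(x) := p(x) - N_1$ and $p_2(x) := p(x) - N_2$, both of which have integer coefficients of bit-complexity $\poly(n,b,\log(1/\Delta),\log|j|)$. By construction, $q(x) \geq (j-1)\Delta \iff p_1(x) \geq 0$ and $q(x) > j\Delta \iff p_2(x) \geq 0$. Invoking Theorem~\ref{thm:count-boolean-ptf} on each of $p_1, p_2$ with error parameter $\eps/2$ yields estimates $v_1, v_2$ satisfying $|v_i - \Pr_x[p_i(x) \geq 0]| \leq \eps/2$; the algorithm returns $\tilde{q}_{j,\Delta} := v_1 - v_2$, which satisfies $|\tilde{q}_{j,\Delta} - q_{j,\Delta}| \leq \eps$ by the triangle inequality. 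The running time is dominated by these two invocations, giving the claimed $\poly(n,b,2^{\tilde{O}(1/\eps^9)})$ bound.
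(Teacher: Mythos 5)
Your proposal is correct and follows essentially the same route the paper intends: the paper states the claim as an immediate consequence of Theorem~\ref{thm:count-boolean-ptf}, i.e., writing $q_{j,\Delta}$ as a difference of two threshold probabilities for $p$ and invoking the Boolean counting algorithm twice with error $\eps/2$. Your additional care in converting the (irrational) thresholds $(j-1)\Delta\|p\|_2$ and $j\Delta\|p\|_2$ to integer thresholds—using that $p$ is integer-valued on $\{-1,1\}^n$ and that $\|p\|_2^2$ is an explicitly computable integer—is exactly the detail the paper leaves implicit, and it is handled correctly.
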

\ignore{
(Note that we can assume
the bit complexity of $\eps$ is dominated by the value of $1/\eps$;
if $\eps$ is say in $[0.1,0.2]$ but has some \anote{I removed the word annoying :-)} huge bit complexity we
could just round it down by at most a factor of 2 and get small bit complexity.
The $\log(j)$ runtime dependence is just for computing the value $j \Delta$.)
\rnote{Say the things in this paragraph better.}
}

We recall the following tail bound for polynomials in $\{-1,1\}$
random variables {which follows easily from
Theorem~\ref{thm:dcb}:}

\begin{theorem} \label{thm:gaussian-tail-bound}
Let $q$ be a degree-2 polynomial {with $\|q\|_2=1.$  }
For any $z \geq 0$ we have
\[
\Pr_{x \in \{-1,1\}^n}[|{q}(x)| \geq z ]
\leq O(1) \cdot \exp(-\Omega(z)).
\]
\end{theorem}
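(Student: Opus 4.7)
\medskip

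\noindent\textbf{Proof proposal for Theorem~\ref{thm:gaussian-tail-bound}.}
The plan is to deduce the stated tail bound essentially immediately from the degree-$2$ Chernoff bound (Theorem~\ref{thm:dcb}), after first observing that the hypothesis $\|q\|_2=1$ forces both $|\E[q]|$ and $\sqrt{\Var(q)}$ to be at most $1$. The only mild twist is that Theorem~\ref{thm:dcb} gives concentration of $q$ around its \emph{mean}, while the statement we want is about the magnitude of $q$ itself; handling this just requires absorbing the (bounded) mean into the deviation.

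First I would set up the basic numerics. Since $\|q\|_2^2 = \E_{x\sim\{-1,1\}^n}[q(x)^2] = 1$, Jensen's inequality gives $|\E[q]|\le\E[q^2]^{1/2}=1$, and moreover $\Var(q) = \E[q^2]-(\E[q])^2 \le 1$. Next, I would dispose of the small-$z$ regime: for $z \le z_0$ (some absolute constant, say $z_0=4$), the RHS $O(1)\cdot\exp(-\Omega(z))$ is $\Theta(1)$, so the claim holds trivially by choosing the hidden constant large enough.

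For the main case $z > z_0$, I would observe that
\[
|q(x)|\ge z \ \Longrightarrow\ |q(x)-\E[q]|\ \ge\ z-|\E[q]|\ \ge\ z-1\ \ge\ z/2.
\]
Now apply Theorem~\ref{thm:dcb} (the degree-$2$ Chernoff bound) to the degree-$2$ polynomial $q$, with deviation parameter $t := (z/2)/\sqrt{\Var(q)} \ge z/2$. This yields
\[
\Pr_{x\in\{-1,1\}^n}\!\left[|q(x)-\E[q]|\ge z/2\right]
\ =\ \Pr\!\left[|q(x)-\E[q]|\ge t\sqrt{\Var(q)}\right]
\ \le\ \exp(-\Omega(t))\ \le\ \exp(-\Omega(z)),
\]
since for degree $d=2$ the Chernoff-style bound decays like $\exp(-\Omega(t^{2/d}))=\exp(-\Omega(t))$. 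Combining the two regimes gives the claimed bound.

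I do not expect any serious obstacle here: the hypothesis $\|q\|_2=1$ is tailor-made so that the scale in Theorem~\ref{thm:dcb} (namely $\sqrt{\Var(q)}$) is $\le 1$, which is exactly what converts the ``number of standard deviations'' bound into the clean $\exp(-\Omega(z))$ form in the statement. The only bookkeeping points to get right are (i) the trivial small-$z$ regime which pays for the $O(1)$ prefactor, and (ii) the step where one passes from $|q(x)|\ge z$ to $|q(x)-\E[q]|\ge z/2$, which just uses $|\E[q]|\le 1$.
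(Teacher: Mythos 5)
Your proposal is correct and is essentially the paper's own argument: the paper simply states that the bound "follows easily from Theorem~\ref{thm:dcb}," and your write-up supplies exactly the routine details (using $\|q\|_2=1$ to bound $|\E[q]|$ and $\Var(q)$ by $1$, absorbing the mean, and absorbing the small-$z$ regime into the $O(1)$ constant). The only nitpick is that Theorem~\ref{thm:dcb} requires $t>e^{d}=e^{2}$, so your cutoff $z_0=4$ should be raised (e.g.\ to $z_0=16$) — a harmless adjustment since the small-$z$ regime is trivial anyway.
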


Fix $\Delta>0$.  Let $\gamma_q(t)$ denote the probability mass
function of $q(x)$ when $x$ is distributed uniformly over
$\{-1,1\}^n.$  We may write the $k$-th absolute moment
as
\begin{equation}
\E_{x \in \{-1,1\}^n}[|q(x)|^k]
=
\int_{-\infty}^\infty |t|^k \gamma_q(t) dt.
\end{equation}

For $R\geq 1$ we have
\[
\int_{R-1}^{R} |t|^k \gamma_q(t) dt
=
\int_{(R-1}^{R} t^k \gamma_q(t) dt
\leq
R^k \Pr_{x \in \{-1,1\}^n}[q(x)\geq R-1] \leq O(R^k e^{-\Omega(R)}),
\]
so for integer $M \geq 1$ we have
\[
\int_{t=M}^{\infty} |t|^k \gamma_q(t) dt \leq
\sum_{R=M}^{\infty} O(R^k e^{-\Omega(R)})
\]
which is at most $\eps/8$ for $M = O(k \log k \log {\frac 1 \eps})$
(fix $M$ to this value).
Doing similar analysis for $R \leq -1$ gives that

\[
\E_{x \in \{-1,1\}^n}[|q(x)|^k]
= \int_{-M}^{M}
|t|^k \gamma_q(t) dt \pm \eps' \quad
\text{where~}\eps'<\eps/4.
\]
So to approximate $\E_{x \sim \{-1,1\}^n}[|q(x)|^k]$ to an additive $\pm \eps,$
it suffices to approximate
$
\int_{-M}^M
|t|^k \gamma_q(t) dt$ to an additive $\pm 3\eps/4.$

Fix $\Delta = (\eps/4)^{1/k} \tau / k,$ and consider the
interval $[(j-1) \Delta, j \Delta]$
where $j\geq k/\tau$ for some $0<\tau<1.$
Recalling that
$q_{j,\Delta} =
\Pr_{x \in \{-1,1\}^n}[q(x) \in [(j-1)\Delta,j\Delta]]$, we have
\[
\int_{(j-1)\Delta}^{j \Delta} |t|^k \gamma_q(t) dt \in
[((j-1)\Delta)^k q_{j,\Delta},(j\Delta)^k q_{j,\Delta}] =
q_{j,\Delta} \cdot \Delta^k \cdot [(j-1)^k,j^k].
\]
Since
\[
j^k - (j-1)^k = j^k \left(1 - \left(1 - {\frac 1 j} \right)^k \right)
\leq j^k \left(1 - \left(1 - {\frac \tau k}\right)^k \right) \leq \tau  j^k,
\]
we have
\begin{equation} \label{eq:piece}
\int_{(j-1)\Delta}^{j \Delta} |t|^k \gamma_q(t) dt \in
[1-\tau,1]\cdot |j \Delta|^k q_{j,\Delta}.
\end{equation}
A similar analysis gives that we likewise have
\begin{equation} \label{eq:piece2}
\int_{-j\Delta}^{(-j+1) \Delta} |t|^k \gamma_q(t) dt \in
[1-\tau,1]\cdot |j \Delta|^kq_{j,\Delta}.
\end{equation}
Finally, we observe that
\begin{equation} \label{eq:nearzero}
\int_{-(k/\tau - 1)\Delta}^{(k/\tau - 1) \Delta} |t|^k \gamma_q(t) dt
< ((k/\tau)\Delta)^k = \eps/4,
\end{equation}
where we used $\Delta = (\eps/4)^{1/k} \tau / k$ for the final step.

With the above ingredients in hand it is clear how we shall deterministically
estimate the $k$-th moment $\E_{x \in \{-1,1\}^n}[|q(x)|^k]$.
Given as input an integer
$k \geq 1$, a real value $0 < \eps < 1$, and a degree-2 multilinear
polynomial $q$ with $\|q\|=1$, the algorithm
for estimating this moment works as follows:

\begin{enumerate}

\item Set $M=O(k \log k \log {\frac 1 \eps})$, set
$\tau = \eps/(4M^k)$,
and set
$\Delta = 1/2^r$ where $r$ is the largest value such that
$1/2^r \leq (\eps/4)^{1/k} \tau / k.$
\ignore{\rnote{this is just so
$\Delta$ has small bit complexity; make this clear in some not too
obtrusive way}
}

\item  For $j=(k/\tau - 1)$ to $M/\Delta$:  compute
a {$\pm \tau/4$}-accurate
additive estimate $\tilde{q}_{j,\Delta}$
of $q_{j,\Delta}$ (using Claim~\ref{claim:interval})
and sum the values $|j \Delta|^k \tilde{q}_{j,\Delta}$ to obtain $E_+$.

Similarly, for $j=-(k/\tau - 1)$ to {$-M/\Delta$}:  compute a
{$\pm \tau/4$}-accurate
additive estimate $\tilde{q}_{j,\Delta}$
of $q_{j,\Delta}$ (using Claim~\ref{claim:interval})
and sum the values $|j \Delta|^k \tilde{q}_{j,\Delta}$ to obtain $E_-$.

\item Output $E_+ + E_-.$

\end{enumerate}

It is easy to see that the above algorithm runs in time
{
\[
(M/\Delta) \cdot \poly(n,b, 2^{\tilde{O}(1/\tau^{9})})
=
\poly \left( n,b,2^{\tilde{O}((k \log k \log(1/\eps))^{9k}/\eps^9)}\right)
.
\]
}
To prove correctness recall that we need to show that $E_+ + E_-$
is within $\pm 3\eps/4$ of
$
\int_{-M}^M
|t|^k \gamma_q(t) dt.$
Recalling (\ref{eq:nearzero}), it suffices to show that
$E_+$ and $E_-$ are each within $\pm \eps/4$ of
$
\int_{-M}^{-(k/\tau - 1)\Delta}
|t|^k \gamma_q(t) dt$ and $\int_{(k/\tau - 1)\Delta}^M
|t|^k \gamma_q(t) dt$ respectively.
Recalling our choice of $\tau$, it
follows easily from (\ref{eq:piece}) that \[
\left| E_+ - \int_{-M}^{-(k/\tau - 1)\Delta} |t|^k \gamma_q(t) dt
\right| \leq \tau M^k= \eps/4.
\]
An identical argument works for $E_-$ and the other integral, and we are done
with the proof. \qed

\bibliography{allrefs}
\bibliographystyle{alpha}

\appendix

\section{Definitions and Background} \label{sec:background}
In this section we record the preliminaries we will need.

\subsection{Basic Linear Algebra Facts.}

In this section we record some basic facts from linear algebra
that will be crucial for our proofs.

\begin{definition}(orthogonal matrix)
A matrix $Q \in \R^{n \times n}$ is said to be {\em orthogonal}
if both its columns and its rows comprise a set of
$n$ orthonormal unit vectors.  Equivalently, a matrix $Q \in \R^{n \times n}$
is orthogonal if its transpose is equal to its inverse, i.e., $Q^{T} = Q^{-1}$.
\end{definition}

\begin{theorem}(Spectral Theorem)
If $A \in \R^{n \times n}$ is symmetric, there exists an orthogonal $Q \in \R^{n \times n}$ and a diagonal matrix $\Lambda \in \R^{n \times n}$
such that $A = Q\Lambda Q^T$. The diagonal entries of $\Lambda$ are the eigenvalues of $A$ and the columns of $Q$
are the corresponding eigenvectors. That is, we can write $\Lambda = \mathrm{diag}(\lambda_1, \ldots, \lambda_n)$, $Q = [u^{(1)} \mid \ldots \mid u^{(n)}]$, with
$u^{(i)} \cdot u^{(j)} = \delta_{ij}$,
and $Au^{(i)} = \lambda_i u^{(i)}$ for all $i \in [n]$. The expression $A = Q\Lambda Q^T$ of a symmetric matrix in terms
of its eigenvalues and eigenvectors is referred to as {\em the spectral decomposition of $A$}.
\end{theorem}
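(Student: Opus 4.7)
The plan is to prove the Spectral Theorem by induction on $n$, using the fact that a real symmetric matrix admits at least one real eigenvalue and that its orthogonal complement structure is preserved under the action of $A$. The base case $n=1$ is immediate: any $1\times 1$ matrix is itself diagonal and the $1\times 1$ identity is orthogonal. For the inductive step, I would first establish existence of a real eigenvalue-eigenvector pair, then use symmetry of $A$ to reduce to the $(n-1) \times (n-1)$ case.

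For existence of a real eigenvalue, I see two natural routes. The first (analytic) is to consider the function $f(x) = x^T A x$ restricted to the unit sphere $S^{n-1} = \{x \in \R^n : \|x\|_2 = 1\}$. Since $S^{n-1}$ is compact and $f$ is continuous, $f$ attains a maximum at some $u^{(1)} \in S^{n-1}$, and a Lagrange multiplier computation (or a direct perturbation argument) shows that $A u^{(1)} = \lambda_1 u^{(1)}$ for some real $\lambda_1 \in \R$. The second (algebraic) route is to use the fundamental theorem of algebra on the characteristic polynomial $\det(\lambda I - A)$ to produce a possibly complex eigenvalue $\lambda$ with eigenvector $v \in \C^n$, and then argue that $\lambda \bar{v}^T v = \bar{v}^T A v = \overline{(Av)^T} v = \overline{\lambda v}^T v = \bar{\lambda} \bar{v}^T v$ (using $A = A^T$ and entries of $A$ real), forcing $\lambda = \bar\lambda \in \R$. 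I would prefer the optimization approach since it yields the eigenvector constructively.

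Having $u^{(1)}$ with $\|u^{(1)}\|_2 = 1$ and $A u^{(1)} = \lambda_1 u^{(1)}$, the key step is to show that the orthogonal complement $W = \{w \in \R^n : w \cdot u^{(1)} = 0\}$ is $A$-invariant. This uses symmetry crucially: for $w \in W$, $(Aw) \cdot u^{(1)} = w^T A u^{(1)} = \lambda_1 (w \cdot u^{(1)}) = 0$, so $Aw \in W$. Fix an orthonormal basis $v_2,\ldots,v_n$ of $W$ and let $V = [u^{(1)} \mid v_2 \mid \cdots \mid v_n]$, which is orthogonal. Then $V^T A V$ has block form $\mathrm{diag}(\lambda_1, A')$ where $A' \in \R^{(n-1) \times (n-1)}$ is the matrix of $A|_W$ in the basis $v_2,\ldots,v_n$. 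Since $A$ is symmetric so is $V^T A V$, hence $A'$ is symmetric as well.

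By the inductive hypothesis, $A' = Q' \Lambda' Q'^T$ for an orthogonal $Q' \in \R^{(n-1)\times(n-1)}$ and diagonal $\Lambda'$. Setting
\[
\tilde Q = \begin{pmatrix} 1 & 0 \\ 0 & Q' \end{pmatrix}, \qquad \Lambda = \begin{pmatrix} \lambda_1 & 0 \\ 0 & \Lambda' \end{pmatrix},
\]
we have $V^T A V = \tilde Q \Lambda \tilde Q^T$, so $A = (V\tilde Q) \Lambda (V \tilde Q)^T$; taking $Q = V \tilde Q$ (which is orthogonal as a product of orthogonal matrices) finishes the induction. The main obstacle is really only the existence of the first real eigenvector; once that is in hand, the symmetry of $A$ makes the invariance of $W$ and the reduction of dimension essentially automatic. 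Orthonormality of the columns of the final $Q$ follows because eigenvectors for each recursive call are chosen orthonormal within their subspace, and each subspace sits inside the orthogonal complement of all previously chosen eigenvectors.
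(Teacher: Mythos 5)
Your proof is correct: it is the classical induction (``deflation'') argument --- extract one real eigenvalue--eigenvector pair, show the orthogonal complement is $A$-invariant using symmetry, and recurse --- and all the steps you give (the Lagrange/compactness argument for the first eigenvector, the invariance computation $(Aw)\cdot u^{(1)} = w^T A u^{(1)} = 0$, and the block-diagonal assembly $Q = V\tilde Q$) are sound. There is nothing in the paper to compare against: the Spectral Theorem is stated there purely as standard linear-algebra background (in the appendix) with no proof given, so your write-up simply supplies the standard argument the paper takes for granted. Incidentally, your constructive route via maximizing $x^T A x$ on the sphere and then deflating is the exact conceptual template that the paper's \textsf{APPROXIMATE-LARGEST-EIGEN} / \textsf{APPROXIMATE-DECOMPOSE} routines implement approximately and iteratively, so the choice of the optimization route over the characteristic-polynomial route is well matched to the paper's setting.
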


\begin{definition}\label{def:quadratic-form}
Given a degree-$2$ polynomial $p : \mathbb{R}^n \rightarrow \mathbb{R}$
defined as $p(x) = \littlesum_{1 \le i \le j \le n} a_{ij} x_i x_j
+ \littlesum_{1 \le i \le n} b_i x_i + C$, we define the (symmetric)
matrix $A$ corresponding to its quadratic part as : $A_{ij} = a_{ij}
(1/2 + \delta_{ij}/2)$.
With this definition, it is easy to see that
$x^T \cdot A \cdot x = \littlesum_{1 \le i \le j \le n} a_{ij} x_i x_j$
for the vector $x = (x_1, \ldots, x_n)$.
\end{definition}

{Throughout the paper we adopt the convention that the eigenvalues
$\lambda_1,\dots,\lambda_n$ of a real symmetric matrix $A$
satisfy $|\lambda_1| \geq \cdots \geq |\lambda_n|$.
We sometimes write $\lambda_{\max}(A)$ to denote $\lambda_1$,
and we sometimes write $\lambda_i(p)$ to refer to the $i$-th eigenvalue
of the matrix $A$ defined based on $p$ as described above.}

\ignore{
\rnote{I ignore'd out the following stuff which seemed redundant
(Fact~\ref{fact:normal-lt} is subsumed by Claim~\ref{clm:poly-equivalence}
and the other things are also covered elsewhere now.  If you agree
please delete this whole comment):

{
\noindent We often represent a homogenous quadratic form $p(x) = \littlesum_{i \le j} a_{i,j}x_ix_j$ by its associated symmetric matrix $A$, where
\begin{displaymath}
A_{i,j} = \left\{ \begin{array}{ll}
a_{i,j}/2, & \textrm{if $i<j$}\\
a_{j,i}/2, & \textrm{if $i>j$}\\
a_{i.j}, & \textrm{if $i=j$}
\end{array} \right.
\end{displaymath}
so that $$p(x) = x^T A x.$$ Let $A = Q \Lambda Q^T$ be the spectral decomposition of $A$.
Then, with the change of variables $y = Q^T x$, we can equivalently write $p$ as
\begin{equation} \label{eqn:cov}
p(x) = y^T \Lambda y = \littlesum_{i=1}^n \lambda_i y_i^2 =  \littlesum_{i=1}^n \lambda_i (u^{(i)} \cdot x)^2.
\end{equation}
We recall the following basic probabilistic fact about Gaussians:
\begin{fact} \label{fact:normal-lt}
Let $x \sim \mathcal{N}(0,1)^n$ and $B \in \R^{n \times n}$ be an
orthogonal matrix. Then $y = Bx$ satisfies $y \sim \mathcal{N}(0,1)^n$.
\end{fact}

}
}
}

\begin{definition}
For a real symmetric matrix $A$, with (real) eigenvalues $\lambda_1, \ldots, \lambda_n$ such that
$|\lambda_1| \ge |\lambda_2| \ge \ldots \ge |\lambda_n|$ we define:
\begin{itemize}
\item The Frobenius norm of $A$ is $\|A\|_F \eqdef \sqrt{\littlesum_{i,j} A^2_{i,j}}.$
\ignore{
\item The sup-norm\rnote{Is this standard terminology?} of $A$ is $\|A\|_{\infty} \eqdef  |\lambda_1|$, i.e., the largest magnitude of an eigenvalue of $A$.
}
\item The trace of $A$ is $\tr(A) \eqdef \littlesum_{i=1}^n A_{ii}$. We have that $\tr(A) = \littlesum_{i=1}^n \lambda_i$.
\end{itemize}
\end{definition}


\medskip

\noindent We recall the following fact:

\begin{fact} \label{fact:lin-alg}
Let $A \in \R^{n \times n}$ be symmetric with eigenvalues $\lambda_1, \ldots, \lambda_n$.
The eigenvalues of the matrix $A^k$ are $\lambda_1^k, \ldots, \lambda_n^k$. Since
$\|A\|_F  = \sqrt{\tr(A^2)}$,  $\| A\|_F = \sqrt{\littlesum_{i=1}^n{\lambda^2_i}}.$
\end{fact}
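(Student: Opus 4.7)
The plan is to invoke the Spectral Theorem (stated earlier in this appendix) and then use the elementary trace identities. First I would diagonalize: write $A = Q \Lambda Q^T$ where $Q$ is orthogonal and $\Lambda = \mathrm{diag}(\lambda_1,\ldots,\lambda_n)$. Since $Q^T Q = I$, an easy induction on $k$ gives $A^k = Q \Lambda^k Q^T$; because $\Lambda^k = \mathrm{diag}(\lambda_1^k,\ldots,\lambda_n^k)$ is diagonal, this exhibits $\lambda_1^k,\ldots,\lambda_n^k$ as the eigenvalues of $A^k$ (with $Q$'s columns as the corresponding eigenvectors).

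Next I would establish the Frobenius-norm identity. By definition $\|A\|_F^2 = \sum_{i,j} A_{ij}^2$, and one recognizes this sum as $\sum_i (A^T A)_{ii} = \tr(A^T A)$. Since $A$ is symmetric, $A^T A = A^2$, hence $\|A\|_F = \sqrt{\tr(A^2)}$.

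To convert this to a sum of squared eigenvalues, I would use the fact that trace is invariant under cyclic permutations (and hence under conjugation by an orthogonal matrix): $\tr(A^2) = \tr(Q \Lambda^2 Q^T) = \tr(\Lambda^2 Q^T Q) = \tr(\Lambda^2) = \sum_{i=1}^n \lambda_i^2$. Combining the two gives $\|A\|_F = \sqrt{\sum_{i=1}^n \lambda_i^2}$, as claimed.

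There is essentially no obstacle here since everything follows immediately from the Spectral Theorem and the cyclic property of trace; the only thing to be careful about is citing symmetry of $A$ when replacing $A^T A$ with $A^2$ in the Frobenius-norm computation.
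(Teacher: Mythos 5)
Your proof is correct: diagonalizing via the Spectral Theorem, using $A^k = Q\Lambda^k Q^T$, and invoking the cyclic invariance of the trace together with $\|A\|_F^2 = \tr(A^TA) = \tr(A^2)$ for symmetric $A$ is exactly the standard argument. The paper states Fact~\ref{fact:lin-alg} as a recalled background fact without proof, so your write-up simply supplies the (correct) routine justification it implicitly relies on.
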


\subsection{Basic Probabilistic Facts.}

{
Given an $r$-element multiset $S=\{s_1,\dots,s_r\}$ we write ${\cal D}_S$
to denote the distribution which is uniform over the elements of $S$
(so if an element $v$ occurs $j$ times in $S$ we have $\Pr_{x \sim {\cal D}_S}
[x = v] = j/r.$).
}

\begin{fact}\label{fac:variance-difference}
Let $P$ and $Q$ be real valued random variables such that $\Var(P)  = \alpha$ and $\Var(Q) = \eta^2 \alpha$.
Then {$(1+2\eta + \eta^2 ) \alpha \ge \Var(P-Q) \ge (1-2\eta + \eta^2)
\alpha$}.
\end{fact}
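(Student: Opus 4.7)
The plan is to prove this via the standard variance decomposition together with the Cauchy--Schwarz inequality applied to the covariance of $P$ and $Q$. Specifically, I would start by writing
\[
\Var(P-Q) = \Var(P) + \Var(Q) - 2\,\Cov(P,Q) = \alpha + \eta^2 \alpha - 2\,\Cov(P,Q),
\]
so the entire question reduces to bounding $|\Cov(P,Q)|$.

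Next, I would invoke the Cauchy--Schwarz inequality in the form $|\Cov(P,Q)| \le \sqrt{\Var(P)\Var(Q)}$. Plugging in $\Var(P)=\alpha$ and $\Var(Q) = \eta^2\alpha$ gives $|\Cov(P,Q)| \le \sqrt{\alpha \cdot \eta^2\alpha} = \eta\,\alpha$ (taking $\eta \ge 0$ without loss of generality, since only $\eta^2$ is determined by $\Var(Q)$).

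Substituting this two-sided bound on $\Cov(P,Q)$ into the expansion of $\Var(P-Q)$ yields
\[
\alpha + \eta^2\alpha - 2\eta\alpha \;\le\; \Var(P-Q) \;\le\; \alpha + \eta^2\alpha + 2\eta\alpha,
\]
which rearranges to exactly the claimed inequality $(1-2\eta+\eta^2)\alpha \le \Var(P-Q) \le (1+2\eta+\eta^2)\alpha$. There is really no obstacle here: the fact is essentially the identity $\Var(P-Q) = \Var(P)+\Var(Q)-2\Cov(P,Q)$ combined with Cauchy--Schwarz, and it can equivalently be stated as $|\sqrt{\Var(P-Q)} - \sqrt{\Var(P)}| \le \sqrt{\Var(Q)}$, i.e., the triangle inequality for the $L^2$ norm on mean-zero random variables applied to $(P-\E[P])$ and $(Q-\E[Q])$.
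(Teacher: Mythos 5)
Your proof is correct and is essentially identical to the paper's: both expand $\Var(P-Q) = \Var(P)+\Var(Q)-2\Cov(P,Q)$ and then bound $|\Cov(P,Q)| \le \sqrt{\Var(P)\Var(Q)} = \eta\alpha$ via Cauchy--Schwarz. Nothing further is needed.
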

\begin{proof}
\begin{eqnarray*}
\Var(P-Q) &=& \mathbf{E} [(P-Q)^2] - (\mathbf{E}[P-Q])^2
= \Var(P) + \Var(Q) - 2 \mathbf{E}[PQ] + 2\mathbf{E}[P] \mathbf{E}[Q]  \\
&=& \Var(P) {+ \Var(Q)} - 2 \Cov(P,Q) \\
&=&  (1 + \eta^2)\alpha - 2 \Cov(P,Q).
\end{eqnarray*}
Now the desired inequalities follow using the
simple inequality $|\Cov(P,Q)| \leq \sqrt{\Var(P)}
\sqrt{\Var(Q)}$ which is a consequence of Cauchy-Schwarz.
\end{proof}

We recall the Berry-Esseen theorem, which states that under suitable
conditions a sum of independent random variables converges (in
Kolmogorov distance) to a normal distribution:

\begin{theorem} \label{thm:be} (Berry-Ess{\'e}en)
Let $\{X_i\}_{i=1}^n$ be a set of independent random variables satisfying
$\E[X_i] = 0$ for all $i \in [n]$, $\sqrt{\littlesum_i \E[X_i^2]} =
\sigma$, and $\littlesum_i \E[|X_i|^3] = \rho_3$.  Let $S = \littlesum_i X_i/\sigma$ and let $F$ denote the cumulative distribution
function (cdf) of $S$. Then
$\sup_x |F(x) - \Phi(x)| \leq \rho_3/\sigma^3$
where $\Phi$ denotes the cdf of the standard gaussian random variable.
\end{theorem}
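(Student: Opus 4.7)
The plan is to prove Theorem~\ref{thm:be} via the classical \emph{characteristic-function} / Esseen smoothing approach, which gives the right dependence $\rho_3/\sigma^3$ in one clean shot. The idea is to reduce a bound on Kolmogorov distance (a statement about CDFs) to a bound on characteristic functions near $0$, and then use the three-moment information on the $X_i$'s to control the latter via Taylor expansion.

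First, I would set up normalization and notation. Replacing $X_i$ by $X_i/\sigma$, we may assume $\sigma = 1$; then $S = \sum_i X_i$, $\E[X_i]=0$, $\sum_i \E[X_i^2]=1$, and the hypothesis becomes $\sum_i \E[|X_i|^3] = \rho_3$. Write $\varphi_i(t) = \E[e^{itX_i}]$ and $\psi(t) = \E[e^{itS}] = \prod_i \varphi_i(t)$, and let $\gamma(t)=e^{-t^2/2}$ be the characteristic function of the standard Gaussian $\Phi$.

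Second, I would invoke Esseen's smoothing lemma: for any CDFs $F, G$ with $G$ having bounded derivative and for every $T>0$,
\[
\sup_x |F(x)-G(x)| \;\le\; \frac{1}{\pi}\int_{-T}^{T} \left|\frac{\hat F(t)-\hat G(t)}{t}\right|\, dt + \frac{C}{T}\,\|G'\|_\infty,
\]
for an absolute constant $C$. This reduces the problem to bounding $|\psi(t)-\gamma(t)|/|t|$ on an interval of length $T=O(1/\rho_3)$. The proof of this smoothing step itself (Fourier inversion applied to a convolution with a nice kernel) is the standard technical heart of Berry--Esseen and would be the main work; I would include it as a black-box lemma to keep the exposition focused.

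Third, I would bound $|\varphi_i(t)-(1-t^2\E[X_i^2]/2)|$ by $|t|^3 \E[|X_i|^3]/6$ using the Taylor remainder of $e^{itX_i}$ to third order (combined with the mean-zero assumption). Taking logarithms and using $|\log(1+z)-z|\le |z|^2$ for $|z|\le 1/2$, together with the standard bookkeeping that uses $\sum \E[X_i^2]=1$ and the fact (which follows from Jensen's inequality) that $\E[X_i^2]^{3/2}\le \E[|X_i|^3]$, I get
\[
|\psi(t)-\gamma(t)| \;\le\; C'\,|t|^3 \rho_3 \,e^{-t^2/4}
\]
valid on $|t|\le c/\rho_3$ for small enough absolute constants $c, C'$. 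Plugging this estimate into Esseen's inequality with $T = c/\rho_3$, the integral is bounded by $O(\rho_3)$ and the boundary term contributes $O(\rho_3/T)=O(\rho_3^2)\le O(\rho_3)$, yielding $\sup_x|F(x)-\Phi(x)|\le C''\rho_3$ (here $\rho_3$ is in the normalized scale, so in the original scale this is $\rho_3/\sigma^3$), completing the proof. The main obstacle is genuinely the smoothing lemma; the characteristic-function estimate is a routine Taylor computation once one is comfortable juggling the $\E[X_i^2]$ vs.\ $\E[|X_i|^3]$ relationship via Jensen.
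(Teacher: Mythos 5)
First, a remark on scope: the paper does not prove Theorem~\ref{thm:be} at all --- it is the classical Berry--Ess{\'e}en inequality, quoted as background in Appendix~A and used as a black box (e.g.\ in the proof of Fact~\ref{fact:cover}) --- so there is no in-paper argument to compare yours against, and your proposal has to stand on its own. Your overall route (Esseen's smoothing inequality plus a characteristic-function comparison on $|t|\le c/\rho_3$) is indeed the standard classical proof strategy and is the right one.

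However, your third step has a genuine gap for the theorem as stated, which concerns \emph{independent but not identically distributed} $X_i$. Taking logarithms of the individual $\varphi_i(t)$ and using $|\log(1+z)-z|\le|z|^2$ requires $|\varphi_i(t)-1|\le 1/2$ for every $i$ throughout $|t|\le c/\rho_3$, and this can fail badly in the non-iid case: after normalizing $\sigma=1$, take $X_1=\pm\sqrt{m}$ with probability $1/2$ each, and spread the remaining variance $1-m$ over $N\to\infty$ tiny symmetric summands. Then $\rho_3\to m^{3/2}$, so the range of integration extends to $|t|\approx c\,m^{-3/2}$, on which $\varphi_1(t)=\cos(\sqrt{m}\,t)$ oscillates, vanishes, and becomes negative; $\log\varphi_1(t)$ is not even defined and $|\varphi_1(t)-1|$ is of order $1$. (Your argument is fine in the iid case, where $\E[X_i^2]=1/n$ and $\rho_3\ge n^{-1/2}$ force $t^2\E[X_i^2]=O(1)$ on the whole range, but the iid case is not what the statement asserts.) The standard repair is to avoid logarithms: reduce to $\rho_3$ at most a small constant (otherwise the bound is trivial since Kolmogorov distance is at most $1$), and compare products term by term via $\bigl|\prod_j u_j-\prod_j v_j\bigr|\le\sum_j|u_j-v_j|\prod_{k<j}|u_k|\prod_{k>j}|v_k|$ with $v_j(t)=e^{-t^2\E[X_j^2]/2}$. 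The damping of the partial products comes from the symmetrization bound $|\varphi_j(t)|\le\exp\bigl(-t^2\E[X_j^2]/2+c'|t|^3\E|X_j|^3\bigr)$, valid for \emph{all} $t$, and the quartic error terms are controlled by $\sum_j(\E[X_j^2])^2\le\rho_3^{4/3}$; with this bookkeeping the Esseen integral over $|t|\le c/\rho_3$ is $O(\rho_3)$ and the argument closes. Finally, note that any such proof yields $C\rho_3/\sigma^3$ for an unspecified absolute constant $C$, whereas the statement as printed carries constant $1$; that the constant can in fact be taken below $1$ is a much more delicate result, and only the up-to-constant version is needed anywhere in the paper.
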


\begin{fact} \label{fact:gr}
Let $\mu_1, \mu_2 \in \R$ and $0< \sigma_1^2 \le \sigma_2^2$. Then,
$$\dtv\left( \mathcal{N}(\mu_1,  \sigma_1^2), \mathcal{N}(\mu_2,  \sigma_2^2 ) \right) \le \frac{1}{2} \left( \frac{|\mu_1 - \mu_2|}{ \sigma_1} +\frac{ \sigma_2^2  -  \sigma_1^2 }{ \sigma^2_1}\right).$$
\end{fact}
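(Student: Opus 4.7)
The plan is to decompose the bound into a ``mean-shift'' part and a ``variance-shift'' part via the triangle inequality for $\dtv$:
$$\dtv\bigl(\mathcal{N}(\mu_1,\sigma_1^2),\mathcal{N}(\mu_2,\sigma_2^2)\bigr) \le \dtv\bigl(\mathcal{N}(\mu_1,\sigma_1^2),\mathcal{N}(\mu_2,\sigma_1^2)\bigr) + \dtv\bigl(\mathcal{N}(\mu_2,\sigma_1^2),\mathcal{N}(\mu_2,\sigma_2^2)\bigr),$$
and to bound the first summand by $|\mu_1-\mu_2|/(2\sigma_1)$ and the second by $(\sigma_2^2-\sigma_1^2)/(2\sigma_1^2)$.

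For the mean-shift part, I would use the translation and scaling invariance of $\dtv$ to reduce to $\dtv(\mathcal{N}(0,1),\mathcal{N}(\delta,1))$ with $\delta = (\mu_2-\mu_1)/\sigma_1$. The likelihood ratio $\phi(x)/\phi(x-\delta) = e^{\delta^2/2-\delta x}$ is monotone, so the optimal test set is the half-line $\{x \le \delta/2\}$; this yields the closed form $2\Phi(|\delta|/2)-1 = \erf(|\delta|/(2\sqrt{2}))$. The elementary inequality $\erf(x) \le 2x/\sqrt{\pi}$ for $x \ge 0$ then gives $\dtv \le |\delta|/\sqrt{2\pi} < |\delta|/2$, as required.

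For the variance-shift part, translation invariance lets me take $\mu_2=0$, and scaling by $1/\sigma_1$ reduces the task to bounding $\dtv(\mathcal{N}(0,1), \mathcal{N}(0,r))$ with $r := \sigma_2^2/\sigma_1^2 \ge 1$. Solving $f_1(x)=f_r(x)$ identifies the optimal test set as $\{|x| \le c\}$ with $c^2 = r\ln r/(r-1)$. Writing $\dtv = F_{\chi}(a) - F_{\chi}(b)$ where $F_{\chi}$ is the CDF of $\chi_1^2$ and $a = c^2 = r\ln r/(r-1)$, $b = c^2/r = \ln r/(r-1)$, I would bound the density $(2\pi s)^{-1/2}e^{-s/2}$ above by its endpoint value at $s = b$, obtaining
$$\dtv \le \frac{a-b}{\sqrt{2\pi b}} = \sqrt{\frac{(r-1)\ln r}{2\pi}}.$$
The inequality $\ln r \le r-1$ then yields $\dtv \le (r-1)/\sqrt{2\pi} < (r-1)/2 = (\sigma_2^2-\sigma_1^2)/(2\sigma_1^2)$, completing the bound on the second term.

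The main subtlety lies in the variance-shift part: a direct application of Pinsker's inequality with the explicit KL formula for two centered Gaussians only produces a $\sqrt{r-1}$ bound, which is not enough. The trick that makes the argument go through is the chi-squared change of variables, which collapses the TV integral to a single interval integral whose length ($\ln r$) and lower endpoint ($\ln r/(r-1)$) combine so that the two factors of $\ln r$ cancel, producing a rate that is \emph{linear} in $r-1$.
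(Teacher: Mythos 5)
Your proof is correct, and in fact the paper never proves this statement at all: Fact~\ref{fact:gr} appears in the background appendix as a stated fact with no proof or citation, so your argument supplies a derivation the paper omits. The decomposition through the intermediate distribution $\mathcal{N}(\mu_2,\sigma_1^2)$ is sound, the mean-shift computation ($2\Phi(|\delta|/2)-1 \le |\delta|/\sqrt{2\pi} < |\delta|/2$) is exact, and the variance-shift argument checks out: the crossing points satisfy $c^2 = r\ln r/(r-1)$, the identity $a-b=\ln r$ together with $b = \ln r/(r-1)$ and the monotonically decreasing $\chi^2_1$ density give $\dtv \le \sqrt{(r-1)\ln r/(2\pi)}$, and $\ln r \le r-1$ finishes with room to spare in the constant. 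One small quibble with your closing remark: Pinsker with the exact KL formula does \emph{not} fail here. For centered Gaussians, $\mathrm{KL}\bigl(\mathcal{N}(0,\sigma_1^2)\,\|\,\mathcal{N}(0,\sigma_2^2)\bigr) = \tfrac{1}{2}\left(\ln r + \tfrac{1}{r} - 1\right)$, which is quadratic in $r-1$ near $r=1$; since $\ln r + 1/r - 1 \le (r-1)^2$ for all $r\ge 1$ (the difference vanishes at $r=1$ and has nonnegative derivative $(r-1)(2-1/r^2)$), Pinsker already yields $\dtv \le \tfrac{1}{2}(r-1)$, and for the mean shift it gives exactly $|\mu_1-\mu_2|/(2\sigma_1)$. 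So a shorter route exists, but this is only a comment on your aside; your chi-squared argument is a perfectly valid and pleasantly explicit alternative.
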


\subsection{Useful Facts about Polynomials.}
We view $\R^n$ as a probability space endowed with the standard $n$-dimensional Gaussian measure.
For a square-integrable function $f:\R^n \to \R$ and $r \ge 1$, we let $\|f\|_r$ denote $(\E_{x \sim \mathcal{N}^n}[ |f(x)^r|])^{1/r}.$
\noindent We will need a concentration bound for low-degree polynomials over independent  Gaussians.

\begin{theorem}[``degree-$d$ Chernoff bound'',  \cite{Janson:97}] \label{thm:dcb}
Let $p: \R^n \to \R$ be a degree-$d$ polynomial. For any
$t > e^d$, we have
\[
\Pr_{x \sim \mathcal{N}(0,1)^n}[
|p(x) - \E[p(x)]| > t  \cdot \sqrt{\Var(p(x))} ] \leq
{d e^{-\Omega(t^{2/d})}}.
\]
{The same bound holds for $x$ drawn uniformly from $\{-1,1\}^n.$}
\end{theorem}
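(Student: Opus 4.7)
The plan is to prove this by the standard route through hypercontractivity: the tail bound is obtained from an $L_q$-bound on the polynomial together with Markov's inequality, and the right choice of $q$ as a function of $t$ converts the $L_q$ estimate into an exponential tail. This is the classical argument going back to Bonami, Beckner and Nelson, and is indeed how Janson presents it; the only nontrivial ingredient is the hypercontractive inequality, which I will invoke as a black box.

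Concretely, the main tool I will use is the following \emph{hypercontractive moment estimate}: for any polynomial $f$ of degree at most $d$ in either $n$ independent $\pm 1$ Rademacher variables or $n$ independent $\mathcal{N}(0,1)$ Gaussian variables, and any $q \ge 2$,
\[
\|f\|_q \;\le\; (q-1)^{d/2}\, \|f\|_2.
\]
(In the Rademacher case this is the $(2,q)$-hypercontractive inequality applied to each homogeneous part of $f$ in the Walsh basis, plus orthogonality across degrees in $L_2$; in the Gaussian case the same estimate holds with identical constants via Nelson's hypercontractivity applied to Hermite expansions. Either case can also be deduced from the other by the invariance principle, but we do not need this.) I would first reduce to the mean-zero case by setting $g := p - \E[p]$, so that $g$ is a degree-$d$ polynomial with $\|g\|_2 = \sqrt{\Var(p)} =: \sigma$, and the hypercontractive bound gives $\|g\|_q \le (q-1)^{d/2}\sigma$ for every $q \ge 2$.

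Next, I apply Markov's inequality to $|g|^q$: for every $q \ge 2$,
\[
\Pr_{x}\!\left[\,|g(x)| > t\sigma\,\right] \;=\; \Pr_x\!\left[\,|g(x)|^q > (t\sigma)^q\,\right] \;\le\; \frac{\|g\|_q^q}{(t\sigma)^q} \;\le\; \left(\frac{(q-1)^{d/2}}{t}\right)^{\!q}.
\]
I then optimize the choice of $q$. Taking a derivative in $q$ of the exponent $q\log\!\bigl((q-1)^{d/2}/t\bigr)$ shows the optimum is near $q - 1 = t^{2/d}/e$. Substituting this choice yields $(q-1)^{d/2}/t = e^{-d/2}$, and so the tail bound becomes
\[
\Pr\!\left[\,|g(x)| > t\sigma\,\right] \;\le\; e^{-dq/2} \;=\; \exp\!\bigl(-\tfrac{d}{2e}\,t^{2/d}\bigr)\cdot e^{-d/2} \;=\; e^{-\Omega(t^{2/d})},
\]
which gives the conclusion (up to the harmless leading factor $d$, which can be absorbed into the $\Omega(\cdot)$ constant or reappears from the triangle inequality across the $d+1$ homogeneous layers of $f$ when invoking hypercontractivity). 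The hypothesis $t > e^d$ precisely ensures that this optimal $q$ satisfies $q \ge 2$, so that hypercontractivity is applicable.

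The main obstacle in a self-contained write-up would be proving the hypercontractive inequality itself, but that is exactly the ingredient Janson supplies; once we are allowed to cite it, the rest of the argument is a one-line Markov bound plus an elementary optimization. Note that the proof works verbatim for both $x \sim \mathcal{N}(0,1)^n$ and $x$ uniform on $\{-1,1\}^n$, since the hypercontractive estimate $\|f\|_q \le (q-1)^{d/2}\|f\|_2$ holds with the same constants in both settings; this is what justifies the final sentence of the theorem.
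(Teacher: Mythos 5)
Your proof is correct, and it is essentially the argument behind the result as cited: the paper does not prove Theorem~\ref{thm:dcb} but quotes it from Janson, where it is obtained exactly by the hypercontractive moment bound $\|g\|_q \le (q-1)^{d/2}\|g\|_2$ followed by Markov's inequality with $q \approx 1+t^{2/d}/e$, just as you do. Your constants check out (the hypothesis $t>e^d$ indeed guarantees $q\ge 2$, with room to spare), and your bound $e^{-d/2}\exp(-\tfrac{d}{2e}t^{2/d})$ is even slightly stronger than the stated $d\,e^{-\Omega(t^{2/d})}$, so nothing further is needed.
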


\noindent We will also use the following anti-concentration bound for
degree-$d$ polynomials over Gaussians:

\begin{theorem}[\cite{CW:01}]
\label{thm:cw} Let $p: \R^n \to \R$ be a degree-$d$ polynomial
that is not identically 0.  Then for all $\eps>0$ and all
$\theta \in \R$, we have
\[
\Pr_{x \sim \mathcal{N}(0,1)^n}\left[|p(x) - \theta| < \eps \sqrt{\Var(p)}
\right] \le O(d\eps^{1/d})¬¨¬®‚Äö√Ñ‚Ä†.
\]
\end{theorem}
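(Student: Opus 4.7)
The plan is to reduce to the univariate case, where the bound can be established by a direct root-counting argument, and then handle the general case by a spherical decomposition. By homogeneity we may assume $\Var(p) = 1$, and by centering in the Hermite expansion we may assume $\E[p] = 0$ so that $\|p\|_2 = 1$.

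The univariate case ($n=1$) is the heart of the matter. Writing $p(x) - \theta = c \prod_{i=1}^{d}(x - r_i)$ with $r_i \in \mathbb{C}$, the event $|p(x) - \theta| < \eps$ forces $\prod_i |x - r_i| < \eps/|c|$, which in turn forces $|x - r_i| < (\eps/|c|)^{1/d}$ for at least one index $i$. This confines $x$ to a union of at most $d$ intervals of total Lebesgue measure $2d (\eps/|c|)^{1/d}$, and hence (bounding the standard Gaussian density by $1/\sqrt{2\pi}$) of Gaussian measure $O(d) (\eps/|c|)^{1/d}$. To turn this into the claimed bound, I would next show $|c| \geq c_d \|p\|_2 = c_d$ for some constant $c_d$ depending only on $d$: expanding in the probabilist's Hermite basis and using orthogonality, the coefficient of $H_d$ in $p(x) - \theta$ equals $c$, so $\|p - \theta\|_2^2 \geq c^2 \cdot \|H_d\|_2^2 = c^2 \cdot d!$; combining with a crude estimate $\|p - \theta\|_2 \leq \|p\|_2 + |\theta| \leq O_d(1)$ (after a case split on whether $|\theta|$ is large, in which case anti-concentration is trivial) closes the univariate case.

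For general $n$, I would reduce to $n=1$ by writing the Gaussian measure in polar form $d\gamma_n(x) = (2\pi)^{-n/2} e^{-r^2/2} r^{n-1}\, dr\, d\sigma(\omega)$ where $\omega \in S^{n-1}$ is distributed proportionally to the surface measure. For each fixed $\omega$, the function $r \mapsto p(r\omega)$ is a univariate polynomial of degree $d$, and the univariate estimate applies to the radial integral once we know a lower bound on its leading coefficient, which equals $p_d(\omega)$ — the degree-$d$ homogeneous part of $p$ evaluated at $\omega$. Averaging the univariate bound over $\omega$ then gives the multidimensional conclusion, provided the sphere integral $\int_{S^{n-1}} |p_d(\omega)|^{-1/d}\, d\sigma(\omega)$ is controlled.

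The main obstacle is precisely this last step: controlling a negative power of $|p_d(\omega)|$ on the sphere, which amounts to an anti-concentration statement one dimension down. The natural route is to combine the hypercontractive bound $\|p_d\|_{L^q(S^{n-1})} \leq (q-1)^{d/2} \|p_d\|_{L^2(S^{n-1})}$ with a Paley--Zygmund-style argument to show that $|p_d(\omega)|$ is not too small on a set of definite measure, but obtaining the sharp $\eps^{1/d}$ rate (rather than something like $e^{-O(d)}$) requires the Brascamp--Lieb / "$(B)$-property" machinery used in the original Carbery--Wright argument, and this is where one would need to invoke the full strength of \cite{CW:01}.
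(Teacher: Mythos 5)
There is a genuine gap, in fact two. First, your univariate step does not close: the root-counting bound needs a \emph{lower} bound on the leading coefficient $|c|$ in terms of $\|p\|_2$, but no such bound exists — for $p(x)=x+\delta x^d$ with $\delta\to 0$ we have $\Var(p)\approx 1$ while $|c|=\delta$ is arbitrarily small, so the quantity $(\eps/|c|)^{1/d}$ blows up. The Hermite-orthogonality inequality you invoke, $\|p-\theta\|_2^2\ge c^2\,d!$, is an \emph{upper} bound on $|c|$, i.e.\ it points in the wrong direction; combining it with $\|p-\theta\|_2\le O_d(1)$ yields nothing about smallness of the sublevel set. A correct univariate argument has to exploit the full coefficient vector (e.g.\ that \emph{some} Hermite coefficient of degree $\ge 1$ carries a $1/d$ fraction of the variance) together with a Remez/Markov-type growth estimate, not just the top coefficient. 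Second, in the multivariate reduction you have isolated exactly the hard part — controlling $\int_{S^{n-1}}|p_d(\omega)|^{-1/d}\,d\sigma(\omega)$ — and you concede it requires the Brascamp--Lieb machinery of \cite{CW:01} itself, which makes the argument circular; moreover the radial restriction $r\mapsto p(r\omega)$ can have its leading coefficient $p_d(\omega)$ tiny relative to $\Var(p)$ (when the variance of $p$ lives in its lower-degree part), so conditioning the univariate bound on that leading coefficient cannot work even granting the sphere estimate.

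For comparison: the paper does not prove Theorem~\ref{thm:cw} at all — it is imported as a black box from Carbery and Wright \cite{CW:01} and used only as an anti-concentration tool (e.g.\ in Lemma~\ref{lem:low-variance-kolmogorov} and Lemma~\ref{lem:round-q}). So the expected ``proof'' here is the citation; an attempt to reprove it from scratch is a much harder undertaking than anything the paper requires, and your sketch, as it stands, does not substitute for the cited result.
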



\begin{definition}
Let $f : \{-1,1\}^n \rightarrow \mathbb{R}$. The influence of the $i^{th}$
coordinate on $f$ under the uniform measure (denoted by $\Inf_i(f)$) is
defined as
$$
\Inf_i(f) = \mathop{\mathbf{E}}_{x_i \in \{-1,1\}} [\Var_{x_1,\ldots,x_{i-1},x_{i+1}, \ldots,x_n \in \{-1,1\}} f(x_1,\ldots,x_n)].
$$
The total influence of a function $f$ (denoted by $\Inf(f)$) is defined as $\sum_{i=1}^n \Inf_i(f)$.
\end{definition}

We now define an extension of the notion of ``critical index" previously used in several works on linear and polynomial threshold functions~\cite{Servedio:07cc, OS11:chow, DRST09}.
\begin{definition} \label{def:ci}
Given a pair of sequences of non-negative numbers $\{c_i\}_{i=1}^n$
and $\{d_i\}_{i=1}^n$ where additionally the sequence $\{c_i\}_{i=1}^n$
is non-increasing, the \emph{$\tau$-critical index} of the pair
is defined to  be the smallest $0 \le i \le n-1$ such that
$$
\frac{c_{i+1}}{\sum_{j>i}( c_j + d_j)} \le \tau.
$$
In case there is no such number, we define the critical index to be $\infty$.
The sequence $\{c_i\}_{i=1}^n$  is called the ``main sequence" and the
sequence $\{d_i\}_{i=1}^n$ is called the ``auxiliary sequence".

\end{definition}
The following is a simple consequence of the definition of critical index.
\begin{fact}\label{fact:critical-index}
Given a pair of sequences of non-negative numbers, $\{c_i\}_{i=1}^n$
and $\{d_i\}_{i=1}^n$, if the $\tau$-critical index of a sequence is $j$,
then $\sum_{i=j+1}^n (c_i+d_i) < (1-\tau)^{j}
\cdot (\sum_{\ell=1}^n c_i +d_i)$.
\end{fact}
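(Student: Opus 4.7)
The plan is to prove this by a simple telescoping argument on the tail sums $T_i := \sum_{k>i}(c_k+d_k)$. Note that $T_0 = \sum_{\ell=1}^n(c_\ell + d_\ell)$ is the quantity appearing on the right-hand side, while $T_j = \sum_{i=j+1}^n(c_i+d_i)$ is exactly the tail we want to bound. The recursion $T_{i+1} = T_i - (c_{i+1}+d_{i+1})$ is immediate from the definition.

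The key observation is that by definition of the $\tau$-critical index, for every $0 \le i < j$ the ratio $c_{i+1}/T_i$ strictly exceeds $\tau$, since $j$ is the \emph{smallest} index at which this ratio drops to $\tau$ or below. (The case $j=0$ is vacuous since the bound reduces to $T_0 < T_0 \cdot 1$ being replaced by an equality, so we may assume $j \ge 1$.) Using non-negativity of $d_{i+1}$, this gives $c_{i+1} + d_{i+1} \ge c_{i+1} > \tau T_i$ for each $0 \le i < j$, and therefore
\[
T_{i+1} \;=\; T_i - (c_{i+1}+d_{i+1}) \;<\; T_i - \tau T_i \;=\; (1-\tau)T_i.
\]

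I would then iterate this strict inequality $j$ times starting from $T_0$, obtaining $T_j < (1-\tau)^j T_0$, which is precisely the claimed bound $\sum_{i=j+1}^n(c_i+d_i) < (1-\tau)^j \sum_{\ell=1}^n(c_\ell+d_\ell)$. The only things to check are that the quantities $T_i$ remain positive throughout the iteration (so the strict inequality propagates meaningfully), which follows from the fact that $c_{i+1} > \tau T_i \ge 0$ forces $T_i > 0$ for all $i < j$, and that the monotonicity of $\{c_i\}$ is not actually needed for this particular fact -- it was only used in the \emph{definition} of the critical index to make that notion well-behaved.

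There is no real obstacle here; the statement is essentially a one-line consequence of the definition of the critical index, and the proof is a standard geometric-decay telescoping argument of the kind that appears in every critical-index analysis in the literature (see e.g.\ \cite{Servedio:07cc, DGJ+09}). The only minor subtlety is keeping track of strict versus non-strict inequalities, which is handled by the strict inequality in the definition when $i$ is before the critical index.
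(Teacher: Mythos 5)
Your telescoping argument is correct and is exactly the intended justification: the paper states Fact~\ref{fact:critical-index} without proof as ``a simple consequence of the definition,'' and the inequality $c_{i+1} > \tau T_i$ for all $i<j$ followed by $j$-fold iteration of $T_{i+1} < (1-\tau)T_i$ is the standard argument. Your handling of the minor edge cases (the strictness failing trivially at $j=0$, positivity of the tails, and the irrelevance of the monotonicity assumption here) is also fine.
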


As noted earlier, special cases of this definition have appeared in previous
work on polynomial threshold functions.
Below, we recall the notion of the critical index of a polynomial that appeared
previously in the work of Diakonikolas \etal~\cite{DRST09}:
\begin{definition}(\cite{DRST09})
Let $p: \R^n \to \R$ and $\tau>0$. Assume the variables are ordered such that $\Inf_i(p)\geq \Inf_{i+1}(p)$ for all $i \in
[n-1]$.  The {\em $\tau$-critical index} of $f$ is defined to be
the $\tau$-critical index of  the pair of sequences $\{\Inf_i(p)\}_{i=1}^n$
and $\{0 \}_{i=1}^n$ where $\{0 \}_{i=1}^n$ is the auxiliary sequence.
\end{definition}

\ignore{
As a consequence of Fact~\ref{fact:critical-index}, we have the following lemma.
\begin{lemma} \label{lem:exp} (\cite{DRST09})
Let $p: \R^n \to \R$ and $\tau>0$. Let $k$ be the $\tau$-critical index of $p$. For $0\leq j \leq k$ we have
\[
\littlesum_{i=j+1}^n \Inf_i (p) \leq (1 - \tau)^j \cdot \Inf(p).
\]
\end{lemma}

\inote{I guess we do not need this explicitly now, we reprove it in a convolved way in the iterative algorithm.}
\rnote{Should we remove Lemma~\ref{lem:exp}?}

}

\section{Hardness of computing absolute moments}
\label{ap:moment-hard}

In this section, we show that for any fixed odd $k$, it is $\#P$-hard  to 
exactly compute the $k^{th}$ absolute moment of a degree two multilinear 
polynomial with $\{0,1\}$ coefficients over the uniform distribution on 
the hypercube. 

\ignore{In fact, the hardness result also holds if we restrict 
ourselves to polynomials where every variable has influence $o(1)$. 
}

\begin{theorem}\label{thm:hardness-moment}
Given a degree two multilinear polynomial $p: \mathbb{R}^n \rightarrow 
\mathbb{R}$ with $\{0,1\}$ coefficients, it is $\#P$-hard (under Turing 
reductions) to compute $\E_{x \in \{-1,1\}^n}[|p(x)|^k]$,
 the $k^{th}$ absolute moment of $p$ over the 
uniform distribution on $\{-1,1\}^n$, for $k=O(1)$. 
\ignore{Further, the hardness 
result holds for polynomials  where every variable has influence $o(1)$. 
}
\end{theorem}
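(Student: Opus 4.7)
The plan is to establish $\#P$-hardness via a polynomial-time Turing reduction from a $\#P$-hard graph counting problem — specifically, the problem of recovering the full cut-count sequence $\{N_j^G\}_{j=0}^m$ of an input graph $G=([n],E)$ with $m=|E|$, where $N_j^G$ counts the assignments $x \in \{-1,+1\}^n$ with exactly $j$ bichromatic edges. This task is $\#P$-hard because it subsumes counting maximum cuts (read off $N_{c^{\ast}}^G$ at the maximum cut value $c^{\ast}$). The driving observation is that the canonical graph polynomial $p_G(x) = \sum_{\{i,j\} \in E} x_i x_j$ has $\{0,1\}$ coefficients and satisfies $p_G(x) = m - 2B_G(x)$, so a single call to the absolute-moment oracle yields
\[
2^n\, \E_{x \sim \{-1,1\}^n}\!\bigl[|p_G(x)|^k\bigr] \;=\; \sum_{j=0}^{m} N_j^G\, |m-2j|^k,
\]
which is one linear functional in the $m{+}1$ unknowns $\{N_j^G\}$.

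To obtain $m{+}1$ distinct linear functionals I would attach disjoint gadgets. For each $t=0,1,\dots,T$ with $T=\poly(m)$, form $G_t = G\sqcup H_t$ on a fresh vertex set, where $H_t$ is a gadget whose cut distribution $\{M_i^{H_t}\}_i$ is explicit and tunable in $t$ (for example, $t$ vertex-disjoint edges, a clique of growing size, or a combination thereof). Because the vertex sets are disjoint, $p_{G_t}(x,y) = p_G(x) + p_{H_t}(y)$ still has $\{0,1\}$ coefficients, so one oracle call yields
\[
2^{n+|V(H_t)|}\,\E[|p_{G_t}|^k] \;=\; \sum_{j=0}^m N_j^G\, T_{t,j},\qquad T_{t,j}\;:=\;\sum_i M_i^{H_t}\,\bigl|(m+m_{H_t}) - 2(j+i)\bigr|^k,
\]
where each $T_{t,j}$ is fully computable in polynomial time. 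Collecting $T \ge m+1$ queries produces a linear system $M\vec N^G = \vec v$ whose solution is the cut-count sequence of $G$, from which the source $\#P$-hard quantity is immediately recovered.

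The main obstacle, and the technical heart of the proof, is to choose the gadget family so that the interpolation matrix $M=(T_{t,j})$ is invertible (and invertible efficiently). Because $|\cdot|^k$ is only piecewise polynomial, $T_{t,j}$ is piecewise polynomial in $t$; the fix is to pick gadgets $H_t$ (for example, growing bipartite gadgets whose edge count $m_{H_t}$ dominates $2(j+i)$ for all relevant $j,i$) that guarantee the argument $(m+m_{H_t})-2(j+i)$ has fixed sign on every query. In that regime $|\cdot|^k$ can be replaced by the polynomial $(\cdot)^k$, so $T_{t,j}$ becomes a polynomial in the gadget parameter of bounded degree whose dependence on $j$ is Vandermonde-like, and invertibility then follows from standard polynomial interpolation arguments. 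This is precisely the interpolation template developed by Bl\"aser--Curticapean~\cite{BC12} for related $\#P$-hardness results on weighted graph counting, and its adaptation to absolute moments of degree-$2$ Boolean polynomials is what makes the reduction go through. The crucial use of $k$ being \emph{odd} enters here as well: for even $k$ the function $|y|^k=y^k$ is already a polynomial, so the absolute-moment oracle degenerates into a raw-moment computation that is polynomial-time solvable (by expanding $p^k$ and taking the constant term), and no such interpolation separates $\#P$-hard information from it.
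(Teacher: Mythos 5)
There is a genuine gap at the step you call the technical heart of the argument. Your fix for the piecewise behavior of $|\cdot|^k$ is to choose gadgets so that the argument $(m+m_{H_t})-2(j+i)$ has a fixed sign on every query; but in that regime each oracle answer is literally a raw moment, $\E[|p_{G_t}|^k]=\E[(p_{G_t})^k]$, which you yourself note is computable in polynomial time without any oracle. Quantitatively, once the sign is fixed, $T_{t,j}=\sum_i M_i^{H_t}\bigl((m+m_{H_t})-2(j+i)\bigr)^k$ is a polynomial of degree $k$ in $j$, so every functional $\sum_j N_j^G T_{t,j}$ lies in the span of the $k+1$ power sums $\sum_j N_j^G j^0,\dots,\sum_j N_j^G j^k$. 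Hence the matrix $M=(T_{t,j})$ has rank at most $k+1$, and for fixed $k=O(1)$ it can never be an invertible $(m+1)\times(m+1)$ system; the Vandermonde-style interpolation you invoke needs the exponent of $j$ to vary, which it does not. (A further problem: you cannot force the fixed sign by growing the gadget, since the gadget's own cut values $i$ range up to $m_{H_t}$, so $(m+m_{H_t})-2(j+i)$ can be as negative as $-(m+m_{H_t})$ no matter how large $H_t$ is.) The information that the absolute-moment oracle carries beyond raw moments lives exactly at the kink of $|\cdot|^k$, i.e.\ where the argument changes sign, and your construction is engineered to avoid that kink.

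The paper's proof extracts precisely this kink information, and does so without any interpolation or gadget graphs. It fixes the polynomial $q_{G,\mathrm{CUT}}$ and, for $\ell=m,m-1,\dots$, compares the oracle value $\E[|\ell-q_{G,\mathrm{CUT}}(x)|^k]$ with the exactly computable raw moment $\E[(\ell-q_{G,\mathrm{CUT}}(x))^k]$. For every $\ell\ge\nu^\ast_G$ (the max-cut value) the two agree, while at $\ell^\ast=\nu^\ast_G-1$ only the maximum cuts contribute differently, each adding $+2^{-n}$ to the absolute moment and $-2^{-n}$ to the raw moment because $k$ is odd; thus the largest $\ell$ where the two quantities differ identifies $\nu^\ast_G$, and $2^{n-1}$ times the difference at $\ell^\ast$ equals the number of maximum cuts, which is $\#P$-hard to compute. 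If you want to salvage your interpolation framework, the modification would be to make the breakpoint of $|\cdot|^k$ sweep across the range of cut values (as the paper's shifts $\ell$ do), so that the resulting ``broken-polynomial'' functionals genuinely span more than $k+1$ dimensions, rather than pushing all the mass to one side of the breakpoint.
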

\begin{proof}
We begin by recalling that given an undirected graph $G=(V,E)$, 
it is $\mathop{NP}$-hard to find the size of the \textsf{MAX-CUT} in $G$. 
In fact, if the size of the \textsf{MAX-CUT} in $G$ is $\nu$, then it 
is $\#P$-hard to find the number of cuts in $G$ whose size is $\nu$ 
(see Papadimitriou~\cite{Papadimitriou}). 

Let $|V|=n$ and $|E|=m$.  We consider the polynomial 
$q_{G,\mathrm{CUT}} : \mathbb{R}^n \rightarrow \mathbb{R}$ defined as 
$q_{G,\mathrm{CUT}}(x) = (|E| - \sum_{\{i,j\} \in E} x_i x_j)/2$;
recall from the introduction that on input $x \in \{-1,1\}^n$ 
the value $q_{G,\mathrm{CUT}}(x)$ equals
the number of edges in the cut corresponding to $x$.
Consequently $q_{G,\mathrm{CUT}}(x))  \in [0,\dots,m]$
for every $x \in \{-1,1\}^n.$
Let $\nu^{\ast}_G = \max_{x \in \{-1,1\}^n} [q_{G,\mathrm{CUT}}(x)]$
denote the size of the \textsf{MAX-CUT} in $G$.
Thus, it is $\#P$-hard to compute the size of the following set
$$
\{x  \in \{-1,1\}^n : q_{g,\mathrm{CUT}}(x) = \nu^{\ast}_G \}.
$$

We next observe that for any fixed $k = O(1)$, there is a poly$(n)$-time
algorithm to compute the $k$-th raw moment $\E_{x \in \{-1,1\}^n}[p(x)^k]$
of a given  degree-2 input polynomial $p(x)$.  The algorithm
works simply by expanding out $p(x)^k$ (in time $n^{O(k)}$),
performing multilinear reduction, and outputting
the constant term; its correctness follows from the fact that $\E_{x \in
\{-1,1\}^n}[x_S]=0$ for every $S \neq \emptyset.$

Suppose $A$ is a $\poly(n)$-time algorithm that, on input
a degree-2 polynomial $p(x)$, outputs the value
$\E_{x \in \{-1,1\}^n}[|p(x)|^k]$.  Given such an algorithm
we can efficiently compute 
$|\{x  \in \{-1,1\}^n : q_{g,\mathrm{CUT}}(x) = \nu^{\ast}_G \}|$
as follows:  For $\ell = m, m-1, \dots$ successively
compute 
$\E[|\ell - q_{G,\mathrm{CUT}}(x)|^k]$ 
(using algorithm $A$) and 
$\E[(\ell - q_{G,\mathrm{CUT}}(x))^k]$ 
(as described above).  Let $\ell^\ast$ be the largest value in $\{m,m-1,
\dots,0\}$ such that 
$
\E[|\ell^\ast - q_{G,\mathrm{CUT}}(x)|^k] \neq 
\E[(\ell^\ast - q_{G,\mathrm{CUT}}(x))^k]
$.
Output the value 
\[
2^{n-1}\left(\E[|\ell^\ast - q_{G,\mathrm{CUT}}(x)|^k]  - 
\E[(\ell^\ast - q_{G,\mathrm{CUT}}(x))^k]\right).
\]

It is clear that the above-described algorithm runs
in poly$(n)$ time.  To verify correctness, first consider
a value of $\ell$ such that $\ell \geq \nu^\ast_G.$
For such an $\ell$ we have that
$
|\ell - q_{G,\mathrm{CUT}}(x)|^k = 
(\ell - q_{G,\mathrm{CUT}}(x))^k
$
for all $x \in \{-1,1\}^n$, and hence the 
raw moment 
$\E[(\ell - q_{G,\mathrm{CUT}}(x))^k]$ will equal the
absolute moment 
$\E[(\ell - q_{G,\mathrm{CUT}}(x))^k]$.
On the other hand, for $\ell = 
\nu^\ast_G - 1$, we have that all cuts of size 
$0,\dots,\nu^\ast_G-1$ contribute the same
amount to $\E[|\ell - q_{G,\mathrm{CUT}}(x)|^k]$
and to $\E[(\ell - q_{G,\mathrm{CUT}}(x))^k]$, but
each cut of size precisely $\nu^\ast_G$ contributes
$1/2^n$ to the absolute moment and $-1/2^n$ to the raw moment.
As a result, we get that 
$2^{n-1}\left(\E[|\ell - q_{G,\mathrm{CUT}}(x)|^k]  - 
\E[(\ell - q_{G,\mathrm{CUT}}(x))^k]\right)$
is  precisely the number of cuts of size
$\nu^\ast_G$, and the theorem is proved.
\end{proof}

\end{document}